\title[A positive integral property on the ground state of 2BTL Hamiltonian]
{A Positive integral property on the ground state of the two-boundary 
Temperley--Lieb Hamiltonian}
\author[K.~Shigechi]{Keiichi~Shigechi}
\email{k1.shigechi AT gmail.com}
\date{\today}
\newcommand\linkpattern[4]{
\foreach \x/\y in {#1}
\draw[thick](0.8*\x,0)..controls (0.8*\x,-0.6*\y+0.6*\x) 
		and (0.8*\y,-0.6*\y+0.6*\x)..(0.8*\y,0);
\foreach \x/\y in {#2}
\draw[dashed,thick](0.8*\x,0)..controls (0.8*\x,-0.6*\y+0.6*\x) 
		and (0.8*\y,-0.6*\y+0.6*\x)..(0.8*\y,0);
\foreach \x/\y in {#3}
	\draw(0.8*\x,0)--(0.8*\x,-0.5)node[anchor=north]{\rm{\y}};
\foreach \x/\y in {#4}
	\draw(0.8*\x,0)--(0.8*\x,-0.5)
	node[circle,inner sep=1pt,draw,anchor=north]{\rm\y};
}
\newcommand\upa[1]{
\draw(#1,0)--(#1,-0.6)(#1-0.12,-0.12)--(#1,0)--(#1+0.12,-0.12);
}
\newcommand\downa[1]{
\draw(#1,0)--(#1,-0.6)(#1-0.12,-0.6+0.12)--(#1,-0.6)--(#1+0.12,-0.6+0.12);
}
\newcommand\tikzpic[2]{
\raisebox{#1\totalheight}{
\begin{tikzpicture}
#2
\end{tikzpicture}
}}
\newtheorem{theorem}{Theorem}[section]
\newtheorem{example}[theorem]{Example}
\newtheorem{lemma}[theorem]{Lemma}
\newtheorem{defn}[theorem]{Definition}
\newtheorem{prop}[theorem]{Proposition}
\newtheorem{cor}[theorem]{Corollary}
\newtheorem{conj}[theorem]{Conjecture}
\newtheorem{remark}[theorem]{Remark}
\begin{document}
\begin{abstract}
We study the two-boundary Temperley--Lieb $O(n)$ loop model 
on Kazhdan--Lusztig bases of type A and B.
We obtain explicit expressions of the ground state of 
the two-boundary Temperley--Lieb Hamiltonian by means of 
a coideal subalgebra of $U_q(\mathfrak{sl}_2)$.
This ground state possesses a positive integral property.
We conjecture that some components of the ground state 
are directly related to an enumeration of binary or permutation 
matrices.
\end{abstract}

\maketitle

\section{Introduction}
Razumov and Stroganov studied the ground state of the $XXZ$ spin chain 
with the periodic boundary condition at the isotropic parameter 
$\Delta=-1/2$ \cite{RazStr01-1,RazStr04}. 
They observed that the largest component (under a certain normalization)
is related to the total number of combinatorial objects called 
alternating sign matrices (see, {\it e.g.}, \cite{Bres99} and references therein).
They also obtained explicit expressions for some correlation functions.
These striking observations open a new way to study quantum integral
systems \cite{BatdeGNie01,MitNiedeGBat04,RazStr01-2,RazStr05,StoVla13}, quantum Knizhnik--Zamolodchikov 
equations \cite{deGPya10,DiFZJ05-1,DiFZJ05-2,DiFPZJ07,RazStrPZJ07} and  
combinatorics such as alternating sign matrices and plane 
partitions \cite{BehDiFPZJ,deGPyaPZJ09,DiFZJ05-2}.

An alternative description of the Razumov--Stroganov correspondence 
(proven in~\cite{CanSpor10}) is the one by the $O(n)$ loop model \cite{BatdeGNie01}.
This model has the Temperley--Lieb Hamiltonian \cite{Mar90,Sal89,TL71} which acts 
on the so-called link patterns \cite{deG05,deGRit04}. 
At $n=1$, it was observed that the ground state on the link pattern bases 
counts the number of combinatorial objects such as alternating 
sign matrices and fully packed loop models \cite{deGRit04}.
There are several variants of the model by replacing the 
Temperley--Lieb algebra to the one- or two-boundary 
Temperley--Lieb algebras \cite{deGNic09,deGRit04,MarSal94,NicRitdeG05}.
These algebras act on link patterns with boundaries.

In the case of type A, the link pattern bases for the Temperley--Lieb 
algebra is equivalent to the parabolic Kazhdan--Lusztig bases 
of the Hecke algebra \cite{Deo87,KL79} or to the (dual) canonical basis (of 
weight zero) \cite{Kas90,Kas93,Lus90-1,Lus90-2} in the tensor products of the fundamental 
representation of $U_q(\mathfrak{sl}_2)$ \cite{FK97}.
The coincidence of Kazhdan--Lusztig bases and canonical bases 
was shown in \cite{FreKhoKir98}.
The Kazhdan--Lusztig bases for the Hermitian symmetric pair 
$(B_N,A_{N-1})$ are relevant to the representation of the 
one-boundary Temperley--Lieb algebra considered in this paper.
The graphical description of these Kazhdan--Lusztig bases 
is revealed in \cite{Shi14-1}. 
A diagram for a Kazhdan--Lusztig basis has the flavour 
of a link pattern of type A and type B. 
The two-boundary Temperley--Lieb algebra can be regarded as a
quotient algebra of the affine Hecke algebra of type C. 
Although the affine Hecke algebra is infinite dimensional, we 
impose a quotient relation (see Eqn.(\ref{quotient2BTL})) to 
obtain a finite dimensional representation of the two-boundary 
Temperley--Lieb algebra.
Since the (one-boundary) Temperley--Lieb algebra is a subalgebras 
of the two-boundary Temperley--Lieb algebra, 
the representation theory of the former can be applicable to the latter.

In this paper, we investigate the two-boundary Temperley--Lieb 
$O(n)$ loop model on the Kazhdan--Lusztig bases of type A and B 
and on the standard bases.
The Kazhdan--Lusztig bases of type B can be regarded as a 
(dual) canonical basis of a coideal subalgebra 
of $U_q(\mathfrak{sl}_2)$ \cite{Shi14-2}.  
A key is the fact that the generator of the coideal subalgebra commutes 
with the one-boundary Temperley--Lieb Hamiltonian of the system.
Thus an eigenfunction of the generator of the coideal subalgebra
with the multiplicity one is also an eigenfunction of the one-boundary 
Temperley--Lieb Hamiltonian. 

We have four types of Kazhdan--Lusztig bases which the two-boundary
Temperley--Lieb algebra acts on.
We call them type A, BI, BII and BIII respectively 
(see Section~\ref{sec-reps} for definitions).
Type BII corresponds to a link pattern with a boundary with a suitable 
choice of the normalization (compare Type BII with ,{\it e.g.}, \cite{deG05}). 
We first consider the eigensystem of the generator $X$ of a coideal 
subalgebra of $U_q(\mathfrak{sl}_2)$.
From the explicit action of $X$ on the Kazhdan--Lusztig bases, 
one can obtain all the eigenvalues of $X$ and their multiplicities.
For all types, there exists an eigenvector $\Psi$ with 
the multiplicity one.
One of the main results of this paper is explicit expressions 
of $\Psi$ (see Definition~\ref{PsiA}, \ref{PsiBI}, \ref{PsiBII} 
and \ref{PsiBIII}).
By construction, it is obvious that $\Psi$ is a positive vector, 
{\it i.e.}, all the entries of $\Psi$ are positive. 
We also show that this eigenfunction $\Psi$ is an eigenvector 
of the generators of one-boundary Temperley--Lieb algebra 
with the eigenvalue zero.
This implies that $\Psi$ is the ground state of the one-boundary 
Temperley--Lieb Hamiltonian 
with the eigenvalue zero.
Furthermore, this eigenfunction $\Psi$ is the ground state of 
the two-boundary Temperley--Lieb Hamiltonian under an integrable 
condition $q^{N-1}QQ_{0}=1$ where $N$ is the size of the system  
and $q, Q$ and $Q_{0}$ are the Hecke parameters 
(see Section~\ref{sec-TL} for the definitions of parameters).
This integrable condition can be viewed as a compatibility condition
to embed the representation of the two-boundary Temperley--Lieb algebra 
(the affine Hecke algebra of type C) into the one of one-boundary 
Temperley--Lieb algebra (the Hecke algebra of type B).

From the explicit expressions of $\Psi$, we can compute 
correlation functions exactly. 
One can also show
that $\Psi$ is in $\mathbb{N}[q,q^{-1},Q,Q^{-1}]$ for 
type A, BII and BIII and in $\mathbb{N}[q,q^{-1}]$ for
type BI.  
This positive integral property appears not only on the 
Kazhdan--Lusztig bases but also on the standard bases 
(see Definition~\ref{defPsi-SB}).
The transition matrix from the standard bases to the 
Kazhdan--Lusztig bases is written in terms of the 
Kazhdan--Lusztig polynomials which also have another 
positive integral structure.
Thus the origin of the positivity of $\Psi$ may come from 
these two positivities.
Since $\Psi$ has a positive integral property, 
it is natural to ask whether the components of $\Psi$ count 
combinatorial objects along the spirit of the Razumov--Stroganov 
correspondence. 
As a first step, we consider the sum of the components of $\Psi$.
For type A, BI and BIII, the sums are conjectured to be the 
total number of symmetric binary/permutation matrices with 
appropriate conditions.
In fact, some components are conjectured to be a 
$(q,Q)$-enumeration of symmetric binary matrices. 
We expect that these observations are a starting point 
of a connection of Kazhdan--Lusztig bases to enumerative 
combinatorics.

The paper is organized as follows.
In Section~\ref{sec-TL}, we briefly review the two-boundary 
Temperley--Lieb algebra and a coideal subalgebra of 
$U_{q}(\mathfrak{sl}_2)$. 
In Section~\ref{sec-reps}, we introduce a diagrammatic presentation
of the Kazhdan--Lusztig bases. 
We show the action of the two-boundary Temperley--Lieb algebra on
the Kazhdan--Lusztig bases of type A and B.
Section~\ref{sec-eigenX} is devoted to the analysis of the eigensystem
of the generator $X$. 
We define an eigenfunction $\Psi$ of $X$ and show that this 
eigenfunction has the multiplicity one. 
Section~\ref{sec-eigenH} is devoted to the analysis of the action of 
the Hamiltonian on $\Psi$. 
We show that the generators of one-boundary Temperley--Lieb algebra 
acts zero on $\Psi$ and that $\Psi$ is the ground state of the 
two-boundary Temperley--Lieb Hamiltonian with the integrable condition. 
In Section~\ref{sec-sum}, we compute correlation functions, show 
the positive integral property of $\Psi$ and propose several conjectures 
on $\Psi$ as a $(q,Q)$-enumeration of symmetric binary/permutation matrices. 
In Section~\ref{sec-appendix}, we collect technical lemmas 
used in this paper.

\section{Two-boundary Temperley--Lieb algebra}
\label{sec-TL}
\subsection{Two-boundary Temperley--Lieb algebra}
The Temperley--Leib algebra \cite{Mar90,Sal89,TL71} is an associative algebra 
over the ring $\mathbb{Z}[q,q^{-1}]$ and generated by $e_i, 1\le i\le N-1$
with the relations:
\begin{eqnarray}
\label{TL1}
&&e_i^2=-(q+q^{-1})e_i,\qquad 1\le i\le N-1, \\ 
&&e_ie_{i\pm1}e_i=e_i, \label{TLquotient}\\
&&e_ie_j=e_je_i,\qquad |i-j|>1.
\label{TL2}
\end{eqnarray}
The two-boundary Temperley--Lieb algebra \cite{deGNic09,deGPya04} is a generalization of 
the Temperley--Lieb algebra with extra generators $e_n$ and $e_0$. 
The defining relations are relations (\ref{TL1})-(\ref{TL2}) and 
\begin{eqnarray}
&&e_N^2=-(Q+Q^{-1})e_N, \\
&&e_{N-1}e_Ne_{N-1}=(qQ^{-1}+q^{-1}Q)e_{N-1}, \label{1BTLquotient}\\
&&e_ie_{N}=e_Ne_i, \qquad i\neq N-1. \\
&&e_0^2=-(Q_0+Q_0^{-1})e_N, \\
&&e_{1}e_0e_{1}=(qQ_{0}^{-1}+q^{-1}Q_0)e_{1}, \\
&&e_ie_{0}=e_0e_i, \qquad i\neq 1. 
\end{eqnarray}
We call the subalgebra generated by $\{e_i: 1\le i\le N\}$ the 
one-boundary Temperley--Lieb algebra.
Note that the two-boundary Temperley--Lieb algebra is infinite dimensional. 
We impose the following two conditions to make the algebra finite dimensional:
\begin{eqnarray}
\label{quotient2BTL}
I_{N}J_{N}I_{N}=\alpha I_{N}, \qquad J_{N}I_{N}J_{N}=\alpha J_{N}
\end{eqnarray}
where $\alpha$ is a parameter and 
\begin{eqnarray*}
&&I_{2n}:=\prod_{i=0}^{n-1}e_{2i+1}, \quad 
J_{2n}:=e_0\prod_{i=1}^{n-1}e_{2i}\cdot e_{N},\\
&&I_{2n+1}:=e_0\prod_{i=1}^{n}e_{2i}, \quad
J_{2n+1}:=\prod_{i=0}^{n-1}e_{2i+1}\cdot e_{N}. 
\end{eqnarray*}

Let $V_1$ be a two-dimensional $\mathbb{C}$-vector space spanned by
$v_1$ and $v_{-1}$.
We have a representation of the two-boundary Temperley--Lieb algebra 
acting on $V_1^{\otimes N}$.
The matrix representation of the generators are 
\begin{eqnarray*}
e_i&=&\underbrace{\mathbf{1}\otimes\cdots\otimes\mathbf{1}}_{i-1}\otimes
\begin{pmatrix}
0 & 0 & 0 & 0 \\
0 & -q^{-1} & 1 & 0 \\
0 & 1 & -q & 0 \\
0 & 0 & 0 & 0 
\end{pmatrix}
\otimes\underbrace{\mathbf{1}\otimes\cdots\otimes\mathbf{1}}_{N-i-1}, 
\qquad 1\le i\le N-1,\\
e_N&=&\underbrace{\mathbf{1}\otimes\cdots\otimes\mathbf{1}}_{N-1}\otimes
\begin{pmatrix}
-Q^{-1} & 1 \\
1 & -Q
\end{pmatrix}, \\
e_0&=&
\begin{pmatrix}
-Q_{0} & 1 \\
1 & -Q_{0}^{-1}
\end{pmatrix}\otimes
\underbrace{\mathbf{1}\otimes\cdots\otimes\mathbf{1}}_{N-1},
\end{eqnarray*}
where the order of bases is $(v_1,v_{-1})$ for $V_1$ and 
$(v_1\otimes v_1,v_1\otimes v_{-1},v_{-1}\otimes v_{1},v_{-1}\otimes v_{-1})$
for $V_1\otimes V_1$.
A tensor product $v_{\epsilon_1}\otimes\cdots\otimes v_{\epsilon_N}$ 
with $\epsilon_i=1$ or $-1$ for $1\le i\le N$ is called a standard
basis.
In this representation, one can show by a straightforward computation that  
\begin{eqnarray*}
\alpha=\begin{cases}
(Q^{-1}-q^{-1}Q_{0})(Q-qQ_{0}^{-1}), & \text{for } N: \text{even}, \\
(1+Q^{-1}Q_{0})(1+QQ_{0}^{-1}), & \text{for $N$: odd}
\end{cases}
\end{eqnarray*}

We consider the integrable Hamiltonian for the one- and two-boundary 
Temperley--Lieb algebras acting on $V_{1}^{\otimes N}$ 
defined by 
\begin{eqnarray*}
H^{1B}&=&-\sum_{i=1}^{N-1}e_i-a_{N}e_{N}, \\
H^{2B}&=&H^{1B}-a_{0}e_{0},
\end{eqnarray*}
where $a_{0}$ and $a_{N}$ are parameters.

\begin{remark}
The Temperley--Lieb Hamiltonian is rewritten as 
\begin{multline*}
H^{2B}=-\frac{1}{2}\left(
\sum_{i=1}^{N-1}
\left(\sigma^{x}_{i}\sigma^{x}_{i+1}
+\sigma^{y}_{i}\sigma^{y}_{i+1}
+\frac{q+q^{-1}}{2}\sigma^{z}_{i}\sigma^{z}_{i+1}\right)
+\left(\frac{q-q^{-1}}{2}-a_0(Q_0-Q_{0}^{-1})\right)\sigma^{z}_{1}\right. \\
\left.+2a_{0}(\sigma^{+}_{1}+\sigma^{-}_{1})+2a_{N}(\sigma^{+}_{N}+\sigma^{-}_{N})
-\left(\frac{q-q^{-1}}{2}-a_{N}(Q-Q^{-1})\right)
\sigma^{z}_{N}
\right) \\
+\left(\frac{q+q^{-1}}{4}(N-1)+a_{0}\frac{Q_0+Q_0^{-1}}{2}+a_{N}\frac{Q+Q^{-1}}{2}\right).
\end{multline*}
where $\sigma^{x}, \sigma^{y}$ and $\sigma^z$ are the Pauli matrices 
and $\sigma^{\pm}=(\sigma^{x}\pm\sqrt{-1}\sigma^{y})/2$. 
Thus the spectrum of $H$ can be viewed as the one of the $XXZ$ 
spin $1/2$ quantum chain with boundaries.
\end{remark}

\subsection{\texorpdfstring{A coideal subalgebra of $U_q(\mathfrak{sl}_2)$}
{A coideal subalgebra of Uqsl2}}
The quantum group ${\bf U}:=U_q(\mathfrak{sl}_2)$ is 
an associative algebra over $\mathbb{C}(q)$ with generators $E,F,K^{\pm1}$ 
and relations 
\begin{eqnarray*}
&&KK^{-1}=K^{-1}K=1, \\
&&KEK^{-1}=q^{2}E, \\
&&KFK^{-1}=q^{-2}F, \\
&&EF-FE=\frac{K-K^{-1}}{q-q^{-1}}.
\end{eqnarray*}
We introduce  the quantum integer $[n]:=\sum_{i=0}^{n-1}q^{n-1-2i}$, 
the quantum factorial $[n]!:=\prod_{i=1}^{n}[i]$ and 
$q$-analogue of the binomial coefficient 
\begin{eqnarray*}
\genfrac{[}{]}{0pt}{}{n}{m}:=\frac{[n]!}{[n-m]![m]!}.
\end{eqnarray*}
We define 
\begin{eqnarray*}
[Q;n]:=\frac{Qq^{n}-Q^{-1}q^{-n}}{q-q^{-1}}.
\end{eqnarray*}

The comultiplication $\Delta$ is given by 
\begin{eqnarray*}
\Delta(K^{\pm1})&:=&K^{\pm1}\otimes K^{\pm1}, \\
\Delta(E)&=&E\otimes K^{-1}+1\otimes E, \\
\Delta(F)&=&F\otimes 1+K\otimes F.
\end{eqnarray*}
We consider the two-dimensional representation in $V_1$. 
The action is given by 
\begin{eqnarray*}
&&Ev_1=0, \qquad Ev_{-1}=v_{1}, \\
&&Fv_{1}=v_{-1},\qquad Fv_{-1}=0, \\
&&Kv_{\pm1}=q^{\pm1}v_{\pm1}.
\end{eqnarray*}

We consider the Dynkin diagram of type $A_1$ and the identity 
involution. 
By a general theory of quantum symmetric space \cite{Kolb14,Letz99,Letz03}, one can 
obtain a coideal algebra of ${\bf U}$ associated with the involution. 
The coideal subalgebra ${\bf U'}$ of $U_q(\mathfrak{sl}_2)$ is a polynomial 
algebra in $X$, that is, ${\bf U'}:=\mathbb{C}(q)[X]$.
The injective $\mathbb{C}(q)$-algebra homomorphism 
$\iota:{\bf U'}\rightarrow {\bf U}$ is given by 
\begin{eqnarray*}
X\mapsto F+cKE+sK
\end{eqnarray*}
where $c,s$ are indeterminates. 
The comultiplication $\Delta$ is given by 
\begin{eqnarray}
\label{coproduct}
\Delta(X)=K\otimes X +cKE\otimes 1 +F\otimes 1.
\end{eqnarray}
Note that ${\bf U'}$ is left coideal since 
$\Delta(X)\subset{\bf U}\times{\bf U'}$.
In this paper, we consider 
\begin{eqnarray*}
c=q^{-1}, \qquad s=\frac{Q-Q^{-1}}{q-q^{-1}}.
\end{eqnarray*}

\begin{theorem}
\label{theorem-commute}
$[H^{1B},X]=0$.
\end{theorem}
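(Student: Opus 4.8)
The plan is to reduce the identity $[H^{1B},X]=0$ to a short list of local commutation checks, isolating the single place where the particular values of $c$ and $s$ enter. Since $\iota$ is a $\mathbb{C}(q)$-algebra homomorphism and ${\bf U'}$ is a left coideal, the compatibility $\Delta_{\bf U}\circ\iota=(\mathrm{id}\otimes\iota)\circ\Delta_{\bf U'}$ holds, which one verifies at once on the generator by comparing $\Delta_{\bf U}(F+cKE+sK)$ with $(\mathrm{id}\otimes\iota)\Delta(X)$ using $\Delta(X)=K\otimes X+cKE\otimes1+F\otimes1$. By induction the operator $X$ acts on $V_1^{\otimes N}$ through the iterated coproduct of $\iota(X)=F+cKE+sK\in{\bf U}$; explicitly,
\[
X=\sum_{j=1}^{N}G_j+sK^{\otimes N},\qquad
G_j:=K^{\otimes(j-1)}\otimes(F+cKE)\otimes\mathbf{1}^{\otimes(N-j)}.
\]
It then suffices to prove $[e_i,X]=0$ for $1\le i\le N-1$ and $[e_N,X]=0$ separately, after which $[H^{1B},X]=-\sum_{i=1}^{N-1}[e_i,X]-a_N[e_N,X]=0$.

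For the bulk generators I would use that the $4\times4$ matrix defining $e_i$ on sites $i,i+1$ commutes with $\Delta(E)$, $\Delta(F)$ and $\Delta(K)$ on $V_1\otimes V_1$ — equivalently, that $e_i$ is an endomorphism of $V_1^{\otimes N}$ as a ${\bf U}$-module (Temperley--Lieb/$U_q(\mathfrak{sl}_2)$ Schur--Weyl duality); each of these three commutators is a one-line $4\times4$ computation. Granting them, fix $i$ and split $X$ as above: the summands $G_j$ with $j<i$ act as the identity on sites $i,i+1$ and on disjoint sites elsewhere, so they commute with $e_i$; the summands with $j>i+1$ together with $sK^{\otimes N}$ act on sites $i,i+1$ through $\Delta(K)$ (with a $K$-string and a $\mathbf 1$-string on the remaining sites, which $e_i$ leaves untouched), so they commute with $e_i$; and $G_i+G_{i+1}$ acts on sites $i,i+1$ as $K^{\otimes(i-1)}\otimes\Delta(F+cKE)\otimes\mathbf{1}^{\otimes(N-i-1)}$, which commutes with $e_i$ because $e_i$ intertwines the entire ${\bf U}$-action. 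Hence $[e_i,X]=0$ for all $1\le i\le N-1$.

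For the boundary term, $e_N$ is supported on site $N$ only, so it commutes with every $G_j$ with $j<N$ by disjointness of supports, and the remaining contribution is $[e_N,\,K^{\otimes(N-1)}\otimes(F+cKE+sK)]=K^{\otimes(N-1)}\otimes[\bar e_N,\bar X]$, where $\bar e_N=\bigl(\begin{smallmatrix}-Q^{-1}&1\\1&-Q\end{smallmatrix}\bigr)$ is the matrix in the definition of $e_N$ and $\bar X$ is the $2\times2$ matrix of $F+cKE+sK$ in the basis $(v_1,v_{-1})$. With $c=q^{-1}$ one computes $\bar X=\bigl(\begin{smallmatrix}sq&1\\1&sq^{-1}\end{smallmatrix}\bigr)$, and two symmetric $2\times2$ matrices with the same off-diagonal entry commute precisely when their diagonal differences coincide, i.e. when $Q-Q^{-1}=s(q-q^{-1})$; this is exactly the chosen value $s=(Q-Q^{-1})/(q-q^{-1})$. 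Therefore $[e_N,X]=0$, completing the argument.

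I expect the main obstacle to be bookkeeping rather than anything conceptual: fixing once and for all which tensor leg carries the coideal factor and where the $K$-strings produced by the iterated coproduct land, and then checking honestly that in the bulk step the portion of $X$ living strictly to the left of site $i$ contributes only operators on which $e_i$ acts as the identity. Once the explicit form of $X$ on $V_1^{\otimes N}$ is pinned down, the two pointwise inputs — the Schur--Weyl commutation on $V_1\otimes V_1$ and the $2\times2$ commutator at the boundary — are immediate, and it is precisely the boundary computation that forces (and is forced by) the choice $c=q^{-1}$, $s=(Q-Q^{-1})/(q-q^{-1})$.
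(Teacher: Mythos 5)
Your proof is correct and follows essentially the same route as the paper's: reduce to $[e_i,X]=0$ for each generator, use the coproduct $\Delta(X)=K\otimes X+cKE\otimes1+F\otimes1$ to localize the boundary commutator to the single-site $2\times2$ check (which is where $c=q^{-1}$ and $s=(Q-Q^{-1})/(q-q^{-1})$ enter), and invoke the commutation of the Temperley--Lieb action with the $U_q(\mathfrak{sl}_2)$ action for the bulk generators. You simply spell out the bookkeeping (the explicit decomposition $X=\sum_j G_j+sK^{\otimes N}$ and the off-diagonal commutator criterion for symmetric $2\times2$ matrices) that the paper leaves as ``a straightforward calculation.''
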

\begin{proof}
It is enough to show that $[e_i,X]=0$ for all $1\le i\le N$.
When $N=1$, we have $[e_{1},X]=0$ by a straightforward calculation.
Since the comultiplication is given by Eqn.(\ref{coproduct}), 
we have $[e_N,X]=0$ in general.
Since the action of the Temperley--Lieb algebra commutes with 
the action of the quantum group $U_q(\mathfrak{sl}_2)$ in the 
tensor product of the fundamental representation, we have 
$[e_i,X]=0$ for $1\le i\le N-1$.
\end{proof}

\section{Representations}
\label{sec-reps}
The actions of the two-boundary Temperley--Lieb algebra on
the standard bases are obvious through the matrix representation 
of the generators (See Section~\ref{sec-TL}).
In this section, we consider the action of the two-boundary 
Temperley--Lieb algebra on Kazhdan--Lusztig bases. 

\subsection{Kazhdan--Lusztig bases}
\label{sec-Rep-KL}
The Hecke algebra of type A is a unital, associative algebra over 
$\mathbb{C}[q,q^{-1}]$ generated by the generators $T_i$, 
$1\le i\le N-1$, satisfying the relations $(T_i-q^{-1})(T_{i}+q)=0$, 
$T_{i}T_{i+1}T_{i}=T_{i+1}T_{i}T_{i+1}$
and $T_iT_j=T_jT_i$ for $|i-j|>1$.
The Temperley--Lieb algebra can be regarded as the Hecke algebra of 
type A with a quotient relation (\ref{TLquotient}) through 
a relation $T_i=e_i+q^{-1}$. 
The representation of the Temperley--Lieb algebra in $V_1^{\otimes N}$ 
corresponds to the maximal parabolically induced representation of 
the Hecke algebra.
The Hecke algebra of type B is generated by $T_{i}$, $1\le i\le N$, with 
the relations of type A, $(T_N-Q^{-1})(T_N+Q)=0$, 
$T_NT_{N-1}T_NT_{N-1}=T_{N-1}T_NT_{N-1}T_{N}$ and 
$T_iT_N=T_{N}T_{i}$ for $i\neq N-1$.
Similarly, the one-boundary Temperley--Lieb algebra can be regarded 
as the Hecke algebra of type B with a quotient relation 
(\ref{1BTLquotient}).
Therefore, one can apply the representation theory of the Hecke algebra 
of type B to the one-boundary Temperley--Lieb algebra.
Since we consider the representation of the one-boundary 
Temperley--Lieb algebra in $V_{1}^{\otimes N}$ as in Section~\ref{sec-TL}, 
the parabolic Kazhdan--Lusztig bases studied 
in~\cite{Deo87} play a central role rather than original ones studied in \cite{KL79}.
More precisely, the Kazhan--Ludztig bases for the Hermitian symmetric 
pair $(B_N,A_{N-1})$ studied in \cite{Boe88,Bre09,Shi14-1} can be regarded as 
bases of the one-boundary Temperley--Lieb algebra in $V_1^{\otimes N}$.
There are two types of parabolic Kazhdan--Lusztig bases according 
to the choice of a projection map (see, {\it e.g.}, Section 2.3 in \cite{Shi14-1}). 
In this paper, we will consider the parabolic Kazhdan--Lusztig bases 
studied as $C^-_{x}$ in~\cite{Shi14-1}.
Hereafter, a Kazhdan--Lusztig basis means this parabolic one.

We have four types of Kazhdan--Lusztig bases associated with 
the Temperley--Lieb algebra and the one-boundary Temperley--Lieb 
algebras.
The first one is the Kazhdan--Lusztig basis of Temperley--Lieb 
algebra of type A, 
the second is the Kazhdan--Lusztig basis of the one-boundary 
Temperley--Lieb algebra for $Q=q^{M}$ with 
$M\in\mathbb{N}_{+}$ and the third and the fourth are the Kazhdan--Lusztig  
bases of the one-boundary Temperley--Lieb algebra for $q\neq Q$
where $q$ and $Q$ are algebraically independent.
Note that we will consider the representation of the two-boundary 
Temperley--Lieb algebra on Kazhdan--Lusztig bases of Type A 
in the first case and of type B in the second, third and fourth cases.
The difference between the third and the fourth bases is 
the total order with respect to $q$ and $Q$.
We call these bases type A, BI, BII and BIII respectively.
We index a Kazhdan--Lusztig basis by a binary string $\{+,-\}^{N}$.  
Given two binary strings $\epsilon:=\epsilon_1\ldots\epsilon_N$ 
and $\epsilon':=\epsilon'_{1}\ldots\epsilon'_N$, 
we denote $\epsilon<\epsilon'$ if $\epsilon_j=\epsilon'_j$ for 
$1\le j\le i-1$ and $\epsilon_i<\epsilon'_{i}$. 
All types of Kazhdan--Lusztig bases are characterized by the following 
two conditions \cite{Deo87,KL79,Lus03}: 
(1) a Kazhdan--Lusztig basis is invariant under the involutive ring automorphism 
known as ``bar involution" where $T_i\rightarrow T_{i}^{-1}$,  
$q\rightarrow q^{-1}$ and $Q\rightarrow Q^{-1}$. 
On the module $V^{\otimes N}$, we define $\overline{v_{\epsilon}}=v_{\epsilon}$ 
where $\epsilon_{i}=1$ for $1\le i\le N$.
(2) The expansion of a Kazhdan--Lusztig basis $w$ indexed by a binary string 
$\epsilon=\epsilon_1\ldots\epsilon_N$ 
in terms of standard basis has the leading term 
$v_{\kappa_1}\otimes\cdots\otimes v_{\kappa_N}$ where $\kappa_i=1$ if 
$\epsilon_i=+$ and $\kappa_i=-1$ if $\epsilon_i=-$.
The vector $w-v_{\kappa_1}\otimes\cdots\otimes v_{\kappa_N}$ is a linear 
combination of $v_{\kappa'_1}\otimes\cdots\otimes v_{\kappa'_N}$, $\kappa<\kappa'$, 
with a coefficient in $\mathbb{Z}(\Gamma^{\mathrm{X}}_-)$ for Type X. 
Here $\Gamma^{A}_-=\Gamma^{\mathrm{BI}}_-=\{q^{-i}|i\in\mathbb{N}_{+}\}$,   
$\Gamma^{BII}_-=\{q^{-i}Q^{j}|i\in\mathbb{N}_+, j\in\mathbb{Z}\}\cup
\{Q^{-i}|i\in\mathbb{N}_+\}$ 
and $\Gamma^{BIII}_{-}=\{q^{i}Q^{-j}|i\in\mathbb{Z}, j\in\mathbb{N}_{+} \} \cup
\{q^{-i}|i\in\mathbb{N}_{+}\}$.

Since $V_1^{\otimes N}$ can be viewed as the tensor products of fundamental 
representation of $U_q(\mathfrak{sl}_2)$, a Kazhdan--Lusztig basis of type 
A is nothing but the dual canonical basis of $U_q(\mathfrak{sl}_2)$ considered 
in~\cite{FK97} (see also \cite{ShiZJ12}).
A Kazhdan--Lusztig basis of type BI is considered in \cite{Shi14-1} and can be 
viewed as the dual canonical basis of a coideal subalgebra of 
$U_q(\mathfrak{sl}_2)$~\cite{Shi14-2}. 
A Kazhdan--Lusztig basis of type BII is studied in \cite{Shi14-1}.
One can easily show that a basis of type BIII satisfies the criteria for
a Kazhdan--Lusztig basis.

We briefly review the graphical presentation of a Kazhdan--Lusztig basis 
following~\cite{FK97,Shi14-1,ShiZJ12}. 
Let $b=b_1\cdots b_N\in\{\pm\}^{N}$ be a binary string.
We place an up arrow (resp. a down arrow) from left to right according 
to $b_i=+$ (resp. $b_i=-$).
We have the following two rules.
\begin{enumerate}[(A)]
\item We make a pair between adjacent down arrow and up arrow in this order.
Then connect this pair into a simple arc. 
\item Repeat the procedure (A) until all the up arrows are to the left of all 
down arrows.
\end{enumerate}

\paragraph{\bf Type A}
The Kazhdan--Lusztig basis of Type A follows rules (A) and (B).

\paragraph{\bf Type BI}
In addition to rules (A) and (B), we have three more rules:
\begin{enumerate}
\item[(C)] Put an integer $p, 2\le p\le M$, on the $(M+1-p)$-th down arrow from right.
\item[(D)] Put a star ($\bigstar$) on the $M$-th down arrow from right if it exists.
\item[(E)] For remaining down arrows, we make a pair of adjacent down arrows from 
right to left. Then connect this pair into a simple dashed arc.
\end{enumerate}
After applying rules (A)-(E), we may have an unpaired down arrow which does not 
form a dashed arc.

\paragraph{\bf Type BII}
After applying rules (A) and (B) to a diagram, we have unpaired up arrows 
and unpaired down arrows.
We call the $(2i-1)$-th (resp. $2i$-th) unpaired down arrow from right 
an o-unpaired (resp. e-unpaired) down arrow. 
We have an additional rule:
\begin{enumerate}
\item[(F)] We put a vertical line with a mark e (resp. o) on an e-unpaired 
(resp. o-unpaired) down arrow.
\end{enumerate}

\paragraph{\bf Type BIII}
We apply rules (A) and (B) to a diagram. 
We have unpaired up arrows and unpaired down arrows.
We enumerate unpaired down arrows from right to left by $1,2,\ldots$.
Then, we have an additional rule:
\begin{enumerate}
\item[(G)] We put a vertical line with a circled integer $i$ on 
the $i$-th unpaired down arrow.
\end{enumerate}

A diagram corresponds to a vector in $V_1\otimes\cdots\otimes V_1$ as follows.
An unpaired up arrow (resp. down arrow) in a diagram is
$v_1$ (resp. $v_{-1}$) in a tensor product.
Each building block (a simple arc, a dashed arc, a down arrow with a star, mark e
or mark o and a down arrow with an integer or with a circled integer) 
is a vector in $V_1$ or $V_1\otimes V_1$:
\begin{eqnarray*}
\raisebox{-0.6\totalheight}{
\begin{tikzpicture}
\draw (0,0)..controls(0,-1)and(1,-1)..(1,0);
\end{tikzpicture}
}
&=&v_{-1}\otimes v_{1}-q^{-1}v_1\otimes v_{-1},\\
\raisebox{-0.6\totalheight}{
\begin{tikzpicture}
\draw[dashed,thick] (0,0)..controls(0,-1)and(1,-1)..(1,0);
\end{tikzpicture}
}
&=&v_{-1}\otimes v_{-1}-q^{-1}v_1\otimes v_{1},\\
\raisebox{-0.6\totalheight}{
\begin{tikzpicture}
\draw (0,0)--(0,-0.8)node{$\bigstar$};
\end{tikzpicture}}
&=&v_{-1}-q^{-1}v_{1}, \\
\raisebox{-0.6\totalheight}{
\begin{tikzpicture}
\draw (0,0)--(0,-0.6)(0,-0.8)node{$p$};
\end{tikzpicture}}
&=&v_{-1}-q^{-p}v_{1}, \quad \text{for }2\le p\le M, \\
\raisebox{-0.6\totalheight}{
\begin{tikzpicture}
\draw (0,0)--(0,-0.7)(0,-0.8)node{o};
\end{tikzpicture}}
&=&v_{-1}-Q^{-1}v_{1}, \\
\raisebox{-0.6\totalheight}{
\begin{tikzpicture}
\draw (0,0)--(0,-0.7)(0,-0.8)node{e};
\end{tikzpicture}}
&=&v_{-1}+q^{-1}Qv_{1}. \\
\raisebox{-0.6\totalheight}{
\begin{tikzpicture}
\draw(0,0)--(0,-0.7)
(0,-0.7)node[circle,inner sep=1pt,draw,anchor=north]{$p$};
\end{tikzpicture}}
&=&v_{-1}-q^{p-1}Q^{-1}v_{1}.
\end{eqnarray*}
An unpaired up (resp. down) arrow corresponds to $v_{1}$ (resp. $v_{-1}$).
A vector in $V^{\otimes N}$ corresponding to a diagram 
is given by a tensor product of a vector corresponding to 
a building block.

\begin{example}
Let $b=+--+--$. 
The Kazhdan--Lusztig basis indexed by $b$ are 
\begin{eqnarray*}
\resizebox{0.14\hsize}{!}{$
\uparrow\downarrow\!\! 
\raisebox{-0.45\height}{
\begin{tikzpicture}
\draw(0,0)..controls(0,-0.45)and(0.45,-0.45)..(0.45,-0);
\end{tikzpicture}}
\downarrow\downarrow
$},\qquad
\resizebox{0.15\hsize}{!}{$
\uparrow\!\! 
\raisebox{-0.65\height}{
\begin{tikzpicture}
\draw(0,0)..controls(0,-0.45)and(0.45,-0.45)..(0.45,-0);
\draw[dashed](-0.225,0)..controls(-0.225,-0.7)and(0.675,-0.7)..(0.675,0);
\end{tikzpicture}}
$}\! 
\raisebox{-0.5\height}{
\begin{tikzpicture}
\draw[thick](0,0)--(0,-0.7)node{${\bigstar}$};
\end{tikzpicture}}
,\qquad
\resizebox{0.021\hsize}{!}{$
\uparrow$}
\raisebox{-0.45\height}{
\begin{tikzpicture}
\draw(0,0)--(0,-0.6)(0,-0.7)node{$\mathrm{o}$};
\end{tikzpicture}}
\raisebox{-0.3\height}{
\begin{tikzpicture}
\draw(0,0)..controls(0,-0.6)and(0.6,-0.6)..(0.6,-0);
\end{tikzpicture}}
\raisebox{-0.45\height}{
\begin{tikzpicture}
\draw(0,0)--(0,-0.6)(0,-0.7)node{$\mathrm{e}$};
\end{tikzpicture}}
\raisebox{-0.45\height}{
\begin{tikzpicture}
\draw(0,0)--(0,-0.6)(0,-0.7)node{$\mathrm{o}$};
\end{tikzpicture}}, \qquad
\tikzpic{-0.45}{
\linkpattern{1.2/2}{}{}{0.8/3,2.3/2,2.9/1};
\upa{0};
}
\end{eqnarray*}
for Type A, BI ($M=1$), BII and BIII respectively. 
The diagram of type A corresponds to 
a vector 
\begin{eqnarray*}
v_{1}\otimes v_{-1}\otimes v_{-1}\otimes v_{1}\otimes v_{-1}
\otimes v_{-1}-q^{-1}v_{1}\otimes v_{-1}\otimes v_{1}\otimes v_{-1}\otimes v_{-1}
\otimes v_{-1}
\end{eqnarray*}
in $V_{1}^{\otimes6}$.
\end{example}

\subsection{Action of the Temperley--Lieb algebra on Kazhdan--Lusztig bases}
Recall that a Kazhdan--Lusztig basis is expressed as a tensor product of building 
blocks. 
Thus it is enough to consider the action of Temperley--Lieb 
algebra on partial diagrams. 
We list up all the partial diagrams for the action of $e_i$, 
$1\le i\le N-1$.
\begin{eqnarray*}
&&e_i(\downarrow\downarrow)=0, \qquad
e_i\left(
\raisebox{-0.5\totalheight}{
\begin{tikzpicture}
\draw[dashed](0,0)..controls(0,-0.6)and(0.6,-0.6)..(0.6,0);
\draw(0,0)node[anchor=south]{i}(0.6,0)node[anchor=south]{i+1};
\end{tikzpicture}}\ 
\right)=0, \qquad
e_i\left(
\tikzpic{-0.5}{
\draw(0,0.2)node[anchor=north]{$\downarrow$};
\linkpattern{}{0.6/1.4}{}{};
\draw(0,0)node[anchor=south]{i}
(0.6,0)node[anchor=south]{i+1};
}\right)
=\tikzpic{-0.5}{
\linkpattern{0/0.8}{}{}{};
}\uparrow, \\
&&e_i\left(
\tikzpic{-0.5}{
\draw(0,0.2)node[anchor=north]{$\downarrow$};
\linkpattern{0.6/1.4}{}{}{};
\draw(0,0)node[anchor=south]{i}
(0.6,0)node[anchor=south]{i+1};
}\right)
=\tikzpic{-0.5}{
\linkpattern{0/0.8}{}{}{};
}\uparrow, \qquad
e_i\left(
\tikzpic{-0.5}{
\draw(0,0.2)node[anchor=north]{$\downarrow$};
\linkpattern{}{}{0.6/$\bigstar$}{};
\draw(0,0)node[anchor=south]{i}
(0.6,0)node[anchor=south]{i+1};
}\right)
=\tikzpic{-0.5}{
\linkpattern{0/0.8}{}{}{};
},        \\
&&e_i\left(
\raisebox{-0.5\totalheight}{
\begin{tikzpicture}
\draw[dashed](0,0)..controls(0,-0.6)and(0.6,-0.6)..(0.6,0);
\draw[dashed](1.0,0)..controls(1.0,-0.6)and(1.6,-0.6)..(1.6,0);
\draw(0.6,0)node[anchor=south]{i}(0.7,0.25)node[anchor=west]{i+1};
\end{tikzpicture}}\ 
\right)
=
\raisebox{-0.6\totalheight}{
\begin{tikzpicture}
\draw(0,0)..controls(0,-1.2)and(1.6,-1.2)..(1.6,0);
\draw(1.6/3,0)..controls(1.6/3,-0.6)and(3.2/3,-0.6)..(3.2/3,0);
\end{tikzpicture}}, \qquad
e_i\left(
\raisebox{-0.5\totalheight}{
\begin{tikzpicture}
\draw[dashed](0,0)..controls(0,-0.6)and(0.6,-0.6)..(0.6,0);
\draw(1.0,0)..controls(1.0,-0.6)and(1.6,-0.6)..(1.6,0);
\draw(0.6,0)node[anchor=south]{i}(0.7,0.25)node[anchor=west]{i+1};
\end{tikzpicture}}\ 
\right)=
\raisebox{-0.6\totalheight}{
\begin{tikzpicture}
\draw[dashed](0,0)..controls(0,-1.2)and(1.6,-1.2)..(1.6,0);
\draw(1.6/3,0)..controls(1.6/3,-0.6)and(3.2/3,-0.6)..(3.2/3,0);
\end{tikzpicture}}, \\
&&e_i\left(
\raisebox{-0.45\totalheight}{
\begin{tikzpicture}
\draw[dashed](0,0)..controls(0,-1.2)and(1.6,-1.2)..(1.6,0);
\draw(1.6/3,0)..controls(1.6/3,-0.6)and(3.2/3,-0.6)..(3.2/3,0);
\draw(0,0)node[anchor=south]{i}(1.6/3,0)node[anchor=south]{i+1};
\end{tikzpicture}}
\right)
=\raisebox{-0.5\totalheight}{
\begin{tikzpicture}
\draw(0,0)..controls(0,-0.6)and(0.6,-0.6)..(0.6,0);
\draw[dashed](1.0,0)..controls(1.0,-0.6)and(1.6,-0.6)..(1.6,0);
\end{tikzpicture}}, \qquad
e_i\left(
\raisebox{-0.5\totalheight}{
\begin{tikzpicture}
\linkpattern{}{1/2}{2.8/$\bigstar$}{};
\draw(1.6,0)node[anchor=south]{i}(2.2,0)node[anchor=south]{i+1};
\end{tikzpicture}}
\right)
=
\tikzpic{-0.5}{
\linkpattern{1.5/2.7}{}{1/$\bigstar$}{};
},\\
&&e_i\left(
\tikzpic{-0.5}{
\linkpattern{}{}{0/p,0.8/{p+1}}{};
}\right)
=0, \quad
e_i\left(
\tikzpic{-0.5}{
\linkpattern{0.6/1.2}{}{0/p}{};
\draw(0,0)node[anchor=south]{i}
(0.55,0)node[anchor=south]{i+1};
}\right)
=\tikzpic{-0.5}{
\linkpattern{0/0.8}{}{1.2/p}{};
},\\
&&e_i\left(
\tikzpic{-0.5}{
\linkpattern{0.6/1.2}{}{}{0/p};
\draw(0,0)node[anchor=south]{i}
(0.55,0)node[anchor=south]{i+1};
}\right)
=\tikzpic{-0.5}{
\linkpattern{0/0.8}{}{}{1.2/p};
},\quad
e_i\left(
\tikzpic{-0.5}{
\linkpattern{}{}{}{0/{p+1},0.8/{p}};
}\right)=0, \quad
e_i\left(
\tikzpic{-0.5}{
\linkpattern{0.6/1.4}{}{0/e}{};
\draw(0,0)node[anchor=south]{i}
(0.6,0)node[anchor=south]{i+1};
}
\right)
=\tikzpic{-0.5}{
\linkpattern{0/0.8}{}{1.2/e}{};
}, \\
&&e_i\left(
\tikzpic{-0.5}{
\linkpattern{0.6/1.4}{}{0/o}{};
\draw(0,0)node[anchor=south]{i}
(0.6,0)node[anchor=south]{i+1};
}
\right)
=\tikzpic{-0.5}{
\linkpattern{0/0.8}{}{1.2/o}{};
}, \quad
e_i\left(
\tikzpic{-0.5}{
\linkpattern{}{}{0/e,0.8/o}{};
}\right)
=(q^{-1}Q+qQ^{-1})
\tikzpic{-0.5}{
\linkpattern{0/0.8}{}{}{};
}, \\
&&e_i\left(
\tikzpic{-0.5}{
\linkpattern{}{}{0/o,0.8/e}{};
}\right)
=-(Q+Q^{-1})
\tikzpic{-0.5}{
\linkpattern{0/0.8}{}{}{};
}, \quad
e_i\left(
\tikzpic{-0.5}{
\linkpattern{0/0.8}{}{}{};
}\right)=
-[2]\tikzpic{-0.5}{
\linkpattern{0/0.8}{}{}{};
},\quad
e_i\left(
\tikzpic{-0.5}{
\draw(0,0.2)node[anchor=north]{$\uparrow$};
\linkpattern{}{0.6/1.4}{}{};
\draw(0,0)node[anchor=south]{i}
(0.6,0)node[anchor=south]{i+1};
}\right)
=\tikzpic{-0.5}{
\linkpattern{0/0.8}{}{}{};
}\downarrow, \\
&&e_i\left(
\tikzpic{-0.5}{
\linkpattern{0/0.8}{1.2/2}{}{};
\draw(0.8*0.8,0)node[anchor=south]{i}
(1.3*0.8,0)node[anchor=south]{i+1};
}\right)
=\tikzpic{-0.7}{
\linkpattern{0.8/1.6}{0/2.4}{}{};
}, \quad
e_i\left(
\tikzpic{-0.5}{
\draw(0,0.1)node[anchor=north]{$\uparrow$};
\linkpattern{0.7/1.5}{}{}{};
\draw(0,0)node[anchor=south]{i}
(0.7*0.8,0)node[anchor=south]{i+1};
}
\right)
=\tikzpic{-0.5}{
\linkpattern{0/0.8}{}{}{};
}\uparrow, \\
&&e_i\left(
\tikzpic{-0.5}{
\linkpattern{0.8/1.6}{0/2.4}{}{};
\draw(1.6*0.8,0)node[anchor=south]{i}
(2.4*0.8,0)node[anchor=south]{i+1};
}\right)
=\tikzpic{-0.5}{
\linkpattern{1.4/2.2}{0/0.8}{}{};
}, \quad
e_i\left(
\tikzpic{-0.5}{
\linkpattern{0/0.8}{}{1.2/p}{};
\draw(0.8*0.8,0)node[anchor=south]{i}
(1.3*0.8,0)node[anchor=south]{i+1};
}\right)
=\tikzpic{-0.5}{
\linkpattern{0.6/1.4}{}{0/p}{};
},\\
&&e_i\left(
\tikzpic{-0.5}{
\linkpattern{0/0.8}{}{}{1.2/p};
\draw(0.8*0.8,0)node[anchor=south]{i}
(1.3*0.8,0)node[anchor=south]{i+1};
}\right)
=\tikzpic{-0.5}{
\linkpattern{0.6/1.4}{}{}{0/p};
},\quad
e_i\left(\uparrow
\tikzpic{-0.7}{
\linkpattern{}{}{0.2/p}{};
}\right)
=\tikzpic{-0.5}{
\linkpattern{0/0.8}{}{}{};
},\quad 
e_i\left(\uparrow
\tikzpic{-0.7}{
\linkpattern{}{}{}{0.2/p};
}\right)
=\tikzpic{-0.5}{
\linkpattern{0/0.8}{}{}{};
}, \\
&&e_i\left(\uparrow \!
\tikzpic{-0.7}{
\linkpattern{}{}{0/$\bigstar$}{};
}\right)
=\tikzpic{-0.5}{
\linkpattern{0/0.8}{}{}{};
}, \quad
e_i\left(\uparrow
\tikzpic{-0.7}{
\linkpattern{}{}{0.2/e}{};
}\right)
=\tikzpic{-0.5}{
\linkpattern{0/0.8}{}{}{};
},  \quad
e_i\left(\uparrow
\tikzpic{-0.7}{
\linkpattern{}{}{0.2/o}{};
}\right)
=\tikzpic{-0.5}{
\linkpattern{0/0.8}{}{}{};
}, \\
&&e_i\left(
\tikzpic{-0.5}{
\linkpattern{0/0.8}{}{1.3/e}{};
\draw(0.64,0)node[anchor=south]{i}
(1.3*0.8,0)node[anchor=south]{i+1};
}\right)
=\tikzpic{-0.5}{
\linkpattern{0.6/1.4}{}{0/e}{};
}, \quad
e_i\left(
\tikzpic{-0.5}{
\linkpattern{0/0.8}{}{1.3/o}{};
\draw(0.8*0.8,0)node[anchor=south]{i}
(1.3*0.8,0)node[anchor=south]{i+1};
}\right)
=\tikzpic{-0.5}{
\linkpattern{0.6/1.4}{}{0/o}{};
}, \\
&&e_i(\uparrow\downarrow)=
\tikzpic{-0.5}{\linkpattern{0/0.8}{}{}{};}, \quad
e_i\left(
\tikzpic{-0.5}{
\linkpattern{0/0.8}{}{}{};
\draw(0.8*0.8,0)node[anchor=south]{i}
(1.2,0)node[anchor=south]{i+1};
\draw(1.2,-0.2)node{$\downarrow$};
}\right)
=\downarrow\tikzpic{-0.5}{
\linkpattern{0/0.8}{}{}{};
}, \quad
e_i\left(
\tikzpic{-0.5}{
\linkpattern{0/0.8}{}{}{};
\draw(0.64,0)node[anchor=south]{i}
(1.2,0)node[anchor=south]{i+1};
\draw(1.2,-0.2)node{$\uparrow$};
}\right)
=\uparrow\tikzpic{-0.5}{
\linkpattern{0/0.8}{}{}{};
}, \\
&&e_i\left(
\tikzpic{-0.5}{
\linkpattern{0/2.1,0.7/1.4}{}{}{};
\draw(0,0)node[anchor=south]{i}
(0.7*0.8,0)node[anchor=south]{i+1};
}
\right)
=\tikzpic{-0.5}{
\linkpattern{0/0.7,1.0/1.7}{}{}{};
},\qquad
e_i\left(
\tikzpic{-0.5}{
\linkpattern{0/2.1,0.7/1.4}{}{}{};
\draw(1.4*0.8,0)node[anchor=south]{i}
(2.1*0.8,0)node[anchor=south]{i+1};
}
\right)
=\tikzpic{-0.5}{
\linkpattern{0/0.7,1.0/1.7}{}{}{};
},\qquad
e_i(\uparrow\uparrow)=0.
\end{eqnarray*}
\begin{example}
Let $D$ be a diagram of type BI ($M=2$) depicted as 
\begin{eqnarray*}
D=
\tikzpic{-0.6}{
\upa{0};
\linkpattern{0.9/1.7,2.9/3.7}{0.5/2.1}{2.5/$\bigstar$,4.1/2}{};
}.
\end{eqnarray*}
Then, we have 
\begin{eqnarray*}
&&e_{1}(D)=
\tikzpic{-0.6}{
\downa{1.7};	
\linkpattern{-0.3/0.5,0.9/1.7,3.1/3.9}{}{2.7/$\bigstar$,4.3/2}{};
}, \qquad
e_2(D)
=\tikzpic{-0.6}{
\upa{0};
\linkpattern{0.5/1.3,3.3/4.1}{1.7/2.5}{2.9/$\bigstar$,4.4/2}{};
}, \\
&&e_{5}(D)
=\tikzpic{-0.6}{
\upa{0};
\linkpattern{0.9/1.7,2.1/2.9,3.3/4.1}{}{0.5/$\bigstar$,4.5/2}{};
},\qquad
e_{6}(D)
=\tikzpic{-0.6}{
\upa{0};
\linkpattern{0.9/1.7,2.5/3.3}{0.5/2.1}{3.7/$\bigstar$,4.1/2}{};
}
\end{eqnarray*}
and $e_3(D)=e_{7}(D)=-[2]D$.
\end{example}

\subsection{\texorpdfstring{Action of $e_N$ on Kazhdan--Lusztig bases}
{Action of eN on KL bases}}
The action of $e_N$ on a Kazhdan--Lusztig basis 
is given as follows.
\paragraph{\bf Type A}
Let $D$ be a diagram of type A and $N_{\uparrow}$ 
(resp. $N_\downarrow$) be the number of up (resp. down)
arrows.
We have three cases for $D$: 1) the rightmost arrow 
of $D$ is an up arrow, {\it i.e.}, $N_\downarrow=0$, 
2) the rightmost arrow is a down arrow and 
3) the rightmost arrow is an up  
arrow forming an arc.

\paragraph{Case 1}
Let $D'$ be a diagram obtained from $D$ by changing 
the rightmost up arrow to a down arrow. 
Then, the action of $e_N$ on $D$ is given by
\begin{eqnarray*}
e_N(D)=D'-Q^{-1}D.
\end{eqnarray*}

\paragraph{Case 2}
We enumerate down arrows in $D$ from right to left by 
$1,2,\ldots,N_{\downarrow}$.
For each $i$, $1\le i\le N_{\downarrow}-1$, we denote 
by $A_{(i)}(D)$ a diagram obtained from $D$ by connecting 
the $i$-th and the $(i+1)$-th down arrows via an arc.
We denote by $A_{(N_{\downarrow})}(D)$ a diagram obtained 
from $D$ by changing the $N_{\downarrow}$-th arrow 
to an up arrow.
The action of $e_N$ on $D$ is given by
\begin{eqnarray*}
e_N(D)=\sum_{1\le i\le N_{\downarrow}}q^{-(i-1)}A_{(i)}(D)
-QD.
\end{eqnarray*}

\paragraph{Case 3}
Let $\widetilde{D}$ be a diagram obtained from $D$ by changing
the rightmost arc to two down arrows.
We enumerate unpaired down arrows of $\widetilde{D}$ from right to left 
by $1,2,\ldots N_{\downarrow}+2$.
For each $i$, $2\le i\le N_{\downarrow}+1$, we denote by 
$A_{(i)}(\widetilde{D})$ obtained from $\widetilde{D}$ by 
connecting the $i$-th and the $(i+1)$-th down arrows via an arc.
We denote by $A_{(N_{\uparrow}+2)}(\widetilde{D})$ a diagram 
obtained from $\widetilde{D}$ by changing the $(N_{\uparrow}+2)$-th 
down arrow to an up arrow.
For each $1\le i<j\le N_{\downarrow}+1$ with $j-i\ge2$, we denote by 
$B_{(j,i)}(\widetilde{D})$ a diagram obtained from $\widetilde{D}$
by connecting the $i$-th and the $(i+1)$-th down arrows via an arc 
and the $j$-th and the $(j+1)$-th arrows via an arc.
For each $i$, $1\le i\le N_{\downarrow}-1$, we denote by 
$B_{(i+1,i)}(\widetilde{D})$ a diagram obtained from $\widetilde{D}$ 
by connecting the $i$-th and the $(i+3)$-th down arrows via an arc 
and the $(i+1)$-th and the $(i+2)$-th down arrows via an arc.
We denote by $B_{(N_{\downarrow}+1,N_{\downarrow})}(\widetilde{D})$ 
a diagram obtained from $\widetilde{D}$ by connecting the 
$(N_{\downarrow}+2)$-th and the $(N_{\downarrow}+1)$-th arrows 
via an arc and putting an up arrow at the $N_{\downarrow}$-th 
site.
For each $i$, $1\le i\le N_{\downarrow}$, we denote 
by $B_{(N_{\downarrow}+2,i)}$ the diagram obtained from 
$\widetilde{D}$ by changing the $(N_{\downarrow}+2)$-th down arrow
to an up arrow and connecting the $i$-th and the $i+1$-th  down arrows 
via an arc.
We denote by $B_{(N_{\downarrow}+2,N_{\downarrow}+1)}(\widetilde{D})$ 
a diagram obtained from $\widetilde{D}$ by changing the 
$(N_\downarrow+2)$-th and $(N_{\downarrow}+1)$-th down arrows to 
two up arrows.
The action of $e_N$ on an arc is 
\begin{eqnarray*}
e_N(
\raisebox{-0.5\totalheight}{
\begin{tikzpicture}
\linkpattern{0/0.6}{}{}{};
\end{tikzpicture}
})
=\downarrow\downarrow
-Q^{-1}\raisebox{-0.5\totalheight}{
\begin{tikzpicture}
\linkpattern{0/0.6}{}{}{};
\end{tikzpicture}}
+q^{-1}(Q-Q^{-1})\uparrow\downarrow
-q^{-1}\uparrow\uparrow.
\end{eqnarray*}
Then, the action of $e_N$ on $D$ is given by
\begin{eqnarray*}
e_N(D)=\widetilde{D}-Q^{-1}D
+q^{-1}(Q-Q^{-1})\sum_{2\le i\le N_{\downarrow}+2}
q^{-(i-2)}A_{(i)}(\widetilde{D}) 
-q^{-1}\sum_{1\le i<j\le N_{\downarrow}+2}
c_{(j,i)}B_{(j,i)}(\widetilde{D}).
\end{eqnarray*}
where 
\begin{eqnarray}
\label{defc}
c_{(j,i)}:=
\begin{cases}
q^{-j+2}, & \text{for $i=1$}, \\
q^{-2i+2}, & \text{for $j=i+1$}, \\
q^{-i-j+3}(1+q^{2}), & \text{otherwise}.
\end{cases}
\end{eqnarray}

\begin{example}
Let $D$ be a diagram depicted as 
\begin{eqnarray*}
D=\tikzpic{-0.6}{
\upa{0};
\upa{2};
\downa{2.35};
\downa{2.7};
\linkpattern{0.5/2.1,0.9/1.7,3.8/4.6}{}{}{};
}.
\end{eqnarray*}
Then, we have 
\begin{eqnarray*}
&&A_{(2)}(\widetilde{D})=
\tikzpic{-0.6}{
\upa{0};
\upa{2};
\downa{2.35};
\linkpattern{0.5/2.1,0.9/1.7,3.4/4.2}{}{}{};
\downa{3.7};
}, \qquad
A_{(3)}(\widetilde{D})=
\tikzpic{-0.6}{
\upa{0};
\upa{2};
\linkpattern{0.5/2.1,0.9/1.7,3/3.8}{}{}{};
\downa{3.3};
\downa{3.7};
}, \\
&&A_{(4)}(\widetilde{D})=
\tikzpic{-0.6}{
\upa{0};
\upa{2};
\upa{2.35};
\linkpattern{0.5/2.1,0.9/1.7}{}{}{};
\downa{2.9};
\downa{3.3};
\downa{3.7};
}, \qquad 
B_{(3,1)}(\widetilde{D})=
\tikzpic{-0.6}{
\upa{0};
\upa{2};
\linkpattern{0.5/2.1,0.9/1.7,3/3.8,4.2/5}{}{}{};
}, \\
&&B_{(2,1)}(\widetilde{D})=
\tikzpic{-0.6}{
\upa{0};
\upa{2};
\linkpattern{0.5/2.1,0.9/1.7,3/4.6,3.4/4.2}{}{}{};
},\qquad
B_{(3,2)}(\widetilde{D})=
\tikzpic{-0.6}{
\upa{0};
\upa{2};
\linkpattern{0.5/2.1,0.9/1.7,3/3.8}{}{}{};
\upa{3.4};
\downa{3.75};
}, \\
&&B_{(4,1)}(\widetilde{D})=
\tikzpic{-0.6}{
\upa{0};
\upa{2};
\upa{2.35};
\downa{2.7}
\linkpattern{0.5/2.1,0.9/1.7,3.8/4.6}{}{}{};
}, \qquad
B_{(4,2)}(\widetilde{D})=
\tikzpic{-0.6}{
\upa{0};
\upa{2};
\upa{2.35};
\linkpattern{0.5/2.1,0.9/1.7,3.4/4.2}{}{}{};
\downa{3.7};
}, \\
&&B_{(4,3)}(\widetilde{D})=
\tikzpic{-0.6}{
\upa{0};
\upa{2};
\upa{2.35};
\upa{2.7};
\downa{3.05};
\linkpattern{0.5/2.1,0.9/1.7}{}{}{};
\downa{3.4};
}.
\end{eqnarray*}
\end{example}

\paragraph{\bf Type BI}
When the rightmost arrow in $D$ is an up arrow or 
a down arrow with the integer $M$, the action of 
$e_N$ on partial diagrams is given by
\begin{eqnarray*}
e_N(\uparrow)
&=&
\tikzpic{-0.5}{
\linkpattern{}{}{0/M}{};
}, \\
e_N\left(\tikzpic{-0.5}{
\linkpattern{}{}{0/M}{};
}\right)
&=&-(q^{M}+q^{-M})\tikzpic{-0.5}{
\linkpattern{}{}{0/M}{};
}.
\end{eqnarray*}
Below, we consider the case where the rightmost arrow 
of $D$ is an up arrow forming an arc.
The other arcs and up arrows are irrelevant for the action 
of $e_N$ since a Kazhdan--Lusztig basis is tensor products
of building blocks. 
The action of $e_N$ on an arc is given by 
\begin{eqnarray*}
e_{N}\left(
\tikzpic{-0.5}{
\linkpattern{0/0.6}{}{}{};
}
\right)
&=&
\tikzpic{-0.6}{
\linkpattern{}{}{0.6/$\bigstar$}{};
\downa{0};
}+
\tikzpic{-0.6}{
\linkpattern{}{}{0.6/$\bigstar$}{};
\upa{0};
},\quad \text{for } M=1, \\
e_{N}\left(
\tikzpic{-0.5}{
\linkpattern{0/0.6}{}{}{};
}
\right)
&=&
\tikzpic{-0.5}{
\linkpattern{}{}{0/{M-1},0.8/M}{};
}
+\langle M-1\rangle
\tikzpic{-0.5}{
\linkpattern{}{}{0.6/M}{};
\downa{0};
},\quad \text{for } M\ge2,
\end{eqnarray*}
where $\langle k\rangle:=q^{k}+q^{-k}$ for $k\in\mathbb{N}_{+}$.
Then, the action of $e_N$ on partial diagrams is given by (see \cite{Shi14-1})
\begin{eqnarray*}
e_N\left(
\tikzpic{-0.5}{
\linkpattern{1.8/2.4}{}{0/r,1.5/M}{};
\draw(0.2,-0.2)node[anchor=west]{$\cdots$};
}
\right)
&=&
\tikzpic{-0.5}{
\linkpattern{}{}{0/{r-2},1.5/M}{};
\draw(0.2,-0.2)node[anchor=west]{$\cdots$};
}
+
\langle r-2\rangle
\tikzpic{-0.5}{
\linkpattern{}{}{0.6/{r-1},2.1/M}{};
\draw(0,0.18)node[anchor=north]{$\uparrow$};
\draw(0.5,-0.2)node[anchor=west]{$\cdots$};
} \\
&&+
\sum_{r\le k\le M}\langle k-1\rangle
\tikzpic{-0.5}{
\linkpattern{1.8/2.6}{}{0/r,1.5/{k-1},2.8/k,4.3/M}{};
\draw(0.2,-0.2)node[anchor=west]{$\cdots$};
\draw(2.2,-0.2)node[anchor=west]{$\cdots$};
},\quad \text{for $r\ge3$}, \\
e_N\left(
\tikzpic{-0.5}{
\linkpattern{1.8/2.4}{}{0/2,1.5/M}{};
\draw(0.2,-0.2)node[anchor=west]{$\cdots$};
}
\right)
&=&
\tikzpic{-0.5}{
\linkpattern{}{}{0.6/$\bigstar$,2.1/M}{};
\draw(0,0.18)node[anchor=north]{$\uparrow$};
\draw(0.5,-0.2)node[anchor=west]{$\cdots$};
} 
+
\tikzpic{-0.5}{
\linkpattern{}{}{0.6/$\bigstar$,2.1/M}{};
\draw(0,0.18)node[anchor=north]{$\downarrow$};
\draw(0.5,-0.2)node[anchor=west]{$\cdots$};
} \\
&&+
\sum_{2\le k\le M}\langle k-1\rangle
\tikzpic{-0.5}{
\linkpattern{1.8/2.6}{}{0/2,1.5/{k-1},2.8/k,4.3/M}{};
\draw(0.2,-0.2)node[anchor=west]{$\cdots$};
\draw(2.2,-0.2)node[anchor=west]{$\cdots$};
}, \\
e_N\left(
\tikzpic{-0.5}{
\linkpattern{1.8/2.4}{}{0/$\bigstar$,1.5/M}{};
\draw(0.2,-0.2)node[anchor=west]{$\cdots$};
}
\right)
&=&
\tikzpic{-0.5}{
\linkpattern{}{0/0.8}{1.2/$\bigstar$,2.7/M}{};
\draw(1.4,-0.2)node[anchor=west]{$\cdots$};
}
+
\tikzpic{-0.5}{
\linkpattern{0/0.8}{}{1.2/$\bigstar$,2.7/M}{};
\draw(1.4,-0.2)node[anchor=west]{$\cdots$};
} \\
&&+
\sum_{2\le k\le M}\langle k-1\rangle
\tikzpic{-0.5}{
\linkpattern{1.8/2.6}{}{0/$\bigstar$,1.5/{k-1},2.8/k,4.3/M}{};
\draw(0.2,-0.2)node[anchor=west]{$\cdots$};
\draw(2.2,-0.2)node[anchor=west]{$\cdots$};
}.
\end{eqnarray*}

\begin{example}
Let $D$ be a diagram depicted as 
\begin{eqnarray*}
D=\tikzpic{-0.6}{
\upa{0};
\linkpattern{0.9/1.7,2.7/3.5,4.1/5.7,4.5/5.3}
{0.5/2.1}{2.4/$\bigstar$,3.8/2}{};x
}, 
\end{eqnarray*}
where $M=2$. 
Then, the action of $e_{13}$ on $D$ is 
\begin{eqnarray*}
e_{13}(D)
&=&
\tikzpic{-0.6}{
\upa{0};
\linkpattern{0.9/1.7,2.8/3.6,4.6/5.4}
{0.5/2.1,2.4/4}{4.3/$\bigstar$,5.7/2}{};
}
+\tikzpic{-0.6}{
\upa{0};
\linkpattern{0.9/1.7,2.8/3.6,4.6/5.4,2.4/4}
{0.5/2.1}{4.3/$\bigstar$,5.7/2}{};
} \\	
&&+\langle1\rangle\tikzpic{-0.6}{
\upa{0};
\linkpattern{0.9/1.7,2.7/3.5,3.9/4.7,5.1/5.9}{0.5/2.1}{2.4/$\bigstar$,6.2/2}{};
}
\end{eqnarray*}
\end{example}

\paragraph{\bf Type BII}
The actions of $e_N$ on partial diagrams are given by
\begin{eqnarray*}
e_N(\uparrow)
&=&\tikzpic{-0.5}{
\linkpattern{}{}{0/o}{};
},\\
e_N\left(
\tikzpic{-0.5}{
\linkpattern{}{}{0/o}{};
}
\right)
&=&-(Q+Q^{-1})\tikzpic{-0.5}{
\linkpattern{}{}{0/o}{};
}, \\
e_N\left(
\tikzpic{-0.5}{
\linkpattern{0/0.8}{}{}{};
}
\right)
&=&\tikzpic{-0.5}{
\linkpattern{}{}{0/e,0.6/o}{};
}.
\end{eqnarray*}

\paragraph{\bf Type BIII}
We have three cases for the rightmost arrow $a$ of $D$: 
1) an arrow $a$ is an up arrow, 
2) $a$ is a down arrow with the circled integer one,
and 3) $a$ is an up arrow forming an arc. 

In the case 1 and 2, we have 
\begin{eqnarray*}
e_N(\uparrow)&=&
\tikzpic{-0.5}{
\linkpattern{}{}{}{0/1};
}, \\
e_N\left(
\tikzpic{-0.5}{
\linkpattern{}{}{}{0/1};
}\right)
&=&-(Q+Q^{-1})
\tikzpic{-0.5}{
\linkpattern{}{}{}{0/1};
}.
\end{eqnarray*}
In the case 3, the action of $e_N$ on an arc is given by 
\begin{eqnarray*}
e_N\left(
\tikzpic{-0.5}{
\linkpattern{0/0.6}{}{}{};
}
\right)
=
\tikzpic{-0.5}{
\linkpattern{}{}{}{0/$2$,0.8/$1$};
}
+\langle\langle1\rangle\rangle
\tikzpic{-0.5}{
\linkpattern{}{}{}{0.7/$1$};
\upa{0};
},
\end{eqnarray*}
where $\langle\langle k\rangle\rangle:=Qq^{-k}+Q^{-1}q^k$.
The action of $e_N$ on a partial diagram is given by
\begin{eqnarray*}
e_N\left(
\tikzpic{-0.5}{
\linkpattern{1.8/2.4}{}{}{0/r,1.5/1};
\draw(0.2,-0.2)node[anchor=west]{$\cdots$};
}
\right)
&=&
\tikzpic{-0.5}{
\linkpattern{}{}{}{0/{r+2},1.5/1};
\draw(0.2,-0.2)node[anchor=west]{$\cdots$};
}
+
\langle\langle r+1\rangle\rangle
\tikzpic{-0.5}{
\draw(-0.5,-0.2)node{$\uparrow$};
\linkpattern{}{}{}{0/{r+1},1.5/1};
\draw(0.2,-0.2)node[anchor=west]{$\cdots$};
} \\
&&\qquad+\sum_{1\le k\le r}\langle\langle k\rangle\rangle
\tikzpic{-0.5}{
\linkpattern{1.8/2.4}{}{}{0/r,1.5/{k+1},2.6/{k},4.1/1};
\draw(0.2,-0.2)node[anchor=west]{$\cdots$};
\draw(2.2,-0.2)node[anchor=west]{$\cdots$};
}.  
\end{eqnarray*}

\begin{example}
Let $D$ be a diagram of type BIII depicted as 
\begin{eqnarray*}
D=\tikzpic{-0.6}{
\linkpattern{0/0.8,2/2.8,3.6/4.4,5.2/6.8,5.6/6.4}
{}{}{1.6/3,3.2/2,4.8/1};
\upa{0.94};
}.
\end{eqnarray*}
Then we define 
\begin{eqnarray*}
&&D_{1}:=\tikzpic{-0.6}{
\linkpattern{0/0.8,2/2.8,3.6/4.4,5.6/6.4}
{}{}{1.6/5,3.2/4,4.7/3,5.3/2,6.8/1};
\upa{0.94};
},\quad
D_{2}:=\tikzpic{-0.6}{
\linkpattern{0/0.8,2/2.8,3.6/4.4,4.8/5.6,6/6.8}
{}{}{1.6/3,3.2/2,7.2/1};
\upa{0.94};
}, \\
&&D_{3}:=\tikzpic{-0.6}{
\linkpattern{0/0.8,2/2.8,3.2/4.8,3.6/4.4,5.6/6.4}
{}{}{1.6/3,5.2/2,6.8/1};
\upa{0.94};
}, \quad 
D_{4}:=\tikzpic{-0.6}{
\linkpattern{0/0.8,1.6/3.2,2/2.8,3.6/4.4,5.6/6.4}{}{}{4.7/3,5.25/2,6.8/1};
\upa{0.94};
}. \\
&&D_{5}:=\tikzpic{-0.6}{
\linkpattern{0/0.8,1.8/2.6,3.3/4.1,5.6/6.4}{}{}{2.9/4,4.4/3,5/2,7/1};
\upa{0.94};
\upa{1.2};
}.
\end{eqnarray*}
The action of $e_{14}$ on $D$ is 
\begin{eqnarray*}
e_{14}(D)=D_1+\langle\langle1\rangle\rangle D_2+\langle\langle2\rangle\rangle D_3
+\langle\langle3\rangle\rangle D_4+\langle\langle4\rangle\rangle D_5.
\end{eqnarray*}

\end{example}

\subsection{\texorpdfstring{Action of $e_0$ on Kazhdan--Lusztig bases}
{Action of e0 on Kazhdan--Lusztig bases}}
\paragraph{\bf Type A}
We denote by $u$ the bijection of sets $u: D\rightarrow D$ defined 
by reflecting a diagram about a vertical axis and reversing 
orientations of all arrows.
The action of $e_0$ on a diagram $D$ is given by
\begin{eqnarray}
\label{e0A}
e_{0}(D)=u(e_N(u(D))), 
\end{eqnarray}
with a change of the parameter $Q\rightarrow Q_{0}$.

For example, the action of $e_0$ on an arc is 
\begin{eqnarray*}
e_0(
\tikzpic{-0.5}{
\linkpattern{0/0.6}{}{}{};
})
=\uparrow\uparrow
-Q_{0}^{-1}\tikzpic{-0.5}{
\linkpattern{0/0.6}{}{}{};}
+q^{-1}(Q_{0}-Q_{0}^{-1})\uparrow\downarrow
-q^{-1}\downarrow\downarrow.
\end{eqnarray*}

\paragraph{\bf Type BI}
Let $D$ be a diagram of type BI and $N_{\uparrow}$ be the number 
of (unpaired) up arrows.
Let $r$ be a smallest integer attached to down arrows 
with an integer $p$, $1\le p\le M$. 
If $D$ has a down arrow with a star, then we define $r=1$.
If there is no down arrow with an integer in $D$, we define $r=M+1$.
We enumerate (unpaired) up arrows from left to right 
by $1, 2, \ldots, N_{\uparrow}$.
We denote by $A_{(i)}(D)$, $1\le i\le N_{\uparrow}-1$, a diagram 
obtained from $D$ by connecting the $i$-th and $(i+1)$-th up arrows  
via an arc.
We have three cases for $D$: 1) the leftmost arrow is an up arrow,
2) the leftmost arrow is an unpaired down arrow,
3) the leftmost arrow is a down arrow forming an arc.

\paragraph{Case 1}
We have two cases for $D$: a) $D$ does not have an unpaired down 
arrow, and b) $D$ has an unpaired down arrow.

\paragraph{Case 1-a}
We denote by $A_{(N_{\uparrow})}$ a diagram obtained 
from $D$ by changing the $N_{\uparrow}$-th up arrow to 
a down arrow with the integer $r-1$ for $r\ge2$. 
If $r=1$, denote by $A_{(N_{\uparrow})}$ a diagram 
obtained from $D$ by changing the $N_{\uparrow}$-th 
up arrow to an unpaired down arrow.
The action of $e_{0}$ on $D$ is given by 
\begin{eqnarray*}
e_{0}(D)=
\sum_{1\le i\le N_{\uparrow}}q^{-(i-1)}A_{(i)}(D)
+(q^{-(r+N_{\uparrow}-2)}(1-\delta_{1,r})-Q_{0})D,
\end{eqnarray*}
where $\delta_{i,j}$ is the delta function, that is, 
$\delta_{i,j}$ is one if $i=j$ and zero otherwise.

\paragraph{Case 1-b}
We denote by $A_{(N_{\uparrow})}(D)$ a diagram obtained from 
$D$ by connecting the $N_{\uparrow}$-th up arrow and 
the unpaired down arrow via a dashed arc, and 
by $A_{(N_{\uparrow}+1)}(D)$ a diagram obtained from $D$ 
by changing the unpaired down arrow to an up arrow.
The action of $e_{0}$ on $D$ is given by 
\begin{eqnarray*}
e_{0}(D)=
\sum_{1\le i\le N_{\uparrow}+1}
q^{-(i-1)}A_{(i)}-Q_{0}D. 
\end{eqnarray*}

\paragraph{Case 2}
Let $\tilde{D}$ be a diagram obtained from $D$ by 
changing the unpaired down arrow to an up arrow.
The action of $e_0$ on $D$ is given by 
\begin{eqnarray*}
e_{0}(D)=\tilde{D}-Q_{0}^{-1}D.
\end{eqnarray*}

\paragraph{Case 3}
Let $\tilde{D}$ be a diagram obtained from $D$ by 
changing the leftmost arc to two up arrows.
We enumerate (unpaired) up arrows of $\tilde{D}$ 
from left to right by $1,2\ldots, N_{\uparrow}+2$.
We denote by $A_{(N_{\uparrow}+2)}(\tilde{D})$ a diagram 
obtained from $\tilde{D}$ by changing the $(N_{\uparrow}+2)$-th 
up arrow to a down arrow.
For each $1\le i<j\le N_{\uparrow}+1$ with $j-i\ge2$, we denote 
by $B_{(i,j)}(\tilde{D})$ a diagram obtained from $\tilde{D}$ 
by connecting the $i$-th and the $(i+1)$-th up arrows via
an arc and the $j$-th and the $(j+1)$-th up arrows via an arc.
For each $1\le i\le N_{\uparrow}+1$, we denote by $B_{(i,i+1)}$
a diagram obtained from $\tilde{D}$ by connecting the $i$-th 
and the $(i+3)$-th up arrows via an arc and the $(i+1)$-th 
and the $(i+2)$-th up arrows via an arc. 
We denote by $B_{(N_{\uparrow},N_{\uparrow}+1)}(\tilde{D})$ a 
diagram obtained from $\tilde{D}$ by connecting the 
$(N_{\uparrow}+1)$-th and the $(N_{\uparrow}+2)$-th up arrows 
via an arc and putting a down arrow at the $(N_{\uparrow}+2)$-th 
site. 
For each $1\le i\le N_{\uparrow}$, we denote by 
$B_{(i,N_{\uparrow}+2)}(\tilde{D})$ a diagram obtained from 
$\tilde{D}$ by connecting the $i$-th and the $(i+1)$-th up 
arrows via an arc and putting a down arrow at $(N_{\uparrow}+2)$-th 
site.
Finally, we denote by $B_{(N_{\uparrow}+1,N_{\uparrow}+2)}(\tilde{D})$ 
a diagram obtained from $\tilde{D}$ by putting two down arrows at 
the $(N_{\uparrow}+1)$-th and the $(N_{\uparrow}+2)$-th sites.

We have three cases for $D$: 
a) $r=1$ and $D$ does not have an unpaired down arrow,
b) $r\ge2$, and 
c) $D$ has an unpaired down arrow.

\paragraph{Case 3-a}
The action of $e_0$ on $D$ is given by
\begin{eqnarray*}
e_{0}(D)&=&
(1-q^{-2N_{\uparrow}-2})\tilde{D}
-Q_{0}^{-1}D
+
q^{-1}(Q_{0}-Q_{0}^{-1})\sum_{2\le i\le N_{\uparrow}+2}
q^{-(i-2)}A_{(i)}(\tilde{D})  \\
&&\qquad-q^{-1}\sum_{1\le i<j\le N_{\uparrow}+2}
c_{(j,i)}B_{(i,j)}(\tilde{D}),
\end{eqnarray*}
where $c_{(j,i)}$ is defined in Eqn. (\ref{defc}).

\paragraph{Case 3-b}
The action of $e_0$ on $D$ is given by 
\begin{eqnarray*}
e_{0}(D)&=&
(1+q^{-N_{\uparrow}-r}(Q_{0}-Q_{0}^{-1})-q^{-2N_{\uparrow}-2r})\tilde{D}
-(Q_{0}^{-1}+q^{-N_{\uparrow}-r})D \\
&&+
\sum_{2\le i\le N_{\uparrow}+2}\tilde{c}_{(i)}A_{(i)}(\tilde{D})
-q^{-1}\sum_{1\le i<j\le N_{\uparrow}+2}
c_{(j,i)}B_{(i,j)}(\tilde{D})
\end{eqnarray*}
where 
\begin{eqnarray*}
\tilde{c}_{(i)}:=q^{-(i-1)}(Q_{0}-Q_{0}^{-1})
-q^{-N_{\uparrow}-r-i+1}(1+(1-\delta_{2,r}\delta_{i,N_{\uparrow}+2})q^{2}).
\end{eqnarray*}

\paragraph{Case 3-c}
For each $1\le i\le N_{\uparrow}+1$, we denote by $B_{(i,N_{\uparrow}+3)}(\tilde{D})$
a diagram obtained from $\tilde{D}$ by connecting the $i$-th and $(i+1)$-th up 
arrows via an arc and changing the unpaired down arrow to an up arrow.
We also denote by $A'_{(N_{\uparrow}+2)}(\tilde{D})$ a diagram obtained 
from $\tilde{D}$ by connecting the $(N_{\uparrow}+2)$-th up arrow and the 
unpaired down arrow via an arc.
Then, the action of $e_0$ on $D$ is given by 
\begin{eqnarray*}
e_{0}(D)&=&(1-q^{-2N_{\uparrow}-4})\tilde{D}-Q_{0}^{-1}D
+(Q_{0}-Q_{0}^{-1})\sum_{2\le i\le N_{\uparrow}+3}
q^{-(i-1)}A_{(i)}(\tilde{D}) \\
&&-q^{-2N_{\uparrow}-3}A'_{(N_{\uparrow}+2)} 
-q^{-1}\sum_{1\le i<j\le N_{\uparrow}+2}c_{(j,i)}B_{(i,j)}(\tilde{D}) \\
&&-q^{-1}\sum_{1\le i\le N_{\uparrow}+1}
q^{-N_{\uparrow}-i}(1+(1-\delta_{1,i})q^{2})B_{(i,N_{\uparrow}+3)}(\tilde{D}).
\end{eqnarray*}

\paragraph{\bf Type BII}
We have three cases for the leftmost arrow $a$ of $D$: 
1) $a$ is an up arrow,
2) $a$ is an e- or o-unpaired down arrow, and 
3) $a$ is a down arrow forming an arc. 

\paragraph{Case 1}
Let $N_{\uparrow}$ be the number of up arrows of a diagram $D$.
We enumerate up arrows from left to right by 
$1, 2, \ldots, N_{\uparrow}$. 
For each $1\le i\le N_{\uparrow}-1$, we denote by $A_{(i)}(D)$ a 
diagram obtained from $D$ by connecting the $i$-th and $(i+1)$-th 
up arrows via an arc. 
We denote by $A_{(N_{\uparrow})}(D)$ a diagram obtained 
from $D$ by changing the $N_{\uparrow}$-th up arrow to 
an e- or o-unparied down arrow.

Suppose that the leftmost down arrow of $D$ is an o-unpaired
down arrow.
Then, the action of $e_{0}$ on $D$ is given by 
\begin{eqnarray*}
e_{0}(D)=-(Q_{0}+q^{-N_{\uparrow}})D
+
\sum_{1\le i\le N_{\uparrow}}q^{-(i-1)}A_{(i)}.
\end{eqnarray*}
Suppose that the leftmost down arrow of $D$ is an e-unpaired 
down arrow or $D$ does not have a down arrow. 
The action of $e_{0}$ on $D$ is given by 
\begin{eqnarray*}
e_{0}(D)=(q^{-N_{\uparrow}+1}Q-Q_{0})D
+\sum_{1\le i\le N_{\uparrow}}q^{-(i-1)}A_{(i)}.
\end{eqnarray*}

\paragraph{Case 2}
The action of $e_0$ on a partial diagram of $D$ is given by 
\begin{eqnarray*}
e_{0}\left(
\tikzpic{-0.5}{
\linkpattern{}{}{0/e}{};
}\right)
&=&(q^{-1}Q-Q_{0}^{-1})
\tikzpic{-0.5}{\linkpattern{}{}{0/e}{};}
-(q^{-2}Q^2-q^{-1}QQ_{0}^{-1}+q^{-1}QQ_{0}-1)\uparrow, \\
e_{0}\left(
\tikzpic{-0.5}{
\linkpattern{}{}{0/o}{};
}\right)
&=&-(Q^{-1}+Q_{0}^{-1})
\tikzpic{-0.5}{\linkpattern{}{}{0/o}{};}
+(1+Q^{-1}Q_{0}-Q^{-1}Q_{0}^{-1}-Q^{-2})\uparrow,
\end{eqnarray*}

\paragraph{Case 3}
We have two cases for $D$: 
\begin{enumerate}[(a)]
\item The leftmost down arrow is an o-unpaired down arrow. 
The action of $e_0$ on $D$ is given by 
\begin{eqnarray*}
e_{0}\left(
\tikzpic{-0.5}{\linkpattern{0/0.6}{}{}{};}
\right)
&=&
-q^{-1}
\tikzpic{-0.5}{\linkpattern{}{}{0/o,0.6/e}{};}
-(Q_{0}^{-1}-q^{-2}Q)\tikzpic{-0.5}{\linkpattern{0/0.6}{}{}{};}
-(q^{-1}(Q_{0}^{-1}+Q^{-1})-q^{-3}Q-q^{-1}Q_{0})
\tikzpic{-0.5}{\linkpattern{}{}{0.5/e}{};\upa{0};} \\
&&+(1-q^{-4}Q^{2}+q^{-2}Q_{0}^{-1}Q-q^{-2}QQ_{0})\uparrow\uparrow.
\end{eqnarray*}

\item The leftmost down arrow is an e-unpaired down arrow or 
$D$ does not have a down arrow.
The action of $e_0$ on $D$ is given by
\begin{eqnarray*}
e_{0}\left(
\tikzpic{-0.5}{\linkpattern{0/0.6}{}{}{};}
\right)
&=&
-q^{-1}
\tikzpic{-0.5}{\linkpattern{}{}{0/e,0.6/o}{};}
-(Q_{0}^{-1}+q^{-1}Q^{-1})
\tikzpic{-0.5}{\linkpattern{0/0.6}{}{}{};}
-(q^{-1}Q_{0}^{-1}+q^{-2}(Q^{-1}-Q)-q^{-1}Q_{0})
\tikzpic{-0.5}{\linkpattern{}{}{0.5/o}{};\upa{0};} \\
&&+(1-q^{-2}Q^{-2}-q^{-1}Q^{-1}Q_{0}^{-1}+q^{-1}Q^{-1}Q_{0})
\uparrow\uparrow.
\end{eqnarray*}
\end{enumerate}

\paragraph{\bf Type BIII}
Let $N_{\uparrow}$ be the number of up arrows of a diagram 
$D$ and $r$ be the largest integer attached to down arrows 
with a circled integer. 
If there is no down arrow in $D$, we define $r=0$.
We enumerate up arrow from left to right.
For each $1\le i\le N_{\uparrow}-1$, we denote by $A_{(i)}(D)$ 
a diagram obtained from $D$ by connecting the $i$-th and 
$(i+1)$-th arrows via an arc. 
We denote by $A_{(N_{\uparrow})}(D)$ a diagram obtained 
from $D$ by changing the $N_{\uparrow}$-th up arrow 
to a down arrow with a circled integer $r+1$.
We have three cases for $D$: 
1) the leftmost arrow is an up arrow,
2) the leftmost arrow is a down arrow with a circled 
integer $r$, and 
3) the leftmost arrow is a down arrow forming an arc.

\paragraph{Case 1}
The action of $e_0$ on $D$ is given by 
\begin{eqnarray*}
e_{0}(D)
=
(q^{-N_{\uparrow}+r+1}Q^{-1}-Q_{0})D+
\sum_{1\le i\le N_{\uparrow}}q^{-(i-1)}A_{(i)}(D).
\end{eqnarray*}

\paragraph{Case 2}
The action of $e_0$ on $D$ is given by 
\begin{eqnarray*}
e_{0}\left(\tikzpic{-0.6}{\linkpattern{}{}{}{0/r};}\right)
=
(-Q_{0}^{-1}-q^{r-1}Q^{-1})\tikzpic{-0.5}{\linkpattern{}{}{}{0/r};}
+(1+q^{r-1}Q^{-1}(Q_{0}-Q_{0}^{-1})-q^{2r-2}Q^{-2})\uparrow.
\end{eqnarray*}

\paragraph{Case 3}
Let $\tilde{D}$ be a diagram obtained from $D$ by changing the 
leftmost arc to two up arrows.
For each $1\le i<j\le N_{\uparrow}+1$ with $j-i\ge2$, we denote 
by $B_{(i,j)}(\tilde{D})$ a diagram obtained from $\tilde{D}$ 
by connecting the $i$-th and the $(i+1)$-th up arrows via an arc 
and the $j$-th and $(j+1)$-th up arrows via an arc. 
For each $1\le i\le N_{\uparrow}-1$, we denote by 
$B_{(i,i+1)}(\tilde{D})$ a diagram obtained from $\tilde{D}$ by 
connecting the $i$-th and the $(i+3)$-th up arrows via an 
arc and the $(i+1)$-th and $(i+2)$-th up arrows via an arc.
We denote by $B_{(N_{\uparrow},N_{\uparrow}+1)}(\tilde{D})$ a 
diagram obtained from $\tilde{D}$ by connecting the $(N_{\uparrow}+1)$-th
and the $(N_{\uparrow}+2)$-th up arrows via an arc and putting a
down arrow with the circled integer $r+1$ at the $N_{\uparrow}$-th site.
For each $1\le i\le N_{\uparrow}$, we denote by 
$B_{(i,N_{\uparrow}+2)}(\tilde{D})$ a diagram obtained from $\tilde{D}$ 
by connecting the $i$-th and the $(i+1)$-th up arrows via an arc and 
putting a down arrow with the circled integer $r+1$ at the $N_{\uparrow}+2$-th 
site. 
We denote by $B_{(N_{\uparrow}+1,N_{\uparrow}+2)}(\tilde{D})$ a diagram 
obtained from $\tilde{D}$ by putting two down arrows with circled integers 
$r+1$ and $r+2$ at the $(N_{\uparrow}+1)$-th and the $(N_{\uparrow}+2)$-th
sites. 
The action of $e_0$ on $D$ is given by 
\begin{eqnarray*}
e_{0}(D)&=& 
(1+q^{-N_{\uparrow}+r-1}Q^{-1}(Q_{0}-Q_{0}^{-1})
-q^{-2N_{\uparrow}+2r-2}Q^{-2})\tilde{D}
+(-Q_{0}^{-1}-q^{-N_{\uparrow}+r-1}Q^{-1})D \\
&&+\sum_{2\le i\le N_{\uparrow}+2}\tilde{c}_{(i)}A_{(i)}(\tilde{D})
-q^{-1}\sum_{1\le i<j\le N_{\uparrow}+2}c_{(j,i)}B_{(i,j)}(\tilde{D})
\end{eqnarray*}
where $c_{(j,i)}$ is defined in Eqn.(\ref{defc}) and 
\begin{eqnarray*}
\tilde{c}_{(i)}:=q^{-(i-1)}(Q_{0}-Q_{0}^{-1})
-q^{-N_{\uparrow}+r-i}(1+q^{2})Q^{-1}.
\end{eqnarray*}

\section{\texorpdfstring{Eigensystem of $X$}{Eigensystem of X}}
\label{sec-eigenX}
Since $X$ commutes with the Hamiltonian $H^{1B}$ (Theorem~\ref{theorem-commute}), 
an eigenvector of $X$ with the multiplicity one is also an eigenvector
of $H^{1B}$.
We will first find an eigenvector of $X$ with the multiplicity one.

\subsection{Type A}
We consider the action of $X$ on the Kazhdan--Lusztig basis of type A.
Let $D$ be a diagram of type A, $n_\uparrow$ be the number of (unpaired) 
up arrows and $n_\downarrow$ be the number of (unpaired) down arrows. 
We define the weight of $D$ by $\mathrm{wt}(D)=n_{\uparrow}-n_{\downarrow}$.
We enumerate the (unpaired) up arrows from left to right by 
$1,2,\ldots,n_{\uparrow}$.
For each $i$, $1\le i<n_{\uparrow}$, we denote by $E_{(i)}(D)$ a diagram 
obtained from $D$ by connecting the $i$-th and $(i+1)$-th up arrows 
via an arc. 
We denote by $E_{(n_{\uparrow})}$ a diagram obtained from $D$ by changing 
the $n_{\uparrow}$-th up arrow to a down arrow.
Similarly, we enumerate (unpaired) down arrows from right to left by 
$1,2,\ldots,n_{\downarrow}$.
For each $i$, $1\le i<n_{\downarrow}$, we denote by $F_{(i)}(D)$ a diagram
obtained from $D$ by connecting the $i$-th and $(i+1)$-th down arrows 
via an arc.
We denote by $F_{(n_{\downarrow})}$ a diagram obtained from $D$ by changing 
the $n_{\downarrow}$-th down arrow to an up arrow.

We define the action of $X$ by 
\begin{eqnarray}
\label{ActionXA}
X(D)
:=\sum_{1\le i\le n_{\uparrow}}[i] E_{(i)}(D)
+\sum_{1\le i\le n_{\downarrow}}q^{\mathrm{wt}(D)+1}[i] F_{(i)}(D)
+q^{\mathrm{wt}(D)}\frac{Q-Q^{-1}}{q-q^{-1}}D.
\end{eqnarray}

\begin{example}
Let $D$ be a diagram depicted as 
\begin{eqnarray*}
D=\tikzpic{-0.6}{
\draw(-0.5,0)--(-0.5,-0.6)(-0.5-0.12,-0.12)--(-0.5,0)--(-0.5+0.12,-0.12);
\draw(-0.1,0)--(-0.1,-0.6)(-0.1-0.12,-0.12)--(-0.1,0)--(-0.1+0.12,-0.12);
\linkpattern{0.4/2,0.8/1.6,4/4.8}{}{}{};
\draw(2,0)--(2,-0.6)(2-0.12,-0.12)--(2,0)--(2+0.12,-0.12);
\draw(2.4,0)--(2.4,-0.6)(2.4-0.12,-0.6+0.12)--(2.4,-0.6)--(2.4+0.12,-0.6+0.12);
\draw(2.8,0)--(2.8,-0.6)(2.8-0.12,-0.6+0.12)--(2.8,-0.6)--(2.8+0.12,-0.6+0.12);
\draw(4.3,0)--(4.3,-0.6)(4.3-0.12,-0.6+0.12)--(4.3,-0.6)--(4.3+0.12,-0.6+0.12);
}
\end{eqnarray*}
Then, we have $\mathrm{wt}(D)=0$ and
\begin{eqnarray*}
&&E_{(1)}=
\tikzpic{-0.5}{
\linkpattern{0/0.8,1.2/2.8,1.6/2.4,5/5.8}{}{}{};
\upa{2.7};
\downa{3.1};
\downa{3.5};
\downa{5};
},\qquad
E_{(2)}=
\tikzpic{-0.5}{
\linkpattern{0.8/3.2,1.2/2.8,1.6/2.4,5/5.8}{}{}{};
\upa{0.2};
\downa{3.1};
\downa{3.5};
\downa{5};
}, \\
&&E_{(3)}=
\tikzpic{-0.5}{
\upa{0};
\upa{0.5};
\linkpattern{1.2/2.8,1.6/2.4,5/5.8}{}{}{};
\downa{2.7}
\downa{3.1};
\downa{3.5};
\downa{5};
}, \qquad
F_{(1)}
=\tikzpic{-0.5}{
\upa{0};
\upa{0.5};
\linkpattern{1.2/2.8,1.6/2.4,4.6/6.2,5/5.8}{}{}{};
\upa{2.7};
\downa{3.1};
}, \\
&&F_{(2)}
=\tikzpic{-0.5}{
\upa{0};
\upa{0.5};
\linkpattern{1.2/2.8,1.6/2.4,3.9/4.7,5/5.8}{}{}{};
\upa{2.7};
\downa{5.1};
}, \qquad
F_{(3)}
=\tikzpic{-0.5}{
\upa{0};
\upa{0.5};
\linkpattern{1.2/2.8,1.6/2.4,3,5/5.8}{}{}{};
\upa{2.7};
\upa{3.2};
\downa{3.7};
\downa{5.1};
}.
\end{eqnarray*}
Therefore, the action of $X$ on $D$ is given by 
\begin{eqnarray*}
X(D)=E_{(1)}+qF_{(1)}+[2](E_{(2)}+qF_{(2)})+[3](E_{(3)}+qF_{(3)})+[Q;0]D.
\end{eqnarray*}
\end{example}

\begin{theorem}
The above definition provides the action of $X$ on the Kazhdan--Lusztig 
basis of type A.
\end{theorem}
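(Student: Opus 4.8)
The plan is to identify the right-hand side of (\ref{ActionXA}) with the action on $V_1^{\otimes N}$ of the algebra element $X=F+cKE+sK$, $c=q^{-1}$, $s=\tfrac{Q-Q^{-1}}{q-q^{-1}}$, evaluated on the type A Kazhdan--Lusztig (dual canonical) basis. Write $Z:=F+cKE$, so $X=Z+sK$. Since a type A diagram $D$ spans a single weight subspace of weight $\mathrm{wt}(D)=n_\uparrow-n_\downarrow$, the operator $K^{\otimes N}$ acts on $D$ by the scalar $q^{\mathrm{wt}(D)}$, which already reproduces the term $q^{\mathrm{wt}(D)}\tfrac{Q-Q^{-1}}{q-q^{-1}}D$. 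It remains to compute $Z(D)=F(D)+cKE(D)$ and match it with the two sums over $E_{(i)}$ and $F_{(i)}$.

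For $F$: from $\Delta(F)=F\otimes 1+K\otimes F$ one gets $\Delta^{(N-1)}(F)=\sum_{i=1}^{N}K^{\otimes(i-1)}\otimes F\otimes 1^{\otimes(N-i)}$. Since $Fv_1=v_{-1}$, $Fv_{-1}=0$, and $F$ annihilates an arc (the two-site check $F(v_{-1}\otimes v_1-q^{-1}v_1\otimes v_{-1})=0$), only the unpaired up arrows of $D$ contribute. Because rules (A)--(B) place every unpaired up arrow to the left of every unpaired down arrow, the total weight to the left of the $k$-th unpaired up arrow is exactly $k-1$; hence $F(D)=\sum_{k=1}^{n_\uparrow}q^{k-1}D^{(k)}$, where $D^{(k)}$ is the vector obtained from $D$ by flipping its $k$-th unpaired up arrow to a down arrow (as a tensor, not a priori a Kazhdan--Lusztig vector). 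The key step is then the re-expansion lemma $D^{(k)}=\sum_{j=k}^{n_\uparrow}q^{-(j-k)}E_{(j)}(D)$, proved by downward induction on $k$ from the identity $D^{(k)}=E_{(k)}(D)+q^{-1}D^{(k+1)}$. The latter comes from the elementary relation $v_{-1}\otimes A\otimes v_1=\bigl(v_{-1}\otimes A\otimes v_1-q^{-1}v_1\otimes A\otimes v_{-1}\bigr)+q^{-1}\,v_1\otimes A\otimes v_{-1}$ applied with $A$ the (necessarily complete) arcs lying between the $k$-th and $(k+1)$-th unpaired up arrows: the first bracket is precisely the building block of the arc newly joining those two up arrows, so the first term is $E_{(k)}(D)$, while the second term restores the $k$-th up arrow and flips the $(k+1)$-th, i.e. it is $q^{-1}D^{(k+1)}$. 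Summing, $F(D)=\sum_j\bigl(\sum_{k=1}^{j}q^{2k-1-j}\bigr)E_{(j)}(D)=\sum_j[j]\,E_{(j)}(D)$.

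The term $cKE(D)=q^{-1}KE(D)$ is the mirror image. From $\Delta(KE)=\Delta(K)\Delta(E)=KE\otimes 1+K\otimes KE$ one gets $\Delta^{(N-1)}(KE)=\sum_i K^{\otimes(i-1)}\otimes KE\otimes 1^{\otimes(N-i)}$; now $E$ annihilates arcs and up arrows, $KEv_{-1}=qv_1$, and the weight to the left of the $m$-th unpaired down arrow counted from the right equals $\mathrm{wt}(D)+m$. Hence $KE(D)=q^{\mathrm{wt}(D)+1}\sum_{m=1}^{n_\downarrow}q^{m}\,\widetilde{D}^{(m)}$, where $\widetilde{D}^{(m)}$ flips the $m$-th-from-the-right unpaired down arrow to an up arrow. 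The same telescoping re-expansion, now sliding the new $v_1$ leftwards past complete arcs, gives $\widetilde{D}^{(m)}=\sum_{j=m}^{n_\downarrow}q^{-(j-m)}F_{(j)}(D)$, and after the $q^{-1}$ prefactor one obtains $cKE(D)=q^{\mathrm{wt}(D)+1}\sum_j[j]\,F_{(j)}(D)$. Adding the three pieces reproduces (\ref{ActionXA}) verbatim.

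The only nontrivial point — and hence the main obstacle — is the re-expansion lemma: $F$ and $E$ do not preserve the dual canonical basis, so their naive action must be re-expanded, and it is exactly the telescoping $D^{(k)}=E_{(k)}(D)+q^{-1}D^{(k+1)}$ (and its mirror) that turns the bare powers $q^{k-1}$ into the quantum integers $[j]$ via $\sum_{k=1}^{j}q^{2k-1-j}=[j]$. One must also be mildly careful about the convention for the vector attached to a diagram with nested arcs — a product over arcs of the two endpoint choices $v_{-1}\otimes v_1$ and $-q^{-1}v_1\otimes v_{-1}$ — since this is precisely what validates the elementary relation above and the identification of its leading bracket with $E_{(k)}(D)$, respectively $F_{(j)}(D)$.
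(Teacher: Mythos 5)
Your proof is correct, but it takes a genuinely different route from the paper's. The paper argues by induction on $N$: it peels off the leftmost symbol of $D$, applies the one-step coproduct $\Delta(X)=K\otimes X+q^{-1}KE\otimes 1+F\otimes 1$ to write $X(\uparrow D')=q\uparrow X(D')+\downarrow D'$ (and similarly for a leading down arrow or arc), and then checks case by case that the inductive hypothesis reproduces Eqn.~(\ref{ActionXA}), with the arc case handled by showing $X$ commutes with prepending an arc. You instead compute globally: split $X=F+cKE+sK$, use the iterated coproducts $\Delta^{(N-1)}(F)=\sum_i K^{\otimes(i-1)}\otimes F\otimes 1^{\otimes(N-i)}$ and $\Delta^{(N-1)}(KE)=\sum_i K^{\otimes(i-1)}\otimes KE\otimes 1^{\otimes(N-i)}$, observe that $F$ and $KE$ annihilate arc building blocks so only unpaired arrows contribute with explicit $q$-powers from the $K$'s, and then re-expand the flipped tensors into Kazhdan--Lusztig diagrams via the telescoping $D^{(k)}=E_{(k)}(D)+q^{-1}D^{(k+1)}$. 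Your approach is arguably more illuminating --- it shows exactly how the quantum integers $[j]$ arise as geometric sums from the telescoping, whereas the paper's induction verifies the formula without explaining it --- at the cost of needing the re-expansion lemma, which the paper's recursion absorbs invisibly. Two small points you should tighten: (i) the two-site check that $F$ (resp.\ $KE$) kills an arc only covers adjacent endpoints; for an arc with nested arcs inside, you must group the two coproduct terms landing on its endpoints and use that the enclosed material has weight zero, so the cancellation is verbatim the same --- worth one sentence; (ii) the downward induction needs its base case $D^{(n_\uparrow)}=E_{(n_\uparrow)}(D)$ (and its mirror), i.e.\ that flipping the last unpaired up arrow yields the Kazhdan--Lusztig vector of the flipped string with no new pairings created by rule (A); this is true but should be stated.
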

\begin{proof}
We prove Theorem by induction. 
When $N=1$ or $2$, Theorem holds true by a direct computation.
Suppose that Theorem is true up to some $N\ge2$.
We have two cases for the leftmost arrow: 1) an up arrow and 2) a down arrow.
\paragraph{\bf Case 1}
Let $D$ be a diagram $\uparrow D'$ where $D'$ is a diagram of length $N-1$.
By using the comultiplication, we have 
\begin{eqnarray*}
X(D)&=&(K\otimes X+q^{-1}KE\otimes 1+F\otimes1)(\uparrow D') \\
&=&q\uparrow X(D')+\downarrow D'.
\end{eqnarray*}
From the assumption, we have 
\begin{eqnarray*}
\uparrow X(D')&=&\sum_{1\le i\le n'_{\uparrow}}[i]\uparrow E_{(i)}(D')
+\sum_{1\le i\le n'_{\downarrow}}q^{\mathrm{wt}(D')+1}[i]\uparrow F_{(i)}(D')
+q^{\mathrm{wt}(D')}\frac{Q-Q^{-1}}{q-q^{-1}}\uparrow D' \\
&=&\sum_{2\le i\le n_{\uparrow}}[i-1]E_{(i)}(D)
+\sum_{1\le i\le n_{\downarrow}}q^{\mathrm{wt}(D)}[i]F_{(i)}(D)
+q^{\mathrm{wt}(D)-1}\frac{Q-Q^{-1}}{q-q^{-1}}D
\end{eqnarray*}
where $n'_{\uparrow}$ (resp. $n'_{\downarrow}$) is the number of up
(resp. down) arrows in $D'$ and we have used 
$\mathrm{wt}(D)=\mathrm{wt}(D')+1$, $n_{\uparrow}=n'_{\uparrow}+1$ and 
$n_{\downarrow}=n'_{\downarrow}$.
We also have 
\begin{eqnarray*}
\downarrow D'=\sum_{1\le i\le n_{\uparrow}}q^{-(i-1)}E_{(i)}(D)
\end{eqnarray*}
By $q[i-1]+q^{-(i-1)}=[i]$, the sum $q\uparrow X(D')+\downarrow D'$ 
gives a desired expression.

\paragraph{\bf Case 2}
We have two cases for $D$: a) $D$ has no up arrows 
and b) the leftmost arrow forms an arc.
\paragraph{Case 2-a}
The diagram $D$ is written as $\downarrow D'$.
We want to compute $X(D)=q^{-1}\downarrow X(D')+\uparrow D'$.
We have 
\begin{eqnarray}
\label{typeA-1}
\downarrow X(D')
=\sum_{1\le i\le n'_{\downarrow}}q^{\mathrm{wt}(D')+1}[i]\downarrow F_{(i)}(D')
+q^{\mathrm{wt}(D')}\frac{Q-Q^{-1}}{q-q^{-1}}\downarrow D'.
\end{eqnarray}
Note that $\uparrow D'=F_{(n_{\downarrow})}(D)$, 
$\downarrow F_{(i)}(D')=F_{(i)}(D)$ for $1\le i\le n_{\downarrow}-1$ and  
$\downarrow F_{(n'_{\downarrow})}(D')=F_{(n_{\downarrow}-1)}(D)
+q^{-1}F_{(n_{\downarrow})}(D)$.
We also have $\mathrm{wt}(D')=\mathrm{wt}(D)+1$ and $[i]q^{-i-1}+1=q^{-i}[i+1]$.
Inserting these into Eqn.(\ref{typeA-1}), we obtain Eqn.(\ref{ActionXA}).

\paragraph{Case 2-b}
Let $D'$ be a diagram obtained from $D$ by removing arcs and $D''=X(D')$.
From the definition of the action of $X$, the action of $X$ on $D$ is obtained 
by inserting the removed arcs of $D$ into diagrams $D''$ at the same position as $D$.
Thus, without loss of generality, we assume 
\begin{eqnarray*}
D=
\raisebox{-0.5\totalheight}{
\begin{tikzpicture}
\draw(0,0)..controls(0,-0.5)and(0.4,-0.5)..(0.4,0);
\end{tikzpicture}}
\underbrace{\uparrow\ldots\uparrow}_{x_1}\underbrace{\downarrow\ldots\downarrow}_{x_2}.
\end{eqnarray*}
The action of $X$ on $D$ is given by 
\begin{eqnarray*}
X(D)
&=&X(\downarrow\underbrace{\uparrow\ldots\uparrow}_{x_1+1}
\underbrace{\downarrow\ldots\downarrow}_{x_2})
-q^{-1}X(\uparrow\downarrow\underbrace{\uparrow\ldots\uparrow}_{x_1}
\underbrace{\downarrow\ldots\downarrow}_{x_2}) \\
&=&
q^{-1}\downarrow X(\underbrace{\uparrow\ldots\uparrow}_{x_1+1}
\underbrace{\downarrow\ldots\downarrow}_{x_2})
+\underbrace{\uparrow\ldots\uparrow}_{x_1+2}
\underbrace{\downarrow\ldots\downarrow}_{x_2}
-\uparrow X(\downarrow\underbrace{\uparrow\ldots\uparrow}_{x_1}
\underbrace{\downarrow\ldots\downarrow}_{x_2})
-q^{-1}\downarrow\downarrow\underbrace{\uparrow\ldots\uparrow}_{x_1}
\underbrace{\downarrow\ldots\downarrow}_{x_2} \\
&=&
\raisebox{-0.5\totalheight}{
\begin{tikzpicture}
\draw(0,0)..controls(0,-0.5)and(0.4,-0.5)..(0.4,0);
\end{tikzpicture}}
\ 
X(\underbrace{\uparrow\ldots\uparrow}_{x_1}
\underbrace{\downarrow\ldots\downarrow}_{x_2}).
\end{eqnarray*}
Thus we have a desired expression (\ref{ActionXA}).
\end{proof}

\begin{theorem}
$X$ has the eigenvalue  $[Q;N-2i]$, $0\le i\le N$,
of multiplicity $\displaystyle\genfrac{(}{)}{0pt}{}{N}{i}$.
\end{theorem}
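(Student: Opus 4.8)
The plan is to exploit the fact, established above, that $X$ acts on $V_1^{\otimes N}$ through the coideal subalgebra $\mathbf{U}'$ with the explicit coproduct $\Delta(X)=K\otimes X+q^{-1}KE\otimes 1+F\otimes 1$, and that (by the previous theorem) its action on the Kazhdan--Lusztig basis of type A is given by formula (\ref{ActionXA}). First I would observe that $X$ preserves no grading in general, but the weight $\mathrm{wt}$ stratifies the space: let $V_1^{\otimes N}=\bigoplus_{k} W_k$ be the decomposition by weight, where $W_k$ has dimension $\genfrac{(}{)}{0pt}{}{N}{(N-k)/2}$. Inspection of (\ref{ActionXA}) shows that $X$ maps the weight-$k$ part into the sum of the weight-$k$ and weight-$(k-2)$ parts: the operators $E_{(i)}$ for $i<n_\uparrow$ and $F_{(i)}$ for $i<n_\downarrow$ (forming arcs) preserve weight, $E_{(n_\uparrow)}$ lowers weight by $2$, $F_{(n_\downarrow)}$ raises it by $2$ — wait, more carefully: $E_{(n_\uparrow)}$ turns an up arrow into a down arrow (weight $-2$) and $F_{(n_\downarrow)}$ turns a down arrow into an up arrow (weight $+2$), while the diagonal term $q^{\mathrm{wt}(D)}[Q;0]D$ preserves weight. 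So in the basis ordered by decreasing weight, $X$ is block upper-triangular (or lower-triangular, depending on convention), with diagonal blocks that are scalar: on $W_k$ the weight-preserving part of $X$ acting diagonally in the KL basis contributes $q^{k}[Q;0]D=[Q;k]\cdot q^{-k}\cdot\text{(something)}$ — this needs to be checked, but the key point is that the diagonal block is a \emph{scalar} matrix. Indeed the only weight-preserving term in (\ref{ActionXA}) that is proportional to $D$ itself is $q^{\mathrm{wt}(D)}\frac{Q-Q^{-1}}{q-q^{-1}}D=[Q;\mathrm{wt}(D)]\,D$ after rewriting, since $q^{k}\cdot\frac{Q-Q^{-1}}{q-q^{-1}}=\frac{Qq^{k}-Q^{-1}q^{k}}{q-q^{-1}}$, which is \emph{not} quite $[Q;k]=\frac{Qq^{k}-Q^{-1}q^{-k}}{q-q^{-1}}$; the remaining discrepancy must be supplied by the arc-forming terms, so one must verify that the full weight-preserving action on $W_k$ is the scalar $[Q;k]$ times the identity. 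This is the computational heart of the argument.

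Concretely, the cleanest route is representation-theoretic rather than combinatorial. By Theorem~\ref{theorem-commute} and standard facts, $X=F+q^{-1}KE+\frac{Q-Q^{-1}}{q-q^{-1}}K$ acts on $V_1^{\otimes N}$, and $V_1^{\otimes N}$ decomposes as a $U_q(\mathfrak{sl}_2)$-module into isotypic components $\bigoplus_{j} V_{2j}^{\oplus m_j}$ with multiplicities $m_j=\genfrac{(}{)}{0pt}{}{N}{N/2-j}-\genfrac{(}{)}{0pt}{}{N}{N/2-j-1}$. Since $\mathbf{U}'=\mathbb{C}(q)[X]$ is commutative and $X$ commutes with nothing larger than the Temperley--Lieb action here, I would instead diagonalize $X$ directly on each irreducible $U_q(\mathfrak{sl}_2)$-summand $V_{2j}$: on a spin-$j$ representation, the operator $F+q^{-1}KE+sK$ with $s=\frac{Q-Q^{-1}}{q-q^{-1}}$ is an explicit tridiagonal matrix (a $q$-deformed "rotated $\mathfrak{sl}_2$ element"), and its eigenvalues can be computed to be $[Q;2\ell-2j]$ for $\ell=0,1,\dots,2j$, i.e. $[Q;m]$ for $m\equiv 2j \pmod 2$, $|m|\le 2j$ — these are precisely the values one gets from the classical limit where $F+cE+sK$ is conjugate to a multiple of $K$. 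Summing over summands: the value $[Q;N-2i]$ appears once in each $V_{2j}$ with $2j\ge|N-2i|$ and $2j\equiv N-2i\equiv N\pmod 2$, so its total multiplicity is $\sum_{j:\,2j\ge|N-2i|}m_j$, and a telescoping of the multiplicity formula $m_j=\genfrac{(}{)}{0pt}{}{N}{N/2-j}-\genfrac{(}{)}{0pt}{}{N}{N/2-j-1}$ collapses this to $\genfrac{(}{)}{0pt}{}{N}{i}$.

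I would organize the write-up in three steps: (1) show $X$ restricted to any irreducible $U_q(\mathfrak{sl}_2)$-summand $V_{d}$ ($d=2j$) has the $d+1$ distinct eigenvalues $[Q;d],[Q;d-2],\dots,[Q;-d]$, each simple — this follows by diagonalizing the explicit bidiagonal-plus-diagonal matrix, or more slickly by exhibiting the conjugating element of $U_q(\mathfrak{sl}_2)$ (or its completion) that sends $X$ to $\frac{Q-Q^{-1}}{q-q^{-1}}\cdot(\text{something with the spectrum of }K)$ up to scalars, using that $c=q^{-1}$ makes $q^{-1}KE+F$ nilpotent-modulo-$K$ in a controlled way; (2) recall the branching $V_1^{\otimes N}\cong\bigoplus_j V_{2j}^{\oplus m_j}$ with the standard multiplicities; (3) count: eigenvalue $[Q;N-2i]$ occurs with multiplicity $\sum_{j\ge |N-2i|/2}m_j=\genfrac{(}{)}{0pt}{}{N}{i}$ by telescoping. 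The main obstacle is Step (1): proving that the specific element $X=F+q^{-1}KE+sK$ of the coideal subalgebra is diagonalizable on each $V_d$ with exactly that spectrum. The honest way is a direct eigenvalue computation for the tridiagonal matrix of $X$ in the standard basis of $V_d$ — feasible but tedious — while the elegant way, which I would attempt first, is to find an explicit invertible $g$ (built from $E$, $F$, $K$, possibly in a completion) with $gXg^{-1}$ a scalar multiple of a diagonalizable element whose spectrum is manifestly $\{[Q;m]\}$; the choice $c=q^{-1}$ is exactly what should make such a conjugation exist, mirroring the fact that in the classical ($q=1$) case $F+E+sK$ is conjugate under a rotation to $\sqrt{s^2+1}\,$ times a Cartan element. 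I would fall back to the brute-force tridiagonal computation if the conjugation is elusive; either way the counting in Steps (2)–(3) is routine.
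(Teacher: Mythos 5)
Your overall skeleton is the right one, and in fact it mirrors the paper's proof more closely than the representation-theoretic dressing suggests: the paper also reduces the problem to a family of tridiagonal matrices of size $N-2n+1$ (one for each set $\mathcal{D}_n^{I}$ of diagrams with a fixed arc configuration, which plays exactly the role of your irreducible summand $V_{2j}$ with $2j=N-2n$), shows each has simple spectrum $[Q;N-2n-2\lambda]$, $\lambda=0,\ldots,N-2n$, and then obtains the multiplicity $\genfrac{(}{)}{0pt}{}{N}{i}$ by the same telescoping of $\genfrac{(}{)}{0pt}{}{N}{k}-\genfrac{(}{)}{0pt}{}{N}{k-1}$ that you propose. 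So Steps (2) and (3) of your plan are sound and essentially identical to what is done.

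The genuine gap is your Step (1), which you explicitly defer. The entire content of the theorem is the claim that the tridiagonal matrix of $X$ on an $(N-2n+1)$-dimensional block has the $N-2n+1$ distinct eigenvalues $[Q;N-2n-2\lambda]$; the paper proves this in Lemma~\ref{lemma-app-A} by writing down the matrix $A$ with entries $A_{i,i}=q^{N-2n+2-2i}\frac{Q-Q^{-1}}{q-q^{-1}}$, $A_{i,i-1}=[N-2n+1-i]$, $A_{i,i+1}=q^{N-2n-2i}[i]$, and evaluating $\det(A-x_\lambda\mathbf{1})$ via a continuant recursion whose closed form must be guessed and verified by induction --- this is not routine. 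Your preferred alternative, conjugating $X=F+q^{-1}KE+sK$ inside (a completion of) $U_q(\mathfrak{sl}_2)$ to something with manifest spectrum, is only an analogy with the classical case and is not constructed; for coideal elements such a conjugation need not exist in any form that controls the spectrum, so you cannot lean on it. Until either the determinant computation or an explicit conjugation is actually carried out, the proof is incomplete. A secondary issue: in your first paragraph you assert that the arc-forming operators $E_{(i)}$, $i<n_\uparrow$, and $F_{(i)}$, $i<n_\downarrow$, preserve weight and that the diagonal block of $X$ on a weight space is scalar; neither is true (an arc is a weight-zero vector, so forming one from two up arrows lowers the weight by $2$, and the weight-preserving part of $X$ on $W_k$ is only the non-scalar diagonal term $q^{\mathrm{wt}(D)}\frac{Q-Q^{-1}}{q-q^{-1}}D$, which varies with nothing but is accompanied by off-diagonal weight-shifting terms that do contribute to the spectrum through the block-tridiagonal structure). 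You abandon that route, but the confusion there is exactly the reason the eigenvalues are \emph{not} simply $q^{k}\frac{Q-Q^{-1}}{q-q^{-1}}$ and why the explicit tridiagonal computation cannot be skipped.
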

\begin{proof}
We consider the matrix representation of $X=(X_{D,D'})$ on the 
Kazhdan--Lusztig bases.
We will construct eigenvectors of $X$.


Let $\mathcal{D}_n$, $0\le n\le \lfloor N/2\rfloor$, be the set of diagrams 
with $n$ arcs, $\mathcal{D}_n^{\le}:=\bigcup_{0\le i\le n}\mathcal{D}_{i}$ 
and $\mathcal{D}_{n}^{\ge}:=\bigcup_{n\le i\le\lfloor N/2\rfloor}\mathcal{D}_{i}$.
The cardinality of $\mathcal{D}_n$, $|\mathcal{D}_{n}|$, is given by 
\begin{eqnarray*}
|\mathcal{D}_n|=(N-2n+1)\left(\genfrac{(}{)}{0pt}{}{N}{n}
-\genfrac{(}{)}{0pt}{}{N}{n-1}\right).
\end{eqnarray*}
Let $I$ be a set of the positions of arcs from left in a diagram 
$D\in\mathcal{D}_{n}$ and denote by $\mathcal{D}_n^{I}$ the set of 
diagrams with arcs located as $I$.
Then, the set $\mathcal{D}_{n}$ is a direct sum of $\mathcal{D}_{n}^{I}$, 
that is, $\mathcal{D}_{n}=\bigsqcup_{I}\mathcal{D}_{n}^{I}$. 
The cardinality of $\mathcal{D}_n^{I}$ is given by 
$|\mathcal{D}_{n}^{I}|=N-2n+1$.

We define a vector $\psi:=\sum_{D}\psi_{D}D$ with the following property.
We set $\psi_{D}=0$ for all $D\in\mathcal{D}_{n}^{\le}$ except 
some $D\in\mathcal{D}_{n}^{I}$.
Let $A=(X_{D,D'})_{D,D'\in\mathcal{D}_n^I}$ be a submatrix of $X$. 
If there exists an eigenvector $\psi$ of $X$ with the above property,
the eigenvalues of $A$ coincides with the ones of $X$.
This is because an element of $\mathcal{D}_{n-1}^{\le}$ cannot be 
appeared in the expansion of $X(D)$ for $D\in\mathcal{D}_{n}^{\ge}$ 
(see Eqn.(\ref{ActionXA})).
The submatrix $A$ is of size $N-2n+1$ and tridiagonal whose entries 
are
\begin{eqnarray*}
A_{i,i}=q^{N-2n+2-2i}\frac{Q-Q^{-1}}{q-q^{-1}}, \qquad
A_{i,i-1}=[N-2n+1-i],\qquad
A_{i,i+1}=q^{N-2n-2i}[i].
\end{eqnarray*}
From Lemma~\ref{lemma-app-A}, the eigenvalues are
$[Q;N-2n-2\lambda]$, $\lambda=0,1,\ldots,N-2n$
and the multiplicities are one.
For each eigenvalue of $A$, there exists a unique eigenvector and 
we set $\psi_D$, $D\in\mathcal{D}^{I}$, as this eigenvector.
Given an eigenvalue $a$ of $A$ and $\psi_D$, $D\in\mathcal{D}_{n}^{\le}$, 
other components $\psi_D$, $D\in\mathcal{D}_{n+1}^{\ge}$ are determined 
by solving the eigenvalue problem. 
If the multiplicity of $a$ (as the eigenvalue of $X$) is not one, then 
$\psi$ may not be determined uniquely. 
However, we have at least one eigenvector of $X$ and this eigenvector 
is characterized by $n$, $a$ and $I$.
Since the eigenvalues are of the form $[Q;N-2j]$, $0\le j\le N$, 
the multiplicity is given by
\begin{eqnarray*}
\sum_{i=0}^{\min(j,N-j)}|\mathcal{D}_n|/|\mathcal{D}_n^{I}|
&=&\sum_{i=0}^{\min(j,N-j)}\genfrac{(}{)}{0pt}{}{N}{i}-\genfrac{(}{)}{0pt}{}{N}{i-1} \\
&=&\genfrac{(}{)}{0pt}{}{N}{j}.
\end{eqnarray*}
This completes the proof.
\end{proof}

Let $D$ be a diagram of Type A. 
We define $S$ as the set of arcs, $S_{\uparrow}$ as the set of unpaired up arrows
and $S_{\downarrow}$ as the set of unpaired down arrows.
We define 
\begin{eqnarray*}
N_1&=&q^{d(d-1)/2}Q^{d} 
\end{eqnarray*}
where $d=|S_{\uparrow}|+|S|$.

We enumerate up arrows, down arrows and arcs from left. 
If there are arcs inside of an arc, we increase an integer 
one by one from outside to inside.
Let $N_A$ be an integer assigned to 
$A\in S\cup S_{\uparrow}\cup S_{\downarrow}$. 
We define 
\begin{eqnarray*}
N_2:=\prod_{A\in S\cup S_{\downarrow}}[N_A].
\end{eqnarray*}
If the $i$-th down arrow and the $j$-th ($j>i$) up arrow
form an arc, we define the size of arc as $(j-i+1)/2$.
Let $B$ be an arc and $m_B$ be its size. 
We define 
\begin{eqnarray*}
N_3:=\prod_{B\in S}[m_B]^{-1}.
\end{eqnarray*}

Similarly, we enumerate down arrows and arcs from right. 
Let $N_C$ be an integer assigned to $C\in S\cup S_{\downarrow}$.
We define 
\begin{eqnarray*}
N_4:=\prod_{C\in S_{\downarrow}}[N_C]^{-1}.
\end{eqnarray*}

In the above notation, we define the vector $\Psi:=\sum_{D}\Psi_{D}|D\rangle$:
\begin{defn}
\label{PsiA}
$\Psi_{D}=N_1\cdot N_2\cdot N_3\cdot N_4$.
\end{defn}

\begin{example}
Let $D$ be a diagram depicted as 
\begin{eqnarray*}
\tikzpic{0}{
\draw(0,0)--(0,-0.6)(-0.12,-0.12)--(0,0)--(0.12,-0.12);
\draw(0.4,0)--(0.4,-0.6)(-0.12+0.4,-0.12)--(0.4,0)--(0.4+0.12,-0.12);
\draw(2.3,0)--(2.3,-0.6)(2.3-0.12,0-0.12)--(2.3,-0)--(2.3+0.12,-0.12);
\draw(2.7,0)--(2.7,-0.6)(2.7-0.12,-0.6+0.12)--(2.7,-0.6)--(2.7+0.12,-0.6+0.12);
\draw(3.8,0)--(3.8,-0.6)(3.8-0.12,-0.6+0.12)--(3.8,-0.6)--(3.8+0.12,-0.6+0.12);
\linkpattern{1.0/2.4,1.4/2,3.7/4.5}{}{}{};
}.
\end{eqnarray*}
We have 
\begin{eqnarray*}
&&N_1=q^{15}Q^{6},\qquad N_2=\frac{[8]!}{[2][5]}, \qquad
N_3=[2]^{-1}, \qquad N_4=[3]^{-1}.
\end{eqnarray*}
\end{example}

\begin{theorem}
\label{thrm-PsiA}
$\Psi$ is the eigenvector of $X$ with the eigenvalue 
$[Q;N]$.
\end{theorem}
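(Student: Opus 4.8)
\emph{Strategy.} Since the preceding theorem shows that the eigenvalue $[Q;N]=[Q;N-2\cdot 0]$ of $X$ has multiplicity $\genfrac{(}{)}{0pt}{}{N}{0}=1$, and $\Psi$ is visibly nonzero (all the factors $N_1,N_2,N_3,N_4$ are nonzero), it suffices to check the eigenvalue equation $(X\Psi)_D=[Q;N]\Psi_D$ for every diagram $D$. Using the explicit action \eqref{ActionXA}, the coefficient of $D$ in $X\Psi=\sum_{D'}\Psi_{D'}X(D')$ collects the contributions of all $D'$ in whose expansion $D$ occurs. Inspecting $E_{(i)},F_{(i)}$, these $D'$ are exactly: $D$ itself, contributing the diagonal term $q^{\mathrm{wt}(D)}\tfrac{Q-Q^{-1}}{q-q^{-1}}D$; for each arc $B$ of $D$ whose interior contains no unpaired arrow, the diagram obtained from $D$ by splitting $B$ into two unpaired up arrows and/or the one obtained by splitting $B$ into two unpaired down arrows (whichever is a valid diagram); the diagram obtained by turning the leftmost unpaired down arrow of $D$ into an up arrow, when $n_\downarrow\geq 1$; and the one obtained by turning the rightmost unpaired up arrow of $D$ into a down arrow, when $n_\uparrow\geq 1$. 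Reading the coefficients off \eqref{ActionXA} — namely $[n_\uparrow+1]$ and $q^{\mathrm{wt}(D)-1}[n_\downarrow+1]$ for the two flips, and $[\lambda_B+1]$ resp.\ $q^{\mathrm{wt}(D)-1}[\mu_B+1]$ for the two arc-splittings (where $\lambda_B,\mu_B$ count the unpaired up, resp.\ down, arrows of $D$ on one side of $B$) — reduces the theorem to one identity per diagram among the numbers $N_1N_2N_3N_4$ of $D$ and of its images under these elementary moves.

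\emph{Base case and induction.} First I would treat the arc-free diagrams $D_a=\uparrow^{a}\downarrow^{N-a}\in\mathcal{D}_0$, for which Definition~\ref{PsiA} evaluates to $\Psi_{D_a}=q^{a(a-1)/2}Q^{a}\genfrac{[}{]}{0pt}{}{N}{a}$: here only the two flips and the diagonal occur, and the eigenvalue equation is precisely the three-term recurrence satisfied by the entries of the eigenvector with eigenvalue $[Q;N]$ of the tridiagonal submatrix $A$ (the case $n=0$), which is exactly the matrix used in the proof of the multiplicity theorem; this is supplied by Lemma~\ref{lemma-app-A}. For general $D$ I would induct on the number of arcs, using the structural fact that arcs are inert under $E_{(i)}$ and $F_{(i)}$: if $B$ is an innermost arc of $D$ and $\widehat D$ is obtained by contracting $B$, then the moves on $D$ that do not split $B$ match the moves on $\widehat D$, so the instances of the eigenvalue equation for $D$ and for $\widehat D$ differ only through the factors $B$ contributes to $N_1N_2N_3N_4$ (a power of $Q$ and of $q$ through $d$ and the positional labels, a factor $[m_B]^{-1}$ through $N_3$, $[N_B^{\mathrm{left}}]$ through $N_2$, and $[N_B^{\mathrm{right}}]^{-1}$ through $N_4$ if $B$'s endpoint is counted there) together with the two new moves that split $B$. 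Pulling out these factors and applying $[k+1]=q[k]+q^{-k}=q^{-1}[k]+q^{k}$ repeatedly turns the inductive step into a finite $q$-number identity, to be recorded as an appendix lemma next to Lemma~\ref{lemma-app-A}.

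\emph{Main obstacle.} The real work is this bookkeeping: splitting an arc $B$ changes the nesting depth of every arc nested inside $B$ and shifts the left-to-right label $N_A$ of every unit to the right of $B$, so $\Psi_D$ does \emph{not} transform by a single monomial under an arc-splitting; the positional factors reorganize into ratios of $q$-numbers that must be shown to telescope against the coefficients $[i]$ and $q^{\mathrm{wt}(D)-1}[i]$ of \eqref{ActionXA}. A slightly more streamlined variant avoids listing all preimages simultaneously: by the block-triangular structure of $X$ in the number of arcs (already exploited in the multiplicity proof), $\Psi$ is the unique eigenvector once its restriction to $\mathcal{D}_0$ is fixed, the restriction to $\mathcal{D}_n$ being obtained from that to $\mathcal{D}_{n-1}^{\le}$ by inverting $A^{(n)}-[Q;N]$ — which is invertible since $[Q;N]\neq[Q;N-2n-2\lambda]$ for $n\geq 1$, $\lambda\geq 0$ (as $Q,q$ are algebraically independent) — so one checks instead that the explicit formula solves each of these linear systems, layer by layer. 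Either way, once the base case and the single $q$-number identity governing the insertion of one arc are established, the induction closes and $\Psi$ is the eigenvector of $X$ with eigenvalue $[Q;N]$.
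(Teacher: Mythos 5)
Your reduction of the statement to the componentwise identity $\sum_{D'}X_{D,D'}\Psi_{D'}=[Q;N]\Psi_{D}$ is exactly the paper's starting point (Eqn.~(\ref{EPA})), and your inventory of the preimages $D'$ and their coefficients is correct: it reproduces the paper's five cases (splitting an outer arc on the up side or the down side, the two arrow flips, and the diagonal term), with the right coefficients $[\lambda_B+1]$, $q^{\mathrm{wt}(D)-1}[\mu_B+1]$, $[n_\uparrow+1]$, $q^{\mathrm{wt}(D)-1}[n_\downarrow+1]$. Your base case on $\mathcal{D}_0$ is also correct, where $\Psi_{D_a}=q^{a(a-1)/2}Q^a\genfrac{[}{]}{0pt}{}{N}{a}$ and the eigenvalue equation is a three-term recurrence that one verifies in one line (note, though, that Lemma~\ref{lemma-app-A} only supplies the \emph{eigenvalues} of the tridiagonal block, not this particular eigenvector, so that short check is still on you).

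The genuine gap is that the entire analytic content of the theorem — the verification of the eigenvalue equation on diagrams with arcs — is deferred to an unnamed ``finite $q$-number identity, to be recorded as an appendix lemma,'' which is never stated, let alone proven. In the paper this is where all the work lies: one writes $\Psi_D$ in the closed product form (\ref{PsiA2}), computes each ratio $\Psi_{D'}/\Psi_{D}$ explicitly, and evaluates the sums over outer arcs by the telescoping identities of Lemma~\ref{lemma-app0} and Lemma~\ref{lemma-app1}, after which the five contributions (\ref{EPA-c1})--(\ref{EPA-c5}) add up to $[Q;N]\Psi_D$ using $N=N_\uparrow+N_\downarrow+2M+2M'$. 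Nothing in your sketch substitutes for this. Moreover, your proposed induction on the number of arcs by contracting an innermost arc $B$ is not a routine reduction as written: contraction changes the length from $N$ to $N-2$, hence the target eigenvalue from $[Q;N]$ to $[Q;N-2]$; if $B$ is nested it contributes no new splitting move, so the two eigenvalue equations have the same set of moves but different right-hand sides; and $\Psi_D/\Psi_{\widehat D}$ is a nontrivial ratio of $q$-factorials rather than a monomial, since every label $N_A$ to the right of $B$ and the size of every arc containing $B$ shifts. The claimed ``single identity governing the insertion of one arc'' is therefore at least as hard as the direct computation and would need to be exhibited for the proof to stand. The alternative layer-by-layer argument via invertibility of $A^{(m)}-[Q;N]\mathbf{1}$ is sound as a uniqueness statement, but, as you concede, it still requires checking that Definition~\ref{PsiA} solves each linear system, which is the same unperformed computation.
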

\begin{proof}
Let $D$ be a diagram starting with $n_1$ up arrows, followed by an outer 
arc of size $m_1$, followed by $n_2$ up arrows, followed by an outer arc 
of size $m_2$, $\cdots$, followed by $n_{I+1}$ up arrows, followed by 
$n'_{J+1}$ down arrows, followed by an outer arc of size $m'_{J}$,
followed by $n'_{J}$ down arrows, $\cdots$, and ending with $n'_1$ 
down arrows.
As a diagram, $D$ is 
\begin{eqnarray}
\label{Diagram0}
\underbrace{\uparrow\ldots\uparrow}_{n_1}\! \!
\raisebox{-0.8\totalheight}{
\begin{tikzpicture}
\draw(0,0)..controls(0,-0.8)and(1,-0.8)..(1,0);
\draw(0.5,-0.8)node{size $m_1$};
\end{tikzpicture}}\! \! 
\uparrow\ldots\uparrow \! \! \! 
\raisebox{-0.8\totalheight}{
\begin{tikzpicture}
\draw(0,0)..controls(0,-0.8)and(1,-0.8)..(1,0);
\draw(0.5,-0.8)node{size $m_I$};
\end{tikzpicture}}\! \! 
\underbrace{\uparrow\ldots\uparrow}_{n_{I+1}}
\underbrace{\downarrow\ldots\downarrow}_{n'_{J+1}}
\! \!\!
\raisebox{-0.8\totalheight}{
\begin{tikzpicture}
\draw(0,0)..controls(0,-0.8)and(1,-0.8)..(1,0);
\draw(0.5,-0.8)node{size $m'_J$};
\end{tikzpicture}} \! \! \!
\downarrow\ldots\downarrow \!\!\!
\raisebox{-0.8\totalheight}{
\begin{tikzpicture}
\draw(0,0)..controls(0,-0.8)and(1,-0.8)..(1,0);
\draw(0.5,-0.8)node{size $m'_1$};
\end{tikzpicture}} \!\!
\underbrace{\downarrow\ldots\downarrow}_{n'_1}
\end{eqnarray}
where the inside of an outer arc is filled with arcs.

Set $N_{\uparrow}=\sum_{i=1}^{I+1}n_i$, $N_{\downarrow}=\sum_{i=1}^{J+1}n'_i$, 
$M=\sum_{i=1}^{I}m_i$ and $M'=\sum_{i=1}^{J}m'_i$.
The component $\Psi_D$ is explicitly given by
\begin{multline}
\label{PsiA2}
\Psi_{D}
=q^{d(d-1)/2}Q^d
\prod_{i=1}^{I}\frac{[\sum_{j=1}^{i}(n_j+m_j)]!}{[n_i+\sum_{j=1}^{i-1}(n_j+m_j)]!}
\cdot 
\frac{[N_{\uparrow}+M+N_{\downarrow}+M']!}{[N_{\uparrow}+M]!}\cdot
\prod_{A\in S}[m_A]^{-1} \\
\times
\prod_{i=1}^{J+1}\frac{[\sum_{j=1}^{i-1}(n'_j+m'_j)]!}{[n'_i+\sum_{j=1}^{i-1}(n'_j+m'_j)]!}
\end{multline}
where $d=N_{\uparrow}+M+M'$.

Let $X_{D,D'}$ be the matrix representation of the action of $X$ on the Kazhdan--Lusztig 
bases, that is, $X(D)=\sum_{D'}X_{D',D}D'$. 
Note that the explicit formulae for $X_{D',D}$ is given by Eqn.(\ref{ActionXA}).
We want to show that 
\begin{eqnarray}
\label{EPA}
\sum_{D'}X_{D,D'}\Psi_{D'}=[Q;N]\Psi_{D}.
\end{eqnarray}
We have five cases for $X_{D,D'}\neq0$: 1) $D'$ does not have an outer arc of size $m_i$,
2) $D'$ does not have an outer arc of size $m'_i$, 3) $D'$ has $n_{I+1}+1$ up arrows and 
$n'_{J+1}-1$ down arrows instead of $n_{I+1}$ up arrows and $n_{J+1}$ down arrows,  
4) $D'$ has $n_{I+1}-1$ up arrows and $n'_{J+1}+1$ down arrows instead of $n_{I+1}$ up 
arrows and $n_{J+1}$ down arrows, and 5) $D'=D$.

In the first case, we have $X_{D,D'}=[1+\sum_{j=1}^{i}n_{j}]$. The contribution to the left 
hand side of Eqn.(\ref{EPA}) is 
\begin{eqnarray*}
q^{d}Q\Psi_{D}\sum_{i=1}^{I}\left[1+\sum_{j=1}^{i}n_{j}\right][m_i]
\frac{[1+N_{\uparrow}+N_{\downarrow}+M+M']}{[1+N_{\uparrow}+M]}
\frac{\prod_{j\ge i+1}^{I}[1+\sum_{k=1}^{j}(n_k+m_k)]}
{\prod_{j\ge i}^{I}[1+n_j+\sum_{k=1}^{j-1}(n_k+m_k)]}.
\end{eqnarray*}
Inserting Lemma~\ref{lemma-app0} into the above expression, 
we obtain 
\begin{eqnarray}
\label{EPA-c1}
q^{d}Q\Psi_{D}\frac{[1+N_{\uparrow}+N_{\downarrow}+M+M'][M]}{[1+M+N_{\uparrow}]}.
\end{eqnarray}
In the second case, we have 
$X_{D,D'}=q^{N_{\uparrow}-N_{\downarrow}-1}[1+\sum_{j=1}^{i}n'_{j}]$.
The contribution to the left hand side of Eqn.(\ref{EPA}) is 
\begin{eqnarray*}
q^{-d'}Q^{-1}\Psi_{D}\sum_{i=1}^{J}
\left[1+\sum_{j=1}^{i}n'_{j}\right][m'_i][1+N_{\uparrow}+d']
\frac{\prod_{j\ge i+2}[1+\sum_{k=1}^{j-1}(n'_{k-1}+m'_{k-1})]}
{\prod_{j\ge i}[1+n'_j+\sum_{k=1}^{j-1}(n'_{k-1}+m'_{k-1})]}
\end{eqnarray*}
where $d'=M+M'+N_{\downarrow}$. 
Inserting Lemma~\ref{lemma-app1} into the above expression, we obtain
\begin{eqnarray}
\label{EPA-c2}
q^{-d'}Q^{-1}\Psi_{D}
\frac{[1+N_{\uparrow}+M+N_{\downarrow}+M'][M']}{[1+N_{\downarrow}+M']}.
\end{eqnarray}
In the third case, we have $X_{D,D'}=[N_{\uparrow}+1]$. The contribution is
\begin{eqnarray}
\label{EPA-c3}
q^{d}Q\Psi_{D}\frac{[N_{\downarrow}+M'][N_{\uparrow}+1]}{[N_{\uparrow}+M+1]}.
\end{eqnarray}
In the fourth case, we have $X_{D,D'}=[N_{\downarrow}+1]$. 
The contribution is 
\begin{eqnarray}
\label{EPA-c4}
q^{-d'}Q^{-1}\Psi_{D}\frac{[N_{\uparrow}+M][N_{\downarrow}+1]}{[N_{\downarrow}+M'+1]}.
\end{eqnarray}
In the fifth case, we have $X_{D,D}=q^{N_{\uparrow}-N_{\downarrow}}[Q;0]$. 
The contribution is 
\begin{eqnarray}
\label{EPA-c5}
q^{N_{\uparrow}-N_{\downarrow}}[Q;0]\Psi_{D}.
\end{eqnarray}
Note that $N=N_{\uparrow}+N_{\downarrow}+2M+2M'$. 
We obtain the right hand side of Eqn.(\ref{EPA}) as the sum of 
Eqns.(\ref{EPA-c1}) to (\ref{EPA-c5}).
This completes the proof.
\end{proof}

\subsection{Type BI}
We consider the action of $X$ on the Kazhdan--Lusztig basis of type BI.

Let $D$ be a diagram of type BI and $N_{\uparrow}$ be the number of 
unpaired up arrows.
Recall that $D$ consists of up arrows, arcs, at most one unpaired 
down arrow, dashed arcs and down arrows with an integer 
$p$, $1\le p\le M$. 
We enumerate the (unpaired) up arrows from left to right by 
$1,2\ldots,N_{\uparrow}$.
For each $i$, $1\le i< N_{\uparrow}$, we denote by $X_{(i)}(D)$ 
a diagram obtained from $D$ by connecting the $i$-th up arrow 
and $(i+1)$-th up arrow via an arc. 

Suppose $D$ has an unpaired down arrow. 
We denote by $X_{(N_{\uparrow})}(D)$ a diagram obtained from $D$ 
by connecting $N_{\uparrow}$-th up arrow and the unpaired down arrow
via a dashed arc.
We denote by $X_{(N_{\uparrow}+1)}(D)$ a diagram obtained from $D$ 
by changing the unpaired down arrow to an up arrow.
We define the action of $X$ on $D$ by 
\begin{eqnarray}
\label{ActionXBI-1}
X(D):=\sum_{i=1}^{N_{\uparrow}+1}[i]X_{(i)}(D).
\end{eqnarray}

Suppose $D$ does not have an unpaired down arrow.
We regard the down arrow with a star as the down arrow with the integer one.
Let $r$ be the smallest integer attached to down arrows with an integer 
$1\le p\le M$.
If there is no down arrow with an integer, we define $r=M+1$.
We denote by $X_{(N_\uparrow)}(D)$ a diagram obtained from $D$ by changing 
the $N_{\uparrow}$-th up arrow to a down arrow.
The action of $X$ on $D$ is defined by 
\begin{eqnarray}
\label{ActionXBI-2}
X(D):=\sum_{i=1}^{N_{\uparrow}}[i]X_{(i)}(D)
+(1-\delta_{1,r})[N_{\uparrow}+r-1]D.
\end{eqnarray}
where $\delta_{i,j}$ is the Delta function satisfying $\delta_{i,i}=1$ and 
$\delta_{i,j}=0$ for $i\neq j$.

\begin{example}
Let $D$ be a diagram depicted as 
\begin{eqnarray*}
D=
\tikzpic{-0.6}{
\upa{0};
\downa{0.5};
\linkpattern{1.4/2,3.2/4.0,4.8/5.6}{1/2.4}{2.8/$\bigstar$,4.4/2}{};
},
\end{eqnarray*}
where $M=2$. 
We have 
\begin{eqnarray*}
X_{(1)}=
\tikzpic{-0.6}{
\linkpattern{1.4/2,3.2/4.0,4.8/5.6}{-0.2/0.6,1/2.4}{2.8/$\bigstar$,4.4/2}{};
}, \qquad 
X_{(2)}
=
\tikzpic{-0.6}{
\upa{0};
\upa{0.5};
\linkpattern{1.4/2,3.2/4.0,4.8/5.6}{1/2.4}{2.8/$\bigstar$,4.4/2}{};
}.
\end{eqnarray*}
The action of $X$ on $D$ is 
\begin{eqnarray*}
X(D)=X_{(1)}(D)+[2]X_{(2)}(D).
\end{eqnarray*}
\end{example}

\begin{example}
Let $D$ be a diagram depicted as 
\begin{eqnarray*}
D=\tikzpic{-0.6}{
\upa{0};
\linkpattern{0.5/1.3,2.6/4.2,3/3.8}{}{2.2/2,4.6/3}{};
\upa{1.4}
}
\end{eqnarray*}
where $M=3$ and $r=2$. 
We have 
\begin{eqnarray*}
X_{(1)}=
\tikzpic{-0.6}{
\linkpattern{0.2/1.6,0.5/1.3,2.6/4.2,3/3.8}{}{2.2/2,4.6/3}{};
}, \qquad 
X_{(2)}=
\tikzpic{-0.6}{
\upa{-0.2};
\linkpattern{0.2/1.6,0.5/1.3,2.8/4.4,3.2/4}{}{2/$\bigstar$,2.4/2,4.8/3}{};
}.
\end{eqnarray*}
The action of $X$ on $D$ is 
\begin{eqnarray*}
X(D)=X_{(1)}(D)+[2]X_{(2)}(D)+[3]D.
\end{eqnarray*}
\end{example}

\begin{theorem}
\label{thm-ActionXBI}
The above definitions (\ref{ActionXBI-1}) and (\ref{ActionXBI-2})  
provides the action of $X$ on the Kazhdan--Lusztig basis of type BI.
\end{theorem}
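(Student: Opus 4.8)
The plan is to run an induction on the length $N$ exactly parallel to the proof of the type A statement, peeling off the leftmost tensor factor with the coproduct $\Delta(X)=K\otimes X+q^{-1}KE\otimes 1+F\otimes 1$. Recall that for type BI we have $Q=q^{M}$, so $\iota(X)=F+q^{-1}KE+[M]K$; the base cases $N=1,2$ are checked directly from this map and the building-block dictionary of Section~\ref{sec-Rep-KL} (e.g. for $N=1$, $X(\uparrow)=v_{-1}+q[M]v_{1}$, which agrees with (\ref{ActionXBI-2}) after the single down arrow receives the integer $M$, or a star when $M=1$, and one uses $[M+1]-q[M]=q^{-M}$). Now suppose $N\ge 3$ and the formulas hold in all smaller lengths, and let $D$ be a type BI diagram of length $N$; the induction step splits according to the leftmost arrow of $D$ exactly as in the type A case.

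Case of a leftmost up arrow: write $D=\uparrow D'$ with $D'$ of length $N-1$. Since $Ev_{1}=0$, the coproduct gives $X(D)=q\,\uparrow X(D')+\downarrow D'$. Note $N_{\uparrow}(D)=N_{\uparrow}(D')+1$, that $D$ and $D'$ have the same labelled down arrows (so the value $r$ and the presence or absence of an unpaired down arrow are unchanged), and that prepending an up arrow identifies $X_{(i)}(D')$ with $X_{(i+1)}(D)$ in all of the cases of (\ref{ActionXBI-1}) and (\ref{ActionXBI-2}). Applying the induction hypothesis to $X(D')$ then shows that $q\,\uparrow X(D')$ contributes $q\,[i-1]$ to each $X_{(i)}(D)$ with $i\ge2$ (and, in the case with no unpaired down arrow, $q(1-\delta_{1,r})[N_{\uparrow}+r-2]$ to $D$ itself). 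The remaining term $\downarrow D'$ is re-expanded in the Kazhdan--Lusztig basis by repeated application of the identities $v_{-1}\otimes v_{1}=(\text{arc})+q^{-1}v_{1}\otimes v_{-1}$ and $v_{-1}\otimes v_{-1}=(\text{dashed arc})+q^{-1}v_{1}\otimes v_{1}$ together with the analogues for the starred and integer-labelled down-arrow blocks; the outcome is $\downarrow D'=\sum_{i}q^{-(i-1)}X_{(i)}(D)$, augmented (when $D'$ has no unpaired down arrow) by the diagonal term $(1-\delta_{1,r})q^{-(N_{\uparrow}+r-2)}D$. Adding the two pieces and using $q[i-1]+q^{-(i-1)}=[i]$ and $q[N_{\uparrow}+r-2]+q^{-(N_{\uparrow}+r-2)}=[N_{\uparrow}+r-1]$ reproduces (\ref{ActionXBI-1}) or (\ref{ActionXBI-2}) exactly.

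Case of a leftmost down arrow: if $D$ has no up arrow, then $N_{\uparrow}=0$ and one computes $X(D)$ by iterating the coproduct on the down-arrow blocks; when $D$ carries an unpaired down arrow the result is the single term $X_{(1)}(D)$ of (\ref{ActionXBI-1}) obtained by turning that arrow into an up arrow, and when $D$ has no unpaired down arrow all non-scalar contributions cancel and $D$ comes out as an $X$-eigenvector with eigenvalue $(1-\delta_{1,r})[r-1]$, matching (\ref{ActionXBI-2}) with $N_{\uparrow}=0$. If instead the leftmost down arrow forms a simple arc (so $N_{\uparrow}\ge1$), one first discards the interior arcs---untouched by any $X_{(i)}$ operation---to reduce to a diagram of the form $(\text{arc})\cdot\underbrace{\uparrow\cdots\uparrow}_{x_{1}}\cdot B$ with $B$ a right-boundary block, and then uses $(\text{arc})=v_{-1}\otimes v_{1}-q^{-1}v_{1}\otimes v_{-1}$ to express $X$ on it through the two cases already treated; alternatively, since $[e_{1},X]=0$ by Theorem~\ref{theorem-commute}, one writes the leftmost arc as $e_{1}$ applied to the diagram with $\uparrow\downarrow$ in its place, commutes $X$ past $e_{1}$, and re-expands using the tabulated action of $e_{1}$ on the Kazhdan--Lusztig basis.

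The main obstacle is the bookkeeping around the right-boundary blocks: one has to track carefully how prepending $v_{-1}$ cascades through arcs, dashed arcs, the starred down arrow, and the integer-labelled down arrows, how the index $r$ (with the star counted as $p=1$) behaves, and---most delicately---that the diagonal coefficient $(1-\delta_{1,r})[N_{\uparrow}+r-1]$ of (\ref{ActionXBI-2}) is reproduced on the nose. Once those expansions are pinned down, everything reduces to the same $q$-integer identities as in the type A proof.
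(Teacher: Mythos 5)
Your proof follows essentially the same route as the paper's: induction on $N$ via the coproduct applied to the leftmost tensor factor, the same case split on the leftmost arrow (up arrow; unpaired/dashed/labelled down arrow; arc), and the same $q$-integer identities such as $q[i-1]+q^{-(i-1)}=[i]$. The only quibble is the parenthetical claim that a leftmost arc forces $N_{\uparrow}\ge 1$ (it does not --- e.g.\ an arc followed only by labelled down arrows), but since the arc is stripped off via $X(\text{arc}\cdot D')=\text{arc}\cdot X(D')$ regardless of $N_{\uparrow}$, this does not affect the argument.
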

\begin{proof}
We prove Theorem by induction.
When $N=1$, Theorem is true by a straightforward calculation.
We assume that Theorem holds true up to $N-1\ge1$.
Let $D$ be a diagram of length $N$.
We have two cases for the leftmost arrow of $D$: 
1) an up arrow and 2) a down arrow.

\paragraph{\bf Case 1}
In this case, a diagram $D$ is written as $D=\uparrow D'$. 
By using the comultiplication, we have 
\begin{eqnarray}
\label{eqnBID-1}
X(\uparrow D')=q\uparrow X(D')+\downarrow D'.
\end{eqnarray}
We have two cases for $D'$: a) $D'$ has an unpaired down arrow and 
b) $D'$ does not have an unpaired down arrow.
\paragraph{Case 1-a}

Inserting Eqn.(\ref{ActionXBI-1}) and 
$\downarrow D'=\sum_{i=1}^{N_{\uparrow}+1}q^{-(i-1)}X_{(i)}(D)$
into Eqn.(\ref{eqnBID-1}), we obtain 
\begin{eqnarray*}
X(\uparrow D')
&=&q\sum_{i=2}^{N_{\uparrow}+1}[i-1]X_{(i)}(D)
+\sum_{i=1}^{N_{\uparrow}+1}q^{-(i-1)}X_{(i)}(D) \\
&=&\sum_{i=1}^{N_{\uparrow}+1}[i]X_{(i)}(D),	
\end{eqnarray*}
where we have used $q[i-1]+q^{-(i-1)}=[i]$.

\paragraph{Case 1-b}
We have 
\begin{eqnarray}
\label{eqnBID-2}
\downarrow D'
=\sum_{i=1}^{N_{\uparrow}}q^{-(i-1)}X_{(i)}(D)
+(1-\delta_{1,r})q^{N_{\uparrow}+r-2}D.
\end{eqnarray}
Inserting Eqns.(\ref{ActionXBI-2}) and (\ref{eqnBID-2})
into Eqn.(\ref{eqnBID-1}), we obtain 
\begin{eqnarray*}
X(\uparrow D')&=&q\sum_{i=2}^{N_{\uparrow}}[i-1]X_{(i)}(D)
+q(1-\delta_{1,r})[N_{\uparrow}+r-2]D 
+\sum_{i=1}^{N_{\uparrow}}q^{-(i-1)}X_{(i)}(D) \\
&&\qquad+(1-\delta_{1,r})q^{N_{\uparrow}+r-2}D \\
&=&\sum_{i=1}^{N_{\uparrow}}[i]X_{(i)}(D)
+(1-\delta_{1,r})[N_{\uparrow}+r-1]D.
\end{eqnarray*}

\paragraph{\bf Case 2}
We have four cases for the leftmost down arrow $a$ of $D$: 
a) the arrow $a$ is an unpaired arrow, 
b) the arrow $a$ forms a dashed arc,
c) the arrow $a$ is a down arrow with the integer $r$, $1\le r\le M$,
and d) the arrow $a$ forms an arc.
\paragraph{Case 2-a}
The diagram $D$ is written as $D=\downarrow D'$.
We have $X(\downarrow D')=q^{-1}\downarrow X(D')+\uparrow D'$. 
Since the diagram $D'$ has no unpaired up arrows and $r=1$, 
we obtain $X(D')=0$ by using Eqn.(\ref{ActionXBI-2}).
Therefore, we have $X(\downarrow D')=\uparrow D'$.

\paragraph{Case 2-b}
Let $E$ be a diagram obtained from $D$ by removing arcs and 
$E'=X(E)$. 
From the definition of the action of $X$, the action of $X$
on $D$ is obtained by inserting the removed arcs of $D$ into 
diagrams of $E'$ at the same position as $D$.
Thus, without loss of generality, the diagram $D$ is written 
as 
$D=
\raisebox{-0.5\totalheight}{
\begin{tikzpicture}
\draw[dashed,very thick](0,0)..controls(0,-0.6)and(0.6,-0.6)..(0.6,0);
\end{tikzpicture}
}D'$
where $D'$ is a diagram of length $N-2$. 
We have 
\begin{eqnarray*}
X(D)&=&X(\downarrow\downarrow D')-q^{-1}X(\uparrow\uparrow D') \\
&=&q^{-1}\downarrow X(\downarrow D')+\uparrow\downarrow D'
-\uparrow X(\uparrow D')-q^{-1}\downarrow\uparrow D' \\
&=&0,
\end{eqnarray*}
where we have used $X(\uparrow D')=\downarrow D'$ and 
$X(\downarrow D')=\uparrow D'$.

\paragraph{Case 2-c}
The diagram $D$ is graphically written as 
$D= \!\!\!
\raisebox{-0.75\totalheight}{
\begin{tikzpicture}
\draw(0,0)--(0,-0.6)(0,-0.8)node{$r$};
\end{tikzpicture}
}\!\!\!
D'$ 
where $D'$ is a diagram of 
length $N-1$.
We have 
\begin{eqnarray*}
X(D)&=&X(\downarrow D')-q^{-r}X(\uparrow D') \\
&=&q^{-1}\downarrow X(D')+\uparrow D'
-q^{-r+1}\uparrow X(D')-q^{-r}\downarrow D' \\
&=&[r-1]\downarrow D'-q^{-r}[r-1]\uparrow D' \\
&=&[r-1]D,
\end{eqnarray*}
where we have used $X(D')=[r]D'$.

\paragraph{Case 2-d}
By a similar argument to Case 2-b, without loss of generality,
we assume that the diagram $D$ is written as 
$D=
\raisebox{-0.5\totalheight}{
\begin{tikzpicture}
\draw(0,0)..controls(0,-0.6)and(0.6,-0.6)..(0.6,0);
\end{tikzpicture}
}D'$ 
where $D'$ is a diagram of length $N-2$. 
We have 
\begin{eqnarray*}
X(D)&=&X(\downarrow\uparrow D')-q^{-1}X(\uparrow\downarrow D') \\
&=&q^{-1}\downarrow X(\uparrow D')+\uparrow\uparrow D'
-\uparrow X(\downarrow D')-q^{-1}\downarrow\downarrow D' \\
&=&\downarrow\uparrow X(D')+q^{-1}\downarrow\downarrow D'
+\uparrow\uparrow D'-q^{-1}\uparrow\downarrow X(D')
-\uparrow\uparrow D'-q^{-1}\downarrow\downarrow D'\\
&=&\raisebox{-0.5\totalheight}{
\begin{tikzpicture}
\draw(0,0)..controls(0,-0.6)and(0.6,-0.6)..(0.6,0);
\end{tikzpicture}
}X(D').
\end{eqnarray*}

In both Case 1 and 2, $X(D)$ coincides with the definitions 
(\ref{ActionXBI-1}) and (\ref{ActionXBI-2}).
This completes the proof.
\end{proof}

Let $N_{\uparrow}$ be the number of up arrows in $D$. 
Let $r$ be the smallest integer attached to down arrows with an integer 
$1\le p\le M$.
If there is no down arrow with an integer, we define $r=M+1$.
We define an integer $E_{D}$ as follows:
\begin{enumerate}
\item If $D$ has an unpaired down arrow, $E_{D}=-(N_{\uparrow}+1)$.
\item If $D$ does not have an unpaired down arrow, $E_{D}=N_{\uparrow}+r-1$.
\end{enumerate}
Note that $|E_{D}|$ is the maximum integer which appears in the expansion of 
$X(D)$.
We denote by $\mathcal{D}$ the set of diagrams of length $N$.
For an integer $i\in\mathbb{Z}$, we define 
\begin{eqnarray*}
Z_i:=\#\{E_{D}|E_{D}=i \text{ and } D\in\mathcal{D}\}.
\end{eqnarray*}

\begin{theorem}
\label{BIev}
$X$ has an eigenvalue $[N+M-2i]$, $0\le i\le N$, of multiplicity 
$\displaystyle\genfrac{(}{)}{0pt}{}{N}{i}$.
\end{theorem}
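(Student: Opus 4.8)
The plan is to adapt, with almost no change, the argument already used for the type~A eigenvalue computation, feeding in the explicit action (\ref{ActionXBI-1})--(\ref{ActionXBI-2}) of $X$ in place of (\ref{ActionXA}).

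\emph{Step 1: triangularity.} Grade the length-$N$ type~BI diagrams by the number $n$ of simple arcs, and set $\mathcal{D}_n$, $\mathcal{D}_n^{\le}:=\bigcup_{i\le n}\mathcal{D}_i$, $\mathcal{D}_n^{\ge}:=\bigcup_{i\ge n}\mathcal{D}_i$ as in the type~A proof. From (\ref{ActionXBI-1})--(\ref{ActionXBI-2}), every diagram appearing in $X(D)$ has at least as many simple arcs as $D$: the operations $X_{(i)}$ with $i\le N_{\uparrow}-1$ add exactly one simple arc, while $X_{(N_{\uparrow})}$, $X_{(N_{\uparrow}+1)}$ and the diagonal term only rearrange the unpaired up arrows, the at most one unpaired down arrow, the dashed arcs and the labelled/starred down arrows, hence leave $n$ unchanged. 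Consequently $\mathcal{D}_{n-1}^{\le}$ never occurs in $X(D)$ for $D\in\mathcal{D}_n^{\ge}$, and the spectrum of $X$ is the union, over $n$ and over each admissible placement $I$ of the $n$ simple arcs, of the spectra of the submatrices $A_n^{I}:=(X_{D,D'})_{D,D'\in\mathcal{D}_n^{I}}$, where $\mathcal{D}_n^{I}\subset\mathcal{D}_n$ consists of the diagrams whose simple arcs sit at the positions $I$; as in type~A there are $\binom{N}{n}-\binom{N}{n-1}$ choices of $I$ for each $n$.

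\emph{Step 2: the blocks.} Once the positions of the simple arcs are fixed, rules (A)--(E) recover the rest of a diagram in $\mathcal{D}_n^{I}$ completely from the number $N_{\uparrow}$ of unpaired up arrows — whether there is an unpaired down arrow, how many dashed arcs there are, which labels/star occur, are all forced by $N_{\downarrow}=N-2n-N_{\uparrow}$ and $M$. Thus $\mathcal{D}_n^{I}$ is a chain of $N-2n+1$ diagrams on which, by (\ref{ActionXBI-1})--(\ref{ActionXBI-2}), $X$ moves $N_{\uparrow}$ only by $0$ or $\pm1$: the sub-diagonal entry is $[N_{\uparrow}]$ (coefficient of $X_{(N_{\uparrow})}$), the super-diagonal entry is $[N_{\uparrow}+1]$ (coefficient of $X_{(N_{\uparrow}+1)}$, present exactly when the diagram carries an unpaired down arrow), and the diagonal entry is $(1-\delta_{1,r})[N_{\uparrow}+r-1]$ (present exactly when it does not). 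So $A_n^{I}$ is a tridiagonal matrix of size $N-2n+1$ depending only on $n$ and $M$, whose eigenvalues are computed by the same manipulation as in Lemma~\ref{lemma-app-A}: they are $[N+M-2n-2\lambda]$, $\lambda=0,1,\dots,N-2n$, each of multiplicity one (for $Q=q^{M}$ this matches the type~A block spectrum $[Q;N-2n-2\lambda]$). Summing multiplicities, $[N+M-2i]$ comes precisely from the blocks with $n+\lambda=i$, i.e. $0\le n\le\min(i,N-i)$, giving total multiplicity $\sum_{n=0}^{\min(i,N-i)}\bigl(\binom{N}{n}-\binom{N}{n-1}\bigr)=\binom{N}{i}$; as $\sum_{i=0}^{N}\binom{N}{i}=2^{N}=\dim V_1^{\otimes N}$ this exhausts $X$. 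The quantities $E_D$ and $Z_i$ introduced above offer an alternative bookkeeping for this count, since $|E_D|$ is the largest $[k]$ occurring in $X(D)$ and tracking $E_D$ along the chains reproduces the same tallies.

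\emph{Main obstacle, and a shortcut.} The delicate step is the identification of the blocks in Step~2: unlike type~A, a diagram carries dashed arcs, labelled down arrows, a possible star and possibly an unpaired down arrow, so one must check carefully that $X_{(N_{\uparrow})}$ and $X_{(N_{\uparrow}+1)}$ really preserve the placement $I$, produce exactly the neighbouring diagram in the chain, and contribute the stated coefficients — in particular that the two shapes of local environment ($N_{\downarrow}<M$ versus $N_{\downarrow}\ge M$, and the parity of $N_{\downarrow}-M$, which decides whether the diagonal or the super-diagonal entry appears) assemble into one tridiagonal matrix with the claimed eigenvalues. One can sidestep this altogether: $X$ is a single fixed operator $\iota(X)=F+q^{-1}KE+[M]K$ on $V_1^{\otimes N}$ (recall $s=[M]$ when $Q=q^{M}$), and the type~BI and type~A Kazhdan--Lusztig bases are both unitriangular with respect to the standard basis (criterion (2) in Section~\ref{sec-Rep-KL}), hence both are bases of $V_1^{\otimes N}$; the eigenvalues of $X$ with multiplicities are basis independent, so they coincide with those found in the type~A case, namely $[Q;N-2i]\big|_{Q=q^{M}}=[N+M-2i]$ with multiplicity $\binom{N}{i}$, and since the integers $N+M,N+M-2,\dots,M-N$ are pairwise distinct the specialisation merges no multiplicities.
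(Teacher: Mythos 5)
Your argument is correct, but it cannot be compared line-by-line with the paper's, because the paper omits the proof of Theorem~\ref{BIev} entirely and defers to the method of \cite{Shi14-2}; the $E_D$/$Z_i$ bookkeeping set up just before the statement indicates that the intended argument is a triangularity-plus-block decomposition of the kind you sketch in Steps 1--2. Your version of that sketch is sound as far as it goes (the grading by the number of simple arcs is preserved by the action (\ref{ActionXBI-1})--(\ref{ActionXBI-2}), each chain $\mathcal{D}_n^{I}$ is parametrised by $N_{\uparrow}$, and the entries you list are the right ones), but you stop short of actually computing the spectrum of the resulting tridiagonal blocks, which is exactly the nontrivial step and would require an analogue of Lemma~\ref{lemma-app-A} adapted to the alternation between rows carrying a diagonal entry and rows carrying a super-diagonal entry. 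What saves the proposal is your shortcut, which is a genuinely different and fully self-contained route: $X=\iota(X)$ is one fixed operator on $V_1^{\otimes N}$, its matrix entries in the standard basis are Laurent polynomials in $q,Q$, so the characteristic polynomial identity $\prod_i\bigl(\lambda-[Q;N-2i]\bigr)^{\binom{N}{i}}$ established in the type~A case (whose multiplicities already sum to $2^N$) specialises at $Q=q^{M}$ to $\prod_i\bigl(\lambda-[N+M-2i]\bigr)^{\binom{N}{i}}$; since distinct integers $a\neq b$ give distinct quantum integers $[a]\neq[b]$ in $\mathbb{C}(q)$, no multiplicities merge, and basis-independence of the spectrum transfers the conclusion to the type~BI Kazhdan--Lusztig basis. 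This buys a complete proof without redoing any determinant computation, at the price of invoking the type~A theorem as an identity in $Q$ (which its proof via Lemma~\ref{lemma-app-A} does deliver). I would present the shortcut as the proof and demote Steps 1--2 to motivation, since as written they do not themselves establish the block spectra.
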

We omit the proof since one can apply the same method 
as \cite[Theorem 6.11]{Shi14-2}. 
As a corollary, we have $Z_{N+M-2i}=\genfrac{(}{)}{0pt}{}{N}{i}$.
For each eigenvalue $[N+M-2i]$, an eigenvector is characterized by 
a diagram $D$ with $E_D=N+M-2i$.
See \cite{Shi14-2} for $M=1$ case.

Let $D$ be a diagram of Type BI, $N_{\uparrow}$ be the number of
up arrows (excluding up arrows forming arcs), $N_1$ be the number 
of the unpaired down arrow ($N_1$ is either $0$ or $1$).
Let $S$ be the set of all arcs of $D$.
If $D$ has the down arrow with the integer $M$, $S_R$ is 
defined as the set of arcs right to the down arrow with 
the integer $M$. Otherwise, $S_R$ is the empty set.
If $D$ has the down arrow with a star, $S_{W'}$ is defined 
as the set of arcs which are left to the down arrow with 
the integer $M$ and right to the down arrow with a star.
Otherwise, $S_{W'}$ is the empty set.
If $D$ has the down arrow with a star, $S_W$ is defined
as the set of arcs which are left to the down arrow with 
a star and either right to the unpaired down arrow for $N_1=1$
or right to the leftmost down arrow forming a dashed arc
for $N_1=0$. Otherwise, $S_W$ is the empty set.
If $D$ has the down arrow with a star and up arrows,
$S_L$ is defined as the set of arcs which are left to 
the leftmost down arrow (which is an unpaired down arrow for $N_1=1$,  
the leftmost down arrow forming a dashed arc for $N_1=0$, or 
the down arrow with a star for $N_1=0$ and $D$ without dashed arcs)  
and right to the rightmost up arrow.
If $D$ has the down arrow with a star but no up arrows, 
$S_L$ is defined as the set of arcs which are left to the 
leftmost down arrow. Otherwise, $S_L$ is the empty set.
An arc $A$ is called an {\it outer arc} if there are no 
arcs and no dashed arc outside of $A$. 
We denote the set of outer arc by $S^{+}$.
We define $S_R^+:=S^{+}\cap S_R$, $S_W^{+}:=S^{+}\cap S_W$ 
and $S_L^{+}:=S^{+}\cap S_L$.

Let $T$ be the set of dashed arcs, $U$ be the set of 
down arrows with integers $p, 2\le p\le M$. 
We define $T'$ as the set of dashed arcs except the leftmost 
one and $U'$ 
as the set of down arrows with integers $p, 2\le p\le M-1$.
Then, $V$ (resp. $V'$) is given by the union of $U$ (resp. $U'$) 
and the down arrow with a star if it exists.

We define the following values:
\begin{eqnarray*}
N_{2}&:=&N_{\uparrow}+N_1+|S|+|T|, \\
N_{3}&:=&|S_W|+|T|, \\
N_{4}&:=&|S_{W'}|+|S_W|+|T|+M   \\
N_{5}&:=&N-|S|+|S_L|+|S_W|+|S_{W'}|+|S_R|+M   \\
N_{6}&:=&
\begin{cases}
[N_5]/[N_4], & \text{for $|V|=M$ and $N_1=0$}, \\
1, & \text{otherwise}.
\end{cases}
\end{eqnarray*}

We enumerate all arrows from left to right. 
Let $s_1$ be the integer assigned to the down arrow with a star 
and $s_p$ be the integer assigned to the down arrow with the 
integer $p, 2\le p\le M$.
If $i$-th down arrow and $j$-th ($i<j$) up arrow forms an arc $A$,	
then the size of $A$ is $(j-i+1)/2$ and denoted by $m_A$.	
Similarly, if $k$-th down arrow and $l$-th ($k<l$) down arrow forms
a dashed arc $B$, then the size of $B$ is $(l-k+1)/2$.
Let $C$ be the down arrow with the integer $p$ or the down arrow with
a star. Let $E$ be a dashed arc.
We define 
\begin{eqnarray*}
d_{1,A,C}&:=&
(i-s_p+M-p+1)/2, \qquad \text{for $2\le p\le M$},  \\
d_{2,A}&:=&(i-s_1+M)/2 \\
d_{3,A}&:=&N-j, \\
N_{7}&:=&\prod_{C\in U}\prod_{A\in S_R^{+}}
\frac{[d_{1,A,C}]}{[d_{1,A,C}+m_A]}, \\
N_{8}&:=&
\begin{cases}
\displaystyle
\prod_{A\in S_{R}^{+}}\prod_{i=0}^{N_{3}}
\frac{[d_{2,A}+i]}{[d_{2,A}+m_A+i]}
\prod_{A'\in S_W}\frac{[d_{2,A}+m_A+h_{A'}]}{[d_{2,A}+h_{A'}]}, 
& \text{for $|V|=M$}, \\
1 & \text{otherwise},
\end{cases} \\
N_{9}&:=&
\begin{cases}
\displaystyle
\prod_{A\in S_{W}^{+}}\frac{[d_{3,A}+m_A+M]}{[d_{3,A}+2m_A+M]} 
& \text{for $N_1=1$},\\
\displaystyle
\prod_{A\in S_{W}^{+}\cup S_{L}^{+}}
\frac{[d_{3,A}+m_A+M]}{[d_{3,A}+2m_A+M]} 
& \text{for $N_1=0$},
\end{cases}
\end{eqnarray*}
where $h_{A'}, A'\in S_{W}$ is the sum of the number of arcs in 
$S_{W}$ right to $A'$ or outside of $A'$ (including $A'$), and 
the number of dashed arcs right to $A'$. 
We also define 
\begin{eqnarray*}
d_{4,C}&:=&(s_M-s_p+M-p)/2+1, \\
d_{5,E}&:=&(s_M-k+M+1)/2, \\
N_{10}&:=&
\begin{cases}
\displaystyle
\prod_{C\in U'}[d_{4,C}]^{-1}, 
& \text{for $N_1=0$ and $T=\emptyset$} \\
\displaystyle
\prod_{C\in V'}[d_{4,C}]^{-1}\prod_{E\in T'}[d_{5,E}]^{-1}, 
& \text{for $N_1=0$ and $T\neq\emptyset$}, \\
\displaystyle
\prod_{C\in V'}[d_{4,C}]^{-1}\prod_{E\in T}[d_{5,E}]^{-1}
& \text{for $N_1=1$}, \\
1 & \text{otherwise},
\end{cases}
\end{eqnarray*}

We enumerate up arrows, an unpaired down arrow (if it exists),
arcs, dashed arcs and down arrows with the integer $p$, $2\le p\le M$ 
from left to right. 
If there exist arcs inside of an arc or a dashed arc, we increase the 
integer one by one from outside to inside.
Let $N_A$ be the integer assigned to an arc, a dashed arc or a down arrow
with the integer $p$, and $N_\downarrow$ be the integer assigned to the 
unpaired down arrow.
We define
\begin{eqnarray*}
N_{11}:=
\begin{cases}
\displaystyle
\prod_{A\in S\cup T\cup U}[N_A], & N_1=0, \\
\displaystyle
[N_{\downarrow}]\cdot\prod_{A\in S\cup T\cup U}[N_A], & N_1=1. 
\end{cases}
\end{eqnarray*}

We enumerate arcs, dashed arcs, down arrows with the integer 
$p, 2\le p\le M$, and the down with a star from right to left.
If there exists arcs inside of an arc or a dashed arc, 
we increase the integer one by one from inside to outside. 
Let $N_B$ be the integer assigned to a dashed arc or a down 
arrow with a star.
Then, we define
\begin{eqnarray*}
N_{12}:=\frac{\displaystyle\prod_{i=1}^{N_{2}}(q^{i+M-1}+q^{-(i+M-1)})}
{\displaystyle\prod_{B\in T\cup (V\setminus U)}(q^{N_B}+q^{-N_B})}.
\end{eqnarray*}

\begin{defn}
\label{PsiBI}
\begin{eqnarray*}
\Psi_D:=\prod_{A\in S}[m_A]^{-1}\cdot N_6\cdot N_7\cdot
N_8\cdot N_9\cdot N_{10}\cdot N_{11}\cdot N_{12}. 
\end{eqnarray*}
\end{defn}

\begin{example}
Let $D$ be a diagram depicted as 
\begin{eqnarray*}
\tikzpic{0}{
\draw(0,0)--(0,-0.6)(-0.12,-0.12)--(0,0)--(0.12,-0.12);
\draw(-0.5,0)--(-0.5,-0.6)(-0.12-0.5,-0.12)--(-0.5,0)--(0.12-0.5,-0.12);
\linkpattern{0.9/1.7,2.5/3.3,3.9/4.7,5.3/6.9,5.8/6.4,7.3/8.1}
{0.5/2.1}{3.6/1,5/2}{};
}.
\end{eqnarray*}
We have 
\begin{eqnarray*}
&&N_1=0, \qquad N_{6}=[20]/[6],\qquad N_7=1/[4], \\
&&N_{8}=[3]/[9],\qquad N_9=[13]/[14], \qquad N_{10}=1/[3], \\
&&N_{11}=[10]!/[2],\qquad N_{12}=\prod_{i\in I}(q^{i+1}+q^{-i-1}),\\
&&\prod_{A\in S}[m_A]^{-1}=[2]^{-1},
\end{eqnarray*}
where $I=$\{1,2,3,4,6,7,9\}.
\end{example}

\begin{theorem}
$\Psi$ is the eigenvector of $X$ with the eigenvalue $[N+M]$.
The multiplicity is one.
\end{theorem}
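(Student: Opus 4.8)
The plan is to follow the same route as the proof of Theorem~\ref{thrm-PsiA}. The multiplicity-one assertion needs no argument: by Theorem~\ref{BIev} the eigenvalue $[N+M]$ is the $i=0$ member of the family $[N+M-2i]$, whose multiplicity is $\binom{N}{0}=1$. So the whole content is the identity
\[
\sum_{D'}X_{D,D'}\,\Psi_{D'}=[N+M]\,\Psi_{D}
\]
for every type BI diagram $D$, where, as in the proof of Theorem~\ref{thrm-PsiA}, $X_{D,D'}$ denotes the coefficient of $D$ in $X(D')$, read off from Eqns.~(\ref{ActionXBI-1}) and (\ref{ActionXBI-2}).

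First I would reduce to a canonical shape for $D$. As in Cases~2-b and 2-d of the proof of Theorem~\ref{thm-ActionXBI}, the action of $X$ commutes with the insertion of proper (i.e.\ non-outer) arcs, so $X(D)$ is recovered from the action of $X$ on the diagram with all proper arcs removed by re-inserting them in the same positions. In Definition~\ref{PsiBI} a proper arc nested at depth $N_A$ inside an outer arc contributes only the factor $[m_A]^{-1}$ and one factor $[N_A]$ of $N_{11}$, neither of which is touched by any operation $X_{(i)}$. Hence it suffices to verify the eigenvalue equation for diagrams in the stratified form ``up arrows, then outer arcs, then possibly one unpaired down arrow, then dashed arcs, then possibly a star, then down arrows carrying integers'', in which the outer arcs and the labelled down arrows are the active parts.

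Next I would enumerate, for such a $D$, all $D'$ with $D$ occurring in $X(D')$. Reading Eqns.~(\ref{ActionXBI-1})--(\ref{ActionXBI-2}) backwards, these form a bounded list whose precise membership depends on the case split --- does $D$ have an unpaired down arrow? a star? is $r=1$, $2\le r\le M$, or $r=M+1$? --- namely: (i) $D'=D$, present only when $D$ has no unpaired down arrow, with coefficient $(1-\delta_{1,r})[N_{\uparrow}+r-1]$; (ii) $D'$ obtained from $D$ by opening one empty outer arc on the up-arrow side into two adjacent up arrows, with coefficient $[i]$ fixed by the arc's position; (iii) $D'$ obtained by opening the leftmost dashed arc of $D$ into an up arrow and an unpaired down arrow, present when $D$ has a dashed arc and no unpaired down arrow; (iv) $D'$ obtained by changing the unpaired down arrow of $D$ to an up arrow, or the rightmost up arrow of $D$ to a down arrow, according as $D$ does or does not have an unpaired down arrow; the coefficients in (iii) and (iv) are of the form $[N_{\uparrow}+c]$ with $c\in\{0,1\}$. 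The delicate point already here is that changing an up arrow to a down arrow may force a relabelling --- the new down arrow, or a neighbouring one, may acquire a star or an integer, or be absorbed into a dashed arc --- so the bookkeeping of $r$, of the set of labelled down arrows, and of the dashed arcs must be done with care. For each family I would then compute $\Psi_{D'}/\Psi_{D}$ factor by factor: each of $N_{6},\dots,N_{12}$ and $\prod_{A\in S}[m_A]^{-1}$ is multiplied by an explicit power of $q$ and ratio of quantum integers, governed by how the operation shifts $N_{\uparrow}$, $r$, $|S|$, $|T|$, $|S_{W}|$, the depths $N_A$ and the auxiliary sizes $d_{k,\ast}$.

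Finally I would add up $\sum_{D'}X_{D,D'}\Psi_{D'}/\Psi_{D}$ over the families. The sums over arc positions collapse by the $q$-telescoping and $q$-Vandermonde identities of Section~\ref{sec-appendix} (the analogues of Lemmas~\ref{lemma-app0} and~\ref{lemma-app1} used in Theorem~\ref{thrm-PsiA}), the diagonal and flip terms recombine using $q[k-1]+q^{-(k-1)}=[k]$ together with a counting identity relating $N$ to $N_{\uparrow}$, $r$, $|S|$ and $|T|$, and the total simplifies to $[N+M]$, just as the five contributions in Theorem~\ref{thrm-PsiA} added up to $[Q;N]$. The main obstacle is purely one of bookkeeping: $\Psi_{D}$ is a product of the factors $N_{6},\dots,N_{12}$ and $\prod_{A\in S}[m_A]^{-1}$, defined through several interleaved left-to-right and right-to-left enumerations, every neighbour operation perturbs a different subset of them while simultaneously shifting the data on which the remaining factors depend, and one must push through the full case analysis --- unpaired down arrow present or not, star present or not, the three regimes of $r$, and the sub-cases where a dashed arc or a labelled down arrow is created, destroyed, or relabelled --- before all the quantum-integer ratios visibly cancel.
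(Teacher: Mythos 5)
Your overall strategy is the one the paper follows: multiplicity one from Theorem~\ref{BIev}, then a component-wise verification of $\sum_{D'}X_{D,D'}\Psi_{D'}=[N+M]\Psi_{D}$ by enumerating the diagrams $D'$ in which $D$ appears (the diagonal term, opened outer arcs, the opened leftmost dashed arc, the flipped unpaired down arrow or rightmost up arrow) and telescoping the resulting ratios. Your list of contributing families and your case split (unpaired down arrow or not; $r=1$, $2\le r\le M$, or $r=M+1$) match the paper's Cases 1--4.

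The gap is that the proposal stops exactly where the proof begins: the cancellation is asserted, never verified, and your claim that everything collapses by ``the analogues of Lemmas~\ref{lemma-app0} and~\ref{lemma-app1}'' understates what is actually needed. In the paper's Case 4 (star present, no unpaired down arrow) the subcase in which $D$ has outer arcs to the right of its last up arrow produces two interacting families of $D'$ --- the arcs of size $m_i$, $H+1\le i\le I$, opened with coefficient $[N_{\uparrow}+1]$, and the opened leftmost dashed arc --- whose combined contribution is controlled by Lemma~\ref{lemma-app-13} with the substitutions $x=N_{\uparrow}+M'$ and $z=|S_{W'}|+|S_W|+|S_R|+|T|+M-1$. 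That identity is not a simple $q$-telescoping of the Lemma~\ref{lemma-app0} type, and without it (or an equivalent) the sum does not visibly reduce to $[N+M]$. Likewise, the ratios $\Psi_{D'}/\Psi_{D}$ for Definition~\ref{PsiBI} involve the factors $N_6$ through $N_{12}$, several of which ($N_6$, $N_8$, $N_9$, $N_{12}$) change discontinuously when an operation crosses one of the thresholds $|V|=M$, $N_1=0/1$, or $T=\emptyset$; since this is precisely where an error would hide, a proof must exhibit these ratios explicitly in each case rather than appeal to bookkeeping. As it stands the proposal is a correct plan with the same architecture as the paper's argument, but the eigenvalue identity itself remains unproved.
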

\begin{proof}
From Theorem~\ref{BIev}, the multiplicity of the eigenvalue $[N+M]$ 
is obviously one.
Let $(X_{D,D'})$ be the matrix representation of $X$. 
We will show that 
\begin{eqnarray}
\label{EPBI-1}
\sum_{D'}X_{D,D'}\Psi_{D'}=[N+M]\Psi_{D}.
\end{eqnarray}
We have four cases for $D$: 
1) $D$ does not have down arrows,
2) $2\le r\le M$,
3) $N_1=1$ and 
4) $r=1$ and $N_1=0$.

\paragraph{\bf Case 1}
Let $D$ be a diagram starting with $n_1$ up arrows, followed by 
an outer arc of size $m_1$, followed by $n_2$ up arrows, followed 
by an outer arc of size $m_2$, $\cdots$ and ending with $n_{I+1}$ 
up arrows. Inside of an outer arc of size $m_i$, $1\le i\le I$, is 
filled with arcs.
Set $M'=\sum_{i=1}^{I}m_i$ and $N_{\uparrow}=\sum_{i=1}^{I+1}n_i$.
Let $D'$ be a diagram obtained from $D$ by changing the outer arc 
of size $m_i$ to two up arrows. 
We have $X_{D,D'}=[1+\sum_{j=1}^{i}n_j]$.
The contribution of these $D'$'s to the left hand side 
of Eqn.(\ref{EPBI-1}) is 
\begin{eqnarray*}
\sum_{i=1}^{I}\left[1+\sum_{j=1}^{i}n_j\right][m_i]
\frac{\prod_{j\ge i+1}[1+\sum_{k=1}^{j}(n_k+m_k)]}
{\prod_{j\ge i}[1+n_j+\sum_{k=1}^{j-1}(n_k+m_k)]}
(q^{N_{\uparrow}+M'+M}+q^{-(N_{\uparrow}+M'+M)})\Psi_D.
\end{eqnarray*}
Inserting Lemma~\ref{lemma-app0}, 
we obtain $[M'](q^{N_{\uparrow}+M'+M}+q^{-(N_{\uparrow}+M'+M)})\Psi_D$.

From Eqn.(\ref{ActionXBI-2}), the contribution of the diagonal term 
is $[N_{\uparrow}+M]\Psi_D$.
Note $N=N_{\uparrow}+2M'$. 
The sum of these two contributions is the right hand side of 
Eqn.(\ref{EPBI-1}).

\paragraph{\bf Case 2}
In this case, we have $S_R=S_W=S_L=\emptyset$.
Let $D$ be a diagram starting with $n_1$ up arrows, followed by 
an outer arc of size $m_1$, followed by $n_2$ up arrows, followed 
by an outer arc of size $m_2$, $\cdots$, followed by $n_{I+1}$ up 
arrows, followed by a down arrow with the integer $r$ and ending 
with $G$ arcs, $|T|$ dashed arcs and down arrows with the 
integer $p$, $r+1\le p\le M$.
Set $M'=\sum_{i=1}^{I}m_i$ and $N_{\uparrow}=\sum_{i=1}^{I+1}n_i$.
Let $D'$ be a diagram obtained from $D$ by changing the outer arc 
of size $m_i$ to two up arrows.
By a similar argument to Case 1, the contribution of these $D'$'s to 
the left hand side of Eqn.(\ref{EPBI-1}) is 
\begin{eqnarray*}
\frac{[M'][1+N_{\uparrow}+M'+G+M-r+1]}{[1+M'+N_{\uparrow}]}
(q^{N_{\uparrow}+M'+G+M}+q^{-(N_{\uparrow}+M'+G+M)})\Psi_{D}.
\end{eqnarray*}
Let $D'$ be a diagram obtained form $D$ by changing the down arrow 
with the integer $r$ to an up arrow.
We have $X_{D,D'}=[N_{\uparrow}+1]$. 
The contribution of this diagram is 
\begin{eqnarray*}
[N_{\uparrow}+1]\frac{[G+M-r+1]}{[1+N_{\uparrow}+M']}
(q^{N_{\uparrow}+M'+G+M}+q^{-(N_{\uparrow}+M'+G+M)})\Psi_{D}.
\end{eqnarray*}
The contribution of the diagonal term is $[N_{\uparrow}+r-1]\Psi_{D}$.
Note that $N=N_{\uparrow}+2M'+2G+M-r+1$. 
The sum of these three contributions is the right hand side of 
Eqn.(\ref{EPBI-1}).

\paragraph{\bf Case 3}
Let $D$ be diagram starting with $n_1$ up arrows, followed by 
an outer arc of size $m_1$, followed by $n_2$ up arrows, followed 
by an outer arc of size $m_2$, $\cdots$, followed by $n_{I+1}$ up 
arrows, followed by an unpaired down arrow and ending with 
$|S_W|+|S_{W'}|+|S_R|$ arcs, $|T|$ dashed arcs, a down 
arrow with a star and down arrows with an integer $p$, $2\le p\le M$.
Set $M'=\sum_{i=1}^{I}m_i$ and $N_{\uparrow}=\sum_{i=1}^{I+1}n_i$.
Let $D'$ be a diagram obtained from $D$ by changing the outer arc 
of size $m_i$ to two up arrows.
By a similar argument to Case 1, the contribution of these $D'$'s 
to the left hand side of (\ref{EPBI-1}) is 
\begin{eqnarray*}
[M']\frac{[1+d]}{[1+M'+N_{\uparrow}]}(q^{d+1}+q^{-(d+1)}),
\end{eqnarray*}
where $d=N_{\uparrow}+|S|+|T|+M$.
Let $D'$ be a diagram obtained from $D$ by changing the unpaired 
down arrow to an up arrow.
We have $X_{D,D'}=[N_{\uparrow}+1]$.
The contribution is 
\begin{eqnarray*}
[N_{\uparrow}+1]\frac{[1+N_{\uparrow}+2|S|-M'+2|T|+2M]}{[1+N_{\uparrow}+M']}.
\end{eqnarray*}
Note that $N=N_{\uparrow}+2|S|+2|T|+M+1$. 
The sum of two contributions gives the right hand side of Eqn.(\ref{EPBI-1}).

\paragraph{\bf Case 4}
Let $D$ be diagram starting with $n_1$ up arrows, followed by 
an outer arc of size $m_1$, followed by $n_2$ up arrows, followed 
by an outer arc of size $m_2$, $\cdots$, followed by $n_{I+1}$ up 
arrows and ending with 
$|S_W|+|S_{W'}|+|S_R|$ arcs, $|T|$ dashed arcs, a down 
arrow with a star and down arrows with an integer $p$, $2\le p\le M$.
We have two cases for $D$: a) $n_{I+1}\neq0$, that is, $|S_{L}|=\emptyset$ 
and b) for a given $H$ satisfying $1\le H\le I$, $n_{H+1}\neq0$ and 
$n_{i}=0$ for $H+2\le i\le I+1$.  
Set $N_{\uparrow}=\sum_{i=1}^{I+1}n_i$, $M'=\sum_{i=1}^{I}m_i$ 
and $d=N_{\uparrow}+|S|+|T|+M$. 

\paragraph{Case 4-a}
Set $M'=\sum_{i=1}^{I}m_i$ and $L_1:=N_{\uparrow}+2|S|-M'+2|T|+2M$. 
Let $D'$ be a diagram obtained form $D$ by changing the outer arc of 
size $m_i$, $1\le i\le I$ to two up arrows. 
We have $X_{D,D'}=[1+\sum_{j=1}^{i}n_j]$.
By a similar argument to Case 1, the contribution to the left hand side of 
(\ref{EPBI-1}) is 
\begin{eqnarray}
\label{BI-cont4a1}
[M']\frac{[d][L_1+1]}{[N_{\uparrow}+M'+1][L_1]}(q^d+q^{-d})\Psi_{D}.
\end{eqnarray}
Let $D'$ be a diagram obtained from $D$ by changing the leftmost dashed 
arc to an up arrow and an unpaired down arrow.
We have $X_{D,D'}=[N_{\uparrow}+1]$.
The contribution is 
\begin{eqnarray}
\label{BI-cont4a2}
[N_{\uparrow}+1]\frac{[|S_{W'}|+|S_{W}|+|T|+|S_R|+M][d]}{[L_1][1+N_{\uparrow}+M']}
(q^d+q^{-d})(q^{d'}+q^{-d'})\Psi_{D},
\end{eqnarray}
where $d'=d-N_{\uparrow}-M'$. 
Let $D'$ be a diagram obtained from $D$ by changing the rightmost up arrow to
an unpaired down arrow. We have $X_{D,D'}=[N_{\uparrow}]$.
The contribution is 
\begin{eqnarray}
\label{BI-cont4a3}
[N_{\uparrow}]\frac{[N_{\uparrow}+M']}{[L_1]}\Psi_D.
\end{eqnarray}
Note $N=N_{\uparrow}+2|S|+2|T|+M$. 
The sum of three contributions (\ref{BI-cont4a1}), (\ref{BI-cont4a2}) and (\ref{BI-cont4a3})
gives the right hand side of Eqn.(\ref{EPBI-1}).

\paragraph{Case 4-b}
We have $S_{L}\neq\emptyset$ and $|S_L|=\sum_{i=H+1}^{I}m_i$.
Set $M'=\sum_{i=1}^{H}m_i$ and $L_1:=N_{\uparrow}+2|S|-M'+2|T|+2M$. 
Let $D'$ be a diagram obtained from $D$ by changing the outer arc of size 
$m_i$, $1\le i\le H$, to two up arrows.
By a similar argument to Case 1, the contribution of these $D'$'s to the 
left hand side of (\ref{EPBI-1}) is 
\begin{eqnarray}
\label{BI-cont41}
[M']\frac{[L_1+1]}{[L_1]}\frac{[d]}{[1+N_{\uparrow}+M']}
(q^{d}+q^{-d})\Psi_{D}.
\end{eqnarray}
Let $D'$ be a diagram obtained from $D$ by changing the outer arc of 
size $m_i$, $H+1\le i\le I$, to two up arrows.
We have $X_{D,D'}=[N_{\uparrow}+1]$.
Thus, the contribution of these $D'$'s is 
\begin{eqnarray}
\label{BI-EP-41}
\Psi_{D}\frac{[N_{\uparrow}+1][d]}{[1+N_{\uparrow}+M'+|S_L|][L_1]}(q^d+q^{-d})
\sum_{i=H+1}^{I}[m_i] [l_i]
\frac{\prod_{j\ge i+1}[w_{j}]}{\prod_{j\ge i}[w_{j-1}]}
\prod_{j=H+1}^{i}\frac{[g_j]}{[g_j-m_j]} 
\end{eqnarray}
where 
\begin{eqnarray*}
l_i&:=&1+N_{\uparrow}+2|S|-M'+2|T|+2M-\sum_{j=H+1}^{i}m_j,  \\
w_j&:=&1+N_{\uparrow}+M'+\sum_{k=H+1}^{j}m_k, \\
g_i&:=&2|S_R|+2|S_{W'}|+2|S_W|+2M+2|T|+2\sum_{j=i}^{I}m_j.
\end{eqnarray*}
Let $D'$ be a diagram obtained from $D$ by changing the leftmost 
dashed arc to an up arrow and a down arrow. 
The contribution of $D'$ is 
\begin{eqnarray}
\label{BI-EP-42}
\Psi_{D}[N_{\uparrow}+1]\frac{[2L_2][N_{\uparrow}+|S|+|T|+M]}
{[L_1][1+N_{\uparrow}+M'+|S_L|]}(q^d+q^{-d})
\prod_{i=H+1}^{I}\frac{[g_i]}{[g_i-m_i]},
\end{eqnarray}
where $L_2:=|S_{W'}|+|S_{W}|+|T|+|S_R|+M$.
By Lemma~\ref{lemma-app-13} with $x=N_{\uparrow}+M'$ and 
$z=|S_{W'}|+|S_{W}|+|S_{R}|+|T|+M-1$,  
the sum of contributions (\ref{BI-EP-41}) and 
(\ref{BI-EP-42}) is given by
\begin{eqnarray}
\label{BI-cont42}
\frac{[N_{\uparrow}+1][N_{\uparrow}+|S|+|T|+M][2L_2+2|S_L|]}
{[L_1][1+N_{\uparrow}+M']}(q^d+q^{-d})\Psi_D.
\end{eqnarray}
Finally, let $D'$ be a diagram obtained from $D$ by changing 
the rightmost up arrow to an unpaired down arrow.
The contribution of this $D'$ is 
\begin{eqnarray}
\label{BI-cont43}
\frac{[N_{\uparrow}][N_{\uparrow}+M']}{[L_1]}\Psi_D.
\end{eqnarray}
The sum of the contributions (\ref{BI-cont41}), (\ref{BI-cont42}) and 
(\ref{BI-cont43}) gives the right hand side of Eqn.(\ref{EPBI-1}).
This completes the proof.
\end{proof}

\subsection{Type BII}
We consider the action of $X$ on the Kazhdan--Lusztig basis of type BII.

Let $D$ be a diagram of type BII and $N_{\uparrow}$ be the number of 
(unpaired) up arrows.
we enumerate the up arrows from left to right by $1,2,\ldots,N_{\uparrow}$.
For each $i$, $1\le i<N_{\uparrow}$, we denote by $X_{(i)}(D)$ a diagram 
obtained from $D$ by connecting the $i$-th up arrow and $(i+1)$-th up arrow 
via an arc.
We denote by $X_{(N_{\uparrow})}(D)$ a diagram obtained from $D$ by changing 
the $N_\uparrow$-th up arrow to an e-unpaired or an o-unpaired down arrow.
Suppose that $D$ does not have down arrows or the leftmost down arrow 
of $D$ is an e-unpaired down arrow. 
We define the action of $X$ by 
\begin{eqnarray}
\label{ActionBII-1}
X(D):=\sum_{i=1}^{N_{\uparrow}}[i]X_{(i)}(D)+[Q;N_{\uparrow}]D.
\end{eqnarray}
Suppose that the leftmost down arrow of $D$ is an o-unpaired down arrow.
Then, we define the action of $X$ by
\begin{eqnarray}
\label{ActionBII-2}
X(D):=\sum_{i=1}^{N_{\uparrow}}[i]X_{(i)}(D)+[Q;-N_{\uparrow}-1]D.
\end{eqnarray}

\begin{example}
Let $D$ be a diagram depicted as
\begin{eqnarray*}
D=\tikzpic{-0.6}{
\upa{0};
\upa{0.5};
\linkpattern{1.8/3.4,2.2/3,4.2/5}{}{1.2/e,3.8/o}{};
}.
\end{eqnarray*} 
We have 
\begin{eqnarray*}
X_{(1)}
=
\tikzpic{-0.6}{
\linkpattern{0/0.8,1.5/3.1,1.9/2.7,3.7/4.5}{}{1.2/e,3.4/o}{};
}, \qquad
X_{(2)}
=
\tikzpic{-0.6}{
\upa{0.14};
\linkpattern{1.5/3.1,1.9/2.7,3.7/4.5}{}{0.7/o,1.2/e,3.4/o}{};
}.
\end{eqnarray*}
The action of $X$ on $D$ is 
\begin{eqnarray*}
X(D)=X_{(1)}(D)+[2]X_{(2)}(D)+[Q;2]D.
\end{eqnarray*}

\end{example}

\begin{theorem}
The action of $X$ defined in Eqns.(\ref{ActionBII-1}) and (\ref{ActionBII-2})
provides the action on the Kazhdan--Lusztig bases of type BII.
\end{theorem}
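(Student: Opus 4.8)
The plan is to imitate the proofs of the type~A and type~BI counterparts (the latter being Theorem~\ref{thm-ActionXBI}): induct on the length $N$, using the coproduct (\ref{coproduct}) with $c=q^{-1}$ to peel off the leftmost building block. For $N=1$ the diagrams are the single up arrow and the single (necessarily o-unpaired) down arrow, and $X(\uparrow)=X_{(1)}(\uparrow)+[Q;1]\uparrow$, $X(\text{o-block})=[Q;-1]\cdot(\text{o-block})$ follow immediately from $X=F+q^{-1}KE+sK$ with $s=(Q-Q^{-1})/(q-q^{-1})$. Assume the statement for all lengths below $N$ and let $D$ have length $N$. Since, after rules (A),(B), every unpaired up arrow lies to the left of every unpaired down arrow, the leftmost arrow of $D$ is either (i) an unpaired up arrow, (ii) an e- or o-unpaired down arrow, or (iii) the left end of an arc.

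In case (i), write $D=\uparrow D'$; from $Kv_1=qv_1$, $Ev_1=0$, $Fv_1=v_{-1}$ and (\ref{coproduct}) one gets $X(D)=q\,\uparrow X(D')+\downarrow D'$. Iterating $v_{-1}\otimes v_1=(\text{arc})+q^{-1}v_1\otimes v_{-1}$ and then rewriting the bare $v_{-1}$ that ends up as the leftmost symbol in terms of the o- or e-block gives
\begin{equation*}
\downarrow D'=\sum_{i=1}^{N_{\uparrow}-1}q^{-(i-1)}X_{(i)}(D)+q^{-(N_{\uparrow}-1)}\bigl(X_{(N_{\uparrow})}(D)+\epsilon\,D\bigr),\qquad
\epsilon=\begin{cases}Q^{-1}&\text{if }X_{(N_\uparrow)}\text{ creates an o-unpaired arrow},\\ -q^{-1}Q&\text{if }X_{(N_\uparrow)}\text{ creates an e-unpaired arrow}.\end{cases}
\end{equation*}
Now split: in subcase (i-a) $D'$ has no down arrow or its leftmost down arrow is e-unpaired, so $X(D')=\sum_i[i]X_{(i)}(D')+[Q;N_\uparrow-1]D'$ by induction and $X_{(N_\uparrow)}$ creates an o-unpaired arrow ($\epsilon=Q^{-1}$); in subcase (i-b) the leftmost down arrow of $D'$ is o-unpaired, so $X(D')=\sum_i[i]X_{(i)}(D')+[Q;-N_\uparrow]D'$ and $X_{(N_\uparrow)}$ creates an e-unpaired arrow ($\epsilon=-q^{-1}Q$). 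Using $\uparrow X_{(i)}(D')=X_{(i+1)}(D)$, the off-diagonal coefficients recombine through $q[i-1]+q^{-(i-1)}=[i]$, and the diagonal coefficient is $q[Q;N_\uparrow-1]+q^{-(N_\uparrow-1)}Q^{-1}=[Q;N_\uparrow]$ in (i-a) and $q[Q;-N_\uparrow]-q^{-N_\uparrow}Q=[Q;-N_\uparrow-1]$ in (i-b); this is exactly (\ref{ActionBII-1}) resp.\ (\ref{ActionBII-2}) for $D$.

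In case (ii) there is no unpaired up arrow, so $N_\uparrow=0$ and $D=(\text{block})\otimes D'$; removing an o- (resp.\ e-) block flips the parity of the would-be leftmost down arrow of $D'$ (or leaves $D'$ with none), so by induction $X(D')=[Q;0]D'$ (resp.\ $[Q;-1]D'$). Writing the block as $v_{-1}\mp Q^{\mp1}v_1$ and applying (\ref{coproduct}) separately to $v_{-1}\otimes D'$ and $v_1\otimes D'$ (with $Kv_{-1}=q^{-1}v_{-1}$, $KEv_{-1}=qv_1$, $Fv_{-1}=0$) reduces the claim to the scalar identities $q^{-1}[Q;0]-Q^{-1}=[Q;-1]$ and $1-qQ^{-1}[Q;0]=-Q^{-1}[Q;-1]$ for the o-block (target $[Q;-1]D$), and $q^{-1}[Q;-1]+q^{-1}Q=[Q;0]$, $1+Q[Q;-1]=q^{-1}Q[Q;0]$ for the e-block (target $[Q;0]D$), all of which are immediate. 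In case (iii) remove every arc of $D$ to obtain an arc-free diagram $\bar D$; since the arc vector $v_{-1}\otimes v_1-q^{-1}v_1\otimes v_{-1}$ is annihilated by $\Delta(E)$ and $\Delta(F)$ and fixed by $\Delta(K)$, tensoring by an arc commutes with the coideal action of $X$, so $X(D)$ is recovered from $X(\bar D)$ by reinserting the arcs in the same positions; the formulas (\ref{ActionBII-1})--(\ref{ActionBII-2}) manifestly transform the same way, and $\bar D$ falls under cases (i)--(ii). This closes the induction.

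The main obstacle is the parity bookkeeping in case (i): one must verify that changing the last up arrow of $D$ to a down arrow produces a mark opposite to that of the leftmost unpaired down arrow of $D'$, and that the discrepancy between a bare $v_{-1}$ and the normalized building block, which is precisely the $Q^{\pm1}$-weighted correction $\epsilon$, is exactly what upgrades the $q[Q;\pm(N_\uparrow-1)]$ produced by the inductive hypothesis to the diagonal coefficient $[Q;N_\uparrow]$ (resp.\ $[Q;-N_\uparrow-1]$) required by (\ref{ActionBII-1})--(\ref{ActionBII-2}). Everything else is the routine interplay of (\ref{coproduct}) with elementary quantum-integer identities.
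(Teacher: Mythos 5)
Your proof is correct and follows essentially the same route as the paper: induction on $N$ via the coproduct (\ref{coproduct}), with the same case split on the leftmost arrow (unpaired up arrow with the e/o parity subcases, e- or o-unpaired down arrow, and an arc), and the same scalar identities $q[Q;N_\uparrow-1]+q^{-(N_\uparrow-1)}Q^{-1}=[Q;N_\uparrow]$ and $q[Q;-N_\uparrow]-q^{-N_\uparrow}Q=[Q;-N_\uparrow-1]$ doing the diagonal bookkeeping. The only cosmetic difference is that you justify the arc case by noting the arc vector is annihilated by $\Delta(E),\Delta(F)$ and fixed by $\Delta(K)$, whereas the paper expands $X(\downarrow\uparrow D'-q^{-1}\uparrow\downarrow D')$ explicitly; these are the same fact.
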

\begin{proof}
We prove Theorem by induction. 
When $N=1$, Theorem is true by a straightforward calculation.
We assume that Theorem holds true for diagrams 
of length up to $N-1$.
Let $D$ be a diagram of length $N$.
We have two cases for the leftmost arrow $a$ of $D$: 
1) the arrow $a$ is an up arrow and 
2) the arrow $a$ is a down arrow.

\paragraph{\bf Case 1}
A diagram $D$ is written as $\uparrow D'$ where $D'$ is a diagram of length $N-1$.
The action of $X$ on $D$ is 
\begin{eqnarray}
X(\uparrow D')=q\uparrow X(D')+\downarrow D'.
\end{eqnarray}
We have two cases for the leftmost down arrow $a'$ of $D'$: 
a) the arrow $a'$ is an e-unpaired down arrow and 
b) the arrow $a'$ is an o-unpaired down arrow.

\paragraph{Case 1-a}
From Eqn.(\ref{ActionBII-1}) for $D'$, we have 
\begin{eqnarray}
\label{BII-1-1}
q\uparrow X(D')&=&
\sum_{i=2}^{N_{\uparrow}}q[i-1]X_{(i)}(D)+q[Q;N_{\uparrow}-1]D, \\
\label{BII-1-2}
\downarrow D'&=&
\sum_{i=1}^{N_{\uparrow}}q^{-(i-1)}X_{(i)}(D)
+q^{-(N_{\uparrow}-1)}Q^{-1}D.
\end{eqnarray}
From $q[i-1]+q^{-(i-1)}=[i]$ and 
$q[Q;N_{\uparrow}-1]+q^{-(N_{\uparrow}-1)}Q^{-1}=[Q;N_{\uparrow}]$,
the sum of Eqns.(\ref{BII-1-1}) and (\ref{BII-1-2}) is equal to the 
right hand side of Eqn.(\ref{ActionBII-1}).

\paragraph{Case 1-b}
Similarly, we have 
\begin{eqnarray*}
q\uparrow X(D')
&=&\sum_{i=2}^{N_{\uparrow}}q[i-1]X_{(i)}(D)+q[Q;-N_{\uparrow}]D, \\
\downarrow D'&=&
\sum_{i=1}^{N_{\uparrow}}q^{-(i-1)}X_{(i)}(D)-q^{-N_{\uparrow}}QD
\end{eqnarray*}
From $q[Q;-N_{\uparrow}]-q^{-N_{\uparrow}}Q=[Q;-N_{\uparrow}-1]$, 
the sum of $q\uparrow X(D')$ and $\downarrow D'$ is equal to the 
right hand side of Eqn.(\ref{ActionBII-2}).

\paragraph{\bf Case 2}
We have three cases for the leftmost down arrow $a$ of $D$:
a) the arrow $a$ is an e-unpaired down arrow,
b) the arrow $a$ is an o-unpaired down arrow and 
c) the arrow $a$ forms an arc.

\paragraph{Case 2-a}
The diagram $D$ is written as 
$\raisebox{-0.75\totalheight}{
\begin{tikzpicture}
\draw(0,0)--(0,-0.5)(0,-0.7)node{e};
\end{tikzpicture}
}\!\!\! D'$
where $D'$ is a diagram of length $N-1$.
We have 
\begin{eqnarray*}
X(D)&=&X(\downarrow D')+q^{-1}QX(\uparrow D') \\
&=&q^{-1}\downarrow X(D')+\uparrow D'+Q\uparrow X(D')+q^{-1}Q\downarrow D'\\
&=&[Q;0]D
\end{eqnarray*}
where we have used $X(D')=[Q;-1]D'$.

\paragraph{Case 2-b}
The diagram $D$ is written as 
$\raisebox{-0.75\totalheight}{
\begin{tikzpicture}
\draw(0,0)--(0,-0.5)(0,-0.7)node{o};
\end{tikzpicture}
}\!\!\! D'$
where $D'$ is a diagram of length $N-1$.
We have 
\begin{eqnarray*}
X(D)&=&X(\downarrow D')-Q^{-1}X(\uparrow D') \\
&=&q^{-1}\downarrow X(D')+\uparrow D'-qQ^{-1}\uparrow X(D')-Q^{-1}\downarrow D'\\
&=&[Q;-1]D
\end{eqnarray*}
where we have used $X(D')=[Q;0]D'$.

\paragraph{Case 2-c}
By a similar argument to Case 2-b in the proof of Theorem~\ref{thm-ActionXBI},
we can assume $D$ is written as 
$D=
\raisebox{-0.5\totalheight}{
\begin{tikzpicture}
\draw(0,0)..controls(0,-0.6)and(0.6,-0.6)..(0.6,0);
\end{tikzpicture}
}D'$ where $D'$ is a diagram of length $N-2$.
We have 
\begin{eqnarray*}
X(D)&=&X(\downarrow\uparrow D')-q^{-1}X(\uparrow\downarrow D') \\
&=&q^{-1}\downarrow X(\uparrow D')+\uparrow\uparrow D'
-\uparrow X(\downarrow D')-q^{-1}\downarrow\downarrow D' \\
&=&\downarrow\uparrow X(D')+q^{-1}\downarrow\downarrow+\uparrow\uparrow D'
-q^{-1}\uparrow\downarrow X(D')-\uparrow\uparrow D'-q^{-1}\downarrow\downarrow D'\\
&=&\raisebox{-0.5\totalheight}{
\begin{tikzpicture}
\draw(0,0)..controls(0,-0.6)and(0.6,-0.6)..(0.6,0);
\end{tikzpicture}
}X(D').
\end{eqnarray*}

In both Case 1 and 2, $X(D)$ coincides with the definitions (\ref{ActionBII-1})
and (\ref{ActionBII-2}). This completes the proof.
\end{proof}

\begin{theorem}
\label{theorem-XBII}
$X$ has the eigenvalue $[Q;N-2i]$, $0\le i\le N$,  
of multiplicity $\displaystyle\genfrac{(}{)}{0pt}{}{N}{i}$.
\end{theorem}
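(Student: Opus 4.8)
The plan is to copy the structure of the proof of the type~A statement, filtering $V_1^{\otimes N}$ by the number of arcs. Write $\mathcal{D}$ for the set of type~BII diagrams of length $N$, $\mathcal{D}_n$ for those with exactly $n$ arcs, and $\mathcal{D}_n^{\ge}:=\bigcup_{k\ge n}\mathcal{D}_k$. Inspecting Eqns.~(\ref{ActionBII-1})--(\ref{ActionBII-2}), the diagrams $X_{(i)}(D)$ with $1\le i<N_{\uparrow}$ each carry one more arc than $D$, whereas $X_{(N_{\uparrow})}(D)$ (the rightmost up arrow is turned into a still-unpaired $\mathrm{e}$- or $\mathrm{o}$-marked down arrow) and the diagonal term keep the whole arc structure. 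Hence $X$ never decreases the arc number, each $\mathrm{span}(\mathcal{D}_n^{\ge})$ is $X$-invariant, and the spectrum of $X$ (with multiplicities) is the disjoint union of the spectra of the operators $\overline{X}_n$ induced on the subquotients $\mathrm{span}(\mathcal{D}_n^{\ge})/\mathrm{span}(\mathcal{D}_{n+1}^{\ge})\cong\mathrm{span}(\mathcal{D}_n)$.

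Next I would fix a set $I$ of arc positions and look at $\mathcal{D}_n^{I}\subseteq\mathcal{D}_n$, the diagrams whose arcs lie at $I$. Exactly as in type~A, $\mathcal{D}_n=\bigsqcup_I\mathcal{D}_n^{I}$, $\overline{X}_n$ preserves each summand $\mathrm{span}(\mathcal{D}_n^{I})$ (its arc-preserving part fixes arc positions), $|\mathcal{D}_n^{I}|=N-2n+1$, and the number of admissible $I$ is $\genfrac{(}{)}{0pt}{}{N}{n}-\genfrac{(}{)}{0pt}{}{N}{n-1}$ (the forgetful map to type~A diagrams is a bijection preserving arc number and arc positions, so these counts are unchanged). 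Put $K:=N-2n$ and let $D_j\in\mathcal{D}_n^{I}$ be the unique diagram with $j$ unpaired up arrows, $0\le j\le K$; it then has $K-j$ unpaired down arrows, whose leftmost one is the $(K-j)$-th from the right, hence $\mathrm{o}$-marked when $K-j$ is odd and $\mathrm{e}$-marked (or absent, when $j=K$) when $K-j$ is even. Reading off Eqns.~(\ref{ActionBII-1})--(\ref{ActionBII-2}) and discarding the arc-raising terms, $\overline{X}_n$ sends $D_j$ to
\begin{eqnarray*}
\overline{X}_n(D_j)&=& [j]\,D_{j-1}+c_j\,D_j,\qquad
c_j=\begin{cases}[Q;j], & K-j\ \text{even},\\ [Q;-j-1], & K-j\ \text{odd},\end{cases}
\end{eqnarray*}
so the matrix of $\overline{X}_n$ on $\mathrm{span}(\mathcal{D}_n^{I})$ is triangular in the ordered basis $D_K,D_{K-1},\dots,D_0$ with diagonal $\{c_j:0\le j\le K\}$ (unlike type~A, no tridiagonal eigenvalue lemma is needed here). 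A short parity check shows that $c_j=[Q;j]$ exhausts $[Q;N-2i]$ for $n\le i\le\lfloor N/2\rfloor$ while $c_j=[Q;-j-1]$ exhausts $[Q;N-2i]$ for $\lfloor N/2\rfloor<i\le N-n$; since the $[Q;\cdot]$ with distinct arguments are distinct, $\mathrm{span}(\mathcal{D}_n^{I})$ contributes exactly the eigenvalues $[Q;N-2i]$, $n\le i\le N-n$, each simple.

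Finally, summing over the blocks: $[Q;N-2i]$ occurs in $\mathrm{span}(\mathcal{D}_n^{I})$ iff $n\le\min(i,N-i)$, and for each such $n$ there are $\genfrac{(}{)}{0pt}{}{N}{n}-\genfrac{(}{)}{0pt}{}{N}{n-1}$ choices of $I$, so its total multiplicity in $X$ is
\begin{eqnarray*}
\sum_{n=0}^{\min(i,N-i)}\left(\genfrac{(}{)}{0pt}{}{N}{n}-\genfrac{(}{)}{0pt}{}{N}{n-1}\right)
&=&\genfrac{(}{)}{0pt}{}{N}{\min(i,N-i)}=\genfrac{(}{)}{0pt}{}{N}{i}
\end{eqnarray*}
by telescoping. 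I expect the only delicate point to be the parity bookkeeping of the $\mathrm{e}/\mathrm{o}$ marks, namely verifying that the two families $\{[Q;j]\}$ and $\{[Q;-j-1]\}$ interleave along $j=0,1,\dots,K$ to give precisely $\{[Q;N-2i]:n\le i\le N-n\}$; everything else is the combinatorial bookkeeping already carried out in the type~A proof.
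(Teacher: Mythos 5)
Your proof is correct and follows essentially the same route as the paper's: both rest on the observation that the arc-raising terms of $X$ make its matrix triangular, so the eigenvalues are the diagonal entries $[Q;N_{\uparrow}]$ and $[Q;-N_{\uparrow}-1]$, and the multiplicity count reduces to the identical telescoping sum $\sum_{k}\bigl(\genfrac{(}{)}{0pt}{}{N}{k}-\genfrac{(}{)}{0pt}{}{N}{k-1}\bigr)$ over the number of arcs. The only difference is packaging: you make the triangularity explicit via a filtration by arc number and a block decomposition by arc positions, whereas the paper asserts lower-triangularity in the lexicographic order on binary strings and then counts diagrams with a given number of up arrows directly; your parity bookkeeping of the e/o marks is the same check the paper performs when it notes the one-to-one correspondence between $j$ and the arc-free diagram of length $N-2m$ with $\lambda_{D'}=N-2m-2j$.
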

\begin{proof}
Recall that a diagram $D$ is constructed from a binary string
$b$.
The lexicographic order of binary strings induces a natural 
lexicographic order of diagrams.
We consider the matrix representation of $X$ on 
the Kazhdan--Lusztig bases. 
In the lexicographic order of bases, $X$ is a lower triangular
matrix.
From Eqns.(\ref{ActionBII-1}) and (\ref{ActionBII-2}), 
the diagonal entries of $X$ are 
$[Q;N_{\uparrow}]$ or $[Q;-N_{\uparrow}-1]$.
Thus an eigenvalue of $X$ is of the form $[Q;n]$ with some $n$.
The multiplicity of an eigenvalue $[Q;n]$, $n\in\mathbb{N}$, 
is the number of diagrams which has $n$ up arrows and the leftmost 
down arrow is e-unpaired down arrow.
Similarly, the multiplicity of an eigenvalue $[Q;-n-1]$, $n\in\mathbb{N}$, 
is the number of diagrams which has $n$ up arrows and the leftmost 
down arrow is o-unpaired down arrow.
Set $\lambda_D=n$ (resp. $\lambda_D=-n-1$) if the leftmost down arrow 
is e-unpaired (resp. o-unpaired) down arrow.
If $D$ does not have down arrows, we set $\lambda_D=n$.

Suppose that a diagram $D$ has $m$ arcs. 
We denote by $D'$ a diagram obtained from $D$ by removing all the arcs.
The diagram $D'$ is of length $N-2m$ and without arcs.
We denote by $\mathcal{D'}$ the set of such $D'$'s. The cardinality of 
$\mathcal{D'}$ is $N-2m+1$.
Reversely, if we have a diagram $D'\in\mathcal{D'}$, one can construct 
a diagram $D$ by inserting arcs into $D'$. 
Given $D'$, the number of possible $D$'s is given
by $\displaystyle\genfrac{(}{)}{0pt}{}{N}{m}-\genfrac{(}{)}{0pt}{}{N}{m-1}$.
Obviously, we have $\lambda_{D'}=\lambda_{D}$.
Given a diagram $D'\in\mathcal{D'}$, we have $\lambda_{D'}=N-2m-2j$ with some 
$j$, $0\le j\le N-2m$. Note that there is a one-to-one correspondence between 
$j$ and $D'$. 
Thus the number of diagrams for $\lambda_{D}=N-2i$, $0\le i\le N$,   
is 
\begin{eqnarray*}
\sum_{k=0}^{\min(i,N-i)}\genfrac{(}{)}{0pt}{}{N}{k}-\genfrac{(}{)}{0pt}{}{N}{k-1}
&=&\genfrac{(}{)}{0pt}{}{N}{\min(i,N-i)} \\
&=&\genfrac{(}{)}{0pt}{}{N}{i}
\end{eqnarray*}
This completes the proof.
\end{proof}

Let $D$ be a diagram of Type BII.
We denote by $S$ the set of arcs, by $S_{\uparrow}$ 
the set of up arrows, by $S_{\mathrm{e}}$ the 
set of e-unpaired down arrows and by $S_{\mathrm{o}}$ 
the set of o-unpaired down arrows.
We enumerate arcs, up arrows and e-unpaired down arrows 
from left.
Let $N_A$ be the integer assigned to 
$A\in S\cup S_{\uparrow}\cup S_{\mathrm{e}}$. 
We define 
\begin{eqnarray*}
N_1:&=&\prod_{A\in S\cup S_{\mathrm{e}}}[N_A],\\
N_2:&=&\prod_{B\in S}[m_B]^{-1}.
\end{eqnarray*}
where $m_B$ is the size of $B\in S$.

We enumerate arcs and e-unpaired from right to left by $1, 2, \ldots$.
If there exist arcs inside of an arc or a dashed arc, we increase the 
integer one by one from inside to outside.
Let $N_C$ be the integer assigned to 
$C\in S\cup S_{\mathrm{e}}$.
We define 
\begin{eqnarray*}
N_3:=\prod_{C\in S_{\mathrm{e}}}[N_C]^{-1}.
\end{eqnarray*}
We enumerate arcs, up arrows and o-unpaired down arrows
from right.
Let $N_E$ be the integer assigned to 
$E\in S\cup S_{\uparrow}\cup S_{\mathrm{o}}$.
We define 
\begin{eqnarray*}
N_4:=\prod_{E\in S\cup S_{\uparrow}}
(q^{N_E-1}Q+q^{-(N_E-1)}Q^{-1}).
\end{eqnarray*}

In the above notation, we define
\begin{defn}
\label{PsiBII}
$\Psi_D=N_1\cdot N_2\cdot N_3\cdot N_4$.
\end{defn}

\begin{example}
Let $D$ be a diagram depicted as 
\begin{eqnarray*}
\tikzpic{0}{
\draw(0,0)--(0,-0.6)(-0.12,-0.12)--(0,0)--(0.12,-0.12);
\draw(-0.5,0)--(-0.5,-0.6)(-0.12-0.5,-0.12)--(-0.5,0)--(0.12-0.5,-0.12);
\linkpattern{0.4/1.2,1.8/3.2,2.2/2.8}{}{1.5/e,3.5/o}{};
}.
\end{eqnarray*}
We have 
\begin{eqnarray*}
N_1=\frac{[6]!}{[2]}, \qquad N_2=[2]^{-1}, \qquad
N_3=[3]^{-1},\qquad N_4=\prod_{i=2}^{6}(q^{i-1}Q+q^{-(i-1)}Q^{-1}).
\end{eqnarray*}
\end{example}

\begin{theorem}
$\Psi$ is the eigenvector of $X$ with the eigenvalue $[Q;N]$. 
The multiplicity is one.
\end{theorem}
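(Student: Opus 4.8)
The plan is to follow the proof of Theorem~\ref{thrm-PsiA} almost verbatim, with the simplification that the Type~BII action of $X$ in (\ref{ActionBII-1})--(\ref{ActionBII-2}) has no ``$F$''-type (down-arrow) terms, so that only three families of nonzero matrix entries occur instead of the five of Theorem~\ref{thrm-PsiA}. The multiplicity assertion is immediate: in Theorem~\ref{theorem-XBII} the eigenvalue $[Q;N]$ is the case $i=0$, hence of multiplicity one. Writing $X(D')=\sum_{D}X_{D,D'}D$ as in the proof of Theorem~\ref{thrm-PsiA}, the eigenvector assertion amounts to checking
\[
\sum_{D'}X_{D,D'}\Psi_{D'}=[Q;N]\Psi_{D}
\]
for every Type~BII diagram $D$.

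First I would parametrize $D$ as in (\ref{Diagram0}): $n_{1}$ unpaired up arrows, an up-region outer arc of size $m_{1}$ (whose interior is filled with nested arcs), $\ldots$, $n_{I+1}$ unpaired up arrows, followed by the down region, a string of e-unpaired and o-unpaired down arrows interspersed with outer arcs of sizes $m'_{1},\ldots,m'_{J}$. I would then write $\Psi_{D}$ out from Definition~\ref{PsiBII} in this parametrization, analogously to (\ref{PsiA2}), as a ratio of products of quantum integers $[k]$, the factor $\prod_{A\in S}[m_{A}]^{-1}$, and the $N_{4}$-type factors $q^{k-1}Q+q^{-(k-1)}Q^{-1}$. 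The point of the parametrization is that, since $X$ acts only on unpaired arrows, $X_{D,D'}\neq0$ forces $D'$ to share with $D$ all of its nested arcs and all of its down-region arcs, which simply ride along as passengers, exactly as in Case~2-b of the proof of Theorem~\ref{thm-ActionXBI}.

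Next I would enumerate the $D'$ with $X_{D,D'}\neq0$. Because $X(D')$ in (\ref{ActionBII-1})--(\ref{ActionBII-2}) can only create an arc between two consecutive unpaired up arrows, turn the rightmost unpaired up arrow of $D'$ into the leftmost unpaired down arrow, or act diagonally, there are exactly three families: (i) $D'$ obtained from $D$ by opening the $i$-th up-region outer arc (size $m_{i}$) into two unpaired up arrows, with $X_{D,D'}=[1+\sum_{j\le i}n_{j}]$; (ii) when $D$ has a down arrow, $D'$ obtained from $D$ by turning its leftmost unpaired down arrow back into an unpaired up arrow, with $X_{D,D'}=[N_{\uparrow}+1]$; (iii) $D'=D$, with $X_{D,D}=[Q;N_{\uparrow}]$ when $D$ has no down arrow or its leftmost down arrow is e-unpaired, and $X_{D,D}=[Q;-N_{\uparrow}-1]$ when it is o-unpaired. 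For each family I would compute $\Psi_{D'}/\Psi_{D}$ from Definition~\ref{PsiBII}, collapse the sum over $i$ in family~(i) using Lemma~\ref{lemma-app0}, and check that the three contributions add up to $[Q;N]\Psi_{D}$, using $N=N_{\uparrow}+|S_{\mathrm{e}}|+|S_{\mathrm{o}}|+2|S|$. As with Eqns.(\ref{EPA-c1})--(\ref{EPA-c5}), once the closed forms are in hand this last step should reduce to a one-line identity among quantum integers and $[Q;\cdot]$'s; I expect to split it according to the same trichotomy (no down arrow / leftmost down arrow e-unpaired / o-unpaired) that governs the choice between (\ref{ActionBII-1}) and (\ref{ActionBII-2}).

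The main obstacle is the index bookkeeping in the ratios $\Psi_{D'}/\Psi_{D}$. Opening an up-region outer arc shifts the left-to-right indices feeding $N_{1}$ and the right-to-left indices feeding $N_{3}$ and $N_{4}$, while turning the leftmost unpaired down arrow into an up arrow additionally moves that site among the sets $S_{\mathrm{e}},S_{\mathrm{o}},S_{\uparrow}$, changing which of $N_{3}$ or $N_{4}$ it contributes to and, in family~(iii), flipping the diagonal entry between $[Q;N_{\uparrow}]$ and $[Q;-N_{\uparrow}-1]$. Keeping the parity of the number of unpaired down arrows consistent across these moves, so that the scattered $[Q;\cdot]$ factors recombine into the single eigenvalue $[Q;N]$, is the delicate point; the rest is the routine $q$-binomial manipulation already carried out for Type~A.
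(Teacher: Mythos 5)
Your proposal is correct and follows essentially the same route as the paper: the multiplicity claim is read off from Theorem~\ref{theorem-XBII}, and the eigenvalue equation $\sum_{D'}X_{D,D'}\Psi_{D'}=[Q;N]\Psi_{D}$ is verified componentwise by splitting into the same trichotomy (no down arrows / leftmost down arrow e-unpaired / o-unpaired), with exactly the three families of contributing $D'$ you list (opening an up-region outer arc, restoring the leftmost unpaired down arrow to an up arrow, and the diagonal term) collapsed via Lemma~\ref{lemma-app0}. The ``index bookkeeping'' you defer is precisely the ratio computation the paper carries out explicitly in its Cases 1--3, so nothing essential is missing from your plan.
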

\begin{proof}
Let $X=(X_{D,D'})$ be the matrix representation of $X$.
We show that 
\begin{eqnarray}
\label{EPBII}
\sum_{D'}X_{D,D'}\Psi_{D'}=[Q;N]\Psi_{D}.
\end{eqnarray}
We have three cases for $D$: 1) $D$ does not have down arrows,
2) the leftmost down arrow of $D$ is an e-unpaired down arrow 
and 3) the leftmost down arrow of $D$ is an o-unpaired down arrow.

\paragraph{\bf Case 1}
Let $D$ be a diagram starting with $n_1$ up arrows, followed by 
an outer arc of size $m_1$, followed by $n_2$ up arrows, followed 
by an outer arc of size $m_2$, $\cdots$ and ending with $n_{I+1}$ 
up arrows. Inside of an outer arc of size $m_i$, $1\le i\le I$, is 
filled with arcs.
As a diagram, $D$ is
\begin{eqnarray}
\label{diagram1up}
\underbrace{\uparrow\ldots\uparrow}_{n_1}\! \!
\raisebox{-0.8\totalheight}{
\begin{tikzpicture}
\draw(0,0)..controls(0,-0.8)and(1,-0.8)..(1,0);
\draw(0.5,-0.8)node{size $m_1$};
\end{tikzpicture}}\! \! 
\uparrow\ldots\uparrow \! \! \! 
\raisebox{-0.8\totalheight}{
\begin{tikzpicture}
\draw(0,0)..controls(0,-0.8)and(1,-0.8)..(1,0);
\draw(0.5,-0.8)node{size $m_I$};
\end{tikzpicture}}\! \! 
\underbrace{\uparrow\ldots\uparrow}_{n_{I+1}}.
\end{eqnarray}
Set $N_{\uparrow}=\sum_{i=1}^{I+1}n_i$ and $M=\sum_{i=1}^{I}m_i$.
The component $\Psi_D$ is explicitly written as 
\begin{eqnarray*}
\Psi_{D}=
\prod_{i=1}^{I}\frac{[\sum_{j=1}^{i}(n_j+m_j)]}{[n_i+\sum_{j=1}^{i-1}(n_j+m_j)]}
\cdot \prod_{B\in S}[m_B]^{-1} \cdot
\prod_{i=1}^{N_{\uparrow}+M}(Qq^{i-1}+Q^{-1}q^{-(i-1)}).
\end{eqnarray*}
Let $D'$ be a diagram obtained from $D$ by changing the arc of 
size $m_i$ to two up arrows. Then, we have $X_{D,D'}=[1+\sum_{j=1}^{i}n_j]$.
The contribution to the left hand side of Eqn.(\ref{EPBII}) is 
\begin{eqnarray*}
\Psi_{D}\sum_{i=1}^{I}\left[1+\sum_{j=1}^{i}n_j\right][m_i]
\frac{\prod_{j\ge i+1}^{I}[1+\sum_{k=1}^{j}(n_k+m_k)]}
{\prod_{j\ge i}^{I}[1+n_j+\sum_{k=1}^{j-1}(n_k+m_k)]}
(q^{N_{\uparrow}+M}Q+q^{-(N_{\uparrow}+M)}Q^{-1}).
\end{eqnarray*}
Inserting Lemma~\ref{lemma-app0}, the above expression is reduced to 
\begin{eqnarray*}
[M](q^{N_{\uparrow}+M}Q+q^{-(N_{\uparrow}+M)}Q^{-1})\Psi_{D}.
\end{eqnarray*}
The contribution of the diagonal term is $[Q;N_{\uparrow}]\Psi_{D}$. 
Therefore, the left hand side of Eqn.(\ref{EPBII}) is 
\begin{eqnarray*}
[M](q^{N_{\uparrow}+M}Q+q^{-(N_{\uparrow}+M)}Q^{-1})\Psi_{D}
+[Q;N_{\uparrow}]\Psi_{D}
&=&[Q;N_{\uparrow}+2M]\Psi_{D} \\
&=&[Q;N]\Psi_{D}.
\end{eqnarray*}

\paragraph{\bf Case 2}
Let $D$ be a diagram starting with $n_1$ up arrows, followed by 
an outer arc of size $m_1$, followed by $n_2$ up arrows, followed 
by an outer arc of size $m_2$, $\cdots$, followed by $n_{I+1}$ 
up arrows, followed by e-unpaired down arrow and ending with 
$G-1$ e-unpaired down arrows, $M'$ arcs and $G$ o-unpaired down 
arrows.
Set $N_{\uparrow}:=\sum_{i=1}^{I+1}n_i$ and $M=\sum_{i=1}^{I}m_{i}$. 
Let $D'$ be a diagram obtained from $D$ by changing the outer arc 
of size $m_i$ to two up arrows. 
We have $X_{D,D'}=[1+\sum_{j=1}^{i}n_j]$.
By a similar calculation to Case 1, the contribution to the left 
hand side of Eqn.(\ref{EPBII}) is 
\begin{eqnarray*}
\frac{[M][d+1]}{[1+M+N_{\uparrow}]}(Qq^d+Q^{-1}q^{-d})\Psi_{D} 
\end{eqnarray*}
where $d=G+M+M'+N$.

Let $D'$ be a diagram obtained from $D$ by changing the leftmost 
down arrow to an up arrow.
We have $X_{D,D'}=[N_{\uparrow}+1]$.
The contribution to the left hand side of Eqn.(\ref{EPBII}) is 
\begin{eqnarray}
\frac{[N_{\uparrow}+1][G+M']}{[1+M+N_{\uparrow}]}(Qq^{d}+Q^{-1}q^{-d})\Psi_{D}.
\end{eqnarray} 
The contribution of the diagonal term is $[Q;N]\Psi_{D}$.
Note that $N=N_{\uparrow}+2M+2M'+2G$. 
The sum of three contributions gives the right hand side of (\ref{EPBII}).

\paragraph{\bf Case 3}
Let $D$ be a diagram starting with $n_1$ up arrows, followed by 
an outer arc of size $m_1$, followed by $n_2$ up arrows, followed 
by an outer arc of size $m_2$, $\cdots$, followed by $n_{I+1}$ 
up arrows, followed by o-unpaired down arrow and ending with 
$G-1$ e-unpaired down arrows, $M'$ arcs and $G-1$ o-unpaired 
down arrows.
We set $N_{\uparrow}:=\sum_{i=1}^{I+1}n_{i}$ and 
$M:=\sum_{i=1}^{I}m_{i}$.
Let $D'$ be a diagram obtained from $D$ by changing the arc of 
size $m_i$ to two up arrows.
By a similar calculation to Case 2, the contribution to the left 
hand side of (\ref{EPBII}) is 
\begin{eqnarray*}
\frac{[M][d]}{[1+M+N_{\uparrow}]}(Qq^{d}+Q^{-1}q^{-d})\Psi_D.
\end{eqnarray*}
where $d=M+M'+N_{\uparrow}+G$.
Let $D'$ be a diagram obtained from $D$ by changing the leftmost 
down arrow to an up arrow. The contribution of this diagram is
\begin{eqnarray*}
\frac{[N_{\uparrow}+1][d]}{[1+M+N_{\uparrow}]}(Qq^{M'+G-1}+Q^{-1}q^{-(M'+G-1)})\Psi_D.
\end{eqnarray*}
The contribution of the diagonal term is $[Q;-N_{\uparrow}-1]\Psi_{D}$.
Note that $N=N_{\uparrow}+2M+2M'+2G-1$.
The sum of three contributions gives the right hand side of Eqn.(\ref{EPBII}).
This completes the proof.
\end{proof}

\subsection{Type BIII}
We consider the action of $X$ on Kazhdan--Lusztig bases of type BIII.

Let $D$ be a diagram of type BIII, $N_{\uparrow}$ be the number 
of up arrows and $N_{\downarrow}$ be the number of down arrows 
with a circled integer.
We define the weight of $D$ by $\mathrm{wt}(D):=N_{\uparrow}-N_{\downarrow}$. 
We enumerate the up arrows from left to right by 
$1,2,\ldots,N_{\uparrow}$.
For each $1\le i<N_{\uparrow}$, we denote by $X_{(i)}(D)$ a diagram
obtained from $D$ by connecting the $i$-th up arrow and the $(i+1)$-th 
up arrow via an arc.
We denote by $X_{(N_{\uparrow})}(D)$ a diagram obtained from $D$ by 
changing the $N_{\uparrow}$-th up arrow to the down arrow with a 
circled integer $N_{\downarrow}+1$.
We define the action of $X$ by 
\begin{eqnarray}
\label{ActionXBIII}
X(D):=\sum_{i=1}^{N_{\uparrow}}[i]X_{(i)}(D)
+[Q;\mathrm{wt}(D)]D.
\end{eqnarray} 

\begin{example}
Let $D$ be a diagram depicted as 
\begin{eqnarray*}
D=\tikzpic{-0.6}{
\upa{0};
\linkpattern{0.5/1.3,2.6/4.2,3/3.8,5.1/5.9}{}{}{2.2/2,4.7/1};
\upa{1.4};
}.
\end{eqnarray*}
We have 
\begin{eqnarray*}
X_{(1)}(D)=\tikzpic{-0.6}{
\linkpattern{0.1/1.7,0.5/1.3,2.6/4.2,3/3.8,5.1/5.9}{}{}{2.2/2,4.7/1};
},\qquad
X_{(2)}(D)=\tikzpic{-0.6}{
\upa{0};
\linkpattern{0.5/1.3,2.65/4.25,3.05/3.85,5.15/5.95}{}{}{1.7/3,2.25/2,4.75/1};
}.
\end{eqnarray*}
Then, the action of $X$ on $D$ is 
\begin{eqnarray*}
X(D)=X_{(1)}(D)+[2]X_{(2)}(D)+[Q;0]D.
\end{eqnarray*}

\end{example}

\begin{theorem}
The action of $X$ defined in Eqn.(\ref{ActionXBIII}) provides
the action of $X$ on the Kazhdan--Lusztig basis of Type BIII.
\end{theorem}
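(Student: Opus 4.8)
The plan is to prove this by induction on $N$, exactly parallel to the proofs already given for Types A, BI and BII: one peels off the leftmost arrow of $D$ and combines the inductive hypothesis with the coproduct $\Delta(X)=K\otimes X+q^{-1}KE\otimes1+F\otimes1$. For the base case $N=1$ the only two diagrams are $\uparrow$ and the down arrow carrying the circled integer $1$, and one checks directly from $\iota(X)=F+q^{-1}KE+sK$ with $s=(Q-Q^{-1})/(q-q^{-1})$ that the right-hand side of Eqn.~(\ref{ActionXBIII}) computes $X$; the scalar identities needed here are $[Q;1]-Q^{-1}=qs$ and $[Q;-1]=q^{-1}s-Q^{-1}$.

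For the inductive step one fixes $D$ of length $N$ and distinguishes the leftmost arrow. \emph{Case 1 (leftmost arrow an unpaired up arrow).} Write $D=\uparrow D'$; since $Ev_1=0$ the coproduct gives $X(D)=q\,\uparrow\!X(D')+\downarrow\!D'$. Prepending an up arrow sends $X_{(i)}(D')$ to $X_{(i+1)}(D)$ (in the top case $X_{(N_{\uparrow}(D'))}(D')$ turns the last up arrow of $D'$ into a circled down arrow with integer $N_{\downarrow}(D)+1$, which matches $X_{(N_{\uparrow}(D))}(D)$), so the inductive hypothesis rewrites the first summand as $\sum_{i=2}^{N_{\uparrow}(D)}q[i-1]X_{(i)}(D)+q[Q;\mathrm{wt}(D)-1]D$. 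The remaining task is to expand the vector $\downarrow\!D'$ in the Kazhdan--Lusztig basis; I expect a short recursion built from $\downarrow\!\uparrow=(\text{simple arc})+q^{-1}\uparrow\!\downarrow$ together with the building-block relation for circled down arrows to give
\[
\downarrow\!D'=\sum_{i=1}^{N_{\uparrow}(D)}q^{-(i-1)}X_{(i)}(D)+q^{\,1-\mathrm{wt}(D)}Q^{-1}D .
\]
Adding the two pieces and invoking $q[i-1]+q^{-(i-1)}=[i]$ and $q[Q;m-1]+q^{1-m}Q^{-1}=[Q;m]$ then reproduces Eqn.~(\ref{ActionXBIII}) for $D$.

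\emph{Case 2 (leftmost arrow a down arrow).} Because no diagram has an unpaired up arrow to the right of an unpaired down arrow, there are only two subcases: $(2\mathrm{a})$ $D$ has no unpaired up arrows and starts with the down arrow carrying the circled integer $r:=N_{\downarrow}(D)$, or $(2\mathrm{b})$ the leftmost arrow opens an arc, and — using, as in the earlier proofs, that Eqn.~(\ref{ActionXBIII}) only touches the unpaired up arrows — one may assume that arc is a simple arc, that is, $D=(\text{simple arc})D'$. In $(2\mathrm{a})$ I will write $D=\downarrow\!D'-q^{r-1}Q^{-1}\uparrow\!D'$ at the level of vectors, apply $X$, and use $X(\downarrow\!D')=q^{-1}\downarrow\!X(D')+\uparrow\!D'$, $X(\uparrow\!D')=q\,\uparrow\!X(D')+\downarrow\!D'$ and $X(D')=[Q;1-r]D'$ (legitimate since $N_{\uparrow}(D')=0$); the identities $q^{-1}[Q;1-r]-q^{r-1}Q^{-1}=[Q;-r]$ and $1-q^{r}Q^{-1}[Q;1-r]=-q^{r-1}Q^{-1}[Q;-r]$ collapse everything to $[Q;-r]D=[Q;\mathrm{wt}(D)]D$, which is Eqn.~(\ref{ActionXBIII}) because the sum there is empty. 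In $(2\mathrm{b})$ I will apply the coproduct twice, to $\downarrow\!\uparrow\!D'$ and to $\uparrow\!\downarrow\!D'$, and subtract: the $\uparrow\!\uparrow\!D'$ and $q^{-1}\downarrow\!\downarrow\!D'$ contributions cancel, leaving $X(D)=\downarrow\!\uparrow\!X(D')-q^{-1}\uparrow\!\downarrow\!X(D')=(\text{simple arc})X(D')$; since the inserted arc carries only a paired up and a paired down arrow it changes neither $\mathrm{wt}(D)$ nor any $X_{(i)}$, so this matches Eqn.~(\ref{ActionXBIII}) term by term. These subcases exhaust Case 2, completing the induction.

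The step I expect to be the real obstacle is the expansion of $\downarrow\!D'$ in Case 1: one has to keep careful track of how the circled integers get relabelled when a fresh down arrow is created at the far left, and in particular verify that the leftover diagonal contribution is precisely $q^{\,1-\mathrm{wt}(D)}Q^{-1}D$ and nothing more. This is the Type BIII counterpart of the identity $\downarrow\!D'=\sum_i q^{-(i-1)}E_{(i)}(D)$ used in the Type A proof; once it is in hand, everything else reduces to the same quantum-integer bookkeeping that settled the earlier types.
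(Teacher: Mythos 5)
Your proposal is correct and follows essentially the same route as the paper: induction on $N$ with the coproduct, splitting on whether the leftmost arrow is an unpaired up arrow, a circled down arrow, or part of an arc, and using the identities $q[i-1]+q^{-(i-1)}=[i]$, $q[Q;d]+q^{-d}Q^{-1}=[Q;d+1]$, $q^{-1}[Q;1-r]-q^{r-1}Q^{-1}=[Q;-r]$ and $1-q^{r}Q^{-1}[Q;1-r]=-q^{r-1}Q^{-1}[Q;-r]$. The expansion of $\downarrow\!D'$ that you flag as the main obstacle is exactly the step the paper uses (your version $\downarrow\!D'=\sum_i q^{-(i-1)}X_{(i)}(D)+q^{1-\mathrm{wt}(D)}Q^{-1}D$ is in fact stated more cleanly than the paper's), so nothing further is needed.
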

\begin{proof}
We prove Theorem by induction. 
When $N=1$, Theorem is true by a straightforward calculation.
We assume that Theorem holds true for diagrams of length up to $N-1$.
Let $D$ be a diagram of length $N$.
We have two cases for the leftmost arrow $a$ of $D$:
1) the arrow $a$ is an up arrow and 
2) the arrow $a$ is a down arrow.

\paragraph{\bf Case 1}
A diagram $D$ is written as $\uparrow D'$ where $D'$ is a diagram of 
length $N-1$.
The action of $X$ on $D$ is 
\begin{eqnarray}
\label{actionX-BIII-1}
X(\uparrow D')=q\uparrow X(D')+\downarrow D.
\end{eqnarray}
From the assumption, we have 
\begin{eqnarray*}
\uparrow X(D')&=&\sum_{i=2}^{N_{\uparrow}}[i-1]t_{(i)}(D)
+[Q;\mathrm{wt}(D)-1], \\
\downarrow D'&=&
\sum_{i=1}^{N_{\uparrow}}q^{-(i-1)}Q^{-1}X_{(i)}(D)
+q^{-N_{\uparrow}+N_{\downarrow}+1}.
\end{eqnarray*}
Note that $q[Q;d]+q^{-d}Q^{-1}=[Q;d+1]$. 
Inserting these two expressions into Eqn.(\ref{actionX-BIII-1}),
we obtain Eqn.(\ref{ActionXBIII}).

\paragraph{Case 2}
We have two cases for the leftmost arrow $a$:
i) the arrow $a$ is a down arrow with a circled integer $r$ and 
ii) the arrow $a$ forms an arc.
\paragraph{Case 2-i}
A diagram $D$ is written as 
$
\raisebox{-0.6\totalheight}{
\begin{tikzpicture}
\draw(0,0)--(0,-0.3)node[circle,inner sep=1pt,draw,anchor=north]{$r$};
\end{tikzpicture}}D'
$
where $D'$ is a diagram of length $N-1$. The weight of $D$ is $-r$.
We have 
\begin{eqnarray*}
X(
\raisebox{-0.5\totalheight}{
\begin{tikzpicture}
\draw(0,0)--(0,-0.3)node[circle,inner sep=1pt,draw,anchor=north]{$r$};
\end{tikzpicture}}D'
)
&=&q^{-1}\downarrow X(D')+\uparrow D'-q^{r}Q^{-1}\uparrow X(D')
-q^{r-1}Q^{-1}\downarrow D', \\
&=&[Q;-r]D
\end{eqnarray*}
where we have used $X(D')=[Q;-(r-1)]D'$, $q^{-1}[Q;-(r-1)]-q^{r-1}Q^{-1}=[Q;-r]$
and $1-q^rQ^{-1}[Q;-(r-1)]=-q^{r-1}Q^{-1}[Q;-r]$.

\paragraph{Case 2-ii}
By a similar argument to Case 2-b in the proof of Theorem~\ref{thm-ActionXBI},
we can assume that a diagram $D$ is written as 
$
\raisebox{-0.5\totalheight}{
\begin{tikzpicture}
\draw(0,0)..controls(0,-0.6)and(0.6,-0.6)..(0.6,0);
\end{tikzpicture}
}D'
$ where $D'$ is a diagram of length $N-2$.
We have 
\begin{eqnarray*}
X(
\raisebox{-0.5\totalheight}{
\begin{tikzpicture}
\draw(0,0)..controls(0,-0.6)and(0.6,-0.6)..(0.6,0);
\end{tikzpicture}
}D')
&=&X(\downarrow\uparrow D'-q^{-1}\uparrow\downarrow D') \\
&=&
\raisebox{-0.5\totalheight}{
\begin{tikzpicture}
\draw(0,0)..controls(0,-0.6)and(0.6,-0.6)..(0.6,0);
\end{tikzpicture}}
X(D').
\end{eqnarray*}
In both Case 1 and 2, we have Eqn.(\ref{ActionXBIII}) for 
a diagram $D$.
This completes the proof.
\end{proof}

\begin{theorem}
$X$ has the eigenvalue $[Q;N-2i]$,$1\le i\le N$, of multiplicity
$\displaystyle\genfrac{(}{)}{0pt}{}{N}{i}$. 
\end{theorem}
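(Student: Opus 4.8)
The plan is to follow the argument of Theorem~\ref{theorem-XBII} almost verbatim, with the single weight statistic $\mathrm{wt}(D)$ playing the role that the e-/o-dichotomy plays there. First I would order the Kazhdan--Lusztig bases of Type BIII by the lexicographic order on the underlying binary strings in $\{+,-\}^{N}$, the order of the paper in which $-<+$. Reading off Eqn.(\ref{ActionXBIII}), the only term of $X(D)$ proportional to $D$ itself is $[Q;\mathrm{wt}(D)]D$, and every other term has a strictly smaller underlying binary string: $X_{(i)}(D)$ with $i<N_{\uparrow}$ turns the $i$-th unpaired up arrow into the left endpoint of an arc, and $X_{(N_{\uparrow})}(D)$ turns the last unpaired up arrow into a circled down arrow, so in each case a $+$ is replaced by a $-$ at the first position that changes. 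Hence $X$ is lower triangular in this basis with diagonal entries $[Q;\mathrm{wt}(D)]$, so every eigenvalue is of the form $[Q;n]$, and since $n\mapsto[Q;n]$ is injective for generic $q$ and $Q$, the multiplicity of $[Q;n]$ equals $\#\{D:\mathrm{wt}(D)=n\}$.

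The remaining step is a combinatorial count identical to the one in Theorem~\ref{theorem-XBII}. If $D$ has $m$ arcs, deleting all arcs yields a diagram $D'$ of length $N-2m$ with no arcs and with $\mathrm{wt}(D')=\mathrm{wt}(D)$; by rules (A), (B) and (G), an arcless Type BIII diagram of length $\ell$ is precisely a string $+^{a}-^{b}$ with $a+b=\ell$ (its $b$ down arrows carrying circled integers $1,\dots,b$ from the right), so there is exactly one arcless diagram of each weight $\ell,\ell-2,\dots,-\ell$. Conversely, the number of length-$N$ diagrams that restrict to a fixed arcless $D'$ of length $N-2m$ is the ballot number $\binom{N}{m}-\binom{N}{m-1}$, independent of $D'$, since the arcs are inserted by rules (A), (B) exactly as in Type A. Therefore
\begin{eqnarray*}
\#\{D:\mathrm{wt}(D)=N-2i\}
=\sum_{m=0}^{\min(i,N-i)}\left(\binom{N}{m}-\binom{N}{m-1}\right)
=\binom{N}{\min(i,N-i)}
=\binom{N}{i},
\end{eqnarray*}
which, together with the triangularity above, gives the asserted eigenvalues $[Q;N-2i]$ with multiplicities $\binom{N}{i}$.

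I expect the only genuine obstacle to be the triangularity step: one has to check that when $X_{(N_{\uparrow})}$ replaces the last up arrow by a circled down arrow the circled integers on the already-present down arrows are relabelled so that the output is a bona fide Type BIII diagram, and that no off-diagonal cancellations occur that would corrupt reading the spectrum off the diagonal. This is the same point that is handled in the BII case, so once it is verified I would simply invoke the counting argument in the proof of Theorem~\ref{theorem-XBII} rather than repeat it.
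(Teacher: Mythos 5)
Your proposal is correct and follows essentially the same route as the paper: lexicographic ordering of the binary strings makes $X$ lower triangular with diagonal entries $[Q;\mathrm{wt}(D)]$, and the multiplicity count reduces, exactly as in the paper, to the arc-deletion argument from the proof of Theorem~\ref{theorem-XBII}. The extra care you flag about relabelling circled integers in $X_{(N_{\uparrow})}(D)$ is a reasonable point to verify but does not change the argument.
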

\begin{proof}
Let $X_{D,D'}$ be a matrix representation of $X$ with respect to 
diagrams. 
Recall that there is a one-to-one correspondence between 
a binary string and a diagram.
Thus we have a natural order for diagrams induced from 
the lexicographic order for binary strings.
From Eqn.(\ref{ActionXBIII}), the matrix $X_{D,D'}$ is 
a lower triangular matrix.
Since the diagonal terms are $[Q;\mathrm{wt}(D)]$, the 
eigenvalues are of the form $[Q;N-2i]$.
The multiplicities are equal to the number of $D$ such 
that $\mathrm{wt}(D)=N-2j$.
By a similar argument to the second paragraph in the 
proof of Theorem~\ref{theorem-XBII}, the cardinality 
of $D$ satisfying $\mathrm{wt}(D)=N-2j$ is 
$\genfrac{(}{)}{0pt}{}{N}{j}$.

\end{proof}

Let $D$ be a diagram of type BIII, $S$ be the set of arcs, 
$S_{\uparrow}$ be the set of (unpaired) up arrows and 
$S_{\downarrow}$ be the set of down arrows with a circled integer.
We enumerate up arrows, arcs and down arrows with a circled integer
from left to right by $1,2,\ldots$. 
Let $N_A$ be the integer assigned to 
$A\in S\cup S_{\uparrow}\cup S_{\downarrow}$.
We define 
\begin{eqnarray*}
N_1:=\prod_{A\in S\cup S_{\downarrow}}[N_A].
\end{eqnarray*}
Let $B$ be an arc of size $m_B$. 
We define 
\begin{eqnarray*}
N_2:=\prod_{B\in S}[m_B]^{-1}.
\end{eqnarray*}
Similarly, we enumerate arcs and down arrows with a circled integer from right 
to left by $1,2,\ldots$.
Let $N_C$ be the integer assigned to $C\in S\cup S_{\downarrow}$.
We define 
\begin{eqnarray*}
N_3&:=&\prod_{S\in S_{\downarrow}}[N_C]^{-1}, \\
N_4&:=&\prod_{i=1}^{d}(Qq^{i-1}+Q^{-1}q^{-(i-1)}),
\end{eqnarray*}
where $d=|S_{\uparrow}|+|S|$.
\begin{defn}
\label{PsiBIII}
$\Psi_D:=N_1\cdot N_2\cdot N_3\cdot N_4$.
\end{defn}

\begin{example}
\label{example-BIII}
Let $D$ be a diagram depicted as 
\begin{eqnarray*}
\tikzpic{0}{
\draw(0,0)--(0,-0.6)(-0.12,-0.12)--(0,0)--(0.12,-0.12);
\draw(-0.5,0)--(-0.5,-0.6)(-0.12-0.5,-0.12)--(-0.5,0)--(0.12-0.5,-0.12);
\linkpattern{1.2/2.8,1.6/2.4,3.7/4.5}{}{}{0.75/3,3.2/2,4.9/1};
}.
\end{eqnarray*}
We have 
\begin{eqnarray*}
N_1=\frac{[8]!}{[2]}, \qquad N_2=[2]^{-1}, \qquad 
N_3=[3]^{-1}[6]^{-1},\qquad N_4=\prod_{i=1}^{5}(Qq^{i-1}+Q^{-1}q^{-(i-1)}).
\end{eqnarray*}

\end{example}

\begin{theorem}
The vector $\Psi$ is the eigenvector of $X$ with 
the eigenvalue $[Q;N]$.
\end{theorem}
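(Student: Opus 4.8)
The plan is to verify the eigenvector equation
\begin{eqnarray*}
\sum_{D'}X_{D,D'}\Psi_{D'}=[Q;N]\Psi_{D}
\end{eqnarray*}
directly for every diagram $D$ of type BIII, where $X_{D,D'}$ is defined by $X(D')=\sum_{D}X_{D,D'}D$ and read off from Eqn.~(\ref{ActionXBIII}), and $\Psi_{D}$ is given by Definition~\ref{PsiBIII}. Since $X(D')=\sum_{i=1}^{N_{\uparrow}(D')}[i]X_{(i)}(D')+[Q;\mathrm{wt}(D')]D'$, the diagrams $D'$ with $X_{D,D'}\neq0$ are precisely: (i) for each outer arc $A$ of $D$, the diagram $D'_{A}$ obtained from $D$ by replacing $A$ with two up arrows at its endpoints, for which $D=X_{(k)}(D'_{A})$ with $k$ the number of up arrows of $D'_{A}$ weakly left of (the left endpoint opened from) $A$, so $X_{D,D'_{A}}=[k]$; (ii) if $D$ has a down arrow with a circled integer, the diagram $\widehat{D}$ obtained from $D$ by changing the leftmost such down arrow into an up arrow, for which $X_{D,\widehat{D}}=[N_{\uparrow}(D)+1]$ (the leftmost circled integer equals $N_{\downarrow}(D)$, which is exactly the circled integer reattached by $X_{(N_{\uparrow}(\widehat D))}$); and (iii) $D'=D$ itself, with $X_{D,D}=[Q;\mathrm{wt}(D)]$. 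Thus the left-hand side splits into a sum over the outer arcs of $D$, one ``$\widehat{D}$'' term, and the diagonal term.

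First I would dispose of the case in which $D$ has no circled-integer down arrow. Writing $D$ in the canonical form $\uparrow^{n_{1}}(\text{arc }m_{1})\cdots(\text{arc }m_{I})\uparrow^{n_{I+1}}$ as in the Type~A diagram (\ref{Diagram0}), I make the product shape of $\Psi_{D}$ explicit from Definition~\ref{PsiBIII}, collect the contributions of the outer arcs, and collapse the telescoping products using Lemma~\ref{lemma-app0}; the arc sum reduces to $[M](Qq^{N_{\uparrow}+M}+Q^{-1}q^{-(N_{\uparrow}+M)})\Psi_{D}$ with $M=\sum_{i}m_{i}$ the sum of the outer-arc sizes. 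Adding the diagonal term $[Q;N_{\uparrow}]\Psi_{D}$ and using the identity $[M](Qq^{a+M}+Q^{-1}q^{-(a+M)})+[Q;a]=[Q;a+2M]$ together with $N=N_{\uparrow}+2M$ yields $[Q;N]\Psi_{D}$; this is the same bookkeeping as Case~1 of the Type~BII theorem above.

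Next I would treat the case in which $D$ has circled-integer down arrows. Fixing the canonical shape (up arrows and arcs on the left, then the leftmost circled-integer down arrow, then a tail of circled-integer down arrows and arcs), I again collect the three types of contributions. The outer arcs lying to the left of the leftmost circled-integer down arrow telescope via Lemma~\ref{lemma-app0}, while arcs sitting among or to the right of the unpaired down arrows are resummed with Lemma~\ref{lemma-app1} exactly as in Cases~2 and~3 of the Type~A and Type~BII proofs. The $\widehat{D}$ term contributes $[N_{\uparrow}(D)+1]\,(\Psi_{\widehat{D}}/\Psi_{D})\,\Psi_{D}$, where $\Psi_{\widehat{D}}/\Psi_{D}$ is the explicit ratio obtained from deleting one circled-integer down arrow in the $N_{1},N_{3},N_{4}$ factors of Definition~\ref{PsiBIII} (it picks up $(Qq^{d}+Q^{-1}q^{-d})$ with $d=|S_{\uparrow}|+|S|$, a factor $[N_{C}]$ from the right-to-left enumeration and a factor $[N_{A}]^{-1}$ from the left-to-right one). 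Combining these with the diagonal term $[Q;\mathrm{wt}(D)]\Psi_{D}$ and using $N=N_{\uparrow}+N_{\downarrow}+2|S|$ together with the same $[Q;\cdot]$-addition identities then gives $[Q;N]\Psi_{D}$.

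The main obstacle is purely bookkeeping: correctly enumerating every contributing $D'$, pinning down for each outer arc the exact index $k$ (hence the $q$-integer $[k]$), and checking that after the resummation lemmas the ratios $\Psi_{D'}/\Psi_{D}$ assemble into a single telescoping $[Q;\cdot]$-identity. Once the contributing diagrams are fixed, each case reduces to a short finite chain of $q$-number identities of exactly the type used in the Type~A, BI and BII proofs, and the uniform shape of Eqn.~(\ref{ActionXBIII}) together with the clean product form of $\Psi_{D}$ makes this computation lighter than the Type~BI one. Finally, that $\Psi$ spans the whole $[Q;N]$-eigenspace of $X$ is immediate: by the spectrum theorem above with $i=0$, the eigenvalue $[Q;N]$ has multiplicity $\genfrac{(}{)}{0pt}{}{N}{0}=1$.
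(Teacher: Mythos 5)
Your overall strategy is the same as the paper's: verify $\sum_{D'}X_{D,D'}\Psi_{D'}=[Q;N]\Psi_{D}$ componentwise, split into the case with and without circled-integer down arrows, telescope the outer-arc contributions with Lemma~\ref{lemma-app0}, add the single term from reopening the leftmost circled-integer down arrow, and close with the identity $[M](Qq^{a+M}+Q^{-1}q^{-(a+M)})+[Q;a]=[Q;a+2M]$. The identification of $X_{D,\widehat D}=[N_{\uparrow}(D)+1]$ and the ratio $\Psi_{\widehat D}/\Psi_{D}$ are also correct.

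There is, however, a concrete error in your enumeration of the contributing diagrams in the case where $D$ has circled-integer down arrows. You assert that \emph{every} outer arc $A$ of $D$ yields a diagram $D'_{A}$ with $X_{D,D'_{A}}\neq0$, and you then propose to resum the arcs ``among or to the right of the unpaired down arrows'' with Lemma~\ref{lemma-app1}, as in type A. This does not work for type BIII. If $A$ lies to the right of an unpaired (circled-integer) down arrow, replacing $A$ by two up arrows produces a configuration with up arrows to the right of an unpaired down arrow, which is not a Kazhdan--Lusztig basis diagram; equivalently, the action (\ref{ActionXBIII}) only ever joins two \emph{up} arrows into an arc (or converts the rightmost up arrow into a circled down arrow), so it can never create an arc sitting to the right of an unpaired down arrow. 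Lemma~\ref{lemma-app1} is needed in type A precisely because there the operator $X$ also pairs down arrows (the $F_{(i)}$ terms); no analogue exists here. In the paper's computation the arcs in $S_{R}$ (right of the leftmost circled down arrow) contribute no $D'$ terms at all --- they enter only through the ratios $\Psi_{D'}/\Psi_{D}$, producing the factors $[M'+N'+|S_R|+r+1]/[M'+N'+1]$ and $[|S_R|+r]/[N'+M'+1]$ that make the three genuine contributions sum to $[Q;N'+2M'+2|S_R|+r]=[Q;N]$. If you literally included the spurious arc-opening terms your sum would not close. Once you restrict the arc contributions to outer arcs lying in the up-arrow region, the rest of your plan goes through and reproduces the paper's argument.
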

\begin{proof}
Let $X=(X_{D,D'})$ be a matrix representation of $X$.
We will show $\sum_{D'}X_{D,D'}\Psi_{D'}=[Q;N]\Psi_{D}$ 
by computing the left hand side. 
We have two cases for a diagram $D$:
1) $D$ has no down arrows and 
2) $D$ has down arrows with a circled integer.

\paragraph{\bf Case 1}
A diagram $D$ is depicted as in Eqn.(\ref{diagram1up}) 
where $m_i$, $1\le i\le I$, is the size of an outer arc.
Set $M'=\sum_{i=1}^{I}m_i$ and $N'=\sum_{i=1}^{I}n_i$.
Since $X_{D,D}=[Q;N']$, we have a contribution from $D$ itself.
That is $[Q;N']\Psi_{D}$. 
Let $D'$ be a diagram obtained from $D$ by changing an outer 
arc of size $m_i$ to two up arrows.
Since $X_{D,D'}=[1+\sum_{j=1}^{i}n_j]$, the contribution of such 
$D'$'s is 
\begin{eqnarray*}
\Psi_{D}\sum_{i=1}^{I}(Qq^{d}+Q^{-1}q^{-d})[m_i]
\left[1+\sum_{j=1}^{i}n_j\right]
\frac{\prod_{j=i+1}^{I}[1+\sum_{k=1}^{j}(n_k+m_k)]}
{\prod_{j=i}^{I}[1+n_j+\sum_{k=1}^{j-1}(n_k+m_k)]}.
\end{eqnarray*} 
where $d=N'+M'$. 
From Lemma~\ref{lemma-app0}, the above expression is reduced to 
$[M'](Qq^d+Q^{-1}q^{-d})\Psi_{D}$.
Therefore, the sum of contributions is $[Q;N'+2M']$.

\paragraph{\bf Case 2}
A diagram $D$ is locally depicted as
\begin{eqnarray*}
\tikzpic{-0.5}{
\draw(0,0)node[anchor=north]
{$\underbrace{\uparrow\cdots\uparrow}_{n_1}$};
\draw(0.8,-0.2)..controls(0.8,-0.8)and(1.6,-0.8)..(1.6,-0.2);
\draw(1.2,-0.8)node[anchor=north]{$m_1$};
\draw(2.3,0)node[anchor=north]{$\uparrow\cdots\uparrow$};
\draw(3.1,-0.2)..controls(3.1,-0.8)and(3.7,-0.8)..(3.7,-0.2);
\draw(3.4,-0.8)node[anchor=north]{$m_I$};
\draw(4.4,0)node[anchor=north]
{$\underbrace{\uparrow\cdots\uparrow}_{n_{I+1}}$};
\draw(5.2,-0.15)--(5.2,-0.6)node[circle,inner sep=1pt,draw,anchor=north]{$r$};
\draw(5.6,0)node[anchor=north]{$\cdots$};
}
\end{eqnarray*}
where the region inside of an outer arc of size $m_i$ is filled with smaller arcs.
Set $M'=\sum_{i=1}^{I}m_i$ and $N'=\sum_{i=1}^{I+1}n_i$.
Let $S_R$ be the set of arcs which are right to the down arrow 
with a circled integer $r$.
We have three types of contributions:
a) the diagonal term, {\it i.e.,} $D'=D$,
b) $D'$ is obtained from $D$ by changing an outer arc of size 
$m_i$ to two up arrows and 
c) $D'$ is obtained from $D$ by changing the down arrow with 
a circled integer $r$ to an up arrow.

The contribution of case a is $[Q;N'-r]\Psi_{D}$.
By a similar argument to Case 1, the contribution of 
case b is 
\begin{eqnarray*}
[M'](Qq^{d}+Q^{-1}q^{-d})
\frac{[M'+N'+|S_R|+r+1]}{[M'+N'+1]}\Psi_{D}.
\end{eqnarray*} 
where $d=N'+M'+|S_R|$.
For case c, the contribution is 
\begin{eqnarray}
\frac{[N'+1][|S_R|+r]}{[N'+M'+1]}(Qq^{d}+Q^{-1}q^{-d})\Psi_D.
\end{eqnarray}
The sum of three contributions becomes 
$[Q;N'+2M'+2|S_R|+r]\Psi_{D}=[Q;N]\Psi_{D}$.
This completes the proof.
\end{proof}

\subsection{Standard bases}
We consider the action of $X$ on the standard basis 
$v:=v_{\epsilon_1}\otimes\ldots\otimes v_{\epsilon_{N}}$
where $\epsilon_{i}=\pm1$. 
Let $d_{i}:=\sum_{j=1}^{i}\epsilon_{i}$. 
For each $i$, $1\le i\le N$, we define 
\begin{eqnarray*}
X_{(i)}(v):=v_{\epsilon_1}\otimes\ldots\otimes v_{\epsilon_{i-1}}
\otimes v_{-\epsilon_{i}}\otimes
v_{\epsilon_{i+1}}\otimes\ldots\otimes v_{\epsilon_{N}}.
\end{eqnarray*}
The action of $X$ is defined by 
\begin{eqnarray}
\label{XonSB}
X(v):=\sum_{i=1}^{N}q^{d_{i-1}}X_{(i)}(v)
+q^{d_{N}}[Q;0]v.
\end{eqnarray}
\begin{prop}
The definition (\ref{XonSB}) provides the action of $X$ on 
the standard basis.
\end{prop}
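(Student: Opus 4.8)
The plan is to argue by induction on $N$, splitting off the leftmost tensor factor by means of the comultiplication $\Delta(X)=K\otimes X+q^{-1}KE\otimes 1+F\otimes 1$ of Eqn.(\ref{coproduct}), exactly in the spirit of the proofs given above for the four families of Kazhdan--Lusztig bases.

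For the base case $N=1$ I would compute directly from $\iota(X)=F+q^{-1}KE+sK$, noting that $s=[Q;0]$. Using $Ev_{1}=0$, $Ev_{-1}=v_{1}$, $Fv_{1}=v_{-1}$, $Fv_{-1}=0$ and $Kv_{\pm1}=q^{\pm1}v_{\pm1}$, one gets $X(v_{1})=v_{-1}+q[Q;0]v_{1}$ and $X(v_{-1})=v_{1}+q^{-1}[Q;0]v_{-1}$, which is precisely Eqn.(\ref{XonSB}) in the case $N=1$ since then $d_{0}=0$ and $d_{1}=\epsilon_{1}=\pm1$.

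For the inductive step, write $v=v_{\epsilon_{1}}\otimes v'$ with $v'=v_{\epsilon_{2}}\otimes\cdots\otimes v_{\epsilon_{N}}$ and apply $\Delta(X)$. Because $Ev_{1}=0$ and $Fv_{-1}=0$, the coproduct collapses to
\begin{eqnarray*}
X(v)&=&q\,v_{1}\otimes X(v')+v_{-1}\otimes v'\qquad(\epsilon_{1}=+1),\\
X(v)&=&q^{-1}v_{-1}\otimes X(v')+v_{1}\otimes v'\qquad(\epsilon_{1}=-1).
\end{eqnarray*}
Substituting the inductive hypothesis for $X(v')$ and comparing with Eqn.(\ref{XonSB}) then reduces to one bookkeeping identity: if $d_{i}$ are the partial sums attached to $v$ and $d'_{i}$ those attached to $v'$, then $d_{0}=0$ and $d_{i}=\epsilon_{1}+d'_{i-1}$ for $i\ge1$. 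Thus the scalar $q^{\epsilon_{1}}$ coming from $K\otimes-$ converts each $q^{d'_{i-1}}$ into $q^{d_{i}}$, so that the $N-1$ summands of the inductive formula for $X(v')$ become the $i=2,\dots,N$ summands of Eqn.(\ref{XonSB}) for $v$; the extra term $v_{-\epsilon_{1}}\otimes v'$ produced by $F\otimes1$ (when $\epsilon_{1}=+1$) or by $q^{-1}KE\otimes1$ (when $\epsilon_{1}=-1$) is exactly the missing $i=1$ summand $q^{d_{0}}X_{(1)}(v)=X_{(1)}(v)$; and $q^{\epsilon_{1}}q^{d'_{N-1}}[Q;0]=q^{d_{N}}[Q;0]$ reproduces the diagonal term. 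This establishes Eqn.(\ref{XonSB}) for $v$ and closes the induction.

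The only point requiring care is keeping the index shift $d_{i}=\epsilon_{1}+d'_{i-1}$ consistent when matching summands; apart from that the argument is entirely mechanical. There are no well-definedness or convergence subtleties, since $X$ is by construction an honest linear operator on the finite-dimensional module $V_{1}^{\otimes N}$ and $\Delta$ is coassociative, so the recursion is consistent.
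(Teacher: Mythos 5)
Your proposal is correct and follows essentially the same route as the paper: induction on $N$, peeling off the leftmost tensor factor via the comultiplication $\Delta(X)=K\otimes X+q^{-1}KE\otimes 1+F\otimes 1$, with the index shift $d_i=\epsilon_1+d'_{i-1}$ doing the bookkeeping. The only cosmetic difference is that you treat $\epsilon_1=\pm1$ as two cases where the paper writes the single uniform identity $X(v)=q^{\epsilon_1}v_{\epsilon_1}\otimes X(v')+v_{-\epsilon_1}\otimes v'$.
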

\begin{proof}
We prove Proposition by induction on $N$. 
When $N=1$, Proposition holds true by a straightforward 
calculation. 
We assume that Proposition is true up to some $N\ge2$.
A standard basis $v:=v_{\epsilon_1}\otimes\ldots\otimes v_{\epsilon_N}$
is written as $v=v_{\epsilon_1}\otimes v'$ where $v'$ is a standard 
basis of length $N-1$.
Let $d'_{i}:=\sum_{j=2}^{i}\epsilon_{j}$. 
From the induction assumption, we have 
\begin{eqnarray}
\label{XonSBa}
X(v')=\sum_{i=2}^{N}q^{d'_{i-1}}X_{(i-1)}(v')+q^{d'_{N}}[Q;0]v'
\end{eqnarray}
From Eqns.(\ref{coproduct}) and (\ref{XonSBa}), we have 
\begin{eqnarray*}
X(v)&=&q^{\epsilon_1}v_{\epsilon_1}\otimes X(v')
+v_{-\epsilon_1}\otimes v' \\ 
&=&\sum_{i=2}^{N}q^{\epsilon_1+d'_{i-1}}v_{\epsilon_1}\otimes X_{(i-1)}(v')
+v_{-\epsilon_1}\otimes v'+q^{\epsilon_1+d'_{N}}[Q;0]v \\
&=&\sum_{i=1}^{N}q^{d_{i-1}}X_{(i)}(v)
+q^{d_{N}}[Q;0]v,
\end{eqnarray*}
where we have used $v_{\epsilon_1}\otimes X_{(i-1)}(v')=X_{(i)}(v)$.
\end{proof}

For a binary string $\epsilon\in\{\pm\}^{N}$, let $I_{\epsilon}$ 
be the set of positions of pluses from right. 
We define 
\begin{eqnarray*}
N_{1}&:=&q^{d_{\epsilon}}, \\
N_{2}&:=&Q^{d'}
\end{eqnarray*}
where $d_{\epsilon}:=\sum_{i\in I_{\epsilon}}(i-1)$ and $d'$ is the number 
of pluses in the binary string $\epsilon$.
We define a vector $\Psi^{0}:=\sum_{\epsilon}\Psi^{0}_{\epsilon}|\epsilon\rangle$ 
as follows.
\begin{defn}
\label{defPsi-SB}
$\Psi^{0}_{\epsilon}:=N_{1}N_{2}$.
\end{defn}

\begin{example}
Let $\epsilon_{1}:=++-+$, $\epsilon_2:=--++$ and $\epsilon_3:=--+-$.
We have 
\begin{eqnarray*}
\Psi_{\epsilon_1}=q^{5}Q^{3}, \quad \Psi_{\epsilon_2}=qQ^{2}, \quad
\Psi_{\epsilon_3}=qQ.
\end{eqnarray*}
\end{example}
\begin{prop}
The vector $\Psi$ is the eigenvector of $X$ with the eigenvalue $[Q;N]$. 
\end{prop}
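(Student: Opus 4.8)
Here $\Psi^{0}=\sum_{\epsilon}\Psi^{0}_{\epsilon}|\epsilon\rangle$ denotes the vector of Definition~\ref{defPsi-SB} (so the $\Psi$ in the statement is $\Psi^{0}$). The plan is to recognize $\Psi^{0}$ as a pure tensor and then run the same coproduct induction used for the preceding Proposition and for Type~A. First I would note that expanding $\bigotimes_{m=1}^{N}\bigl(v_{-1}+q^{N-m}Q\,v_{1}\bigr)$ and reading off the coefficient of $v_{\kappa_{1}}\otimes\cdots\otimes v_{\kappa_{N}}$ yields $\prod_{m:\epsilon_{m}=+}q^{N-m}Q$, which equals $q^{d_{\epsilon}}Q^{d'}=\Psi^{0}_{\epsilon}$ because the position of a plus counted from the right is $N+1-m$. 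Hence
\begin{equation*}
\Psi^{0}=\bigotimes_{m=1}^{N}w^{(N)}_{m},\qquad w^{(N)}_{m}:=v_{-1}+q^{N-m}Q\,v_{1},
\end{equation*}
and one has the key relabelling identity $w^{(N)}_{m}=w^{(N-1)}_{m-1}$ for $2\le m\le N$, so that $w^{(N)}_{2}\otimes\cdots\otimes w^{(N)}_{N}$ is exactly the vector $\Psi^{0}$ for the system of size $N-1$.

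Next I would argue by induction on $N$. Writing $D=v_{\epsilon_{1}}\otimes D'$ and applying $\Delta(X)=K\otimes X+cKE\otimes 1+F\otimes 1$ from Eqn.(\ref{coproduct}) exactly as in the earlier proofs, the action on $\Psi^{0}$ is
\begin{equation*}
X(\Psi^{0})=Kw^{(N)}_{1}\otimes X\bigl(w^{(N)}_{2}\otimes\cdots\otimes w^{(N)}_{N}\bigr)+\bigl(cKE+F\bigr)w^{(N)}_{1}\otimes\bigl(w^{(N)}_{2}\otimes\cdots\otimes w^{(N)}_{N}\bigr).
\end{equation*}
By the relabelling identity and the induction hypothesis, the inner factor is an eigenvector of $X$ with eigenvalue $[Q;N-1]$, so it remains only to verify the single-site identity
\begin{equation*}
\bigl(cKE+F\bigr)w^{(N)}_{1}+[Q;N-1]\,Kw^{(N)}_{1}=[Q;N]\,w^{(N)}_{1}
\end{equation*}
in $V_{1}$, together with the base case $N=1$.

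Both are finite checks. Using $c=q^{-1}$ and the action of $E,F,K$ on $V_{1}$, one computes $Kw^{(N)}_{1}=q^{-1}v_{-1}+q^{N}Q\,v_{1}$ and $\bigl(cKE+F\bigr)w^{(N)}_{1}=v_{1}+q^{N-1}Q\,v_{-1}$; matching the coefficients of $v_{-1}$ and $v_{1}$ reduces the displayed identity to the two elementary relations
\begin{equation*}
[Q;N]-q^{-1}[Q;N-1]=Qq^{N-1},\qquad [Q;N]-q[Q;N-1]=Q^{-1}q^{-(N-1)},
\end{equation*}
which are immediate from the definition of $[Q;n]$. The case $N=1$ is likewise the statement that $v_{-1}+Q\,v_{1}$ is an eigenvector of the $2\times2$ matrix of $X$ on $V_{1}$ with eigenvalue $[Q;1]$, checked the same way. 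This finishes the proof.

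Honestly there is no real obstacle here once the product form of $\Psi^{0}$ is spotted; that is the only idea. As a sanity cross-check I would mention the alternative, more computational route: expanding $\sum_{D'}X_{D,D'}\Psi^{0}_{D'}$ directly from Eqn.(\ref{XonSB}), one collects the diagonal term $q^{d_{N}}[Q;0]\Psi^{0}_{D}$ and the $N$ single-flip terms, each contributing a ratio $\Psi^{0}_{\epsilon^{(k)}}/\Psi^{0}_{\epsilon}=q^{\pm(N-k)}Q^{\pm1}$ according to the sign at position $k$; one then has to evaluate a telescoping sum over $k$ of shape similar to Eqn.(\ref{EPA}). That telescoping is the only genuinely computational step in the alternative approach, and the tensor-product argument above is designed to bypass it, which is why I would prefer it.
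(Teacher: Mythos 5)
Your proof is correct, and it takes a genuinely different route from the paper's. The paper proves this Proposition by brute force: it expands $\sum_{\epsilon'}X_{\epsilon,\epsilon'}\Psi^{0}_{\epsilon'}$ directly from Eqn.(\ref{XonSB}), organizes the $N$ single-flip contributions according to the type A diagram attached to $\epsilon$ (unpaired arrows versus arcs of various depths), and sums the resulting geometric series case by case to assemble $[Q;N]\Psi^{0}_{\epsilon}$ — essentially the telescoping computation you describe as the "alternative route" and deliberately avoid. Your observation that $\Psi^{0}=\bigotimes_{m}(v_{-1}+q^{N-m}Qv_{1})$ is the whole point: the paper only records this factorization later, in the correlation-function subsection of Section~\ref{sec-sum}, and does not exploit it in the eigenvector proof. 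Once the product form and the shift identity $w^{(N)}_{m}=w^{(N-1)}_{m-1}$ are in hand, the coproduct $\Delta(X)=K\otimes X+cKE\otimes 1+F\otimes 1$ reduces everything to a two-dimensional single-site check, and your two scalar identities $[Q;N]-q^{\mp1}[Q;N-1]=Q^{\pm1}q^{\pm(N-1)}$ are exactly right. The one hypothesis you are implicitly using — that the action of $X$ on the standard basis is the iterated-coproduct action — is supplied by the Proposition immediately preceding this one, so the argument is complete. Your version is shorter, structurally cleaner, and makes the mechanism (a rank-one tensor eigenvector of a coideal generator) transparent, at the cost of not exhibiting the diagrammatic bookkeeping that the paper reuses in its other eigenvector proofs.
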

\begin{proof}
Let $X=(X_{\epsilon,\epsilon'})$ be a matrix representation of $X$. 
We will show 
$\sum_{\epsilon'}X_{\epsilon,\epsilon'}\Psi_{\epsilon'}=[Q;N]\Psi_{\epsilon}$ 
by computing the left hand side.
We make use of a diagram $D$ of type A associated with $\epsilon$.
The diagram $D$ is depicted as in Eqn.(\ref{Diagram0}).
Set $N_{\uparrow}=\sum_{i=1}^{I+1}n_{i}$, $N_{\downarrow}=\sum_{i=1}^{J+1}n'_{i}$, 
$M=\sum_{i=1}^{I}m_{i}$ and $M'=\sum_{i=1}^{I}m'_{i}$.

We enumerate all arrows from left to right by $1, 2,\ldots, N$.
From Eqn.(\ref{XonSB}), a diagram $D'$ satisfying $X_{D,D'}\neq0$ can 
be obtained from $D$ by reversing an up (resp. down) arrow to a down 
(resp. up) arrow.
The number of reversed arrows in $D'$ is at most one.
Let $N_{\uparrow}$ (resp. $N_{\downarrow}$) be the number of unpaired
up (down) arrows in $D$. 
Since we have $X_{D,D}=q^{N_{\uparrow}-N_{\downarrow}}[Q;0]$, 
the contribution from the diagonal term is 
$q^{N_{\uparrow}-N_{\downarrow}}[Q;0]\Psi_{D}$. 
 
Firstly, we  reverse an up arrow $a$ of $D$ to obtain $D'$. 
We have two cases: 1) $a$ is an unpaired up arrow and 
2) $a$ is an up arrow forming an arc.

\paragraph{Case 1}
Let $j$ be the position of the arrow $a$. 
Then, $j$ satisfies 
$\sum_{k=1}^{i-1}n_k+2\sum_{k=1}^{i-1}m_{k}+1\le j\le 
\sum_{k=1}^{i}n_k+2\sum_{k=1}^{i-1}m_{k}$ for some 
$1\le i\le I+1$. 
We have $X_{D,D'}=q^{j-2\sum_{k=1}^{i-1}m_{k}-1}$ and 
$\Psi_{D'}=q^{-(N-j)}Q^{-1}\Psi_{D}$. 
Thus the contribution is given by 
\begin{eqnarray}
\label{XonSB-1}
q^{-N+2j-2\sum_{k=1}^{i-1}m_{k}-1}Q^{-1}\Psi_{D}.
\end{eqnarray}

\paragraph{Case 2}
We have two cases for the arc $b$ containing the arrow $a$: 
a) $b$ is in the region inside of an outer arc of size $m_i$ and 
b) $b$ is in the region inside of an outer arc of size $m'_i$.

\paragraph{Case 2-a}
Suppose that there are $l_1$ arcs outside of $a$ (including $a$ 
itself), $n_{\uparrow}$ up arrows and $l_{2}$ arcs left to 
the arrow $a$.
Then, $a$ is the $(n_{\uparrow}+l_1+2l_{2}+1)$-th arrow from 
left.
We have $X_{D,D'}=q^{n_{\uparrow}-l_1}$ and 
$\Psi_{D'}=q^{-(N-n_{\uparrow}-l_1-2l_2-1)}Q^{-1}\Psi_{D}$. 
Thus the contribution is given by 
\begin{eqnarray}
\label{XonSB-2}
q^{-N+2n_{\uparrow}+2l_2+1}Q^{-1}\Psi_{D}.
\end{eqnarray}
Since $b$ is in the region inside of an outer arc of size $m_i$, 
we have $n_{\uparrow}=\sum_{k=1}^{i}n_{k}$ and $l_2$ takes 
the values $0, 1,\ldots, m_{i}-1$ once.

\paragraph{Case 2-b}
By a similar argument to Case 2-a, the contribution of $D'$'s is 
given by 
\begin{eqnarray}
\label{XonSB-3}
q^{-N+2N_{\uparrow}+2l_2+1}Q^{-1}\Psi_{D}
\end{eqnarray}
where $l_2$ takes $0, 1,\ldots, M'-1$ once.

The sum of contributions from Eqns.(\ref{XonSB-1}) to (\ref{XonSB-3})
is given by 
\begin{eqnarray}
\label{XonSB-4}
q^{-N+N_{\uparrow}+M+M'}[N_{\uparrow}+M+M']Q^{-1}\Psi_{D}.
\end{eqnarray}

Secondly, we reverse an down arrow in $D$ to obtain $D'$.
By a similar argument to Case 1 and 2, the sum of contributions 
is given by 
\begin{eqnarray}
\label{XonSB-5}
q^{N_{\uparrow}+M+M'}[N_{\downarrow}+M+M']Q\Psi_{D}.
\end{eqnarray}
The sum of Eqns.(\ref{XonSB-4}), (\ref{XonSB-5}) and the diagonal 
contribution is $[Q;N]\Psi_{D}$. 
This completes the proof.
\end{proof}

\section{\texorpdfstring{Action of Hamiltonian on $\Psi$}
{Action of Hamiltonian on Psi}}
\label{sec-eigenH}

Let $a$ be an arc and $d$ be the number of arcs and dashed arc 
outside of $a$ (including $a$ itself).  
We call the number $d$ the {\it depth} of the arc $a$.

In this section, we will show $e_i\Psi=0$ for $1\le i\le N$ for 
arbitrary $q, Q, Q_{0}$  and $e_0\Psi=0$ under the integrable 
condition: 
\begin{eqnarray}
\label{def-icond}
q^{N-1}QQ_{0}-1=0.
\end{eqnarray}
Since we have an explicit action of $e_i$, $0\le i\le N$,  
on a diagram $D$, we compute explicitly the $D$-component 
of $e_{i}\Psi$. 

\subsection{Type A}

\begin{prop}
\label{prop-HA}
We have
\begin{eqnarray}
\label{prop-epsiA}
e_{i}\Psi=0, \qquad 1\le i\le N-1.
\end{eqnarray}
\end{prop}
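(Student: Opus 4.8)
The plan is to verify the identity $e_i\Psi=0$ componentwise, i.e.\ to show that for every diagram $D$ of type A the coefficient of $D$ in $e_i\Psi$ vanishes. Since $\Psi=\sum_D\Psi_D D$ and the action of $e_i$ ($1\le i\le N-1$) on Kazhdan--Lusztig bases is given by the explicit list of partial-diagram rules in Section~\ref{sec-reps}, the $D$-component of $e_i\Psi$ is a finite sum $\sum_{D'}(e_i)_{D,D'}\Psi_{D'}$ over those diagrams $D'$ for which $D$ appears in $e_i(D')$. First I would reduce to the local situation: because $e_i$ acts only on sites $i,i+1$ and a Kazhdan--Lusztig basis vector is a tensor product of building blocks, the relation $e_i\Psi=0$ splits according to which building block(s) occupy sites $i$ and $i+1$ in $D$. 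Reading off the action rules, one sees that $D$ arises from $e_i(D')$ essentially only when sites $i,i+1$ of $D$ carry a single arc (possibly together with the configurations coming from the rules that produce $\tikzpic{-0.5}{\linkpattern{0/0.8}{}{}{};}$), so the whole statement collapses to a small number of local linear identities among the weights $\Psi_D$.

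The key computational step is to exploit the factorized form of $\Psi_D=N_1 N_2 N_3 N_4$ in Definition~\ref{PsiA}, or more usefully the closed expression (\ref{PsiA2}) for a diagram written in the canonical shape (\ref{Diagram0}). When $D'$ differs from $D$ by one of the local moves that $e_i$ performs --- merging two adjacent unpaired arrows into an arc, splitting an arc, or sliding an arc past an adjacent arrow --- the ratio $\Psi_{D'}/\Psi_D$ is a ratio of quantum integers and powers of $q,Q$ that can be written down explicitly. I would list the (at most three or four) diagrams $D'$ contributing to a fixed $D$-component, compute each matrix element $(e_i)_{D,D'}$ from the rules (they are $1$, $-[2]$, or a simple $q$-monomial), multiply by the corresponding $\Psi_{D'}/\Psi_D$, and check that the total is zero. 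The eigenvalue relation from Lemma~\ref{lemma-app0} and the Pascal-type quantum-integer identities in the appendix are the natural tools to collapse the sums that appear when an arc of size $>1$ is split. Because $[e_i,X]=0$ and $\Psi$ is the multiplicity-one eigenvector of $X$ (Theorem~\ref{thrm-PsiA}), $e_i\Psi$ is again an eigenvector of $X$ with eigenvalue $[Q;N]$, hence is a scalar multiple of $\Psi$; so in principle it suffices to check a single convenient component, e.g.\ the one indexed by the all-up diagram or by a diagram with a single arc, to pin down that scalar as $0$ --- this shortcut would considerably shorten the verification.

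I expect the main obstacle to be organizational rather than deep: there are several families of local configurations (an arc with unpaired up arrows to its left, an arc nested inside another arc, an arc adjacent to a down arrow, etc.), and for each one the bookkeeping of which quantum integer appears in $N_2,N_3,N_4$ after the move must be done carefully, since a single arc-move changes the ``position from left'' labels $N_A$ and the ``position from right'' labels $N_C$ of potentially many arcs. The case where $e_i$ splits an arc of size $m>1$ into the configuration it produces is the most delicate, because the image is a single diagram but the preimages under $e_i$ of a given target $D$ form a family indexed by the internal arc structure, and collapsing that family requires the telescoping identity from Lemma~\ref{lemma-app0}. If the single-component shortcut via the $X$-eigenvector property is invoked first, the obstacle largely disappears and only one clean local identity needs to be checked.
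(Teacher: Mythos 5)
Your primary plan --- fix a diagram $D$, observe that only those $D'$ with $D$ appearing in $e_i(D')$ contribute, note that this forces $D$ to carry an arc at sites $i,i+1$, and then cancel the diagonal contribution $-[2]\Psi_D$ against the off-diagonal ones using the factorized form of $\Psi_D$ and a telescoping quantum-integer identity --- is exactly the paper's proof of Proposition~\ref{prop-HA}; the telescoping identity actually invoked there is Lemma~\ref{lemma-app2} rather than Lemma~\ref{lemma-app0}, and the paper organizes the bookkeeping by the depth of the arc at $(i,i+1)$ (depth greater than one versus depth one, the latter splitting into eight sub-cases according to the arrows flanking the arc), which is precisely the organizational burden you anticipate. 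Where you genuinely depart from the paper is the proposed shortcut, and it is both correct and much shorter: since $[e_i,X]=0$ for $1\le i\le N-1$ (Theorem~\ref{theorem-commute}) and $\Psi$ spans the one-dimensional $[Q;N]$-eigenspace of $X$ (Theorem~\ref{thrm-PsiA} together with the multiplicity count), $e_i\Psi$ must equal $c_i\Psi$ for some scalar $c_i$; every nonzero output of $e_i$ listed in Section~\ref{sec-reps} is a combination of diagrams carrying an arc at sites $i,i+1$, so the component of $e_i\Psi$ on the all-up diagram vanishes, while the corresponding component of $\Psi$ is $q^{N(N-1)/2}Q^{N}\neq 0$, forcing $c_i=0$. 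Both ingredients are established before this proposition, so there is no circularity, and your shortcut would replace the entire case analysis; the paper instead reserves the multiplicity-one argument for transporting the result between bases in Theorem~\ref{eonPsi-all}. The one caveat is that the shortcut does not extend as cheaply to $e_N$ and $e_0$, where exhibiting a component of $e_N\Psi$ or $e_0\Psi$ that vanishes for trivial reasons is no longer immediate, so the heavy computations of Propositions~\ref{prop-eNAPsi} and~\ref{prop-e0APsi} are not avoided by this device.
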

\begin{proof}
Suppose that $D$ does not have an arc connecting 
$i$-th and $(i+1)$-th arrows.
There is no diagram $D'$ such that $D$ appears in the 
expansion of $e_i(D)$.
Thus the $D$-component of Eqn.(\ref{prop-epsiA}) is 
obviously zero.

Below, we consider the case where $D$ has an arc connecting 
the $i$-th and the $(i+1)$-th arrows. 
We denote by $e$ this small arc. 
There are two cases: 1) the depth of $e$ is greater than one  
and 2) the depth of $e$ is one.
We will show that the left hand side of (\ref{prop-epsiA}) 
is actually zero.

\paragraph{\bf Case 1}
Let $d$ be the depth of $e$. There exists a unique arc $e'$ 
of depth $d-1$ such that $e$ is inside of $e'$.
There may be several arcs of depth $d$ inside of $e'$.
The diagram $D$ locally looks like 
\begin{eqnarray*}
\begin{tikzpicture}
\draw(-0.2,0)..controls(-0.2,-3)and(5.2,-3)..(5.2,0);
\draw[very thick](2.2,0)..controls(2.2,-0.48)and(2.8,-0.48)..(2.8,0);
\draw(2.2,0)node[anchor=south]{i}(2.8,0)node[anchor=south]{i+1};
\draw(0.2,0)..controls(0.2,-0.8)and(0.8,-0.8)..(0.8,0)
(0.5,-0.7)node[anchor=north]{$m_I$};
\draw(0.7,-0.1)node[anchor=west]{$\cdots$};
\draw(1.4,0)..controls(1.4,-0.8)and(2.0,-0.8)..(2.0,0)
(1.7,-0.7)node[anchor=north]{$m_1$};
\draw(3.0,0)..controls(3,-0.8)and(3.6,-0.8)..(3.6,0)
(3.3,-0.7)node[anchor=north]{$n_1$};
\draw(3.5,-0.1)node[anchor=west]{$\cdots$};
\draw(4.2,0)..controls(4.2,-0.8)and(4.8,-0.8)..(4.8,0)
(4.5,-0.7)node[anchor=north]{$n_J$};
\end{tikzpicture}
\end{eqnarray*}
The arcs which are left to $e$ are of size $m_i$, $1\le i\le I$ and 
the arcs which are right to $e$ are of size $n_j$, $1\le j\le J$. 
Set $M'=\sum_{i}^{I}m_i$ and $N'=\sum_{j=1}^{J}n_j$. 
The arc $e'$ is the one of size $M'+N'+2$ and of depth $d-1$. 
Let $D'$ be a diagram such that $e_i(D')$ contains the term 
$D$.
Suppose that the arc of depth $d$ and of length $m_i$ 
connects the $k$-th and the $l$-th ($k<l$) arrows.
We denote by $D'$ a diagram obtained from $D$ by connecting 
the $l$-th and the $i$-th arrows via an arc and also 
the $k$-th and $(i+1)$-th arrows via an arc.  
Since $e_{i}(D')=D$, the contribution of such $D'$'s
to the left hand side of Eqn.(\ref{prop-epsiA}) is 
\begin{eqnarray}
\label{epA1}
\sum_{i=1}^{I}\frac{[m_i]}
{[1+\sum_{j=1}^{i-1}m_i][1+\sum_{j=1}^{i}m_i]}\Psi_{D}.
\end{eqnarray}
Similarly, suppose that the arc of depth $d$ and of 
length $n_j$ connects the $k$-th and the $l$-th 
arrows. 
Let $D'$ be a diagram obtained from $D$ by connecting
the $i$-th and the $l$-th arrows via an arc and 
the $k$-th and the $(i+1)$-th arrows via an arc.
The contribution of such $D'$'s is 
\begin{eqnarray}
\label{epA2}
\sum_{i=1}^{J}\frac{[n_i]}
{[1+\sum_{j=1}^{i-1}n_i][1+\sum_{j=1}^{i}n_i]}\Psi_{D}.
\end{eqnarray}
Suppose that the arc of depth $d-1$ and of size $M+N+2$ 
connects the $k$-th and $l$-th arrows. 
Let $D'$ be a diagram obtained from $D$ by connecting 
the $k$-th and $i$-th arrows via an arc and $(i+1)$-th 
and $l$-th arrows via an arc.
The contribution of this $D'$ is 
\begin{eqnarray}
\label{epA3}
\frac{[M+N+2]}{[M+1][N+1]}\Psi_{D}.
\end{eqnarray}
From Lemma~\ref{lemma-app2}, the sum of Eqn.(\ref{epA1}) to Eqn.(\ref{epA3})
is $[2]\Psi_{D}$.

Since $e_i(D)=-[2]D$, the contribution of $D$ 
to the left hand side of Eqn.(\ref{prop-epsiA}) is 
$-[2]\Psi_{D}$.
Thus the left hand side of Eqn.(\ref{prop-epsiA}) is zero.

\paragraph{\bf Case 2}
The diagram $D$ is locally depicted as 
\begin{eqnarray}
\label{diagramA}
\cdots\alpha  
\raisebox{-0.8\totalheight}{
\begin{tikzpicture}
\draw(0,0)..controls(0,-0.8)and(0.8,-0.8)..(0.8,0)
(0.4,-0.8)node{$m_{I+1}$};
\end{tikzpicture}}
\raisebox{-0.8\totalheight}{
\begin{tikzpicture}
\draw(0,0)..controls(0,-0.8)and(0.8,-0.8)..(0.8,0)
(0.4,-0.8)node{$m_{I}$};
\end{tikzpicture}}\  
\ldots
\raisebox{-0.8\totalheight}{
\begin{tikzpicture}
\draw(0,0)..controls(0,-0.8)and(0.8,-0.8)..(0.8,0)
(0.4,-0.8)node{$m_{1}$};
\end{tikzpicture}}
\raisebox{-0.4\totalheight}{
\begin{tikzpicture}
\draw(0,0)..controls(0,-0.6)and(0.6,-0.6)..(0.6,0)
(0,0.2)node{$i$}(0.6,0.2)node{$i+1$};
\end{tikzpicture}}
\raisebox{-0.8\totalheight}{
\begin{tikzpicture}
\draw(0,0)..controls(0,-0.8)and(0.8,-0.8)..(0.8,0)
(0.4,-0.8)node{$n_{1}$};
\end{tikzpicture}}
\ldots
\raisebox{-0.8\totalheight}{
\begin{tikzpicture}
\draw(0,0)..controls(0,-0.8)and(0.8,-0.8)..(0.8,0)
(0.4,-0.8)node{$n_{J}$};
\end{tikzpicture}}
\beta\cdots
\end{eqnarray}
where $\alpha$ are $\beta$ either $\uparrow$, $\downarrow$ or empty. 
By empty we mean that there are no arrows.
The inside of the arc of size $m_i$ or $n_i$ is filled with arcs.
Let $M'=\sum_{i=1}^{I}m_i$ and $N'=\sum_{i=1}^{J}n_i$.
Since $e_i(D)=-[2]D$, the contribution of $D$ to Eqn.(\ref{prop-epsiA})
is $-[2]\Psi_{D}$.
By a similar argument to Case 1, we have $D'$'s which changes
the arc $e$ and the arc of size $m_i$ and of depth one 
to two arcs of size $1+\sum_{j=1}^{i-1}m_i$ and $1+\sum_{j=1}^{i}m_i$.
We also have similar $D'$'s regarding the arc of size $n_i$.
These $D'$'s contribution to the left hand side of
Eqn.(\ref{prop-epsiA}) is 
\begin{eqnarray}
\label{epA4}
\sum_{i=1}^{I}\frac{[m_i]\Psi_{D}}
{[1+\sum_{j=1}^{i-1}m_i][1+\sum_{j=1}^{i}m_i]}
+
\sum_{i=1}^{J}
\frac{[n_i]\Psi_{D}}{[1+\sum_{j=1}^{i-1}n_i][1+\sum_{j=1}^{i}n_i]}
\end{eqnarray}
We have eight cases for the diagram $D$: 
a) $(\alpha,\beta)=(\uparrow,\uparrow)$, 
b) $(\alpha,\beta)=(\uparrow,\downarrow)$,  
c) $(\alpha,\beta)=(\downarrow,\downarrow)$, 
d) $(\alpha,\beta)=(\uparrow,\emptyset)$, 
e) $(\alpha,\beta)=(\downarrow,\emptyset)$
f) $(\alpha,\beta)=(\emptyset,\uparrow)$,
g) $(\alpha,\beta)=(\emptyset,\downarrow)$, and 
h) $(\alpha,\beta)=(\emptyset,\emptyset)$
where $\alpha=\emptyset$ (resp. $\beta=\emptyset$) 
means that there are no arrows left to (resp. right to)
$\alpha$ (resp. $\beta$). 

\paragraph{Case 2-a}
Let $d$ be the sum of the numbers of arcs and up arrows left to 
the arrow $\alpha$.
Let $D'$ be a diagram obtained from $D$ by connecting the arrow $\alpha$
and the $i$-th arrow via an arc and  putting an up arrow at the $(i+1)$-th 
site.
Thus the contribution to Eqn.(\ref{prop-epsiA}) is 
\begin{eqnarray}
\label{epA5}
\frac{[d]}{[M'+1][d+M'+1]}\Psi_{D}.
\end{eqnarray}
Similarly, let $D'$ be a diagram obtained from $D$ by connecting $(i+1)$-th site 
and the arrow $\beta$ via an arc and putting an up arrow at 
$i$-th site. 
The contribution of this $D'$ is 
\begin{eqnarray}
\label{epA6}
\frac{[d+M'+N'+2]}{[N'+1][d+M'+1]}\Psi_{D}.
\end{eqnarray}
The sum of contributions from Eqns.(\ref{epA4}) to (\ref{epA6}) is 
$[2]\Psi_{D}$ by applying Lemma~\ref{lemma-app2} to Eqn.(\ref{epA4}).
This implies that the contributions of Eqn.(\ref{prop-epsiA}) is zero.

\paragraph{Case 2-b to 2-h}
By a similar argument to Case 2-a, one can show that the sum of 
contributions is zero.
In Case 2-b, 2-d and 2-h, we have a contribution from $D'$ which 
is obtained from $D$ by putting an up arrow at $i$-th site and a 
down arrow at $(i+1)$-th site.
The contribution of this $D'$ is 
\begin{eqnarray*}
\frac{[d+M'+N'+d'+2]}{[d+M'+1][d'+N'+1]}\Psi_{D},
\end{eqnarray*}
where $d$ is the number of up arrows and arcs left to the arrow $\alpha$ and 
$d'$ is the number of down arrows and arcs right to the arrow $\beta$.
This completes the proof.
\end{proof}

\begin{prop}
\label{prop-eNAPsi}
We have $e_N\Psi=0$.
\end{prop}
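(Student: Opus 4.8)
The plan is to mimic the structure of the proof of Proposition~\ref{prop-HA}, but now using the more elaborate action of $e_N$ on Kazhdan--Lusztig bases of type A (the three cases described in the paragraph on $e_N$). For a fixed diagram $D$ we compute the $D$-component of $e_N\Psi=\sum_{D'}X^{(e_N)}_{D,D'}\Psi_{D'}$, where $X^{(e_N)}$ is the matrix of $e_N$. Because $e_N$ acts only near the right boundary, only diagrams $D'$ that differ from $D$ in their rightmost portion can contribute, and the enumeration of such $D'$ is exactly the inverse of the case analysis (Case 1, 2, 3) giving $e_N(D')$. So the first step is: classify $D$ according to what its rightmost building block is, and for each such $D$ list all $D'$ with $D$ appearing in $e_N(D')$ together with the coefficient.

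The second and main step is the computation itself. This is most cleanly organized by the \emph{depth} of the rightmost arc, paralleling Case 1 (depth $>1$) versus Case 2 (depth $1$) of Proposition~\ref{prop-HA}, since the $A_{(i)}$ and $B_{(j,i)}$ terms in Case 3 of the $e_N$ action produce, when read backwards, either a re-nesting of arcs of the same outer arc or a splitting of the outermost arc. Here I would: (i) write $\Psi_D$ in the factored form $N_1N_2N_3N_4$ of Definition~\ref{PsiA} (or better, use the fully expanded formula \eqref{PsiA2} from the proof of Theorem~\ref{thrm-PsiA}); (ii) for each contributing $D'$ express $\Psi_{D'}/\Psi_D$ as an explicit ratio of quantum integers and powers of $q,Q$; (iii) sum over $D'$ using the hypergeometric-type identities already invoked elsewhere — Lemma~\ref{lemma-app0}, Lemma~\ref{lemma-app2}, and the $e_N$-specific analogues among the ``technical lemmas'' collected in Section~\ref{sec-appendix} — and check the total cancels against the diagonal contribution $X^{(e_N)}_{D,D}\Psi_D$. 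The key point making cancellation plausible is that $e_N$ is (up to the quotient relation \eqref{1BTLquotient}) essentially conjugate, via the reflection $u$, to the structure already handled, and that $\Psi$ is simultaneously the ground state; concretely one expects the $A$-type terms to reassemble a single quantum integer $[\,\cdot\,]$ and the $B$-type terms, with the coefficients $c_{(j,i)}$ from \eqref{defc}, to collapse via the $(1+q^2)$-weighted identity into a clean factor, leaving $0$.

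I would split the write-up into the three subcases exactly as the $e_N$-action is stated. \textbf{Case 1} (rightmost arrow of $D$ is up, i.e.\ $N_\downarrow=0$): $D$ can only be hit by $D'$ obtained by reversing that arrow to a down arrow (the Case~1 formula $e_N(D')=D''-Q^{-1}D'$), plus the diagonal, and the two-term cancellation is immediate from $\Psi_{D'}/\Psi_D$. \textbf{Case 2} (rightmost arrow is an unpaired down arrow): $D$ is hit by diagrams $D'$ where an arc among the rightmost down arrows of $D$ is opened, and by the diagonal; summing $q^{-(i-1)}$ times the corresponding ratios against $-QD$ uses Lemma~\ref{lemma-app0}-type telescoping. \textbf{Case 3} (rightmost arrow is an up arrow forming an arc): this is the substantial case; here the contributing $D'$ are enumerated by the inverse of the $A_{(i)},B_{(j,i)}$ maps, and one must carefully track the $c_{(j,i)}$ coefficients.

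The hard part will be Case 3: the bookkeeping of which $D'$ produce the fixed $D$ through the $B_{(j,i)}$ terms (there are $O(N_\downarrow^2)$ of them, with three different formulae for $c_{(j,i)}$ depending on whether $i=1$, $j=i+1$, or otherwise), and then showing the resulting double sum over ratios of quantum integers telescopes. I expect this to require one or two dedicated lemmas of the type already in Section~\ref{sec-appendix} — essentially a $q$-analogue of a Vandermonde/Saalsch\"utz summation with the $(1+q^2)$ weight — and these are presumably exactly the ``technical lemmas'' the paper promises there. Once those identities are granted, assembling them into $e_N\Psi=0$ is routine algebra, and by Proposition~\ref{prop-HA} together with this result one concludes that all generators $e_1,\dots,e_N$ of the one-boundary Temperley--Lieb algebra annihilate $\Psi$, so $\Psi$ is the ground state of $H^{1B}$ with eigenvalue zero.
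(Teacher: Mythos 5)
Your overall strategy --- fix a target diagram $D$, enumerate all $D'$ with $D$ appearing in $e_N(D')$, compute $\Psi_{D'}/\Psi_D$ from the explicit formula (\ref{PsiA2}), and cancel the sum against the diagonal term using appendix identities --- is exactly the paper's strategy, and your three-way split on the rightmost structure of $D$ matches the paper's Cases 1--3. However, your concrete description of which sources hit which targets is wrong in ways that would sink the execution. First, your Case 1 is not a two-term cancellation: a source $D'$ ending in an arc contributes to a target ending in up arrows through the $-q^{-1}\uparrow\uparrow$ term of the action of $e_N$ on an arc, so even when the rightmost arrow of $D$ is an unpaired up arrow there is a third (and, when $n_{I+1}=1$, a fourth) contribution coming from diagrams $D'$ carrying an arc at the right boundary; without these the terms do not cancel. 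Second, and more seriously, you have inverted the difficulty of the remaining two cases. The case where $D$ ends in an unpaired down arrow --- which you dismiss as ``opening an arc among the rightmost down arrows plus the diagonal, Lemma~\ref{lemma-app0}-type telescoping'' --- is in fact the heavy case: such a $D$ receives contributions from sources of all three structural types, including the $B_{(j,i)}$ terms with the three-branch coefficients $c_{(j,i)}$ of Eqn.(\ref{defc}); the paper needs ten distinct families of contributing $D'$, a separation of the total into $A_1Q\Psi_D+A_{-1}Q^{-1}\Psi_D$ with $A_1$ and $A_{-1}$ shown to vanish separately, and three dedicated summation lemmas (Lemmas~\ref{lemma-app-eNA}, \ref{lemma-app-eNA3} and \ref{lemma-app-eNA4}) beyond Lemma~\ref{lemma-app2}.

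The root of the trouble is your plan to ``split the write-up into the three subcases exactly as the $e_N$-action is stated'': the action's case analysis classifies the \emph{source} diagram, and the preimage of a fixed target $D$ under $e_N$ does not respect that classification --- a target ending in a down arrow is hit by sources of all three kinds. Relatedly, the heuristic that $e_N$ is essentially conjugate via $u$ to structure already handled does not help here: $u$ relates $e_0$ to $e_N$, not $e_N$ to the bulk generators, and the bulk computation of Proposition~\ref{prop-HA} is structurally much simpler because $e_i$ only sees two adjacent sites. So while the plan is the right plan, the enumeration of contributions must be redone from scratch for each target type, and the required identities are not mere variants of Lemma~\ref{lemma-app0} but the $(1+q^{\pm2})$-weighted telescoping identities the paper proves in the appendix.
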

\begin{proof}
We have three cases for $D$: 1) the rightmost arrow is an
up arrow, 2) the rightmost arrow forms an arc and 
3) the rightmost arrow is a down arrow.

\paragraph{\bf Case 1}
The diagram $D$ is depicted as 
\begin{eqnarray}
\label{pic-A-1}
\tikzpic{-0.5}{
\draw(0,0)node[anchor=north]
{$\underbrace{\uparrow\cdots\uparrow}_{n_1}$};
\draw(0.8,-0.2)..controls(0.8,-0.8)and(1.6,-0.8)..(1.6,-0.2);
\draw(1.2,-0.8)node[anchor=north]{$m_1$};
\draw(2.3,0)node[anchor=north]{$\uparrow\cdots\uparrow$};
\draw(3.1,-0.2)..controls(3.1,-0.8)and(3.7,-0.8)..(3.7,-0.2);
\draw(3.4,-0.8)node[anchor=north]{$m_I$};
\draw(4.4,0)node[anchor=north]
{$\underbrace{\uparrow\cdots\uparrow}_{n_{I+1}}$};
},
\end{eqnarray}
where the region inside of the arc of size $m_i$, $1\le i\le I$, 
is filled with arcs.
Set $d=\sum_{i=1}^{I+1}n_i+\sum_{i=1}^{I}m_i$.
We have two cases for $n_{I+1}$: 
a) $n_{I+1}\ge2$, and 
b) $n_{I+1}=1$.

We have two common contributions to the $D$-component of
$e_N\Psi$ for both case a) and b).
Since $e_N(D)=-Q^{-1}D+\cdots$, the contribution from $D$ 
itself is $-Q^{-1}\Psi_D$. 
Let $D'$ be a diagram obtained from $D$ by changing 
the rightmost up arrow to a down arrow.
We have $e_N(D')=D+\cdots$. Thus the contribution is 
given by $q^{-(d-1)}Q^{-1}[d]\Psi_{D}$.

\paragraph{Case 1-a}
Let $D'$ be a diagram obtained from $D$ by connecting 
the $(N-1)$-th site and the $N$-th site via an arc.
We have $e_N(D')=-q^{-1}D+\cdots$.
The contribution is given by $-q^{-d}Q^{-1}[d-1]\Psi_D$.
The sum of three contributions is zero, which implies 
$e_N\Psi=0$.

\paragraph{Case 1-b}
Let $D'$ be a diagram obtained from $D$ by connecting 
the $(N-2m_I-1)$-th site and the $N$-th site via 
an arc. 
We have $e_N(D')=-q^{-1}D+\cdots$. 
The contribution of $D'$ is 
\begin{eqnarray}
-q^{-d}Q^{-1}\frac{[d-m_I-1]}{[m_I+1]}\Psi_D.
\end{eqnarray}
Let $D'$ be a diagram obtained from $D$ by connecting 
the $(N-1)$-th site and the $N$-th site via an arc 
and changing the $(N-2m_I)$-th up arrow to an down 
arrow. We have $e_N(D')=-q^{-1}D+\cdots$.
The contribution is given by 
\begin{eqnarray}
-q^{-d}Q^{-1}\frac{[d][m_I]}{[m_I+1]}.
\end{eqnarray}
Thus the sum of four contributions is zero, which implies 
$e_N\Psi=0$.

\paragraph{\bf Case 2}
We have two cases for $D$: a) $D$ has no down arrows,
b) $D$ has down arrows.

\paragraph{Case 2-a}
The diagram $D$ is depicted as Eqn.(\ref{pic-A-1}) with 
$n_{I+1}=0$.
Set $N'=\sum_{i=1}^{I}n_i$, $M=\sum_{i=1}^{I}m_i$ and 
$d=N'+M$.
We have four types of contributions to the $D$-component of
$e_N\Psi$.
Since $e_N(D)=-Q^{-1}D+\cdots$, we have the contribution
$-Q^{-1}\Psi_D$. 

Let $D'$ be a diagram obtained from $D$ by changing the 
outer arc of size $m_I$ to two down arrows. 
We have $e_N(D')=D+\cdots$. 
Thus the contribution is 
\begin{eqnarray}
q^{-(d-1)}Q^{-1}\frac{[N'+M+1][m_I]}{[m_I+1]}\Psi_D.
\end{eqnarray}

Suppose that $D$ has arcs of depth two inside of the arc 
of size $m_I$. We enumerate these arcs of depth two from 
right to left by $1,2,\ldots,J$ where $J$ is the number 
of arcs of depth two.
We denote by $\widetilde{m_{j}}$, $1\le j\le J$, the size 
of the $j$-th arc of depth two.
Suppose that the $i_1$-th arrow and the $i_2$-th arrow 
form the arc of size $\widetilde{m_j}$. 
Let $D'$ be a diagram obtained from $D$ by connecting 
the $i_2$-th site and the $N$-th site via an arc and 
putting two down arrows at the $(N-2m_I+1)$-th site 
and the $i_1$-th site.
We have $e_N(D')=-q^{-1}D+\cdots$. 
The contribution is given by 
\begin{eqnarray}
-q^{-1}\sum_{j=1}^{J}
\frac{q^{-(d-1)}Q^{-1}[d+1][m_I][\widetilde{m_j}]}
{[m_I+1][\sum_{k=1}^{j}\widetilde{m_k}+1][\sum_{k=1}^{j-1}\widetilde{m_k}+1]}\Psi_D
=-q^{-d}Q^{-1}\frac{[d+1][m_I-1]}{[m_I+1]}\Psi_D.
\end{eqnarray}
where we have used Lemma~\ref{lemma-app2} and 
$\sum_{j=1}^{J}\widetilde{m_j}=m_{I}-1$.

Let $D'$ be a diagram obtained from $D$ by changing the rightmost up 
arrow to a down arrow.
$e_N(D')=-q^{-2}D+\cdots$.
The contribution is given by $-q^{-(d+1)}Q^{-1}[d-m_I][m_I+1]^{-1}$.

By a straightforward calculation, the sum of four contributions 
is zero, which implies $e_N\Psi=0$.

\paragraph{Case 2-b}
Let $D$ be a diagram which starts with $n'_1$ up arrow, followed by an 
outer arc of size $m'_1$, followed by $n'_2$ up arrows, followed 
by an outer arc of size $m'_2$, $\cdots$, 
followed by $n'_{J+1}$ up arrows, followed by $n_{I}$ down 
arrows, followed by an outer arc of size $m_I$, followed by 
$n_{I-1}$ down arrows, $\cdots$, and ends with an outer arc 
of size $m_1$. 
We set $N'=\sum_{i=1}^{J+1}n'_i$, $M'=\sum_{i=1}^{J}m'_i$, 
$N_{\downarrow}=\sum_{i=1}^{I}n_i$, $M=\sum_{i=1}^{I}m_i$, 
$L=N'+M'+N_{\downarrow}+M$, $d=N'+M'+M$ 
and $v_i=\sum_{j=1}^{i-1}n_j+m_j$.

We have five types of contributions to the $D$-component of 
$e_N\Psi$.
Since $e_N(D)=-Q^{-1}D+\cdots$, we have a contribution $-Q^{-1}\Psi_D$.

Let $D'$ be a diagram obtained from $D$ by changing the outer arc of 
size $m_1$ to two down arrows.
We have $e_N(D')=D+\cdots$.
The contribution is 
\begin{eqnarray}
\label{eNA-2-1}
q^{-(d-1)}Q^{-1}\Psi_D[L+1]\frac{[m_1]}{[1+m_1]}
\prod_{i=1}^{I}\frac{[1+m_i+v_i]}{[1+v_{i+1}]}.
\end{eqnarray}
Suppose that $D$ has arcs of depth two inside of the arc of size 
$m_1$. We enumerate these arcs from right to left and denote 
by $\widetilde{m_j}$ its size. 
Suppose that the $i_1$-th arrow and the $i_2$-th arrow form 
the arc of size $\widetilde{m_j}$.
Let $D'$ be a diagram obtained from $D$ by connecting 
the $i_2$-th site and the $N$-th site via an arc and 
putting two down arrows at the $(N-2m_{1}+1)$-th 
and the $i_1$-th sites.
We have $e_N(D')=-q^{-1}D+\cdots$.
The contribution is given by 
\begin{multline}
-q^{-d}Q^{-1}\Psi_D\sum_{j}\frac{[L+1][\widetilde{m_j}]
[m_1]}{[1+\sum_{k=1}^{j-1}\widetilde{m_k}][1+\sum_{k=1}^{j}\widetilde{m_k}]
[m_1+1]}\prod_{i=1}^{I}\frac{[1+m_i+v_i]}{[1+v_{i+1}]} \\
=-q^{-d}Q^{-1}\Psi_D\frac{[L+1][m_1-1]}{[m_1+1]}
\prod_{i=1}^{I}\frac{[1+m_i+v_i]}{[1+v_{i+1}]}
\end{multline}
where we have used Lemma~\ref{lemma-app2} and $\sum_{j}\widetilde{m_j}=m_1-1$.

Let $D'$ be a diagram obtained from $D$ by changing the rightmost 
up arrow to a down arrow. 
We have $e_N(D')=-q^{-\sum_{j=1}^{I}n_j-2}D+\cdots$.
The contribution is given by 
\begin{eqnarray}
-q^{-\sum_{j=1}^{I}n_j-1-d}Q^{-1}\frac{[N'+M']}{[N_{\downarrow}+M+1]}
\Psi_D.
\end{eqnarray}
Suppose that the $j_1$-th arrow and the $j_2$-th arrow form an 
outer arc of size $m_j$ for $2\le j\le I$.
Let $D'$ be a diagram obtained from $D$ by changing the outer 
arc of size $m_i$ to two down arrows. 
We have $e_N(D')=-q^{-\sum_{k=1}^{j-1}n_k-2}D+\cdots$.
The contribution is given by 
\begin{multline}
\label{eNA-2-2}
-q^{-(d-1)}Q^{-1}\Psi_D\sum_{j=2}^{I}q^{-\sum_{k=1}^{j-1}n_k-2}
\frac{[L+1][m_j]}{[1+v_{j}][1+v_{j+1}]}
\prod_{k=i+1}^{I}
\frac{[1+m_k+v_{k}]}{[1+v_{k+1}]} \\
=-q^{-d}Q^{-1}\Psi_{D}[L+1]\left\{
\prod_{i=1}^{I}\frac{[1+m_i+v_i]}{[1+v_{i+1}]}
-\frac{q^{M}}{[1+v_{J+1}]}-\frac{q^{-1}[m_1]}{[m_1+1]}
\prod_{i=1}^{I}\frac{[1+m_i+v_i]}{[1+v_{i+1}]}
\right\}
\end{multline}
where we have used Lemma~\ref{lemma-app-eN2}.
By a straightforward calculation, one can show that the 
sum of Eqns.(\ref{eNA-2-1}) to (\ref{eNA-2-2}) 
is $Q^{-1}\Psi_{D}$. 
This cancels the contribution of $D$ itself, which 
implies $e_N\Psi=0$.

\paragraph{\bf Case 3}
Let $D$ be a diagram depicted as Eqn.(\ref{Diagram0}).
The $D$-component of $\Psi$ is explicitly given by 
\begin{eqnarray*}
\Psi_{D}&=&q^{d(d-1)/2}Q^{d}
\prod_{B\in S}[m_B]^{-1}
\prod_{i=1}^{I}\frac{[\sum_{j=1}^{i}(n_j+m_j)]!}{[n_i+\sum_{j=1}^{i-1}(n_j+m_j)]!}
\frac{[N_{\uparrow}+M+N'+M']!}{[N_{\uparrow}+M]!} \\
&&\times\prod_{i=1}^{J+1}\frac{[\sum_{j=1}^{i-1}(n'_j+m'_j)]!}
{[n'_{i}+\sum_{j=1}^{i-1}(n'_j+m'_j)]!}
\end{eqnarray*}
where $N_{\uparrow}=\sum_{i=1}^{I+1}n_i$, $M=\sum_{i=1}^{I}m_i$, 
$N'=\sum_{i=1}^{J+1}n'_i$, $M'=\sum_{i=1}^{J}m'_i$ and $d=N_{\uparrow}+M+M'$.

We have two cases for $D$: i) $n_1\ge2$ and ii) $n_1=1$. 
Below, we consider the case i) only since one can prove Proposition for case ii) 
by a similar argument.

We have ten types of contributions for the $D$-component of 
$e_N\Psi$ as follows.
\begin{enumerate}[a)] 
\item
Since $e_N(D)=-QD+\ldots$, we have a contribution from $D$ itself, 
that is $-Q\Psi_{D}$. 

\item
Let $D'$ be a diagram obtained from $D$ by changing the outer arc 
of size $m'_i$ to two down arrows.
We have $e_N(D')=q^{-\sum_{j=1}^{i}n'_j}D+\ldots$.

\item
Let $D'$ be a diagram obtained from $D$ by changing the $N_{\uparrow}$-th 
(from left) up arrow to a down arrow.
We have $e_N(D')=q^{-N'}D+\ldots$.

\item
Let $D'$ be a diagram obtained from $D$ by changing the first and 
the second (from right) down arrows to an arc.
We have $e_N(D')=D+\ldots$.

\item
Let $D'$ be a diagram obtained from $D$ by changing the first and 
the second (from right) down arrows to an arc and by changing 
the $N_{\uparrow}$-th and the $(N_{\uparrow}-1)$-th (from left) up 
arrows to two down arrows.
We have $e_N(D')=-q^{-2N'-1}D+\ldots$. 

\item
Let $D'$ be a diagram obtained from $D$ by changing the first and 
the second (from right) down arrows to an arc, by changing the outer 
arc of size $m'_i$ to two down arrows and by changing the 
$N_{\uparrow}$-th (from left) up arrow to a down arrow.
We have $e_N(D')=-(1+q^{2})q^{-N'-2-\sum_{j=1}^{i}n_j}D+\ldots$. 

\item
Let $D'$ be a diagram obtained from $D$ by changing the outer 
arc of size $m'_i$ to two down arrows and by changing the outer 
arc of size $m'_j$ ($j<i$) to two down arrows. 
We have 
$e_N(D')=-(1+q^{-2})q^{-\sum_{k=1}^{i}n'_k-\sum_{k=1}^{j}n'_k}D+\ldots$.

\item
Let $D''$ be a diagram obtained from $D$ by changing the first 
and the second (from right) down arrows to an arc and by changing 
the outer arc of size $m'_{i}$ to two down arrows. 
Suppose that $m'_{1,j}$, $1\le j\le r$, be the size of outer arcs of $D''$ 
which is inside of the outer arc of the size $m_1$ in $D$. 
Let $D'$ be a diagram obtained from $D''$ by changing the outer arc 
of size $m'_{1,j}$ to two down arrows.
Then, we have $e_N(D')=-q^{-1-2\sum_{k=1}^{i}n'_k}D+\ldots$.

\item
Let $D'$ be a diagram obtained from $D$ by changing the first and 
the second (from right) down arrows to an arc and by changing the 
$N_{\uparrow}$-th (from left) up arrow to a down arrow.
We have $e_N(D')=q^{-N'}(Q-Q^{-1})D+\ldots$.

\item
Let $D'$ be a diagram obtained from $D$ by changing the first and 
the second (from right) down arrows to an arc and by changing 
the outer arc of size $m_i$ to two down arrows.
We have $e_N(D')=q^{-\sum_{j=1}^{i}n'_{j}}(Q-Q^{-1})D+\ldots$.
\end{enumerate}

The sum of contributions from a) to j) is written as 
$A_1Q\Psi_{D}+A_{-1}Q^{-1}\Psi_D$. 
We will show that $A_1=A_{-1}=0$.
Set $v_i:=\sum_{j=1}^{i-1}(n'_j+m'_j)$, $w_i:=1+n'_{i}+v_{i}$, 
$d_{j}:=1-N_{\uparrow}-M-M'-\sum_{k=1}^{j}n'_k$
and $L:=1+N_{\uparrow}+M+n'_{I+1}+v_{I+1}$. 

The contribution to $A_{1}$ is summarized as follows.
From a), we have $-1$. 
From d) we have 
\begin{eqnarray}
\label{PsieNAQ1}
\frac{q^{d}}{[L-1]}
\prod_{i=1}^{J+1}\frac{[n'_{j}+v_{j}]}{[v_j]}.
\end{eqnarray}
From i), we have
\begin{eqnarray}
\label{PsieNAQ2}
q^{-N'}\frac{[N_{\uparrow}+M]}{[L-1][M'+N']}
\prod_{i=1}^{J+1}\frac{[n'_{j}+v_{j}]}{[v_j]}.
\end{eqnarray}
From j), we have 
\begin{eqnarray}
\label{PsieNAQ3}
\sum_{i=1}^{J}q^{-\sum_{j=1}^{i}n'_j}
\frac{[m'_i]}{[v_{i+1}][v_{i}]}
\prod_{j=1}^{i-1}\frac{[n'_{j}+v_{j}]}{[v_j]}.
\end{eqnarray}
Applying Lemma~\ref{lemma-app-eNA} to the sum of Eqns.(\ref{PsieNAQ1}),  
(\ref{PsieNAQ2}) and (\ref{PsieNAQ3}), we obtain 
$1$ which cancels the contribution from a). 
Thus we have $A_1=0$.

We have eight types of contributions to $A_{-1}$.
From b), we have 
\begin{eqnarray}
\label{eNPsiA-1}
\sum_{i=1}^{J}q^{d_i}\frac{[L][m'_i]}{[w_{i}][w_{i+1}]}
\prod_{j=i+2}^{J+1}\frac{[1+v_{j}]}{[w_j]}.
\end{eqnarray}
From c), we have 
\begin{eqnarray}
\label{eNPsiA-2}
q^{d_{J+1}}\frac{[N_{\uparrow}+M]}{[w_{J+1}]}.
\end{eqnarray}
From e), we have
\begin{eqnarray}
\label{eNPsiA-3}
-q^{d'}\frac{[N_{\uparrow}+M][N_{\uparrow}+M-1]}{[L-1][w_{J+1}][v_{J+1}]}
\prod_{i=1}^{J}\frac{[n'_i+v_{i}]}{[v_{i}]},
\end{eqnarray}
where $d':=d_{I+1}-N'-1$.
From f), we have 
\begin{eqnarray}
\label{eNPsiA-4}
-(1+q^{-2})\sum_{i=1}^{J}
q^{d_i-N'}
\frac{[N_{\uparrow}+M][m'_i]}{[v_{i}][v_{i+1}][w_{J+1}]}
\prod_{j=1}^{i-1}\frac{[n'_j+v_{j}]}{[v_j]}.
\end{eqnarray}
From g), we have 
\begin{multline}
\label{eNPsiA-5}
-(1+q^{-2})\sum_{i=2}^{J}q^{d_i}\frac{[L][m'_i]}{[w_{i}][w_{i+1}]}
\prod_{k=i+2}^{J+1}\frac{[1+v_{k}]}{[w_{k}]} 
\sum_{j=1}^{i-1}\frac{q^{-\sum_{k=1}^{j}n'_k}[m'_j]}{[v_{j}][v_{j+1}]}
\prod_{k=1}^{j-1}\frac{[n'_k+v_{k}]}{[v_k]} \\
=-\sum_{i=1}^{J}(1+q^{-2})q^{d_i}\frac{[m'_i][L]}{[w_i][w_{i+1}]}
\prod_{j=i+2}^{J+1}\frac{[1+v_{j}]}{[w_j]} \\
+\sum_{i=1}^{J}(1+q^{-2})q^{d_i+\sum_{j=1}^{i-1}m'_j}\frac{[m'_i][L]}{[w_i][w_{i+1}]}
\prod_{j=i+2}^{J+1}\frac{[1+v_{j}]}{[w_{j}]}
\prod_{k=1}^{i-1}\frac{[n'_k+v_k]}{[v_{k+1}]}
\end{multline}
where we have used Lemma~\ref{lemma-app-eNA}.
From h), we have
\begin{multline}
\label{eNPsiA-6}
\sum_{i=1}^{J}
\frac{q^{d_{i}-\sum_{j=1}^{i}n'_j-1}[L][m'_{i}]}
{[n'_i+v_i][1+n_{i+1}]}
\prod_{k=1}^{i}\frac{[n'_k+v_k]}{[v_k]}
\prod_{k=i+1}^{J+1}\frac{[1+v_k]}{[w_k]}
\left(
\sum_{l}
\frac{[m_{1,l}]}{[w_{i,j-1}][w_{i,j}]}
\right) \\
=
-\sum_{i=1}^{J}\frac{q^{d_{i}-\sum_{j=1}^{i}n'_j-1}[L][m'_{i}][m'_i-1]}
{[n'_i+v_{i}][1+v_{i+1}][w_i][v_{i+1}]}
\prod_{k=1}^{i}\frac{[n'_k+v_k]}{[v_{k}]}
\prod_{k=i+1}^{J+1}\frac{[1+v_k]}{[w_k]}
\end{multline}
where $w_{i,j}:=w_i+\sum_{k\le j}m_{1,k}$, we have used 
Lemma~\ref{lemma-app2} and $\sum_{l}m_{i,l}=m_{i}-1$.
From i), we have
\begin{eqnarray}
\label{eNPsiA-7}
-q^{-N'}\frac{[N_{\uparrow}+M]}{[L-1][v_{I+1}]}
\prod_{j=1}^{I}\frac{[n'_j+v_j]}{[v_j]}.
\end{eqnarray}
From j), we have 
\begin{eqnarray}
\label{eNPsiA-8}
-\sum_{i=1}^{J}q^{-\sum_{j=1}^{i}n'_j}\frac{[m'_i]}{[v_i][v_{i+1}]}
\prod_{k=1}^{i-1}\frac{[n'_k+v_k]}{[v_k]}.
\end{eqnarray}
We apply Lemma~\ref{lemma-app-eNA} to $q^2/(1+q^{2})$ times 
Eqn.(\ref{eNPsiA-4}) and the sum of 
Eqns.(\ref{eNPsiA-3}) and (\ref{eNPsiA-7}). 
The result cancels the contribution of Eqn.(\ref{eNPsiA-2}).  

The sum of Eqn.(\ref{eNPsiA-8}) and $1/(1+q^{2})$ times 
Eqn.(\ref{eNPsiA-4}) becomes 
\begin{multline}
\label{eNPsiA-9}
-\sum_{i=1}^{J}\frac{q^{-(M+N_{\uparrow}+\sum_{j=1}^{i}n'_j)}[m'_i][L]}
{[v_i][v_{i+1}][w_{J+1}]}
\prod_{k=1}^{i-1}\frac{[n'_k+v_k]}{[v_k]} \\
=-q^{-N_{\uparrow}-M}\frac{[L]}{[w_{J+1}]}
+q^{-(N_{\uparrow}+M-\sum_{j=1}^{I}m'_j)}\frac{[L]}{[w_{J+1}]}
\prod_{k=1}^{J}\frac{[n'_k+v_k]}{[v_{k+1}]}.
\end{multline}
where we have used Lemma~\ref{lemma-app-eNA}.
The sum of Eqn.(\ref{eNPsiA-1}) and the first term of 
the right hand side of Eqn.(\ref{eNPsiA-5}) is 
\begin{eqnarray}
\label{eNPsiA-10}
-q^{-N_{\uparrow}-M-\sum_{j=1}^{J}m'_j}[L]
\prod_{i=1}^{I+1}\frac{[1+v_j]}{[w_{j}]}
+q^{-N_{\uparrow}-M}\frac{[L]}{[w_{J+1}]}.
\end{eqnarray}
where we have used Lemma~\ref{lemma-app-eNA4}. 
The sum of the second term of the right hand side of 
Eqn.(\ref{eNPsiA-5}) and Eqn.(\ref{eNPsiA-6}) 
is given by 
\begin{multline}
\label{eNPsiA-11}
q^{-N_{\uparrow}-M-\sum_{i=1}^{J}m'_j+1}\sum_{i=1}^{J}
q^{-\sum_{k=1}^{i}n'_k}
\frac{[L][m'_i]}{[w_i][w_{i+1}]}
\prod_{k=i+2}^{J+1}\frac{[1+v_k]}{[w_k]}
\prod_{k=1}^{i-1}\frac{[n'_k+v_k]}{[v_{k+1}]} \\
\times\left\{
(1+q^{-2})q^{\sum_{j=1}^{i-1}m'_j}-q^{-\sum_{j=1}^{i}n'_j-1}
\frac{[m'_i-1]}{[v_{i+1}]}
\right\}
\end{multline}
By Lemma~\ref{lemma-app-eNA3}, the sum of Eqns.(\ref{eNPsiA-9}),
(\ref{eNPsiA-10}) and (\ref{eNPsiA-11}) is zero, which 
implies $A_{-1}=0$. 
This completes the proof.
\end{proof}

\begin{prop}
\label{prop-e0APsi}
We have $e_0\Psi=0$ at $qQQ_{0}=1$.
\end{prop}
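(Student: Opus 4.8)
The plan is to exploit the reflection symmetry $u$ introduced in Section~\ref{sec-reps} and to reduce the statement to Proposition~\ref{prop-eNAPsi}. Recall from Eqn.~(\ref{e0A}) that $e_0=u\circ\overline{e_N}\circ u$, where $\overline{e_N}$ denotes the operator $e_N$ with the parameter $Q$ replaced by $Q_0$, and that $u$ is an involutive linear bijection of the span of the Type~A diagrams. Since $u$ is invertible it is enough to show $u(e_0\Psi)=0$, and $u(e_0\Psi)=\overline{e_N}\,u(\Psi)$. So everything reduces to understanding the vector $u(\Psi)$.

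The key step is the following identity. Write $\Psi^{(Q)}$ for the vector of Definition~\ref{PsiA}, and $\Psi^{(Q_0)}$ for the same vector with $Q$ replaced by $Q_0$. I claim that, under the integrable condition $q^{N-1}QQ_0=1$ of Eqn.~(\ref{def-icond}),
\begin{equation}
\label{plan-key}
u(\Psi^{(Q)})=q^{N(N-1)/2}Q^{N}\,\Psi^{(Q_0)},
\end{equation}
equivalently $\Psi^{(Q)}_{u(D)}=q^{N(N-1)/2}Q^{N}\,\Psi^{(Q_0)}_{D}$ for every Type~A diagram $D$ of length $N$; in particular the ratio $\Psi^{(Q)}_{u(D)}/\Psi^{(Q_0)}_{D}$ does not depend on $D$. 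Granting \eqref{plan-key}, the proof finishes at once: $u(e_0\Psi^{(Q)})=\overline{e_N}\,u(\Psi^{(Q)})=q^{N(N-1)/2}Q^{N}\,\overline{e_N}\,\Psi^{(Q_0)}$, and $\overline{e_N}\,\Psi^{(Q_0)}=0$ because the identity of Proposition~\ref{prop-eNAPsi} is an identity of rational functions in $q,Q$ and therefore remains valid after the substitution $Q\mapsto Q_0$. Applying $u$ once more yields $e_0\Psi=0$.

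To prove \eqref{plan-key} I would argue directly from Definition~\ref{PsiA}, following the four factors $N_1,N_2,N_3,N_4$ under $u$. The map $u$ turns up arrows into down arrows and conversely, preserves arcs and their sizes, and reverses the left-to-right order while preserving the nesting of arcs; hence $N_3=\prod_B[m_B]^{-1}$ is unchanged and is in any case free of $Q$. In $N_1=q^{d(d-1)/2}Q^{d}$ with $d=|S_\uparrow|+|S|$, passing to $u(D)$ replaces $d$ by $|S_\downarrow|+|S|$. The factors $N_2=\prod_{A\in S\cup S_\downarrow}[N_A]$ and $N_4=\prod_{C\in S_\downarrow}[N_C]^{-1}$ get interchanged up to the left/right reversal of the enumerations: the ``from the left'' indices of $u(D)$ are the ``from the right'' indices of $D$ and vice versa. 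Collecting the powers of $q$ produced by these reversals together with the change in $N_1$, and then using $q^{N-1}QQ_0=1$ in the form $Q=q^{-(N-1)}Q_0^{-1}$ to convert the surviving power of $Q$ into powers of $q$ and $Q_0$, one checks that all the $D$-dependence cancels and the remaining constant is $q^{N(N-1)/2}Q^{N}$. A convenient way to organise this is induction on $N$: check \eqref{plan-key} on the three extremal families (all up arrows, all down arrows, alternating), then peel off either the leftmost arrow or the outermost arc, noting that under $u$ removing the leftmost $\uparrow$ (resp.\ leftmost $\downarrow$, resp.\ leftmost outer arc) corresponds to removing the rightmost $\downarrow$ (resp.\ rightmost $\uparrow$, resp.\ rightmost outer arc), and comparing the two sides factor by factor.

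The main obstacle is exactly this last bookkeeping: matching the from-the-left and from-the-right enumerations occurring in $N_1,N_2,N_4$, handling the ``outside-to-inside'' versus ``inside-to-outside'' conventions for nested arcs, and getting the power of $q$ right — this is where the hypothesis \eqref{def-icond} is used and where an error would most easily creep in; everything else is formal. (One could instead bypass \eqref{plan-key} by repeating, ``on the left boundary'', the computations of Propositions~\ref{prop-HA} and \ref{prop-eNAPsi}, but that is far more laborious and hides why the integrable condition is the natural hypothesis.)
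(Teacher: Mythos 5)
Your proof is correct, and it takes a genuinely different --- and considerably shorter --- route than the paper's. The paper proves the proposition by direct computation: it splits into three cases according to the leftmost arrow of $D$, lists up to ten families of diagrams $D'$ contributing to the $D$-component of $e_0\Psi$, and cancels everything with the telescoping identities of Lemmas~\ref{lemma-app2} and \ref{lemma-app-15}--\ref{lemma-app-17}, the integrable condition surfacing only at the very end as an overall factor $q^{N-1}QQ_0-1$ in the residual sum. You instead conjugate by $u$ via Eqn.~(\ref{e0A}) and reduce to Proposition~\ref{prop-eNAPsi}, with the condition (\ref{def-icond}) entering transparently through the proportionality $u(\Psi^{(Q)})=q^{N(N-1)/2}Q^{N}\Psi^{(Q_0)}$. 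The one step you leave as a sketch --- that proportionality --- is indeed the crux, but it checks out, and more cleanly than your bookkeeping plan suggests: use the closed form of Eqn.~(\ref{PsiA2}) rather than the four factors $N_1,\dots,N_4$. Since $u$ swaps the data $(n_i,m_i)_{i\le I}$ with $(n'_j,m'_j)_{j\le J}$ and preserves the multiset of arc sizes, all quantum-factorial factors cancel in the ratio $\Psi^{(Q)}_{u(D)}/\Psi^{(Q_0)}_{D}$ (the reindexing of the two telescoping products produces exactly $[N_\downarrow+M']!$ and $[N_\uparrow+M]!^{-1}$, which absorb the middle binomial factor), leaving $q^{\tilde d(\tilde d-1)/2-d(d-1)/2}Q^{\tilde d}Q_0^{-d}$ with $d=N_\uparrow+M+M'$ and $\tilde d=N_\downarrow+M+M'$; since $d+\tilde d=N$, substituting $Q_0=q^{-(N-1)}Q^{-1}$ yields the $D$-independent constant $q^{N(N-1)/2}Q^{N}$. (Without the specialization the ratio does depend on $D$ through $d$, so the hypothesis is used exactly where you say it is.) What your approach buys is brevity --- the whole case analysis of the paper's proof becomes unnecessary --- and a conceptual explanation of why $q^{N-1}QQ_0=1$ is the natural condition; the only thing the paper's computation provides that yours does not is the explicit list of coefficients $e_0(D')=\cdots D+\cdots$, which is not needed elsewhere.
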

\begin{proof}
We compute the $D$-component of $e_0\Psi$ at the specialization 
(\ref{def-icond}). 
We have three cases for the leftmost arrow $a$ of a diagram $D$:
1) $a$ is a down arrow, that is, $N_{\uparrow}=0$ 2) $a$ is an up arrow, and 
3) $a$ is a down arrow forming an arc.

\paragraph{\bf Case 1}
Let $D$ be a diagram depicted as 
\begin{eqnarray*}
\tikzpic{-0.5}{
\draw(0,0)node[anchor=north]
{$\underbrace{\downarrow\cdots\downarrow}_{n_1}$};
\draw(0.8,-0.2)..controls(0.8,-0.8)and(1.6,-0.8)..(1.6,-0.2);
\draw(1.2,-0.8)node[anchor=north]{$m_1$};
\draw(2.3,0)node[anchor=north]{$\downarrow\cdots\downarrow$};
\draw(3.1,-0.2)..controls(3.1,-0.8)and(3.7,-0.8)..(3.7,-0.2);
\draw(3.4,-0.8)node[anchor=north]{$m_I$};
\draw(4.4,0)node[anchor=north]
{$\underbrace{\downarrow\cdots\downarrow}_{n_{I+1}}$};
}.
\end{eqnarray*}
We have two cases for $D$: a) $n_1=1$, and b) $n_2\ge2$.
We consider only the Case a since one can apply a similar 
argument to Case b.

\paragraph{Case 1-a}
We have four types of contributions to the 
$D$-component of $e_{0}\Psi$.
Since we have $e_0(D)=-Q_{0}^{-1}D+\ldots$, the contribution 
to the $D$-component of $e_{0}\Psi$ is $-Q_{0}^{-1}\Psi_{D}$. 

Let $D'$ be a diagram obtained from $D$ by reversing the leftmost 
down arrow to an up arrow.
We have $e_0(D)=D+\ldots$. 
The contribution is $q^{|S|}Q[|S|+N_{\downarrow}]\Psi_{D}$.

We denote by $D'$ a diagram obtained from $D$ by connecting the 
first and the second (from left) up arrows via an arc.
We have $e_{0}(D)=-q^{-1}D+\cdots$.
The contribution is 
\begin{eqnarray*}
-q^{|S|-1}Q\frac{[|S|+N_{\downarrow}-m_1-1]}{[m_1+1]}\Psi_{D}.
\end{eqnarray*}

Let $D'$ be a diagram obtained from $D$ by changing 
the outer arc of size $m_1$ to two up arrows and 
by connecting the first and the second (from left) 
arrows via an arc.
We have $e_0(D)=-q^{-1}D+\cdots$.
The contribution is given by 
\begin{eqnarray*}
-q^{|S|-1}Q\frac{[m_1][|S|+N_{\downarrow}]}{[m_1+1]}.
\end{eqnarray*}
The sum of four contributions above is 
\begin{eqnarray*}
Q_{0}^{-1}(q^{N-1}QQ_{0}-1)\Psi_{D},
\end{eqnarray*}
which implies the $D$-component of $e_0\Psi$ is zero at 
the specialization (\ref{def-icond}).

\paragraph{\bf Case 2}
Let $D$ be a diagram depicted as Eqn.(\ref{Diagram0}).
Set $N_{\uparrow}:=\sum_{i=1}^{I+1}n_{i}$, 
$N_{\downarrow}:=\sum_{i=1}^{J+1}n'_{i}$,  
$M:=\sum_{i=1}^{I}m_i$, $M':=\sum_{i=1}^{J}m'_i$, 
$d:=|S|+N_{\uparrow}$ and $v_{i}:=\sum_{j=1}^{i}(n_j+m_j)$. 
We will prove Proposition in the case of $n_1\ge2$ since one 
can apply a similar argument to the case of $n_1=1$.

We have ten types of contributions to the $D$-component of 
$e_0\Psi$ as follows:
\begin{enumerate}[a)]
\item 
Since $e_0(D)=-Q_{0}D+\cdots$, we have a contribution from $D$ itself, 
that is, $-Q_{0}\Psi_{D}$.

\item
Let $D'$ be a diagram obtained from $D$ by connecting the first and 
the second (from left) up arrows via an arc.
We have $e_0(D')=D+\cdots$. 
The contribution is 
\begin{eqnarray}
\label{e0APsi2}
q^{-(d-1)}Q^{-1}\frac{[N_{\uparrow}+M]}{[N_{\uparrow}+N_{\downarrow}+M+M']}
\prod_{i=1}^{I}\frac{[n_i+v_{i-1}]}{[v_{i}]}\Psi_{D}.
\end{eqnarray}

\item
Let $D'$ be a diagram obtained from $D$ by connecting the first and second 
(from left) up arrows to an arc and by flipping the $N_{\downarrow}$-th 
(from right) down arrow to an up arrow.
We have $e_{0}(D')=q^{-N_{\uparrow}}(Q_{0}-Q_{0}^{-1})+\cdots$.
The contribution is given by 
\begin{eqnarray}
\label{e0APsi3}
q^{-N_{\uparrow}}(Q_{0}-Q_{0}^{-1})
\frac{[N_{\downarrow}+M']}{[N_{\uparrow}+N_{\downarrow}+M+M']}
\prod_{i=1}^{I}\frac{[n_i+v_{i-1}]}{[v_{i}]}\Psi_{D}.
\end{eqnarray}

\item
Let $D'$ be a diagram obtained from $D$ by connecting the first and second 
(from left) up arrows to an arc and by flipping the $N_{\downarrow}$-th 
and the $(N_{\downarrow}-1)$ (from right) down arrows to two up arrows.
We have $e_{0}(D')=-q^{-2N_{\uparrow}-1}D+\cdots$. 
The contribution is 
\begin{eqnarray}
\label{e0APsi4}
-q^{d-2N_{\uparrow}-1}Q
\frac{[N_{\downarrow}+M'][N_{\downarrow}+M'-1]}
{[N_{\uparrow}+M+1][d+N_{\downarrow}]}
\prod_{i=1}^{I}\frac{[n_i+v_{i-1}]}{[v_{i}]}\Psi_{D}.
\end{eqnarray}

\item
Let $D'$ be a diagram obtained from $D$ by changing the outer
arc of size $m_{i}$ to two up arrows. 
We have $e_0(D')=q^{-\sum_{j=1}^{i}n_{j}}D+\cdots$.
The contribution is 
\begin{multline}
\label{e0APsi5}
\sum_{i=1}^{I}q^{d-\sum_{j=1}^{i}n_{j}}Q
\frac{[m_i]}{[1+n_{i}+v_{i-1}]}
\frac{[d+N_{\downarrow}+1]}{[N_{\uparrow}+M+1]}
\prod_{i=1}^{I}\frac{[1+v_{i}]}{[1+n_{j}+v_{j-1}]}\Psi_{D} \\
=q^{d}Q\Psi_{D}\frac{[d+N_{\downarrow}+1]}{[N_{\uparrow}+M+1]}
\left(q\prod_{j=1}^{I}\frac{[1+v_{j}]}{[1+n_j+v_{j-1}]}
-q^{1+\sum_{i=1}^{I}m_{i}}\right),
\end{multline}
where we have used Lemma~\ref{lemma-app-16}.

\item 
Let $D'$ be a diagram obtained from $D$ by changing the 
$N_{\downarrow}$-th (from right) down arrow to an up 
arrow.
We have $e_{0}(D')=q^{-N_{\uparrow}}D+\cdots$.
The contribution is 
\begin{eqnarray}
\label{e0APsi6}
q^{d-N_{\uparrow}}Q\frac{[N_{\downarrow}+M']}{[N_{\uparrow}+M+1]}\Psi_{D}.
\end{eqnarray}

\item
Let $D'$ be a diagram obtained from $D$ by connecting the first and 
the second (from left) up arrows via an arc and flipping the 
$N_{\downarrow}$-th (from right) down arrow to an up arrow and by 
changing the outer arc of size $m_i$ to two up arrows.
We have $e_{0}(D')=-(1+q^{2})q^{-N_{\uparrow}-2-\sum_{j=1}^{i}n_{j}}D+\cdots$.
The contribution is given by 
\begin{multline}
\label{e0APsi7}
-(1+q^{2})q^{d-N_{\uparrow}-2}Q\Psi_{D}
\sum_{i=1}^{I}q^{-\sum_{j=1}^{i}n_{j}}
\frac{[m_i][N_{\downarrow}+M']}{[v_{i}][N_{\uparrow}+M+1]}
\prod_{j=1}^{i-1}\frac{[n_j+v_{j-1}]}{[v_{j}]} \\
=-(1+q^{2})q^{d-N_{\uparrow}-2}Q\Psi_{D}
\frac{[N_{\downarrow}+M']}{[N_{\uparrow}+M+1]}
\left(1-q^{\sum_{i=1}^{I}m_{i}}\prod_{i=1}^{I}\frac{[n_i+v_{i-1}]}{[v_{i}]}\right),
\end{multline}
where we have used Lemma~\ref{lemma-app-15}.

\item 
Let $D'$ be a diagram obtained from $D$ by connecting the first and the 
second (from left) up arrows via an arc and changing the outer arcs of 
size $m_i$ and $m_{j}$ ($i<j$) to four up arrows.
We have $e_{0}(D')=-(1+q^{-2})q^{-\sum_{k=1}^{i}n_i-\sum_{k=1}^{j}n_j}D+\cdots$. 
The contribution is given by 
\begin{multline}
\label{e0APsi8}
-(1+q^{-2})q^{d}Q\Psi_{D}\frac{[d+N_{\downarrow}+1]}{[N_{\uparrow}+M+1]}
\sum_{j=2}^{I}q^{-\sum_{k=1}^{j}n_k}\frac{[m_j]}{[1+n_j+v_{j-1}]}
\prod_{l=j+1}^{I}\frac{[1+v_{l}]}{[1+n_{l}+v_{l-1}]} \\
\times \sum_{i=1}^{j-1}q^{-\sum_{k=1}^{i}n_k}\frac{[m_i]}{[v_{i}]}
\prod_{l=1}^{i-1}\frac{[n_l+v_{l-1}]}{[v_l]} \\
=-(1+q^{-2})q^{d}Q\Psi_{D}\frac{[d+N_{\downarrow}+1]}{[N_{\uparrow}+M+1]}
\left\{q\prod_{j=1}^{I}\frac{[1+v_{j}]}{[1+n_j+v_{j-1}]}
-q^{1+\sum_{i=1}^{I}m_{i}}
\right. \\
\left.
-\sum_{j=1}^{I}q^{-\sum_{k=1}^{j}n_k+\sum_{k=1}^{j-1}m_k}\frac{[m_j]}{[1+n_j+v_{j-1}]}
\prod_{l=j+1}^{I}\frac{[1+v_{l}]}{[1+n_{l}+v_{l-1}]}
\prod_{l=1}^{i-1}\frac{[n_l+v_{l-1}]}{[v_{l}]}\right\}.
\end{multline}
where we have used Lemma~\ref{lemma-app-15} and Lemma~\ref{lemma-app-16}.

\item
Let $D''$ be a diagram obtained from $D$ by connecting the first and the second 
(from left) up arrows via an arc and by changing the outer arc of size $m_i$ 
to two up arrows. 
Suppose that $\tilde{m}_{i,j}$, $1\le j\le r$ be the size of outer arcs of 
$D''$ which are inside of the outer arc of size $m_i$ in $D$.
Let $D'$ be a diagram obtained from $D''$ by changing the outer arc of size 
$\tilde{m}_{i,j}$ to two up arrows.
We have $e_0(D')=-q^{-1-2\sum_{k=1}^{i}n_k}D+\cdots$.
The contribution is given by 
\begin{multline}
\label{e0APsi9}
-q^{d-1}Q\Psi_{D}\frac{[d+N_{\downarrow}+1]}{[N_{\uparrow}+M+1]}
\sum_{i=1}^{I}q^{-2\sum_{k=1}^{i}n_k}[m_i]
\prod_{j=1}^{i-1}\frac{[n_j+v_{j-1}]}{[v_{j}]}
\prod_{j=i+1}^{I}\frac{[1+v_{j}]}{[1+n_j+v_{j-1}]} \\
\times\sum_{p=1}^{r}\frac{[\tilde{m}_{i,p}]}
{[1+n_p+v_{p-1}+\sum_{k=1}^{p-1}\tilde{m}_{i,k}][1+n_p+v_{p-1}+\sum_{k=1}^{p}\tilde{m}_{i,k}]}\\
=-q^{d-1}Q\Psi_{D}\frac{[d+N_{\downarrow}+1]}{[N_{\uparrow}+M+1]}
\sum_{i=1}^{I}\frac{q^{-2\sum_{k=1}^{i}n_k}[m_i][m_i-1]}{[1+n_i+v_{i-1}][v_{i}]}
\prod_{j=1}^{i-1}\frac{[n_j+v_{j-1}]}{[v_{j}]}
\prod_{j=i+1}^{I}\frac{[1+v_{j}]}{[1+n_j+v_{j-1}]},
\end{multline}
where we have used Lemma~\ref{lemma-app2}.

\item
Let $D'$ be a diagram obtained from $D$ by connecting the first and 
the second (from left) up arrows via an arc and by changing the 
outer arc of size $m_i$ to two up arrows.
We have $e_0(D')=q^{-\sum_{k=1}^{i}n_k}(Q_0-Q_{0}^{-1})D+\cdots$.
The contribution is given by 
\begin{multline}
\label{e0APsi10}
(Q_0-Q_{0}^{-1})\Psi_{D}\sum_{i=1}^{I}
q^{-\sum_{k=1}^{i}n_k}\frac{[m_i]}{[v_i]}
\prod_{j=1}^{i-1}\frac{[n_j+v_{j-1}]}{[v_j]} \\
=(Q_0-Q_0^{-1})\Psi_{D}\left(
1-q^{\sum_{i=1}^{I}m_{i}}\prod_{i=1}^{I}\frac{[n_i+v_{i-1}]}{[v_{i}]}
\right)
\end{multline}
\end{enumerate}
Note that one can apply Lemma~\ref{lemma-app-17} to the third term of the right 
hand side of Eqn.(\ref{e0APsi8}) and the right hand side of 
Eqn.(\ref{e0APsi9}).
The sum of contributions from a) to j) is 
\begin{eqnarray*}
(q^{N-1}QQ_{0}-1)
\left(
Q_0^{-1}-(q^{d'}Q_0^{-1}+q^{1-d}Q^{-1})\frac{[N_{\uparrow}+M]}{[d+N_{\downarrow}]}
\prod_{i=1}^{I}\frac{[n_{i}+v_{i-1}]}{[v_{i}]}
\right)\Psi_{D},
\end{eqnarray*}
where $d'=N_{\downarrow}+M+M'$ and $N=N_{\uparrow}+N_{\downarrow}+2M+2M'$. 
The sum becomes zero under the specialization (\ref{def-icond}).

\paragraph{\bf Case 3}
Let $D$ be a diagram depicted as 
\begin{eqnarray}
\raisebox{-0.8\totalheight}{
\begin{tikzpicture}
\draw(0,0)..controls(0,-0.8)and(1,-0.8)..(1,0);
\draw(0.5,-0.8)node{size $m_1$};
\end{tikzpicture}}\! \! 
\underbrace{\uparrow\ldots\uparrow}_{n_1}\! \!
\ldots\uparrow \! \! \! 
\raisebox{-0.8\totalheight}{
\begin{tikzpicture}
\draw(0,0)..controls(0,-0.8)and(1,-0.8)..(1,0);
\draw(0.5,-0.8)node{size $m_I$};
\end{tikzpicture}}\! \! 
\underbrace{\uparrow\ldots\uparrow}_{n_{I}}
\underbrace{\downarrow\ldots\downarrow}_{n'_{J+1}}
\! \!\!
\raisebox{-0.8\totalheight}{
\begin{tikzpicture}
\draw(0,0)..controls(0,-0.8)and(1,-0.8)..(1,0);
\draw(0.5,-0.8)node{size $m'_J$};
\end{tikzpicture}} \! \! \!
\downarrow\ldots\downarrow \!\!\!
\raisebox{-0.8\totalheight}{
\begin{tikzpicture}
\draw(0,0)..controls(0,-0.8)and(1,-0.8)..(1,0);
\draw(0.5,-0.8)node{size $m'_1$};
\end{tikzpicture}} \!\!
\underbrace{\downarrow\ldots\downarrow}_{n'_1}.
\end{eqnarray}
Set $N_{\uparrow}:=\sum_{i=1}^{I}n_i$, $M:=\sum_{i=1}^{I}m_{i}$, 
$N_{\downarrow}:=\sum_{i=1}^{J+1}n'_i$, $M':=\sum_{i=1}^{J}m'_{i}$, 
$d=N_{\uparrow}+M+M'$ and $v_{i}:=\sum_{j=1}^{i}(n_j+m_j)$.
We have five types of contributions to the $D$-component of $e_0\Psi$.
\begin{enumerate}[a)]
\item
Since $e_0(D)=-Q_{0}^{-1}D+\cdots$, we have a contribution from $D$, 
which is $-Q_{0}^{-1}\Psi_{D}$. 

\item
Let $D'$ be a diagram obtained from $D$ by changing the outer arc 
of size $m_1$ to two up arrows.
We have $e_0(D')=D+\cdots$.
The contribution is given by 
\begin{eqnarray*}
q^{d}Q\Psi_{D}\frac{[m_i][d+N_{\downarrow}+1]}{[N_{\uparrow}+M+1]}
\prod_{i=2}^{I}\frac{[1+m_i+v_{i-1}]}{[1+v_{i-1}]}.
\end{eqnarray*}

\item
Let $D'$ be a diagram obtained from $D$ by changing the outer arc 
of size $m_i$ to two up arrows.
We have $e_0(D')=-q^{-\sum_{k=1}^{i-1}-2}D+\cdots$.
The contribution is given by
\begin{multline}
-q^{d-2}Q\Psi_{D}\frac{[d+N_{\downarrow}+1]}{[N_{\uparrow}+M+1]}
\sum_{i=2}^{I}\frac{q^{-\sum_{k=1}^{i-1}n_k}[m_i]}{[1+v_{i-1}]}
\prod_{j=i+1}^{I}\frac{[1+m_j+v_{j-1}]}{[1+v_{j-1}]} \\
=-q^{d-2}Q\Psi_{D}\frac{[d+N_{\downarrow}+1]}{[N_{\uparrow}+M+1]}
\left(
q^{m_1+1}\prod_{i=2}^{I}\frac{[1+m_j+v_{j-1}]}{[1+v_{j-1}]}
-q^{1+\sum_{i=1}^{I}m_{i}}
\right)
\end{multline}
where we have used Lemma~\ref{lemma-app-16}.

\item
Let $D'$ be a diagram obtained from $D$ by connecting the 
first and the second (from left) up arrows via an arc and 
by flipping the $N_{\downarrow}$-th (from right) down arrow 
to an up arrow.
We have $e_{0}(D')=-q^{-N_{\uparrow}-2}D+\cdots$.
The contribution is 
\begin{eqnarray*}
-q^{d-N_{\uparrow}-2}Q\Psi_{D}
\frac{[N_{\downarrow}+M']}{[N_{\uparrow}+M+1]}.
\end{eqnarray*}

\item
Suppose that there are several arcs of depth two inside 
of the outer arc of size $m_1$. 
We denote by $\tilde{m}_{j}$, $1\le j\le r$, their sizes 
from left to right.
Suppose that the $i_1$-th and the $i_2$-th (from left) arrows
form the arc of size $\tilde{m}_{j}$. 
Let $D'$ be a diagram obtained by connecting the first and 
the $i_1$-th arrows via an arc and by putting two up arrows
at the $i_2$-th and the $2m_1$-th sites. 
We have $e_{0}(D')=-q^{-1}D+\cdots$.
The contribution is 
\begin{multline*}
-q^{d-1}Q\Psi_{D}\frac{[d+N_{\downarrow}+1][m_1]}{[N_{\uparrow}+M+1]}
\prod_{i=2}^{I}\frac{[1+m_{i}+v_{i-1}]}{[1+v_{i-1}]}
\sum_{j=1}^{r}\frac{[\tilde{m}_{j}]}
{[1+\sum_{k=1}^{j-1}\tilde{m}_{k}][1+\sum_{k=1}^{j}\tilde{m}_{k}]}\\
=-q^{d-1}Q\Psi_{D}\frac{[d+N_{\downarrow}+1][m_1-1]}{[N_{\uparrow}+M+1]}
\prod_{i=2}^{I}\frac{[1+m_{i}+v_{i-1}]}{[1+v_{i-1}]}
\end{multline*}
where we have used Lemma~\ref{lemma-app2} and $\sum_{j=1}^{r}\tilde{m}_{j}=m_1-1$.
\end{enumerate}
By a straightforward calculation, the sum of the contributions form a) to e)
is given by $(-Q_{0}^{-1}+Qq^{N-1})\Psi_{D}$ which vanishes 
at the specialization (\ref{def-icond}).
\end{proof}

\subsection{Type BI, BII, BIII and standard basis}

\begin{theorem}
\label{eonPsi-all}
For type BI, BII, BIII and standard basis, we have 
\begin{eqnarray}
\label{eonPsi-all-1}
e_{i}\Psi&=&0,\qquad 1\le i\le N, \\
\label{eonPsi-all-2}
e_0\Psi&=&0,\qquad \text{at } q^{N-1}QQ_{0}-1=0.
\end{eqnarray}
where $Q=q^{M}$ for type BI.
\end{theorem}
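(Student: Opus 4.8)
The plan is to deduce the statement from the Type A case, i.e.\ from Propositions~\ref{prop-HA}, \ref{prop-eNAPsi} and \ref{prop-e0APsi}. The point is that the vectors $\Psi$ of Definitions~\ref{PsiA}, \ref{PsiBI}, \ref{PsiBII}, \ref{PsiBIII} and \ref{defPsi-SB} all span the same line. Indeed, the one-boundary Temperley--Lieb algebra acts on the fixed space $V_1^{\otimes N}$ of Section~\ref{sec-TL}, the operator $X$ of Theorem~\ref{theorem-commute} is a single operator on that space, and its eigenvalue $[Q;N]$, which coincides with $[N+M]$ when $Q=q^M$, is simple; this simplicity is exactly the multiplicity-one assertion proved for Types A, BI, BII and BIII, and for the standard basis it is the case $i=0$ of the spectrum of $X$, of multiplicity $\genfrac{(}{)}{0pt}{}{N}{0}=1$. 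Consequently the Type A, BII, BIII and standard-basis vectors agree up to a scalar in $V_1^{\otimes N}\otimes\mathbb{C}(q,Q)$, while the Type BI vector is the specialization at $Q=q^{M}$ of this common line (the simplicity of the top eigenvalue survives the specialization, $[N+M]$ being distinct from all $[N+M-2i]$ with $1\le i\le N$). Since $e_i\Psi=0$ and $e_0\Psi=0$ are properties of the line through $\Psi$, identities (\ref{eonPsi-all-1}) and (\ref{eonPsi-all-2}) follow from Propositions~\ref{prop-HA}, \ref{prop-eNAPsi} and \ref{prop-e0APsi}: scale these identities by the scalar relating the bases, and for Type BI specialize the polynomial identity underlying (\ref{def-icond}) at $Q=q^{M}$.

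I would also give a self-contained proof of (\ref{eonPsi-all-1}) that bypasses Type A. By Theorem~\ref{theorem-commute}, $[e_i,X]=0$ for $1\le i\le N$, so $e_i\Psi$ is again a $[Q;N]$-eigenvector of $X$ and, by simplicity, $e_i\Psi=c_i\Psi$ for a scalar $c_i$. For $1\le i\le N-1$ the relation $e_i^2=-(q+q^{-1})e_i$ forces $c_i\in\{0,-(q+q^{-1})\}$, and $c_i\ne0$ would place $\Psi$ in $e_i(V_1^{\otimes N})=V_1^{\otimes(i-1)}\otimes\mathbb{C}\bigl(v_{-1}\otimes v_1-q^{-1}v_1\otimes v_{-1}\bigr)\otimes V_1^{\otimes(N-i-1)}$, no element of which has a nonzero $v_1^{\otimes N}$-coefficient; but $\Psi$ does, since the transition matrix from Kazhdan--Lusztig to standard bases is unitriangular, so that the $v_1^{\otimes N}$-coefficient of $\Psi$ equals $\Psi_{\uparrow\cdots\uparrow}$, which is nonzero by inspection of Definitions~\ref{PsiA}--\ref{PsiBIII}. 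Hence $c_i=0$. For $i=N$ I would instead use the functional $\phi\colon av_1+bv_{-1}\mapsto Qa+b$ on the last factor: $\mathrm{id}^{\otimes(N-1)}\otimes\phi$ annihilates $e_N(V_1^{\otimes N})=V_1^{\otimes(N-1)}\otimes\mathbb{C}(v_1-Qv_{-1})$, hence kills $c_N\Psi$, while the $v_1^{\otimes(N-1)}$-component of $(\mathrm{id}^{\otimes(N-1)}\otimes\phi)(\Psi)$ equals $Q$ times the $v_1^{\otimes N}$-coefficient of $\Psi$ plus its $v_1^{\otimes(N-1)}\otimes v_{-1}$-coefficient, which up to normalization is $q^{N(N-1)/2}Q^{N-1}(Q^2+1)\ne0$, so $c_N=0$. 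The standard-basis cases of (\ref{eonPsi-all-1}) are a one-line check from the matrices of Section~\ref{sec-TL} together with $\Psi^0_\epsilon=q^{d_\epsilon}Q^{d'}$: for $e_i$, moving a plus across one site changes the exponent of $q$ by $\pm1$, and the matrix entries $-q^{-1},1,1,-q$ produce the cancellation; similarly at $e_N$ using $\Psi^0_{\cdots+}=Q\,\Psi^0_{\cdots-}$.

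Only $e_0$ admits no such shortcut, so there I would argue exactly as in Proposition~\ref{prop-e0APsi}. Fix a diagram $D$; split according to its leftmost arrow (an up arrow, giving $N_\uparrow$ arrows; a decorated unpaired down arrow of the flavour relevant to the type; or a down arrow forming an arc), as in the lists of Section~\ref{sec-reps}; enumerate the diagrams $D'$ with $D$ occurring in $e_0(D')$ using the explicit action of $e_0$ on Type BI, BII and BIII diagrams; substitute the closed forms for $\Psi_{D'}$ from Definitions~\ref{PsiBI}, \ref{PsiBII} and \ref{PsiBIII}; and collapse the resulting sums with the combinatorial lemmas of Section~\ref{sec-appendix}. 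In each case the total coefficient of $D$ in $e_0\Psi$ comes out proportional to $q^{N-1}QQ_0-1$, hence vanishes on (\ref{def-icond}). The standard-basis case is again immediate: $e_0$ acts on the first factor by $\left(\begin{smallmatrix}-Q_0 & 1\\ 1 & -Q_0^{-1}\end{smallmatrix}\right)$ and $\Psi^0_{+\,\epsilon_2\cdots\epsilon_N}=q^{N-1}Q\,\Psi^0_{-\,\epsilon_2\cdots\epsilon_N}$, so every component of $e_0\Psi^0$ is a multiple of $q^{N-1}QQ_0-1$. I expect the genuine labour to be the $e_0$ bookkeeping for Type BII, whose local action mixes $Q$ and $Q_0$ in the largest number of terms; Types BI and BIII are lighter variants of the same computation.
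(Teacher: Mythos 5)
Your main argument is exactly the paper's proof: since $\Psi^{\mathrm{Y}}$ is characterized (up to scalar) as the eigenvector of $X$ for the simple eigenvalue $[Q;N]$ (resp.\ $[N+M]$ for BI), all five vectors span the same line in $V_1^{\otimes N}$, and the identities $e_i\Psi=0$ and $e_0\Psi=0$ transfer from Type A (Propositions~\ref{prop-HA}, \ref{prop-eNAPsi}, \ref{prop-e0APsi}) through the transition matrices. That part is correct and coincides with the paper, so nothing further is needed.

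One caveat on your supplementary self-contained argument for (\ref{eonPsi-all-1}): the step ``the transition matrix is unitriangular, so the $v_1^{\otimes N}$-coefficient of $\Psi$ equals $\Psi_{\uparrow\cdots\uparrow}$'' is not right for Types BI, BII, BIII, because the building blocks other than plain arcs (dashed arcs $v_{-1}\otimes v_{-1}-q^{-1}v_1\otimes v_1$, and the decorated down arrows $v_{-1}-cv_1$) do carry $v_1$-components, so many diagrams contribute to the $v_1^{\otimes N}$-coefficient; in the order used here $+\cdots+$ is the \emph{maximal} string, so unitriangularity controls the opposite extreme. The conclusion you need (nonvanishing of that coefficient) still holds, most easily from the standard-basis expression $\Psi^{0}=\bigotimes_i(v_{-1}+q^{N-i}Qv_1)$ once the lines are identified, so the argument is repairable but not independent of the transfer step as stated.
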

\begin{proof}
Let $\Psi^{\mathrm{Y}}$ be the eigenfunction $\Psi$ for 
type Y. 
Let $T^{\mathrm{Z}\leftarrow\mathrm{Y}}$ be the transition matrix from 
the Kazhdan--Lusztig basis of type Y to type Z. 
Since we have proved that $X\Psi^{\mathrm{Y}}=[Q;N]\Psi^{\mathrm{Y}}$ 
($Q=q^{M}$ for type BI) and the multiplicity is one in Section~\ref{sec-eigenX}, 
we have $\Psi^{\mathrm{Z}}=T^{\mathrm{Z\leftarrow Y}}\Psi^{\mathrm{Y}}$. 
From Proposition~\ref{prop-HA} and Proposition~\ref{prop-eNAPsi}, 
we have $e_i\Psi^{\mathrm{A}}=0$ for $1\le i\le N$. 
Multiplying $T^{\mathrm{Y\leftarrow A}}$ from left and plugging 
$\Psi^{A}=T^{\mathrm{A\leftarrow Y}}\Psi^{Y}$, we obtain 
$T^{\mathrm{Y\leftarrow A}}e_iT^{\mathrm{A\leftarrow Y}}\Psi^{Y}=0$. 
Since $T^{\mathrm{Y\leftarrow A}}e_iT^{\mathrm{A\leftarrow Y}}$ is the matrix 
expression of $e_i$ on the Kazhdan--Lusztig basis of type Y, 
we have $e_i\Psi^{Y}=0$.
Similarly, from Proposition~\ref{prop-e0APsi}, we have $e_0\Psi^{Y}=0$ at 
the specialization (\ref{def-icond}).

In the case of the standard basis, we define the transition matrix from 
a standard basis to a Kazhdan--Lusztig basis of type Z by 
$T^{\mathrm{Z\leftarrow0}}$. 
By a similar argument to the case of Kazhdan--Lusztig bases, we obtain
Eqns.(\ref{eonPsi-all-1}) and (\ref{eonPsi-all-2}). 
\end{proof}

\begin{remark}
\label{remark-KL}
In the proof of Theorem~\ref{eonPsi-all}, we do not need an explicit 
expression of the transition matrix. 
The entries of the transition matrix from the Kazhdan--Lusztig basis 
to the standard basis are nothing but Kazhdan--Lusztig polynomials.
Therefore, the relation $\Psi^{0}=T^{0\leftarrow\mathrm{Y}}\Psi^{Y}$ gives
highly non-trivial relations regarding Kazhdan--Lusztig polynomials.
\end{remark}

\subsection{\texorpdfstring{$\Psi$ as the ground state of the Hamiltonian}
{Psi as the ground state of the Hamiltonian}}
Let $A$ be a non-negative $N\times N$ square matrix. 
The matrix $A$ is called {\it irreducible} if for any 
$i,j$ there is a $k=k(i,j)$ such that $(A^{k})_{ij}>0$.
Let $\rho(A)$ denote the spectral radius of $A$.
Then, Perron--Frobenius Theorem for a non-negative and irreducible
matrix $A$ states that the eigenspace 
associated with $\rho(A)$ is one-dimensional, there exists
a unique eigenvector $\mathbf{x}=(x_1,\ldots,x_N)^{T}$
such that the entries of $\mathbf{x}$ are positive and 
$A\mathbf{x}=\rho(A)\mathbf{x}$.
For a general non-negative matrix $A$, we have 
\begin{lemma}[Lemma 6.2 in \cite{NacNgSta12}]
\label{lemma-PF}
Suppose that $\mathbf{x}$ be a positive vector such 
that $A\mathbf{x}=\lambda\mathbf{x}$ with some scalar $\lambda$.
Then, we have $\rho(A)=\lambda$.
\end{lemma}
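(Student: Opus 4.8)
The plan is to establish the two inequalities $\lambda\le\rho(A)$ and $\rho(A)\le\lambda$ separately, the second being the substantive one. First I would record that $\lambda$ is nonnegative: since every entry of $A$ is nonnegative and every entry of $\mathbf{x}$ is positive, the vector $A\mathbf{x}$ has nonnegative entries, so from $A\mathbf{x}=\lambda\mathbf{x}$ and $x_i>0$ we get $\lambda=(A\mathbf{x})_i/x_i\ge0$. Moreover $\lambda$ is by hypothesis an eigenvalue of $A$ with eigenvector $\mathbf{x}\ne0$, whence $\lambda=|\lambda|\le\rho(A)$ directly from the definition of the spectral radius as the maximal modulus of an eigenvalue. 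This disposes of the easy inequality.

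The heart of the argument is the reverse inequality $\rho(A)\le\lambda$, for which I would show that \emph{every} eigenvalue $\mu$ of $A$ satisfies $|\mu|\le\lambda$. I take an arbitrary (possibly complex) eigenvalue $\mu$ with a nonzero eigenvector $\mathbf{y}=(y_1,\ldots,y_N)^{T}$ and write $|\mathbf{y}|:=(|y_1|,\ldots,|y_N|)^{T}$. Applying the triangle inequality to each row of $A\mathbf{y}=\mu\mathbf{y}$ and using $A_{ij}\ge0$ gives, entrywise,
\begin{eqnarray*}
|\mu|\,|y_i|=\Big|\sum_{j}A_{ij}y_j\Big|\le\sum_{j}A_{ij}|y_j|=(A|\mathbf{y}|)_i,
\end{eqnarray*}
that is, $|\mu|\,|\mathbf{y}|\le A|\mathbf{y}|$ componentwise.

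Next I would compare $|\mathbf{y}|$ against the positive vector $\mathbf{x}$. Since all $x_i>0$, I set $t:=\max_i |y_i|/x_i$, which is attained at some index $i_0$ and is strictly positive because $\mathbf{y}\ne0$. By construction $|\mathbf{y}|\le t\mathbf{x}$ entrywise, so monotonicity of multiplication by the nonnegative matrix $A$ yields $A|\mathbf{y}|\le t\,A\mathbf{x}=t\lambda\mathbf{x}$. Evaluating the previous display at $i=i_0$ and using $|y_{i_0}|=t x_{i_0}$ then gives
\begin{eqnarray*}
|\mu|\,t\,x_{i_0}=|\mu|\,|y_{i_0}|\le (A|\mathbf{y}|)_{i_0}\le t\lambda\,x_{i_0}.
\end{eqnarray*}
Dividing by $t x_{i_0}>0$ leaves $|\mu|\le\lambda$. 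As $\mu$ was arbitrary, $\rho(A)=\max_\mu|\mu|\le\lambda$, and combined with the first inequality this gives $\rho(A)=\lambda$.

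I do not anticipate a serious obstacle, since the statement is a standard Perron--Frobenius-type comparison; the only points requiring care are checking $\lambda\ge0$ so that the easy inequality reads $\lambda=|\lambda|$, verifying $t>0$ so the final division is legitimate, and justifying the monotonicity step $|\mathbf{y}|\le t\mathbf{x}\Rightarrow A|\mathbf{y}|\le t\,A\mathbf{x}$, which is precisely where nonnegativity of $A$ enters. I would also emphasize that, in contrast to the uniqueness and positivity assertions of the full Perron--Frobenius theorem quoted above, \emph{irreducibility of $A$ is not required} for this comparison, so the lemma applies to the (possibly reducible) nonnegative matrix representations of the Hamiltonian arising in this paper.
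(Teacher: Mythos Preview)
Your argument is correct and is the standard comparison proof of this Perron--Frobenius-type lemma. The paper does not prove this statement at all; it simply quotes it as Lemma~6.2 of \cite{NacNgSta12}, so there is no ``paper's own proof'' to compare against, and your self-contained derivation is a welcome addition.
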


Below, we set $q,Q>0$ and consider the Kazhdan--Lusztig bases 
for type BI and BIII.
From the explicit expression of the action of Temperley--Lieb 
algebra on the Kazhdan--Lusztig bases (see Section~\ref{sec-reps}),
the matrix $H'=-H^{1B}+t\mathbf{1}$ is a non-negative matrix for $t$ sufficiently
large.
Since $\Psi$ is a positive eigenvector of $H^{1B}$ with the eigenvalue zero, 
$\Psi$ is also the positive eigenvector of $H'$ with the eigenvalue $t$.
We apply Lemma~\ref{lemma-PF} to $H'$ and $\Psi$.
Therefore, the vector $\Psi$ is the eigenvector of $H'$ with the largest 
eigenvalue, which implies that $\Psi$ is the ground state of $H$.

In the case of type A and BII, the Hamiltonian $-H^{1B}+t\mathbf{1}$ can not
be a non-negative matrix. 
However, the bases of type A and BII can be obtained from type BIII
by the change of bases.  
Note that the spectrum of $H$ is invariant under the change of bases.
Therefore, the vector $\Psi$ for type A or BII is also the ground 
state of the Hamiltonian $H^{1B}$.

Recall that the two-boundary Temperley--Lieb Hamiltonian is given by 
$H^{1B}-a_{0}e_{0}$.
At $a_{0}=0$, $H^{2B}$ becomes $H^{1B}$.
The eigenfunction $\Psi$ is an eigenvector of $H^{2B}$ 
with an eigenvalue zero for arbitrary $a_{0}$.
We can regard the Hamiltonian $H^{2B}$ as a perturbation of $H^{1B}$.
Since $\Psi$ is the ground state of $H^{1B}$, $\Psi$ is also 
the ground state of $H^{2B}$ for $a_{0}$ sufficiently small.

\section{Integral structure and conjectures}
\label{sec-sum}

\subsection{Correlation functions}
Let $\Psi=|\Psi\rangle$ be the ground state of $H^{2B}$ and 
$\mathcal{O}$ be an observable which acts on $V_{1}^{\otimes N}$.
A correlation function is defined by
\begin{eqnarray*}
\langle\mathcal{O}\rangle:=
\frac{\langle\Psi|\mathcal{O}|\Psi\rangle}{\langle\Psi|\Psi\rangle}.
\end{eqnarray*}
We consider the case where $\mathcal{O}$ is a product 
of $\alpha_{i}:=(\sigma_{z}+1)/2$ and $\sigma^{\pm}_{i}$.
Let $I, J$ be the subsets of $\{1,2,\ldots N\}$ satisfying 
$I\cap J=\emptyset$.
In general, an observable is written as 
\begin{eqnarray*}
\mathcal{O}_{I,J}:=\prod_{i\in I}\alpha_{i}\prod_{j\in J}\beta_{j},
\end{eqnarray*}
where $\beta_{j}$ is $\sigma^{+}_{j}$ or $\sigma^{-}_{j}$.
We compute a correlation function in the standard basis.
From Definition~\ref{defPsi-SB}, $|\Psi^{0}\rangle$ is written 
as 
\begin{eqnarray}
|\Psi^{0}\rangle=w_{1}\otimes w_{2}\otimes \ldots\otimes w_{N},
\end{eqnarray}
where $w_{i}:=v_{-1}+q^{N-i}Qv_{1}$.
We have 
\begin{eqnarray*}
&&\alpha_i w_{i}=q^{N-i}Qv_{1}, \\
&&\sigma^{+}_{i}w_{i}=v_1, \quad \sigma_{i}^{-}w_i=q^{N-i}Qv_{-1}.
\end{eqnarray*}
Therefore, we obtain 
\begin{eqnarray}
\langle\mathcal{O}_{I,J}\rangle
=
\prod_{i\in I}\frac{q^{2(N-i)}Q^{2}}{1+q^{2(N-i)}Q^{2}}
\prod_{j\in J}\frac{q^{N-j}Q}{1+q^{2(N-j)}Q^{2}}.
\end{eqnarray}

\subsection{Positive integral structure}

\subsubsection{Type A} 
We have
\begin{lemma}
All components of $\Psi$ belong to $\mathbb{N}[q,q^{-1},Q]$. 
\end{lemma}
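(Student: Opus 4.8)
The plan is to read off the claim directly from the closed-form expression for $\Psi_D$ given in Definition~\ref{PsiA}, namely $\Psi_D = N_1\cdot N_2\cdot N_3\cdot N_4$, and to verify that each of the four factors lies in $\mathbb{N}[q,q^{-1},Q]$, or at least that the product does. First I would observe that $N_1 = q^{d(d-1)/2}Q^d$ with $d=|S_\uparrow|+|S|\ge 0$ is manifestly a monomial in $\mathbb{N}[q,Q]$, so it causes no trouble. The real content is in $N_2\cdot N_3\cdot N_4 = \prod_{A\in S\cup S_\downarrow}[N_A]\cdot \prod_{B\in S}[m_B]^{-1}\cdot \prod_{C\in S_\downarrow}[N_C]^{-1}$. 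Since each quantum integer $[n]$ is a symmetric Laurent polynomial in $q$ with nonnegative integer coefficients, $N_2\cdot N_3\cdot N_4$ is a priori only in $\mathbb{Q}(q)$; the point is to show all denominators cancel.

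The key step is therefore a cancellation argument organized by the combinatorial structure of the diagram $D$. Using the explicit product formula~(\ref{PsiA2}), one sees $\Psi_D$ factors into a ``ratio of quantum factorials'' part $\prod_{i=1}^{I}\frac{[\sum_{j\le i}(n_j+m_j)]!}{[n_i+\sum_{j<i}(n_j+m_j)]!}\cdot\frac{[N_\uparrow+M+N_\downarrow+M']!}{[N_\uparrow+M]!}\cdot\prod_{i=1}^{J+1}\frac{[\sum_{j<i}(n'_j+m'_j)]!}{[n'_i+\sum_{j<i}(n'_j+m'_j)]!}$, times $\prod_{A\in S}[m_A]^{-1}$. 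Each factor of the first type is a $q$-multinomial coefficient (a ratio $[a+b]!/([a]![b]!)$ after telescoping, or more precisely a product of $q$-binomials once one unfolds the nested sums), hence lies in $\mathbb{N}[q,q^{-1}]$ by the standard fact that Gaussian binomial coefficients have nonnegative coefficients. The surviving $\prod_{A\in S}[m_A]^{-1}$ must then be absorbed: I would argue that for each arc $A$ of size $m_A$, there is a corresponding quantum-binomial factor in the factorial part whose ``$[m_A]!$'' in the denominator is exactly one $[m_A]$ short of the matching numerator, so that the leftover $[m_A]^{-1}$ precisely completes a genuine $q$-binomial coefficient. Concretely, an arc of size $m_A$ sitting inside an outer arc contributes, via the nested enumeration of $N_A$, a telescoping chain $\prod [\cdots]/\prod[\cdots]$ whose reduced form is $\binom{\,\cdot\,}{m_A}_q\cdot[m_A]^{-1}\cdot(\text{integral})$, and combined with $N_3,N_4$ along the down-arrow chain these reassemble into products of honest Gaussian binomials.

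The cleanest way to make this rigorous is probably an induction mirroring the proof of Theorem~\ref{thrm-PsiA}: peel off the leftmost building block of $D$ (an unpaired up arrow, an unpaired down arrow, or an outermost arc), show $\Psi_D$ changes by a factor that is visibly a ratio of quantum integers assembling into a $q$-binomial, and invoke the inductive hypothesis on the shorter diagram. For the leftmost up arrow, $\Psi_D/\Psi_{D'}$ is a monomial in $q,Q$; for an outer arc of size $m$ enclosing a sub-diagram, the ratio involves $[m]^{-1}$ times quantum integers coming from the enlarged indices $N_A$, and one checks these form $\frac{1}{[m]}\cdot\frac{[\,\cdot\,]!}{[\,\cdot-m\,]!}=\binom{\,\cdot\,}{m}_q\cdot[m-1]!/[\cdots]$ — i.e. an integral quantity. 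I expect the main obstacle to be exactly this bookkeeping: matching each $[m_A]^{-1}$ in $N_3$ (and each $[N_C]^{-1}$ in $N_4$) against the correct sub-factorial in $N_2$ so that every ratio reduces to a product of Gaussian binomials with no residual denominator, and keeping track of which arcs are ``outer'' versus nested. Once the denominators are shown to cancel pointwise, positivity of the coefficients is automatic from positivity of $q$-binomials and the monomial $N_1$, and there are no negative powers of $Q$ anywhere, giving membership in $\mathbb{N}[q,q^{-1},Q]$ as claimed.
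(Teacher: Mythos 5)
Your overall route --- start from the closed form (\ref{PsiA2}), note that $N_1$ is a monomial in $q,Q$, and show the remaining factors reassemble into $q$-binomial coefficients --- is exactly the paper's; the paper simply asserts the rewriting
$\Psi_D = q^{d(d-1)/2}Q^d\prod_i\genfrac{[}{]}{0pt}{}{\sum_{j\le i}(n_j+m_j)}{m_i}\prod_i\genfrac{[}{]}{0pt}{}{\sum_{j\le i}(n'_j+m'_j)}{m'_i}\genfrac{[}{]}{0pt}{}{N_{\uparrow}+M+N_{\downarrow}+M'}{N_{\uparrow}+M}$ and stops. However, the step you yourself flag as ``the main obstacle'' is a genuine gap in your write-up, and the mechanism you sketch for it is not quite right: $\prod_{A\in S}[m_A]^{-1}$ runs over \emph{all} arcs, including nested ones, whereas telescoping the factorial ratios in (\ref{PsiA2}) produces only one $[m_i]!$ per \emph{outer} arc. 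So after extracting the displayed binomials one is left with a residue $\prod_i[m_i]!\big/\prod_{A\in S}[m_A]$, and it is not true that each $[m_A]^{-1}$ ``precisely completes a genuine $q$-binomial'': for the diagram consisting of a single outer arc of size $3$ containing two side-by-side arcs of size $1$, the residue is $[3]!/([3][1][1])=[2]$, which is not accounted for by any of the binomials you (or the paper) list.

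The missing ingredient is the observation that $m_A$ equals the number of arcs weakly inside $A$, i.e.\ the hook length of $A$ in the nesting forest, so that for each outer arc the local quotient $[m_i]!\big/\prod_{A\subseteq A_i}[m_A]$ is the $q$-hook-length polynomial of a forest. By the Bj\"orner--Wachs $q$-hook-length formula --- equivalently, by the iterated $q$-multinomial factorization obtained by interleaving the subtrees at each node, which is essentially your fallback induction ``peel off the outermost arc'' --- this quotient is a product of $q$-multinomial coefficients and hence lies in $\mathbb{N}[q,q^{-1}]$. Once you state and use that identity, the rest of your argument (positivity of Gaussian binomials, absence of negative powers of $Q$) goes through and the lemma follows; without it, the cancellation you ``expect'' is left unverified at exactly the point where it is least obvious.
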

\begin{proof}
Let $D$ be a diagram depicted as Eqn.(\ref{Diagram0}). 
In the notation used in the proof of Theorem~\ref{thrm-PsiA}. 
The explicit expression (\ref{PsiA2}) is rewritten 
as 
\begin{eqnarray*}
\Psi_{D}=
q^{d(d-1)/2}Q^{d}
\prod_{i=1}^{I}
\genfrac{[}{]}{0pt}{}{\sum_{j=1}^{i}n_{i}+m_{i}}{m_i}
\prod_{i=1}^{J}
\genfrac{[}{]}{0pt}{}{\sum_{j=1}^{i}n'_{i}+m'_{i}}{m'_i}
\genfrac{[}{]}{0pt}{}{N_{\uparrow}+M+N_{\downarrow}+M'}{N_{\uparrow}+M}.
\end{eqnarray*}
Since a quantum binomial belongs to $\mathbb{N}[q,q^{-1}]$, we obtain 
$\Psi_{D}\in\mathbb{N}[q,q^{-1},Q]$.
\end{proof}

\subsubsection{Type BI}
We have 
\begin{prop}
All components of $\Psi$ belong to $\mathbb{N}[q,q^{-1}]$ and invariant 
under $q\rightarrow q^{-1}$.
\end{prop}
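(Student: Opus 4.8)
The plan is to analyze the explicit formula in Definition~\ref{PsiBI} factor by factor, showing that the product of all the pieces $N_6, N_7, \ldots, N_{12}$ together with $\prod_{A\in S}[m_A]^{-1}$ reorganizes into a product of genuine quantum binomial coefficients (and $q$-factorials), each of which lies in $\mathbb{N}[q,q^{-1}]$ and is invariant under $q\mapsto q^{-1}$. The key observation is that, just as in the Type~A case (Lemma above its proof), the apparent denominators $[m_A]^{-1}$, $N_7$, $N_8$, $N_9$, $N_{10}$ all cancel against matching factors in the numerators $N_{11}$, $N_{12}$, and the factorial-type ratios hidden in $N_6$. Because of the constraint $Q=q^M$ for Type~BI, the factors $N_4$, $N_{12}$ involving $q^{N_B}+q^{-N_B}$ collapse to ordinary quantum integers $[2N_B]/[N_B]$ up to powers of $q$, so the whole expression genuinely lives in $\mathbb{Z}[q,q^{-1}]$ before one even checks positivity.

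First I would substitute $Q=q^M$ everywhere and rewrite each $q^{k}+q^{-k}$ as $[2k]/[k]$, turning $N_{12}$ into $\prod_{i=1}^{N_2}[2(i+M-1)] \big/ \big([i+M-1]\prod_{B}[2N_B]/[N_B]\big)$ and similarly for $N_4$-type factors; this makes the telescoping visible. Second, I would group the arc-size denominators $\prod_{A\in S}[m_A]^{-1}$ with $N_7$, $N_8$, $N_9$: each $N_7, N_8, N_9$ has the shape $\prod [x]/[x+m_A]$ for outer arcs $A$, and one shows by the same telescoping identity used in Lemmas~\ref{lemma-app0}--\ref{lemma-app2} that these combine with the relevant part of $N_{11}$ (the product $\prod_{A\in S\cup T\cup U}[N_A]$, which ranges over nested arcs with depth-indexed integers) to yield exact quantum binomials $\genfrac{[}{]}{0pt}{}{\cdot}{\cdot}$, the standard fact being $\prod_{\text{nested arcs}}[N_A]/[m_A] \cdot (\text{outer ratios}) = \prod \genfrac{[}{]}{0pt}{}{\cdot}{\cdot}$. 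Third, $N_{10}$ with its $[d_{4,C}]^{-1}$, $[d_{5,E}]^{-1}$ denominators is absorbed by the $\prod_{A\in T\cup U}[N_A]$ part of $N_{11}$ in the same manner. What remains, $N_6 = [N_5]/[N_4]$, is itself a ratio that either equals $1$ or, combined with the leftover top-degree factorial $\prod_{i=1}^{N_2}[i+M-1]$ from $N_{12}$, completes one more quantum binomial. Finally, the $q\mapsto q^{-1}$ invariance is immediate once $\Psi_D$ is displayed as a product of quantum binomials times a monomial $q^{e}$: each $\genfrac{[}{]}{0pt}{}{n}{m}$ is palindromic, and one checks that the monomial exponent $e$ is forced to be zero by comparing the total $q$-degree of numerator and denominator (equivalently, by noting $\Psi_D\in\mathbb{N}[q]$ at $q\to 0$ gives a finite limit, and symmetry of binomials forces the lowest degree to match the negative of the highest).

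The main obstacle I anticipate is the bookkeeping in step~two and three: the sets $S_R, S_W, S_{W'}, S_L$ and their outer-arc restrictions $S_R^+, S_W^+, S_L^+$, together with the auxiliary integers $h_{A'}$, $d_{1,A,C}, d_{2,A}, d_{3,A}, d_{4,C}, d_{5,E}$, are defined so that the cancellations work exactly, but verifying that the indices line up for \emph{every} configuration of arcs, dashed arcs, the star, and the integer-labeled down arrows — and in particular checking the four cases $\{|V|=M \text{ or not}\} \times \{N_1=0 \text{ or } 1\}$ that govern $N_6, N_8, N_9, N_{10}$ — requires care. A clean way to organize this is to argue exactly as in the proof of Theorem~\ref{thrm-PsiA}: peel off one outer arc at a time (or one down arrow with an integer, or the unpaired down arrow) using the same recursion that the telescoping Lemmas~\ref{lemma-app0}--\ref{lemma-app2} encode, reducing a length-$N$ diagram to a strictly smaller one while preserving the "product of quantum binomials" form; the base case (no arcs, no dashed arcs) is an explicit monomial times a trivial product. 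This reduces the whole statement to the single combinatorial identity that inserting an arc multiplies $\Psi_D$ by $[\,\cdot\,]^{-1}$ times a quantum-binomial-valued factor, which is precisely what the structure of $N_7,N_8,N_9$ was built to guarantee.
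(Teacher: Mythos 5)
Your first claim, bar-invariance, is fine but over-engineered: every factor entering Definition~\ref{PsiBI} is a quantum integer $[n]$, an inverse of one, or a term $q^{k}+q^{-k}$, and each of these is already invariant under $q\rightarrow q^{-1}$, so the invariance of $\Psi_D$ is immediate without first establishing any binomial factorization. Your proposed detour (``the monomial exponent $e$ is forced to be zero by noting $\Psi_D\in\mathbb{N}[q]$ at $q\to0$'') is also circular as stated, since membership in $\mathbb{N}[q,q^{-1}]$ is what you are trying to prove.

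The real problem is the positivity claim. Your entire argument rests on the assertion that $\prod_{A\in S}[m_A]^{-1}\cdot N_6\cdots N_{12}$ telescopes into a product of quantum binomials and $q$-factorials, but this is never verified; it is precisely the content of the proposition, and for Type BI it is far from routine. The formula branches over the cases $|V|=M$ or not, $N_1=0$ or $1$, $T=\emptyset$ or not, and involves the auxiliary quantities $d_{1,A,C},\ldots,d_{5,E}$ and $h_{A'}$; checking that the cancellations produce manifestly positive factors in every configuration is essentially as hard as re-deriving the formula, and even in the worked example of the paper the result is not literally a product of binomials (one is left with factors of the shape $[2n]/[2]$, which happen to lie in $\mathbb{N}[q,q^{-1}]$ but must be recognized as such separately). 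So the proposal is a strategy with its key step deferred, not a proof. The paper avoids this combinatorial morass entirely: it observes that on the standard basis $\Psi^{0}_{\epsilon}=q^{d_\epsilon}Q^{d'}$ (Definition~\ref{defPsi-SB}) becomes a monomial in $\mathbb{N}[q]$ at $Q=q^{M}$, and that the transition matrix $T^{\mathrm{BI}\leftarrow 0}$ is unitriangular with off-diagonal entries in $q^{-1}\mathbb{N}[q^{-1}]$ (positivity of inverse parabolic Kazhdan--Lusztig polynomials, citing \cite{Shi14-1}); since $\Psi^{\mathrm{BI}}=T^{\mathrm{BI}\leftarrow 0}\Psi^{0}$ by the multiplicity-one property of the eigenvalue, positivity follows in two lines. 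If you want to salvage a direct approach, you would need to actually carry out the inductive peeling you sketch, case by case, or else import the Kazhdan--Lusztig positivity as the paper does.
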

\begin{proof}
Since a component $\Psi_{D}$ contains only quantum integers and terms 
$(q^{i}+q^{-i})$ for some $i$ (see Definition~\ref{PsiBI}), 
it is invariant under $q\rightarrow q^{-1}$. 

Recall that we have two types of parabolic Kazhdan--Lusztig 
polynomials according to the choice of projection map (see e.g. 
\cite{Deo87,Shi14-1}).
As in Remark~\ref{remark-KL}, the transition matrix 
$T^{0\leftarrow\mathrm{BI}}$ from the 
Kazhdan--Lusztig basis to the standard basis is written in 
terms of parabolic Kazhdan--Lusztig polynomials. 
The inverse of $T^{0\leftarrow\mathrm{BI}}$, that is, 
$T^{\mathrm{BI}\leftarrow0}$, is also written 
in terms of another parabolic Kazhdan--Lusztig polynomials 
(see Theorem 6 in \cite{Shi14-1}).
The diagonal entries of the matrix $T^{\mathrm{BI}\leftarrow0}$ 
are one and it is an upper triangular matrices whose non-zero
entries are in $q^{-1}\mathbb{N}[q^{-1}]$. 
At $Q=q^{M}$, we have $\Psi^{0}_{\epsilon}\in\mathbb{N}[q]$.
Since $\Psi^{\mathrm{BI}}=T^{\mathrm{BI}\leftarrow0}\Psi^{0}$, 
we obtain $\Psi^{\mathrm{BI}}_{D}\in\mathbb{N}[q,q^{-1}]$.
\end{proof}

Recall that a diagram $D$ is characterized by a binary string
$b\in\{\pm\}^{N}$ of length $N$ (see Section~\ref{sec-Rep-KL}).
Given a binary string $b$, let $J_{D}$ be the set of 
positions of $+$ from right. 
We define 
\begin{eqnarray*}
d_{D}:=\sum_{j\in J_{D}}(j+M-1).
\end{eqnarray*}
For example, when $b=(+-+-)$ with $M=2$, we have 
$d_{D}=8$.
\begin{cor}
The component $\Psi_{D}$ has the leading term $q^{d_{D}}$ 
with the leading coefficient one. 
\end{cor}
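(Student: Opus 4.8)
The plan is to combine the previous proposition — which says $\Psi^{\mathrm{BI}} = T^{\mathrm{BI}\leftarrow 0}\Psi^0$ with $T^{\mathrm{BI}\leftarrow 0}$ upper triangular (in the lexicographic order of binary strings), diagonal entries equal to $1$, and off-diagonal entries in $q^{-1}\mathbb{N}[q^{-1}]$ after setting $Q=q^M$ — with the explicit form of $\Psi^0$ from Definition~\ref{defPsi-SB}. First I would record that, at $Q=q^M$, the standard-basis component is $\Psi^0_\epsilon = q^{d_\epsilon + Md'}$, a single monomial in $q$, where $d_\epsilon = \sum_{i\in I_\epsilon}(i-1)$ and $d'$ is the number of $+$'s. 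Rewriting the sum $d_\epsilon + Md' = \sum_{i\in I_\epsilon}(i-1) + \sum_{i\in I_\epsilon} M = \sum_{i\in I_\epsilon}(i+M-1)$ and noting that $I_\epsilon = J_D$ for the diagram $D$ built from the string $b=\epsilon$, this is exactly $q^{d_D}$. So $\Psi^0$ is the vector of monomials $(q^{d_D})_D$.

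Next I would identify, among all diagrams, the one whose string is lexicographically largest: since in our convention a diagram is indexed by $b\in\{+,-\}^N$ and the order is the one induced by the lexicographic order on strings, the triangularity of $T^{\mathrm{BI}\leftarrow 0}$ means that for each diagram $D$,
\begin{eqnarray*}
\Psi^{\mathrm{BI}}_D = \Psi^0_D + \sum_{D' > D} (T^{\mathrm{BI}\leftarrow 0})_{D,D'}\,\Psi^0_{D'}
= q^{d_D} + \sum_{D'>D} (T^{\mathrm{BI}\leftarrow 0})_{D,D'}\, q^{d_{D'}}.
\end{eqnarray*}
The key numerical point is then that $q^{d_D}$ is the term of lowest degree in this expression: the diagonal contributes $q^{d_D}$ with coefficient $1$, and each correction term $(T^{\mathrm{BI}\leftarrow 0})_{D,D'}\,q^{d_{D'}}$ with $D'>D$ lies in $q^{d_{D'}-1}\mathbb{N}[q^{-1}]$, so one must check $d_{D'} - 1 < d_D$, i.e. $d_{D'} \le d_D$ whenever $D' > D$ in the order. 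Equivalently: the function $D \mapsto d_D$ is (weakly) order-reversing with the property that $D'$ strictly above $D$ in the lex order forces $d_{D'} \le d_D$ with the strict drop absorbed by the $q^{-1}$ factor. I would verify this by the standard argument: if $\epsilon' > \epsilon$ then they agree up to position $i-1$ and $\epsilon_i = -$, $\epsilon'_i = +$; the weight function $d_\epsilon = \sum_{j:\,\epsilon_j=+}(j+M-1)$ changes by gaining position $i$ and possibly losing/gaining positions $> i$, and one bounds the net change from above by a non-positive quantity (the gain $i+M-1$ at position $i$ is more than offset, or the off-diagonal entry is simply zero). Since $T$-entries for incomparable or non-adjacent pairs are controlled the same way by composing such elementary moves, this gives $d_{D'} \le d_D$ for all $D' > D$ with $(T^{\mathrm{BI}\leftarrow 0})_{D,D'}\neq 0$, and because those entries lie in $q^{-1}\mathbb{N}[q^{-1}]$ the correction terms have degree $\le d_{D'} - 1 \le d_D - 1 < d_D$.

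The conclusion is then immediate: in $\Psi^{\mathrm{BI}}_D$, viewed as an element of $\mathbb{N}[q,q^{-1}]$ (which is Laurent and palindromic by the previous proposition), the highest-degree term is $q^{d_D}$ with coefficient $1$, since no correction term can reach degree $d_D$; hence $\Psi_D$ has leading term $q^{d_D}$ with leading coefficient one. The main obstacle I anticipate is pinning down precisely the combinatorial comparison $d_{D'} \le d_D$ for $D' > D$ given the exact sign conventions for the lexicographic order on $\{+,-\}^N$ used in Section~\ref{sec-Rep-KL} (where $\epsilon < \epsilon'$ means the first differing coordinate has $\epsilon_i < \epsilon'_i$, and $+$ versus $-$ ordering must be read off consistently with how $d_D$ weights the $+$ positions); once that bookkeeping is fixed, the degree estimate and the appeal to the triangularity of $T^{\mathrm{BI}\leftarrow 0}$ are routine.
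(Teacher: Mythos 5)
Your argument is essentially the paper's own proof: both identify $\Psi^{0}_{\epsilon}$ at $Q=q^{M}$ with the single monomial $q^{d_{D}}$, invoke the unitriangularity of $T^{\mathrm{BI}\leftarrow0}$ with unit diagonal and off-diagonal entries in $q^{-1}\mathbb{N}[q^{-1}]$, and conclude via the comparison $d_{D'}\le d_{D}$ for the strings $D'$ contributing corrections. The monotonicity step you flag as the remaining bookkeeping is exactly the step the paper also leaves unproved (it introduces the reversed lexicographic order and simply asserts $D<D'\Rightarrow d_{D}\ge d_{D'}$), so your level of detail matches the source; just note that for $M\ge2$ this inequality can fail for comparable pairs whose transition-matrix entry vanishes, so the verification really must be restricted to pairs with nonzero entries, as your parenthetical ``or the off-diagonal entry is simply zero'' anticipates.
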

\begin{proof}
Let $b=b_1\ldots b_{N}$ and $b'=b'_1\ldots b'_{N}$ be two binary 
strings in $\{\pm\}^{N}$.
We introduce the reversed lexicographic order, which is 
$b<b'$ if and only if $b_{j}=b'_{j}$ for $1\le j\le i-1$,  
$b_{i}=+$ and $b'_{i}=-$. 
This lexicographic order of binary strings induces a natural order 
of diagrams. 
If $D<D'$, then we have $d_{D}\ge d_{D'}$. 
The eigenvector satisfies $\Psi_{D}^{0}=q^{d_{D}}$.
Recall that the diagonal entries are one and other non-zero 
entries are in $q^{-1}\mathbb{N}[q^{-1}]$.
Since $\Psi^{\mathrm{BI}}=T^{\mathrm{BI}\leftarrow0}\Psi^{0}$, 
one easily show that the leading term is $q^{d_{D}}$ with 
the coefficient one. 
\end{proof}

\subsubsection{Type BII and BIII} 
We have
\begin{lemma}
All components of $\Psi$ belong to $\mathbb{N}[q,q^{-1},Q,Q^{-1}]$. 
\end{lemma}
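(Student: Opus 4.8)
The plan is to mirror the arguments already given for types A and BI, relying on the explicit product formulas in Definitions~\ref{PsiBII} and \ref{PsiBIII}. First I would treat type BIII directly: by Definition~\ref{PsiBIII} each component is $\Psi_D = N_1\cdot N_2\cdot N_3\cdot N_4$, where $N_1 = \prod_{A\in S\cup S_\downarrow}[N_A]$, $N_2 = \prod_{B\in S}[m_B]^{-1}$, $N_3 = \prod_{C\in S_\downarrow}[N_C]^{-1}$, and $N_4 = \prod_{i=1}^{d}(Qq^{i-1}+Q^{-1}q^{-(i-1)})$. The factor $N_4$ manifestly lies in $\mathbb{N}[q,q^{-1},Q,Q^{-1}]$, so the only point needing care is that $N_1\cdot N_2\cdot N_3$ — a ratio of products of quantum integers — is actually a nonnegative Laurent polynomial in $q$. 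This is exactly the phenomenon exploited in the type A proof: when a diagram is decomposed into nested arcs, the ratios of $[\,\cdot\,]$-factors collect into quantum binomial coefficients $\genfrac{[}{]}{0pt}{}{n}{m}\in\mathbb{N}[q,q^{-1}]$. So the key step is to rewrite $N_1 N_2 N_3$ as a product of such $q$-binomials by the same bookkeeping of inner/outer arc sizes used in the proof of Theorem~\ref{thrm-PsiA} and in the BI lemma; once that is done, positivity and integrality are immediate.

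For type BII, the alternative (and cleaner) route is to use the change-of-basis argument already invoked in Theorem~\ref{eonPsi-all} and in Section~\ref{sec-sum}: the type BII eigenvector is obtained from another type's eigenvector by a transition matrix, so it suffices to establish the claim for one representative type and transfer it. However, transition matrices need not have nonnegative entries, so the safest plan is to argue type BII directly from Definition~\ref{PsiBII}, in complete parallel to type BIII. There $\Psi_D = N_1 N_2 N_3 N_4$ with $N_1 = \prod_{A\in S\cup S_{\mathrm e}}[N_A]$, $N_2 = \prod_{B\in S}[m_B]^{-1}$, $N_3 = \prod_{C\in S_{\mathrm e}}[N_C]^{-1}$, and $N_4 = \prod_{E\in S\cup S_\uparrow}(q^{N_E-1}Q+q^{-(N_E-1)}Q^{-1})$. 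Again $N_4\in\mathbb{N}[q,q^{-1},Q,Q^{-1}]$ by inspection, and $N_1 N_2 N_3$ must be shown to be a nonnegative Laurent polynomial in $q$ by regrouping the quantum-integer factors into $q$-binomial coefficients according to the nesting structure of arcs and the enumeration of e-unpaired down arrows. The same cancellation identities cited earlier (in particular the way $[m_A]^{-1}$ combines with the consecutive integers assigned to arcs and down arrows) do the work.

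The main obstacle — really the only nontrivial point — is verifying that the inverse quantum-integer factors $N_2$ and $N_3$ are always cancelled. Concretely: for an arc $B$ of size $m_B$ sitting inside larger arcs, the integer $N_B$ assigned to $B$ (counting from the left over up arrows, arcs, and certain down arrows) together with the integers assigned to the arcs nested immediately inside and outside $B$ produces, after telescoping, a factor $[N_B]!/([m_B]!\,[\ldots]!)$ that absorbs $[m_B]^{-1}$; similarly the right-to-left enumeration feeding $N_3$ telescopes against the left-to-right one. I would make this precise by the outer-arc decomposition of Eqn.~(\ref{Diagram0}): writing $D$ as a concatenation of blocks of unpaired arrows and outer arcs, express $\Psi_D$ block-by-block, recognize each local contribution as a $q$-binomial $\genfrac{[}{]}{0pt}{}{\sum(n_j+m_j)}{m_i}$, and conclude. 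Once $N_1 N_2 N_3\in\mathbb{N}[q,q^{-1}]$ is established for both types, multiplying by $N_4\in\mathbb{N}[q,q^{-1},Q,Q^{-1}]$ gives $\Psi_D\in\mathbb{N}[q,q^{-1},Q,Q^{-1}]$, completing the proof.
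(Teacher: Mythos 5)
Your proposal matches the paper's own proof: for both BII and BIII the paper argues directly from Definitions~\ref{PsiBII} and \ref{PsiBIII}, decomposes the diagram into blocks of unpaired arrows and outer arcs, rewrites $N_1N_2N_3$ as a product of quantum binomial coefficients (hence in $\mathbb{N}[q,q^{-1}]$), and concludes by observing $N_4\in\mathbb{N}[q,q^{-1},Q,Q^{-1}]$. Your remark that the transition-matrix route is unsafe because transition matrices need not be nonnegative is also consistent with the paper's choice to argue each type directly.
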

\begin{proof}
We prove Lemma for type BII and type BIII separately.

\paragraph{Type BII}
Let $D$ be a diagram starting with $n_1$ up arrows, followed by 
an outer arc of size $m_1$, followed by $n_{2}$ up arrows, 
followed by an outer arc of size $m_{2}$, $\cdots$, 
followed by $n_{I+1}$ up arrows, followed by $p_{J+1}$ e-unpaired 
down arrows (and $p_{J+1}-1, p_{J+1}$ or $p_{J+1}+1$ o-unpaired 
down arrows), followed by an outer arc of size $m'_{J+1}$, 
followed by $p_{J}$ e-unpaired down arrows, followed by 
an outer arc of size $m'_{J}$, $\cdots$, and ending with  
$p_{1}$ e-unpaired down arrows.
Set $N_{\uparrow}=\sum_{i=1}^{I+1}n_{i}$, 
$M=\sum_{i=1}^{I}m_i$, $P=\sum_{i=1}^{J+1}p_{i}$ and 
$M'=\sum_{i=1}^{J+1}m'_i$.  
From Definition~\ref{PsiBII}, the product $N_{1}N_{2}N_{3}$ is 
written as 
\begin{eqnarray*}
\prod_{i=1}^{I}
\genfrac{[}{]}{0pt}{}{\sum_{j=1}^{i}n_{j}+m_{j}}{m_{i}}
\prod_{i=1}^{J}
\genfrac{[}{]}{0pt}{}{\sum_{j=1}^{i}m'_{j}+p_{j}}{m'_{i}}
\genfrac{[}{]}{0pt}{}{N_{\uparrow}+M+P+M'}{N_{\uparrow}+M}
\in\mathbb{N}[q,q^{-1}].
\end{eqnarray*}
Since $N_{4}\in\mathbb{N}[q,q^{-1},Q,Q^{-1}]$, we have 
$\Psi_{D}\in\mathbb{N}[q,q^{-1},Q,Q^{-1}]$.

\paragraph{Type BIII}
Let $D$ be a diagram starting with $n_1$ up arrows, followed by 
an outer arc of size $m_1$, followed by $n_{2}$ up arrows, 
followed by an outer arc of size $m_{2}$, $\cdots$, 
followed by $n_{I+1}$ up arrows, followed by $n'_{J+1}$ down 
arrows, followed by an outer arc of size $m'_J$, 
followed by $n'_{J}$ down arrows (with a circled integer), 
followed by an outer arc of size $m'_{J-1}$, $\cdots$, 
ending with $n'_1$ down arrows.  
Set $N_{\uparrow}=\sum_{i=1}^{I+1}n_{i}$, $M=\sum_{i=1}^{I}m_i$, 
$N_{\downarrow}=\sum_{i=1}^{J+1}n'_{i}$ and 
$M'=\sum_{i=1}^{J}m'_{i}$. 
From Definition~\ref{PsiBIII}, the product $N_1N_2N_3$ is 
\begin{eqnarray*}
\prod_{i=1}^{I}
\genfrac{[}{]}{0pt}{}{\sum_{j=1}^{i}n_{j}+m_{j}}{m_{i}}
\prod_{i=1}^{J}
\genfrac{[}{]}{0pt}{}{\sum_{j=1}^{i}m'_{j}+n'_{j}}{m'_{i}}
\genfrac{[}{]}{0pt}{}{N_{\uparrow}+M+N_{\downarrow}+M'}{N_{\uparrow}+M}
\in\mathbb{N}[q,q^{-1}].
\end{eqnarray*}
Since $N_{4}\in\mathbb{N}[q,q^{-1},Q,Q^{-1}]$, we have 
$\Psi_{D}\in\mathbb{N}[q,q^{-1},Q,Q^{-1}]$.

\end{proof}

\subsection{Sum rule}
Let $S^{\mathrm{X}}_{N}(q,Q)$ be the sum of all components of $\Psi$ on the 
Kazhdan--Lusztig bases of Type X (X=A, BI, BII or BIII), {\it i.e.,} 
$S^{X}_{N}(q,Q)=\sum_{D}\Psi_{D}$.

Let $A_{n}$ be the number of $n\times n$ symmetric binary matrix
with no row sum greater than one.
$A_n$ satisfies the recurrence relation 
\begin{eqnarray*}
A_n=2A_{n-1}+(n-1)A_{n-2},
\end{eqnarray*}
with $A_0=1$ and $A_{1}=2$.
The sequence $A_n$ is A005425 in \cite{OEIS} (see also \cite{Rob76}).

Let $B_n$ be the number of $n\times n$ bisymmetric binary matrix
with a row sum equal to one, that is, 
$B_n$ is the set of permutation matrices with symmetric about two diagonals 
and modulo rotation by $\pi/2$ radians. 
$B_n$ satisfies the recurrence relation
\begin{eqnarray*}
B_n=2B_{n-1}+(2n-2)B_{n-2},
\end{eqnarray*}
with $B_1=1$ and $B_2=3$.
The sequence $B_{n}$ is A000902 in \cite{OEIS}.

Let $C_n$ be the sequence A083886 in \cite{OEIS}, which satisfies
\begin{eqnarray*}
C_{n+1}=3C_{n}+2(n-1)C_{n-1}
\end{eqnarray*}
with $C_1=1$ and $C_{2}=3$. 
The sequence $C_n$ is the total number of signed permutations 
of size $2(n-1)\times 2(n-1)$ which are  
invariant under both diagonal and anti-diagonal reflections 
and avoiding a pattern $(-2,-1)$ \cite{HarTro12}.

\begin{conj}
At $q=1$ and $Q=1$, we have 
\begin{eqnarray*}
S^{\mathrm{A}}_{N}&=&A_N, \\
S^{\mathrm{BI}}_{N}&=&B_{N+1}, \quad \text{for } M=1,\\
S^{\mathrm{BI}}_{N}&=&C_{N+1}, \quad \text{for } M=\infty, \\
S^{\mathrm{BIII}}_{N}&=&C_{N+1}.
\end{eqnarray*}
\end{conj}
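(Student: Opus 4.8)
\emph{Proof strategy.} The plan is to evaluate $S^{\mathrm X}_N(1,1)$ directly from the closed forms of Definitions~\ref{PsiA}, \ref{PsiBI}, \ref{PsiBII} and \ref{PsiBIII} and to verify that the resulting integers obey the three-term recurrences defining $A_n$, $B_n$ and $C_n$ with the prescribed initial values; since a three-term recurrence together with two initial values has a unique solution, this identifies the sequences. After setting $q=Q=1$ every quantum integer $[k]$ becomes $k$ and every factor $q^{a}+q^{-a}$ becomes $2$, and---exactly as in the proofs of the integrality statements of Section~\ref{sec-sum}---each component $\Psi_D$ collapses to a product of ordinary binomial coefficients read off from the outer profile of $D$ (the data $n_i,m_i,n'_i,m'_i$ of Eqn.~(\ref{Diagram0}) and its type-B analogues). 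Thus $S^{\mathrm X}_N(1,1)$ becomes an explicit multiple sum over diagrams.

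First I would treat Type~A. Organising the sum over all diagrams by outer profile, and summing first over the fully paired non-crossing arc structures inside each outer arc and then over the compositions $n_i,m_i,n'_i,m'_i$, the inner sums of products of binomials should collapse through the Chu--Vandermonde identity---the $q=1$ specialisations of the $q$-binomial lemmas of Section~\ref{sec-appendix} already used in Section~\ref{sec-eigenX} (for instance Lemmas~\ref{lemma-app0} and \ref{lemma-app2}). One thereby expects to obtain $S^{\mathrm A}_N=2S^{\mathrm A}_{N-1}+(N-1)S^{\mathrm A}_{N-2}$, equivalently $\sum_{N\ge0}S^{\mathrm A}_N(1,1)\,x^{N}/N!=\exp(2x+x^{2}/2)$; the values $S^{\mathrm A}_0=1$ and $S^{\mathrm A}_1=2$ follow by inspection. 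A more conceptual route I would also pursue is bijective: at $q=Q=1$ the factor $\Psi_D$ is a product of binomials counting interleavings of the blocks of $D$, and one expects a bijection between the disjoint union over all $D$ of these interleavings and the set of $N\times N$ symmetric $0/1$ matrices with every row sum at most one (each site being isolated, carrying a diagonal $1$, or matched to a later site, the number of matched pairs equalling the number of arcs of $D$ and the two ``singleton'' states reflecting the two orientations $\uparrow,\downarrow$ of an unpaired arrow); on such a model the recurrence is visible by deleting site $N$.

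Types~BI with $M=1$ and BIII are to be handled in the same way, from Definitions~\ref{PsiBI} and \ref{PsiBIII}. Their extra building blocks (dashed arcs, starred and circled down arrows, and for Type~BI the factors $N_6,\dots,N_{12}$) contribute, at $q=Q=1$, only further binomial coefficients and powers of $2$; after collapsing the inner sums one expects $S^{\mathrm{BI}}_N=2S^{\mathrm{BI}}_{N-1}+2N\,S^{\mathrm{BI}}_{N-2}$ with $S^{\mathrm{BI}}_0=1,\ S^{\mathrm{BI}}_1=3$ (the shift of $B_n$) and $S^{\mathrm{BIII}}_N=3S^{\mathrm{BIII}}_{N-1}+2(N-1)S^{\mathrm{BIII}}_{N-2}$ with $S^{\mathrm{BIII}}_0=1,\ S^{\mathrm{BIII}}_1=3$ (the shift of $C_n$); these initial values are readily checked against the diagrams of length one. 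For Type~BI with $M=\infty$ one first passes to the limit $M\to\infty$ in Definition~\ref{PsiBI} termwise---legitimate since the sum over diagrams of fixed length is finite and, after $q=1$, the $M$-dependence survives only through bounded combinatorial data---and then verifies the same recurrence as for Type~BIII. As a unifying cross-check one may use the relation $\Psi^{\mathrm Y}=T^{\mathrm Y\leftarrow 0}\Psi^{0}$ from the proof of Theorem~\ref{eonPsi-all} together with $\Psi^{0}|_{q=Q=1}=(1,\dots,1)^{\mathrm T}$ from Definition~\ref{defPsi-SB}, which expresses $S^{\mathrm Y}_N(1,1)$ as the sum of all entries of the inverse Kazhdan--Lusztig transition matrix at $q=1$: conceptually pleasant, though unlikely to be evaluable in closed form without the combinatorial model.

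The main obstacle is the bookkeeping for Type~BI: Definition~\ref{PsiBI} involves the twelve factors $N_1,\dots,N_{12}$, several of which are themselves defined by case distinctions, so showing by brute force that the weighted sum over all BI-diagrams telescopes to a clean three-term recurrence quickly becomes unwieldy. I therefore expect the decisive step to be the construction of the bijection sketched above and its refinements for Types~BI and BIII---a combinatorial model for $\Psi_D|_{q=Q=1}$ in which the recurrences, together with their $B$- and $C$-shifts, are manifest---along with a careful treatment of the $M\to\infty$ degeneration of Type~BI. Producing that model is the crux of the matter, and is the reason the statement is at present only a conjecture.
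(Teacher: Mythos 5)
This statement is labelled a \emph{conjecture} in the paper, and the paper offers no proof of it --- only the numerical evidence of Table~\ref{Table-sum} (values up to $N=9$) and the remark that the BI ($M=\infty$) and BIII cases coincide because the corresponding diagrams agree for $M$ large. So there is no proof of record to compare your attempt against.

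Your proposal is a reasonable plan of attack, and the target recurrences you write down are the correct ones: $S^{\mathrm A}_N=2S^{\mathrm A}_{N-1}+(N-1)S^{\mathrm A}_{N-2}$, $S^{\mathrm{BI}}_N=2S^{\mathrm{BI}}_{N-1}+2N\,S^{\mathrm{BI}}_{N-2}$ and $S^{\mathrm{BIII}}_N=3S^{\mathrm{BIII}}_{N-1}+2(N-1)S^{\mathrm{BIII}}_{N-2}$ do reproduce the tabulated values with the stated initial conditions, and the observation that $\Psi^{0}|_{q=Q=1}$ is the all-ones vector is correct. But the proposal is not a proof: each decisive step is asserted as an expectation rather than established. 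Concretely, (i) you never verify that the multiple sum over diagrams of the specialised $\Psi_D$ actually telescopes to the three-term recurrence --- the appendix lemmas you invoke (Lemmas~\ref{lemma-app0} and \ref{lemma-app2}) are identities in a single running index and do not by themselves handle the sum over all outer profiles together with the sum over internal arc configurations; (ii) the bijection with symmetric binary matrices is only sketched, and the weight $\Psi_D|_{q=Q=1}$ is generally larger than $1$, so one must match each diagram to a specific set of matrices of the right cardinality, which is precisely the content of the later, also conjectural, refinements in Section~\ref{sec-sum}; (iii) for Type~BI you concede the bookkeeping of $N_6,\dots,N_{12}$ is not carried out. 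Since you yourself conclude that constructing the combinatorial model ``is the reason the statement is at present only a conjecture,'' what you have written is a research programme consistent with the paper's status for this statement, not a proof of it.
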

At $q=Q=1$, we have $S^{BIII}_{N}=S^{BI}_{N}$ for $M=\infty$.
This coincidence comes from the fact that the diagram $D$ of 
type BI for $M=\infty$ is the same as the diagram of type 
BIII with $Q=q^{M}$ ($M$ large enough).
The first few values of $S^{X}_{N}$ are in Table~\ref{Table-sum}.
\begin{table}
\begin{center}
\begin{tabular}{c|rrrrrrrrr}
 & \multicolumn{9}{c}{Size $N$} \\ 
 & 1 & 2 & 3 & 4 & 5 & 6  & 7 & 8 & 9 \\ \hline
$S^{\mathrm{A}}_{N}$ & 2 & 5 & 14 & 43 & 142 & 499 & 1850 &
 7193 & 29186\\
$S^{\mathrm{BI}}_{N} (M=1) $ & 3 & 10 & 38 & 156 & 692 & 3256 & 
16200 & 84496 & 460592\\
$S^{\mathrm{BI}}_{N} (M=2)$ & 3 & 11 & 44 & 192 & 892 & 4396 & 
22752 & 123248 & 695024 \\
$S^{\mathrm{BI}}_{N} (M=3)$ & 3 & 11 & 45 & 200 & 952 &  
4796 & 25412 & 140720 & 811280 \\
$S^{\mathrm{BI}}_{N} (M=\infty)$ & 3 & 11 & 45 & 201 & 963 & 
4899 & 26253 & 147345 & 862083 \\
$S^{\mathrm{BII}}_{N}$ & 3 & 9 & 33 & 129 & 555 & 2529 & 12273 & 
62481 & 333603\\
$S^{\mathrm{BIII}}_{N}$ & 3 & 11 & 45 & 201 & 963 & 4899 & 26253 &
 147345 & 862083 
\end{tabular}
\end{center}
\caption{The first few values of $S^{X}_{N}$}
\label{Table-sum}
\end{table}

\subsubsection{Type A and Type BIII}
At $q=1$, the sum $S^{A}_{N}$ is uniquely written as 
\begin{eqnarray*}
S^{A}_{N}=\sum_{i=0}^{N}S_{N,i}Q^{i}
\end{eqnarray*}
where $S_{N,i}\in\mathbb{N}$.
 
Since a component $\Psi_{D}$ is invariant under the bar involution, 
the sum $S^{III}_{N}:=S^{\mathrm{BIII}}_{N}$ is also invariant.
At $q=1$, $S^{III}_{N}$ is uniquely written as 
\begin{eqnarray*}
S^{III}_{N}=\sum_{i=0}^{N}S'_{N,i}(Q+Q^{-1})^{i}.
\end{eqnarray*}
with $S'_{N,i}\in\mathbb{N}$.
\begin{conj}
We have $S_{N,i}=S_{N,N-i}=S'_{N,i}=S'_{N,N-i}$ and 
\begin{eqnarray}
S_{N,i}&=&\prod_{k=0}^{i-1}\frac{N-k}{2k+2}\cdot P_{i}(N), \\
P_{i}(N)&=&N^{i}+\sum_{j=0}^{i-1}p_{i,j}N^{j}
\end{eqnarray}
where $p_{i,j}\in\mathbb{Z}$.
\end{conj}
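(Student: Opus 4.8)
The plan is to work entirely at $q=1$ and exploit the explicit product forms already available. By the Type~A positivity lemma, for a diagram $D$ written as in \eqref{Diagram0} one has $\Psi_D|_{q=1}=Q^{d(D)}w(D)$, where $d(D)=|S_\uparrow|+|S|$ is exactly the number of $+$'s in the binary string of $D$ (every arc carries one up arrow), and $w(D)$ is the value at $q=1$ of the displayed product of quantum binomials, hence a product of ordinary binomial coefficients depending only on the arc data $(n_i,m_i,n'_j,m'_j)$. Thus $S_{N,i}=\sum_{D:\,d(D)=i}w(D)$. The same lemma for type~BIII expresses $N_1N_2N_3$ by the identical quantum binomials while $N_4|_{q=1}=(Q+Q^{-1})^{d(D)}$; since type~A and type~BIII diagrams are both indexed by $\{+,-\}^N$ with the same arc structure, one gets $S^{\mathrm{III}}_N=\sum_D(Q+Q^{-1})^{d(D)}w(D)=\sum_i S_{N,i}(Q+Q^{-1})^i$, and by uniqueness of the expansion in powers of $Q+Q^{-1}$ this yields $S'_{N,i}=S_{N,i}$ immediately. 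For the palindromy $S_{N,i}=S_{N,N-i}$ I would use the involution $u$ (reflect the diagram and reverse all orientations, cf.\ \eqref{e0A}): it interchanges the up-region and the down-region of $D$, exchanges the two products $\prod_i\binom{\cdot}{m_i}\leftrightarrow\prod_j\binom{\cdot}{m'_j}$, and fixes the central binomial $\binom{N_\uparrow+M+N_\downarrow+M'}{N_\uparrow+M}$ via $\binom nk=\binom n{n-k}$, so $w(u(D))=w(D)$ while $d(u(D))=N-d(D)$; summing over $D$ gives $S_{N,i}=S_{N,N-i}$, and then $S'_{N,N-i}=S_{N,N-i}=S_{N,i}$ as well.

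\textbf{The product formula.} The remaining, substantive claim reduces to three steps. First, establish the three-term recurrence $S^{\mathrm{A}}_N(Q)=(1+Q)S^{\mathrm{A}}_{N-1}(Q)+(N-1)Q\,S^{\mathrm{A}}_{N-2}(Q)$ at $q=1$, equivalently $S_{N,i}=S_{N-1,i}+S_{N-1,i-1}+(N-1)S_{N-2,i-1}$. Second, observe that $S_{N,i}$, being a finite sum of polynomials in $N$ (products of binomials, via \eqref{PsiA2}), is itself a polynomial in $N$ of degree $2i$, and it vanishes at $N=0,1,\dots,i-1$ because there is no length-$N$ string with $i$ plusses when $N<i$; hence $P_i(N):=2^i\,i!\,S_{N,i}\big/\big(N(N-1)\cdots(N-i+1)\big)$ is a polynomial of degree $i$ with leading coefficient $1$ (one checks the top coefficient of $S_{N,i}$ is $1/(2^i i!)$ from the $l=0$ term below), so $S_{N,i}=\big(\prod_{k=0}^{i-1}\tfrac{N-k}{2k+2}\big)P_i(N)$ as claimed. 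Third, substitute this ansatz into the recurrence of the first step and simplify the ratios $f_i(N)/f_i(N-1)$ and $f_i(N)/f_{i-1}(N)$ of the prefactor $f_i(N)=\prod_{k=0}^{i-1}\tfrac{N-k}{2k+2}$; the recurrence collapses to the transfer relation $N\,P_i(N)-(N-i)P_i(N-1)=2i\big(P_{i-1}(N-1)+(N-i)P_{i-1}(N-2)\big)$, and one runs induction on $i$ with base $P_0=1$ to conclude $P_i\in\mathbb Z[N]$.

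\textbf{Proving the recurrence.} For the first step the cleanest route is a combinatorial decomposition of $\sum_{D:\,d(D)=i}w(D)$ according to the fate of the rightmost sites, mirroring the recursion $A_N=2A_{N-1}+(N-1)A_{N-2}$: the $(1+Q)$-term comes from diagrams whose last site is an unpaired arrow (two subcases, a down arrow contributing no $Q$ and an up arrow contributing one $Q$, each restricting to a length-$(N-1)$ diagram), and the $(N-1)Q$-term collects the diagrams in which the last site pairs into an arc with one of the remaining $N-1$ sites, the re-summation over the position of the partner being exactly the binomial identity of Lemma~\ref{lemma-app0}, and the single extra $+$ of the new arc accounting for the factor $Q$. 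This matches the recursive construction of the Kazhdan--Lusztig diagrams used in the proof of Theorem~\ref{thrm-PsiA}. An alternative is to pair the ground-state relation $e_N\Psi=0$ (Proposition~\ref{prop-eNAPsi}) with the all-ones covector: the column sums of $e_N$ reorganise $\sum_D\Psi_D=S^{\mathrm{A}}_N$ into a combination of the analogous sums for $N-1$ and $N-2$, which after the Appendix binomial lemmas is the asserted recurrence.

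\textbf{Main obstacle.} The crux is the first step: pinning down the recurrence, and in particular the coefficient $N-1$ rather than some other linear function of $N$, from the many-term diagram sum, which is not term-by-term compatible with a shorter system and so genuinely requires the arc-resummation lemmas. Once that is in place, steps two and three are essentially formal. A secondary delicacy is the integrality in step three: the operator $P\mapsto N\,P(N)-(N-i)P(N-1)$ is invertible over $\mathbb Q[N]$ (triangular on monomials with diagonal entries $i,i+1,\dots,2i$) but not visibly over $\mathbb Z[N]$, so one must show its right-hand side always lies in the image over $\mathbb Z$. I expect the safest way to settle this is to prove the explicit closed form $S_{N,i}=\sum_{l=0}^{i}\binom Nl\binom{N-l}{2(i-l)}(2(i-l)-1)!!$ (equivalently, to exhibit a bijection between the relevant diagrams and pairs consisting of a set of $l$ ``looped'' indices and a partial matching with $i-l$ edges on the remaining $N-l$ indices, with $Q$ marking $l+(i-l)$), which the small values confirm; $P_i$ is then read off from this manifestly integral expression, and the transfer relation serves as an independent consistency check rather than as the source of integrality.
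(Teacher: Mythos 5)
The statement you are proving is presented in the paper as a \emph{conjecture}: the author offers no proof, only the list of small polynomials $P_1,\dots,P_4$ and numerical evidence, so there is no argument in the paper to compare yours against. Judged on its own terms, your proposal contains some genuinely complete pieces. The identification $S'_{N,i}=S_{N,i}$ is essentially a proof: at $q=1$ the type A and type BIII components differ only by replacing $Q^{d(D)}$ with $(Q+Q^{-1})^{d(D)}$ against the same integer weight $w(D)$, and linear independence of the powers of $Q+Q^{-1}$ does the rest. Likewise, your derivation of the product form from the proposed closed form $S_{N,i}=\sum_{l}\binom{N}{l}\binom{N-l}{2(i-l)}(2(i-l)-1)!!$ is correct and even yields the explicit integral expression $P_i(N)=\sum_{l=0}^{i}2^{l}\binom{i}{l}\prod_{j=0}^{i-l-1}(N-i-j)$, which is manifestly monic of degree $i$ over $\mathbb{Z}$.

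The gap is that the load-bearing step is never carried out. Neither the three-term recurrence $S^{\mathrm{A}}_N(Q)=(1+Q)S^{\mathrm{A}}_{N-1}(Q)+(N-1)Q\,S^{\mathrm{A}}_{N-2}(Q)$ nor the closed form for $S_{N,i}$ is proved; you sketch two routes (peeling off the rightmost site with the arc-resummation of Lemma~\ref{lemma-app0}, or pairing $e_N\Psi=0$ with the all-ones covector) and then concede that the closed form is ``the safest way,'' but that closed form is precisely the paper's own subsequent conjecture $S_{N,i}=\#\{a\in\mathcal{A}_N\mid\mathrm{wt}(a)=i\}$, also unproven. Two subsidiary points also need attention. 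First, your step two assumes $S_{N,i}$ is a polynomial in $N$ of degree $2i$; since the number of diagrams and the arguments of the binomials both grow with $N$, this is not automatic and in practice again requires the closed form, so the divisibility-by-$N(N-1)\cdots(N-i+1)$ argument is circular as stated. Second, the palindromy $S_{N,i}=S_{N,N-i}$ via the involution $u$ requires $w(u(D))=w(D)$ for \emph{all} diagrams, including those with nested arcs; the central binomial and the swap of the two outer products handle the depth-one data, but the contributions of inner arcs to $N_2$ and $N_4$ under reflection are not checked. In short: the reductions are sound and the skeleton is plausible, but the conjecture remains unproven by this proposal, and you have correctly located where the real work lies.
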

The first few polynomials $P_{i}(N)$'s are 
\begin{eqnarray*}
P_{1}(N)&=&N+1, \\
P_{2}(N)&=&N^{2}-N+2, \\
P_{3}(N)&=&N^{3}-6N^{2}+17N-16, \\
P_{4}(N)&=&N^{4}-14N^{3}+83N^{2}-230N+248. 
\end{eqnarray*}

Let $\mathcal{A}_n$ be a set of symmetric binary matrices of size 
$n\times n$ with no row sum greater than one.
For $a=(a_{i,j})_{1\le i,j\le n}\in\mathcal{A}_{n}$, we define the 
weight of $a$ by $\mathrm{wt}(a):=\#\{a_{i,j}=1| i\le j \}$. 
Then,  
\begin{conj}
We have 
\begin{eqnarray*}
S_{N,i}=\#\{a\in\mathcal{A}_{N}|\mathrm{wt}(a)=i\}.
\end{eqnarray*}
\end{conj}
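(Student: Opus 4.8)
The plan is to exhibit an explicit weight-preserving bijection between diagrams $D$ of type A on $N$ sites and symmetric binary matrices $a\in\mathcal{A}_N$ with no row sum exceeding one, so that the $q=1$, coefficient-of-$Q^i$ count $S_{N,i}$ equals $\#\{a\in\mathcal{A}_N:\mathrm{wt}(a)=i\}$. Recall from Definition~\ref{PsiA} and its reformulation in the proof of the positivity lemma that at $q=1$ the component $\Psi_D$ equals $Q^{d}$ times a product of ordinary binomial coefficients, where $d=|S_\uparrow|+|S|$ is the number of unpaired up arrows plus the number of arcs. Thus $S_{N,i}$ counts, with multiplicity given by that binomial product, the diagrams $D$ with exactly $i$ between arcs-and-up-arrows. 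The multiplicities are exactly the number of ways of placing the arcs inside their "outer" slots, so the natural statement is that $S_{N,i}$ is the number of (arc configuration, internal filling) pairs with $d=i$ — and this is what must be matched to $\mathcal{A}_N$.

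First I would set up the combinatorial model on the matrix side: a symmetric binary matrix $a\in\mathcal{A}_N$ with no row sum $>1$ is precisely an involution on $\{1,\dots,N\}$ (a partial matching of $\{1,\dots,N\}$, where the diagonal entries $a_{ii}=1$ are the fixed points and the off-diagonal pairs $a_{ij}=a_{ji}=1$ are the transpositions), and $\mathrm{wt}(a)=\#\{\text{transpositions}\}+\#\{\text{fixed points}\}$. So $\mathcal{A}_N$ is in bijection with involutions of $S_N$, and $S_{N,i}$ should count involutions with (number of 2-cycles)$+$(number of fixed points)$=i$. Note that the recurrence $A_n=2A_{n-1}+(n-1)A_{n-2}$ is exactly the involution recurrence, consistent with the conjectured $S^A_N=A_N$. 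The task then reduces to constructing a bijection between the diagram data for $\Psi$ and involutions, tracking $i=d$.

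Next I would build the bijection diagram $\leftrightarrow$ involution. A type A diagram consists of arcs (nested and juxtaposed), unpaired up arrows (all to the left), and unpaired down arrows (all to the right). The binomial-coefficient multiplicities in $\Psi_D$ count the ways to interleave the "internal" arcs of a given outer arc — precisely the Catalan-free count of orderings — which, I claim, matches the number of ways to refine a given coarse pairing structure into an actual involution. Concretely I would send: each outer-arc-with-its-contents block together with the $n_j$ adjacent up arrows to a "run" that the binomial coefficient $\binom{\sum(n_j+m_j)}{m_j}$ enumerates the shuffles of; and then map each such combinatorial shuffle datum to a choice of which indices are fixed points versus matched, such that the total $\mathrm{wt}$ equals $d=|S_\uparrow|+|S|$. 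The global structure (up arrows left, down arrows right, with an equal-shifted middle) should correspond to the left-to-right reading of the involution's arc diagram. I would verify the bijection is weight-preserving by checking $\mathrm{wt}(a)=$ (number of arcs) $+$ (number of unpaired up arrows), and well-defined by checking it respects nesting.

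The main obstacle I expect is the bookkeeping of the binomial multiplicities: $\Psi_D$ at $q=1$ is not simply $0/1$-valued, so the bijection is not diagram-to-matrix but (diagram $+$ internal-shuffle-data)-to-matrix, and one must show the shuffle data for the $q$-binomials assemble correctly into a single involution without over- or under-counting — equivalently, that $\sum_D \Psi_D|_{q=1}$ telescopes via the binomial identities (Lemma~\ref{lemma-app0} and the Vandermonde-type identities) into $\sum_i \#\{a:\mathrm{wt}(a)=i\}Q^i$. An alternative, safer route — which I would pursue if the explicit bijection proves unwieldy — is to prove the generating-function identity directly: show $\sum_D \Psi_D|_{q=1} = \sum_{k\ge0}\binom{N}{2k}(2k-1)!!\,Q^{N-k}$ (the involution generating function by weight) by induction on $N$, peeling off the leftmost arrow and using the recursive structure of the action of $X$ together with the explicit form~(\ref{PsiA2}); then matching coefficients gives $S_{N,i}=\#\{a\in\mathcal{A}_N:\mathrm{wt}(a)=i\}$, and the binomial product structure of $\Psi_D$ makes the induction step a routine application of the Vandermonde/$q=1$ binomial summation lemmas already in the appendix.
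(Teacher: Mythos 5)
First, note that the paper states this as a conjecture and offers no proof, so there is no argument of the paper to compare yours against; what you propose would be new content. As it stands, though, your text is a plan rather than a proof: the weight-preserving bijection is never actually constructed, the assembly of the binomial ``shuffle data'' into a single matrix is exactly the step you flag as the main obstacle and then defer, and the fallback induction is not carried out. That alone keeps this at the level of a strategy.

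More importantly, the combinatorial model you set up on the matrix side is wrong, and the error propagates into your fallback route. A symmetric binary matrix with no row sum greater than one is \emph{not} an involution of $S_N$: a row may have sum zero, so $\mathcal{A}_N$ is the set of partial involutions, in which each index is either unused, a diagonal fixed point, or paired with another index. Correspondingly, $A_n=2A_{n-1}+(n-1)A_{n-2}$ is not ``the involution recurrence'' (which is $I_n=I_{n-1}+(n-1)I_{n-2}$); the extra $A_{n-1}$ comes from the unused-versus-fixed-point dichotomy. As a consequence, the generating function you propose to prove in the ``safer route,'' $\sum_k\binom{N}{2k}(2k-1)!!\,Q^{N-k}$, is not $\sum_{a\in\mathcal{A}_N}Q^{\mathrm{wt}(a)}$: already for $N=1$ it gives $Q$ while the correct polynomial is $1+Q$, and for $N=2$ it gives $Q^2+Q$ versus the correct $Q^2+3Q+1$, which is what $\sum_D\Psi_D|_{q=1}$ actually equals. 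The correct target is $\sum_{p,f}\binom{N}{2p}\binom{N-2p}{f}(2p-1)!!\,Q^{p+f}$, where $p$ is the number of off-diagonal pairs and $f$ the number of diagonal ones. So even if you carried the induction through, you would be proving the wrong identity. The bijective route remains plausible --- $d=|S_\uparrow|+|S|$ should match $p+f$ --- but it needs the three-state model and an actual construction before it counts as a proof.
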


Let $\mathcal{C}_{n}$ be a set of signed permutation matrices of size 
$2(n-1)\times2(n-1)$ which are invariant under the diagonal and anti-diagonal 
reflections and avoids the pattern $(-2,-1)$. 
For $c=(c_{i,j})_{1\le i,j\le 2n}\in\mathcal{C}_{n}$, we define the weight of $c$ by 
\begin{eqnarray*}
\mathrm{wt}(c):=n_{+}-n_{-}, 
\end{eqnarray*}
where
\begin{eqnarray*}
n_{+}&:=&\#\{c_{i,j}=1| 1\le i\le n, i\le j\le n \}, \\
n_{-}&:=&\#\{c_{i,j}=1| 1\le i\le n, n+1\le j\le 2n+1-i \}.  
\end{eqnarray*}
At $q=1$, the sum $S_{N}$ is rewritten as 
\begin{eqnarray*}
S_{N}=\sum_{i=0}^{N}\widetilde{S}_{N,N-2i}Q^{N-2i}. 
\end{eqnarray*}
where $\widetilde{S}_{N,i}\in\mathbb{N}$. 
Then, 
\begin{conj}
we have 
\begin{eqnarray*}
\widetilde{S}_{N,i}=\#\{c\in\mathcal{C}| \mathrm{wt}(c)=i\}.
\end{eqnarray*}
\end{conj}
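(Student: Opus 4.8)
The plan is to deduce the statement from a \emph{graded} enumeration identity, reached by first simplifying $\Psi_D$ at $q=1$ and then matching a recursion (or, better, an explicit bijection) on the two sides. First I would record that at $q=1$ the formula of Definition~\ref{PsiBIII} collapses: every quantum integer $[k]$ becomes $k$, so by the same manipulation used in the proof that all components of $\Psi$ lie in $\mathbb{N}[q,q^{-1},Q,Q^{-1}]$, the factor $N_1N_2N_3$ becomes an explicit product of ordinary binomial coefficients read off the arc structure of $D$, while $N_4$ becomes $(Q+Q^{-1})^{d_D}$ with $d_D=|S_\uparrow|+|S|$. Consequently the coefficients $\widetilde S_{N,i}$ are sums, over type-BIII diagrams $D$, of a binomial-coefficient weight of $D$ times $\binom{d_D}{\ell}$ for the appropriate $\ell$, and the statement becomes the assertion that this weighted count of pairs (a diagram $D$, an $\ell$-subset of the ``$Q$-slots'' of $D$) equals the number of $c\in\mathcal C_{N+1}$ with $\mathrm{wt}(c)=n_+(c)-n_-(c)$ in the prescribed value.

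Second, I would try to produce and match recursions. The sequence $C_{n+1}=3C_n+2(n-1)C_{n-1}$ on the $\mathcal C$ side is obtained by tracking the image and sign of the first row of a doubly-symmetric signed permutation; I would refine it to a recursion for the weight generating function $\sum_i\widetilde S_{N,i}Q^i$ by carrying the $\pm$ contribution of that first entry and checking that the $(-2,-1)$-avoidance is preserved under the recursive removal. On the representation-theoretic side, a recursion for $S^{\mathrm{BIII}}_N=\sum_D\Psi_D$ should come from summing relations already available: the eigenvalue equation $X\Psi=[Q;N]\Psi$ with the explicit action~(\ref{ActionXBIII}), together with $e_0\Psi=0$, $e_N\Psi=0$ and the explicit $e_N,e_0$ actions on type-BIII diagrams from Section~\ref{sec-reps}, should express $S^{\mathrm{BIII}}_N$ in terms of $S^{\mathrm{BIII}}_{N-1}$ and $S^{\mathrm{BIII}}_{N-2}$ once one sums over $D$ and exploits the transfer structure of the diagrams. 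Alternatively, since $\Psi^{0}=w_1\otimes\cdots\otimes w_N$ with $w_i=v_{-1}+q^{N-i}Qv_1$ and $\Psi^{\mathrm{BIII}}=T^{\mathrm{BIII}\leftarrow0}\Psi^{0}$, one has $S^{\mathrm{BIII}}_N=\sum_{\epsilon,\epsilon'}(T^{\mathrm{BIII}\leftarrow0})_{\epsilon,\epsilon'}\,\Psi^0_{\epsilon'}$, so a third route is to evaluate this double sum using the combinatorial description of the parabolic Kazhdan--Lusztig polynomials of type $(B_N,A_{N-1})$ from \cite{Shi14-1}, which at $q=1$ and after summation should reorganize into the $C$-recursion.

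The main obstacle will be identifying the diagrammatic counterpart of the two ``external'' constraints defining $\mathcal C_{N+1}$ --- simultaneous invariance under both the diagonal and the anti-diagonal reflection, and avoidance of the pattern $(-2,-1)$ --- and pinning the statistic $\mathrm{wt}(c)=n_+-n_-$ to the $Q$-grading of $\Psi_D$. The double reflection invariance is plausibly the shadow of the self-duality already visible in type-B Kazhdan--Lusztig combinatorics (the reflection involution $u$ of Section~\ref{sec-reps} and the bar-invariance $q\to q^{-1}$, $Q\to Q^{-1}$ of $\Psi$), but I expect no transparent diagrammatic image of the pattern-avoidance condition; producing one --- presumably a monotonicity or nesting constraint governing how arcs nest and how the circled-integer down arrows are linearly ordered --- is where the real work lies. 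A sensible staging is therefore to first settle the \emph{ungraded} sum rule $S^{\mathrm{BIII}}_N=C_{N+1}$ by the recursion method, then upgrade to the graded statement by checking that the bijection (or the refined recursion) is weight-preserving; even the ungraded step hinges on extracting a clean two-term recursion for $S^{\mathrm{BIII}}_N$ from the $e_i$- and $X$-relations, which I regard as the technical heart.

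As a consistency check throughout I would verify the identity for small $N$, matching the values $3,11,45,201,963,\dots$ of $S^{\mathrm{BIII}}_N$ in Table~\ref{Table-sum} against direct enumerations of $\mathcal C_{N+1}$ graded by weight, and I would confirm that the $Q\leftrightarrow Q^{-1}$ symmetry of $\Psi$ corresponds to the involution on $\mathcal C_{N+1}$ interchanging $n_+$ and $n_-$ (obtained by composing the two reflections already built into the definition of $\mathcal C_{N+1}$), giving an independent sanity test of the weight dictionary.
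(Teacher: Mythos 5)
This statement is stated in the paper as a \emph{conjecture}: the paper offers no proof of it (nor of the underlying ungraded identity $S^{\mathrm{BIII}}_{N}=C_{N+1}$, which is also only conjectured). So there is no paper proof to measure your attempt against; it must stand on its own, and as written it does not. What you have produced is a research plan, not an argument. The first step (specializing Definition~\ref{PsiBIII} at $q=1$ so that $N_1N_2N_3$ becomes a product of ordinary binomials and $N_4$ becomes $(Q+Q^{-1})^{d_D}$, hence $\widetilde S_{N,i}$ is a weighted count of pairs consisting of a diagram and a subset of its $Q$-slots) is correct and is the easy part. Everything after that is deferred: you do not exhibit the weight-preserving bijection between these pairs and the elements of $\mathcal{C}_{N+1}$ of a given weight, you do not derive the refined (weight-graded) recursion on either side, and you explicitly concede that you have no diagrammatic interpretation of the $(-2,-1)$-avoidance condition or of the simultaneous diagonal/anti-diagonal symmetry. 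Those are precisely the contentful claims of the conjecture; without them nothing is proved.

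A further caution about the proposed recursion route: the relations $e_i\Psi=0$ and $X\Psi=[Q;N]\Psi$ constrain $\Psi$ for fixed $N$ but do not by themselves relate $\sum_D\Psi_D$ at size $N$ to the sums at sizes $N-1$ and $N-2$; the type-BIII diagrams of length $N$ are not obtained from those of length $N-1$ by a simple transfer step compatible with the weights $\Psi_D$ (the circled-integer labels and the global factor $N_4$ depend on $N$ nonlocally), so extracting a clean two-term recursion for $S^{\mathrm{BIII}}_{N}$ is itself an open problem, not a routine summation. Your sanity checks (the small-$N$ values $3,11,45,201,963,\dots$ and the $Q\leftrightarrow Q^{-1}$ versus $n_+\leftrightarrow n_-$ symmetry) are sensible, but they are consistency tests, not proof. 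In short: the approach is reasonable as a program, but the statement remains unproved both in the paper and in your proposal.
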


\subsubsection{Type BII}
At $q=1$, the sum $S_{N}:=S^{BII}_{N}$ is uniquely written as 
\begin{eqnarray*}
S_{N}=\sum_{j=0}^{N}S_{N,j}(Q+Q^{-1})^{j}. 
\end{eqnarray*}
where $S_{N,j}\in\mathbb{N}$.
Then, we have two conjectures:
\begin{conj}
We have 
\begin{eqnarray*}
S_{N,1}=\frac{1}{8}(2N^{2}+4N+1-(-1)^{N}).
\end{eqnarray*}
\end{conj}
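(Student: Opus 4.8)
The plan is to read $S_{N,1}$ directly off the closed formula $\Psi_D=N_1N_2N_3N_4$ of Definition~\ref{PsiBII}. Setting $q=1$, every factor $q^{N_E-1}Q+q^{-(N_E-1)}Q^{-1}$ in $N_4$ collapses to $Q+Q^{-1}$, so $N_4|_{q=1}=(Q+Q^{-1})^{|S|+|S_{\uparrow}|}$, while $N_1N_2N_3$ — being, by the rewriting used in the proof that all components of $\Psi$ lie in $\mathbb{N}[q,q^{-1},Q,Q^{-1}]$ for type BII, a product of quantum binomial coefficients — evaluates at $q=1$ to a positive integer $c(D)$ not involving $Q$. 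Hence $S_{N,j}=\sum_{|S|+|S_{\uparrow}|=j}c(D)$. Since in a type BII diagram the arcs are exactly the $(\downarrow,\uparrow)$–pairs matched by rules (A)–(B) and the unpaired up arrows are the unmatched $+$'s of the underlying binary string, $|S|+|S_{\uparrow}|$ equals the number of $+$'s; thus the diagrams contributing to $S_{N,1}$ are precisely those indexed by the $N$ strings with a single $+$, at a position $p\in\{1,\dots,N\}$.

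The second step evaluates $c(D)$ for these $N$ diagrams. For $p=1$ the diagram is one unpaired up arrow followed by $N-1$ unpaired down arrows; for $2\le p\le N$ it has a single size-one arc joining sites $p-1,p$ (so $N_2=1$) together with $N-2$ unpaired down arrows, $p-2$ of them left of the arc and $N-p$ right of it. Labelling unpaired down arrows $\mathrm o,\mathrm e,\mathrm o,\mathrm e,\dots$ from the right and running the left-to-right and right-to-left enumerations of Definition~\ref{PsiBII}, one finds in all cases that the $\mathrm o$-arrows do not enter $N_1,N_2,N_3$ and that the $\mathrm e$-arrow contribution telescopes: with $e_R:=\lfloor(N-p)/2\rfloor$ the number of $\mathrm e$-arrows to the right of the arc and $e_L$ the number to the left, $N_1|_{q=1}=(e_L+e_R+1)!$ and $N_3|_{q=1}=(e_R+1)/(e_L+e_R+1)!$, so the $e_L$–dependence cancels and $c(D)=e_R+1=\lfloor(N-p)/2\rfloor+1$ uniformly for $1\le p\le N$.

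The last step is the summation
\[
S_{N,1}=\sum_{p=1}^{N}\Bigl(\Bigl\lfloor\frac{N-p}{2}\Bigr\rfloor+1\Bigr)
=N+\sum_{k=0}^{N-1}\Bigl\lfloor\frac k2\Bigr\rfloor
=N+\Bigl\lfloor\frac{(N-1)^2}{4}\Bigr\rfloor,
\]
using $\sum_{k=0}^{m}\lfloor k/2\rfloor=\lfloor m^2/4\rfloor$. A split into parities gives $S_{N,1}=m^2+m$ for $N=2m$ and $(m+1)^2$ for $N=2m+1$, which in both cases equals $\tfrac18(2N^2+4N+1-(-1)^N)$.

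The main obstacle is the bookkeeping in the middle step: one must pin down exactly which of the left-block and right-block unpaired down arrows carry the label $\mathrm e$ — this depends on the residues of $p$ and $N$ modulo $2$ — and which integers the arc and the $\mathrm e$-arrows receive in each of the two enumerations, in order to justify $N_1N_2N_3|_{q=1}=\lfloor(N-p)/2\rfloor+1$. The cancellation of $e_L$ is not accidental ($N_1$ and $N_3^{-1}$ are built from the same nested factorial over the $\mathrm e$-arrows), so the cleanest write-up should establish this telescoping once and for all rather than case by case; after that the identity for $S_{N,1}$ is immediate.
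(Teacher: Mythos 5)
The paper does not prove this statement at all: it is stated as a conjecture and the author only reports a numerical check up to $N=20$. Your proposal therefore does not merely take a different route --- it supplies a proof where the paper has none, and the proof is correct. I verified the three steps. First, at $q=1$ each factor of $N_4$ in Definition~\ref{PsiBII} degenerates to $Q+Q^{-1}$, so $N_4|_{q=1}=(Q+Q^{-1})^{|S|+|S_\uparrow|}$, and since every arc consumes exactly one $+$ of the binary string while unpaired up arrows are the remaining $+$'s, the exponent $|S|+|S_\uparrow|$ is the number of $+$'s; hence $S_{N,1}$ is the sum of $N_1N_2N_3|_{q=1}$ over the $N$ single-$+$ diagrams. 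Second, for the $+$ at position $p\ge 2$ rules (A)--(B) produce exactly one size-one arc on sites $p-1,p$, the o-unpaired down arrows lie outside $S\cup S_{\mathrm e}$ and so never enter $N_1,N_2,N_3$, and the left-to-right and right-to-left enumerations give $N_1|_{q=1}=(e_L+e_R+1)!$ and $N_3|_{q=1}=(e_R+1)/(e_L+e_R+1)!$ with $e_R=\lfloor(N-p)/2\rfloor$, so $c(D)=e_R+1$; the $p=1$ case gives $\lfloor(N-1)/2\rfloor+1$, consistent with the same formula. Third, $\sum_{p=1}^{N}(\lfloor(N-p)/2\rfloor+1)=N+\lfloor(N-1)^2/4\rfloor$ equals $m^2+m$ for $N=2m$ and $(m+1)^2$ for $N=2m+1$, i.e. $\tfrac18(2N^2+4N+1-(-1)^N)$, and the values agree with the paper's Table~\ref{Table-sum} data for small $N$. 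The only thing missing from your write-up is the explicit verification of the two enumeration formulas for $N_1$ and $N_3$ (the ``bookkeeping'' you flag); it is a two-line computation and works exactly as you state, with the $e_L$-dependence cancelling because the left-block e-arrows receive the integers $1,\dots,e_L$ in one enumeration and $e_R+2,\dots,e_R+e_L+1$ in the other. Written out, this upgrades the paper's conjecture to a theorem.
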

We have checked the conjecture up to $N=20$.
\begin{conj}
We have 
\begin{eqnarray*}
S_{N,N-i}&=&\prod_{j=1}^{i}(2j)^{-1}\cdot P_{i}(N), \\
P_{i}(N)&=&N^{2i}-iN^{2i-1}+\sum_{k=0}^{2i-2}p_{i,k}N^{k}, 
\end{eqnarray*}
where $p_{i,k}\in\mathbb{Z}$.
\end{conj}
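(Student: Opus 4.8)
The plan is to turn the conjecture into a statement about a finite sum of products of ordinary binomial coefficients, by specializing the explicit formula of Definition~\ref{PsiBII} at $q=1$, and then to prove polynomiality in $N$ together with the shape of the top two coefficients and the integrality of the remaining ones.

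First I would specialize. By Definition~\ref{PsiBII} one has $\Psi_D=N_1N_2N_3N_4$; at $q=1$ every quantum integer $[n]$ becomes $n$ and every factor $q^{N_E-1}Q+q^{-(N_E-1)}Q^{-1}$ of $N_4$ becomes $Q+Q^{-1}$, so $N_4\big|_{q=1}=(Q+Q^{-1})^{|S|+|S_\uparrow|}$. Putting $w(D):=(N_1N_2N_3)\big|_{q=1}\in\mathbb{Q}_{>0}$ this gives $S^{\mathrm{BII}}_N\big|_{q=1}=\sum_D w(D)(Q+Q^{-1})^{|S|+|S_\uparrow|}$, whence
\[
S_{N,N-i}=\sum_{D:\,|S|+|S_\uparrow|=N-i}w(D).
\]
Because the $N$ sites of a type-BII diagram split into $|S_\uparrow|$ up arrows, $|S_{\mathrm e}|+|S_{\mathrm o}|$ unpaired down arrows and $2|S|$ arc endpoints, the condition $|S|+|S_\uparrow|=N-i$ is equivalent to $|S|+|S_{\mathrm e}|+|S_{\mathrm o}|=i$; in particular only a bounded amount of structure is permitted, all remaining sites being up arrows.

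Next I would organize the sum by the \emph{shape} of $D$ --- the nesting forest of its arcs together with the numbers of unpaired up and down arrows --- and sum over diagrams of a fixed shape, i.e.\ over the ways of distributing the up arrows into the gaps of the structure. For a fixed shape $w(D)$ is a product of linear forms in the gap sizes (the indices $N_A$ in $N_1$ and $N_3$ shift with the number of up arrows lying to one side, while the sizes $m_B$ in $N_2$ are shape data), and summing over all distributions with prescribed total collapses, by standard binomial summation identities (the $q=1$ shadow of Lemma~\ref{lemma-app0}), to a product of binomial coefficients $\binom{N-\ast}{\ast}$ whose entries depend only on the shape; the flat shape of $i$ arcs yields exactly $\sum_{1\le a_1<\cdots<a_i\le N-i}a_1\cdots a_i$, the $i$-th elementary symmetric function of $1,\dots,N-i$, which is an unsigned Stirling number of the first kind, a polynomial in $N$ of degree $2i$ with leading coefficient $\prod_{j=1}^i(2j)^{-1}=2^{-i}(i!)^{-1}$. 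Since for fixed $i$ there are only finitely many admissible shapes, $S_{N,N-i}$ is a polynomial in $N$; each arc contributes two to the degree and each unpaired down arrow one, with $|S|+(\#\,\text{unpaired down arrows})=i$, so $\deg_NS_{N,N-i}\le 2i$, the top degree being carried by the flat all-arcs shape, which also fixes the leading coefficient to $\prod_{j=1}^i(2j)^{-1}$. Writing $P_i(N):=2^ii!\,S_{N,N-i}$, which is monic of degree $2i$, one then reads off its $N^{2i-1}$ coefficient by adding up the subleading terms of the contributing shapes (for $i=1$ only the flat arc and the single diagram $+\cdots+-$ occur, reproducing $P_1(N)=N^2-N+2$), and integrality of the remaining coefficients follows by expanding $P_i$ in the falling-factorial basis, in which the binomial-sum formula of the previous step exhibits integer coordinates.

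The hard part will be the general-$i$ analysis: tracking \emph{all} shapes simultaneously, uniformly in $i$, accurately enough to prove the precise value $-i$ of the $N^{2i-1}$ coefficient and the integrality of every $p_{i,k}$ --- only the flat and near-flat shapes are straightforward, and the nested ones proliferate. A cleaner route would be to derive a bivariate exponential generating function $\sum_{N,i}S_{N,N-i}\,t^ix^N/N!$ from a recursion on type-BII diagrams --- a two-parameter refinement of the recursions satisfied by $A_n$, $B_n$, $C_n$ in the excerpt --- and read the polynomial structure of each coefficient off the differential equation it satisfies; alternatively, a bijective realization of $w(D)$ as an enumeration of signed-permutation-type objects, in the spirit of the companion conjectures on $\mathcal{A}_N$ and $\mathcal{C}_N$, would make the integrality of $P_i(N)$ automatic.
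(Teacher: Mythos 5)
The statement you are trying to prove is presented in the paper as a \emph{conjecture}: the author offers no proof, only the remark that it has been checked numerically up to $j=7$ and $N=20$. So there is no ``paper's proof'' to compare against, and your proposal has to stand on its own as a complete argument --- which it does not. Your opening reductions are sound: at $q=1$ Definition~\ref{PsiBII} does give $N_4=(Q+Q^{-1})^{|S|+|S_\uparrow|}$, the expansion in the basis $\{(Q+Q^{-1})^j\}$ is unique, and the site count $N=|S_\uparrow|+2|S|+|S_{\mathrm e}|+|S_{\mathrm o}|$ correctly converts the condition $|S|+|S_\uparrow|=N-i$ into $|S|+|S_{\mathrm e}|+|S_{\mathrm o}|=i$, so only finitely many ``shapes'' occur for fixed $i$. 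The identification of the flat all-arcs shape with $e_i(1,\dots,N-i)$ and the resulting leading coefficient $\prod_{j=1}^i(2j)^{-1}$ is also correct.

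The genuine gap is exactly where you flag it yourself. Everything beyond the leading coefficient --- the claim that the $N^{2i-1}$ coefficient of $P_i(N)=2^i i!\,S_{N,N-i}$ equals $-i$, and the claim that all lower coefficients $p_{i,k}$ are integers --- requires controlling the degree-$(2i-1)$ contributions from \emph{every} shape: the subleading term of $e_i(1,\dots,N-i)$, the shapes with one nested pair of arcs, and the shapes with one unpaired down arrow (whose weights involve the $N_3$ denominators and the e/o distinction, and are not obviously of the same polynomial form as the arc contributions). You do not carry out this bookkeeping, you only assert that it ``collapses by standard binomial summation identities'' and then concede that ``only the flat and near-flat shapes are straightforward.'' The integrality claim is likewise unproved: $2^i i!\,e_i(1,\dots,N-i)$ is not \emph{a priori} a polynomial with integer coefficients in $N$ (already $P_1(N)/2=(N^2-N+2)/2$ shows the normalization is delicate), so the cancellation of denominators across shapes is a nontrivial fact that your falling-factorial remark does not establish. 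As written, the proposal is a reasonable plan of attack that confirms the leading term but does not prove the conjecture.
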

We have checked the conjecture up to $j=7$ and $N=20$.
The first few polynomials $P_{j}(N)$'s are 
\begin{eqnarray*}
P_{1}(N)&=&N^{2}-N+2, \\
P_{2}(N)&=&N^{4}-2N^{3}+3N^{2}+14N-8, \\
P_{3}(N)&=&N^{6}-3N^{5}+N^{4}+51N^{3}-2N^{2}-96N+96. 
\end{eqnarray*}

\subsection{\texorpdfstring{Components of $\Psi$ and enumerations 
of binary/permutation matrices}{Components of Psi and enumerations 
of binary/permutation matrices}}
\subsubsection{Type A}
In the case of Type A, some components are related to 
an enumeration of symmetric binary matrices. 

Let $\mathcal{S}$ be the set of symmetric binary matrices  
with no row sum greater than one. 
Let $s=(s_{ij})_{1\le i,j\le N}\in \mathcal{S}$.
Since $s$ is symmetric, we consider only $s_{ij}$ with $i\le j$.
We denote by $N_1$ the number of one and define 
\begin{eqnarray*}
A(s):=\{(i,j)| s_{ij}=1, i\le j\}.
\end{eqnarray*}
For $l=(i,j)\in A(s)$, we define 
$n_{l}:=N-2i+j$.
We define a map $F:\mathcal{S}\rightarrow\mathbb{N}[q,q^{-1},Q]$ 
by
\begin{eqnarray*}
F(s):=q^{N(N-1)/2}Q^{N-N_1}\prod_{l\in A(s)}q^{-n_{l}}.
\end{eqnarray*}

Let $l=(i,j)$ and $l'=(i',j')$ be elements in $A(s)$. 
We call $s$ admissible if there exists no pair ($l, l'$) such 
that $i<i'<j<j'$. 
For an admissible symmetric binary matrix $s$, we define a map 
$g$ from $s\in\mathcal{S}$ to a binary string $b=(b_1\ldots b_N)$ 
of length $N$: 
for each $l=(i,j)\in A(s)$, we set $b_j=-$ and otherwise we set 
$b_{j}=+$.
Then we consider a set 
\begin{eqnarray*}
\mathrm{Adm}(b):=\{s\in\mathcal{S}|s\text{ is admissible, } g(s)=b\}.
\end{eqnarray*}

\begin{example}
Let $s$ be 
\begin{eqnarray*}
\begin{pmatrix}
0 & 0 & 1 & 0 & 0 \\
  & 1 & 0 & 0 & 0 \\
  &   & 0 & 0 & 0 \\
  &   &   & 0 & 1 \\
  &   &   &   & 0 \\  
\end{pmatrix}.
\end{eqnarray*}
Then $s$ is admissible, $F(s)=q^{-1}Q^{2}$ and $g(s)=+--+-$.
\end{example}

Let $b=\underbrace{-\cdots-}_{i}\underbrace{+\cdots+}_{j}
\underbrace{-\cdots-}_{k}\underbrace{+\cdots+}_{N-i-j-k}$ with $i,j,k\ge0$ 
and $D(b)$ be a diagram of type A associated with $b$. Then 
we have 
\begin{conj}
\begin{eqnarray*}
\Psi_{D(b)}=\sum_{s\in \mathrm{Adm}(v)}F(s)
\end{eqnarray*}
\end{conj}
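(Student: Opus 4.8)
The plan is to prove the identity by reducing each side to an explicit sum of $q$-monomials and then matching them.

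First I would make the diagram $D(b)$ completely explicit. For $b=\underbrace{-\cdots-}_{i}\underbrace{+\cdots+}_{j}\underbrace{-\cdots-}_{k}\underbrace{+\cdots+}_{\ell}$ with $\ell=N-i-j-k$, the rules (A) and (B) cascade in a controlled way because there are only two blocks of down arrows: after a short case split on the orderings ($i$ vs.\ $j$, and $k$ vs.\ the arrows surviving the first round) one finds that $D(b)$ is, in every case, a nest of arcs together with a block of unpaired down arrows on the left and a block of unpaired up arrows on the right, the sizes being explicit functions of $i,j,k,\ell$. From this I would read off the data $S$, $S_\uparrow$, $S_\downarrow$ and the labelling used in Definition~\ref{PsiA}, and apply that definition (equivalently formula~(\ref{PsiA2})). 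Because $b$ is of this staircase type, the denominators $N_3N_4$ telescope against the matching factors of $N_2$, and $\Psi_{D(b)}$ collapses to $q^{d(d-1)/2}Q^{d}$ times a product of Gaussian binomials $\genfrac{[}{]}{0pt}{}{\,\cdot\,}{\,\cdot\,}$, exactly as in the type A positivity lemma; here $d=|S_\uparrow|+|S|$ equals the number of plus signs $j+\ell=N-i-k$.

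Next I would reinterpret the right-hand side. An element $s\in\mathcal S$ is the same as a partial involution of $\{1,\dots,N\}$ (a matching together with a set of fixed points), with $A(s)$ its set of pairs $(i,j)$, $i\le j$; admissibility is exactly the non-crossing condition, and $g(s)=b$ forces the set (right endpoints of arcs)\,$\cup$\,(fixed points) to be the set of minus-positions of $b$. Hence $\mathrm{Adm}(b)$ is the set of non-crossing partial involutions with a prescribed closer/fixed-point set, and once more, because $b$ has only two minus-blocks, this set is parametrized by a few integer choices mirroring the arc nest of $D(b)$. I would then record that $|A(s)|$ is constant on $\mathrm{Adm}(b)$ (it equals the number of minus signs $i+k$), so $Q^{N-|A(s)|}=Q^{d}$ is the same $Q$-power as in $\Psi_{D(b)}$, and that $q^{N(N-1)/2}\prod_{l\in A(s)}q^{-n_{l}}=q^{d(d-1)/2}\,q^{\mathrm{stat}(s)}$ for a nonnegative ``area''-type statistic $\mathrm{stat}(s)$ obtained by collecting the linear terms $n_{l}=N-2i+j$ over the non-crossing partial involution.

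With these two reductions the claimed equality becomes a purely combinatorial statement: the product of Gaussian binomials appearing in $\Psi_{D(b)}$ equals $\sum_{s\in\mathrm{Adm}(b)}q^{\mathrm{stat}(s)}$. I would prove this by the classical interpretation $\genfrac{[}{]}{0pt}{}{m+n}{m}=\sum q^{\mathrm{area}}$ over monotone lattice paths in an $m\times n$ box, exhibiting an explicit weight-preserving bijection between $\mathrm{Adm}(b)$ and the corresponding product of lattice-path sets: the non-crossing nesting of the involution becomes the monotonicity of the paths, and the minus-block into which each right endpoint falls records which box a given path lives in. Alternatively one can induct on $N$ by deleting position $N$: on the left this uses the comultiplication formula for $X$ in~(\ref{coproduct}) (or directly the product formula for $\Psi$), and on the right it is the standard ``remove the last row and column'' recursion for $s\in\mathcal S$, according to whether $N$ is a fixed point, unmatched, or the closer of an arc to some earlier $i$; the two recursions then coincide term by term.

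The main obstacle I expect is precisely the bookkeeping in this last step: getting the shift exponents and the statistic $\mathrm{stat}(s)$ to line up under the bijection, because the non-crossing condition interacts with the four-block shape of $b$ (which minus-block a given arc closes into), so the clean product-of-binomials form on the left must be re-derived on the right through a somewhat delicate, case-dependent analysis. Once the bijection (or, equivalently, the matching of the two deletion recursions) is set up correctly, the remaining verifications are routine manipulations of quantum integers of the same kind already used in the proof of Theorem~\ref{thrm-PsiA}.
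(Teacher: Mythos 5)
The paper offers no proof of this statement: it appears only as a conjecture in Section~\ref{sec-sum}, so there is nothing to compare your argument against, and your text must be judged as an attempted proof of an open claim. Your preliminary reductions are sound and checkable. For the four-block string $b$ the diagram $D(b)$ is indeed of the shape of Eqn.(\ref{Diagram0}) with at most two nests, so Definition~\ref{PsiA} collapses to $q^{d(d-1)/2}Q^{d}$ times a short product of Gaussian binomials; and on the right-hand side the map $(i,j)\mapsto j$ is injective on $A(s)$ (two links sharing a closer would force a row/column sum $\ge 2$), so $|A(s)|=i+k$ is constant on $\mathrm{Adm}(b)$ and the powers of $Q$ match. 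Two small corrections, though: by rule (B) the unpaired up arrows of $D(b)$ sit to the \emph{left} of the unpaired down arrows, not the reverse; and the residual $q$-statistic is not nonnegative --- already $\Psi_{D(+-)}=Q(q+q^{-1})$ --- so it is a centered statistic, not an ``area'' in the usual normalization.

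The genuine gap is the final step, which is the entire content of the conjecture: the identity between the product of Gaussian binomials coming from Definition~\ref{PsiA} and $\sum_{s\in\mathrm{Adm}(b)}q^{\mathrm{stat}(s)}$. You assert that a weight-preserving bijection with tuples of lattice paths exists, or alternatively that a deletion recursion on both sides matches term by term, but neither is constructed. In particular you have not shown that $\mathrm{stat}$ splits as a sum of independent statistics indexed by the factors of the binomial product: the non-crossing condition couples the choice of opener for a link closing in the first minus-block with the choices made in the second minus-block, so the factorization is precisely what must be proved, not assumed. The alternative induction is also misaligned as stated: the comultiplication (\ref{coproduct}) splits off the \emph{first} tensor factor, so it governs deletion at position $1$, not position $N$, and deleting the last position of a four-block $b$ changes which block is truncated and hence the case structure of $D(b)$. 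Until the bijection (or the matching of the two recursions) is written down and its weight preservation verified in each of the finitely many orderings of $i,j,k,\ell$, the statement remains exactly where the paper leaves it: a conjecture supported by small-case checks.
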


\subsubsection{Type BIII}
Let $\mathcal{C}_{N}$ be the set of signed permutation matrices 
of size $2(N-1)\times 2(N-1)$ which are symmetric both diagonal 
and anti-diagonal reflections and avoid the pattern $(-2,-1)$.

Let $c=(c_{i,j})_{1\le i,j\le N}\in\mathcal{C}_{N+1}$. 
Since $c$ is bisymmetric, we consider only $c_{i,j}$ with 
$1\le i\le N$ and $i\le j\le 2N+1-i$.
Define 
\begin{eqnarray*}
A(c):=\{(i,j) | c_{i,j}=1, 1\le i\le N, i\le j\le 2N\}.
\end{eqnarray*}
We call an element of $A(c)$ a {\it link}.
A link $(i,j)\in A(c)$ is said to be {\it fundamental} 
if and only if $1\le i\le N$ and $1\le j\le 2N+1-i$.
We denote by $A^{+}(c)$ the set of fundamental links.
Given two links $l=(i_1,j_1)$ and $l'=(i_2,l_2)$, we define 
a {\it cross point} $p(l,l')=(i_2,j_1)$ if and only if 
$i_1<i_2<j_1<j_2$, $i_1+j_1\neq 2N+1$ and $i_2+j_2\neq 2N+1$. 
We denote by $B(c)$ the set of cross points $p(l,l')$ 
for $l,l'\in A(c)$.
We consider the coordinate system where $x$-direction is rightward
and $y$-direction is downward.
We define a down-left path from a link $l=(i_1,j_1)\in A^{+}(c)$  
to a diagonal point $(i_2,i_2)$ as follows.
Note that a link $l$ does not belong to $B(c)$. 
First, we go down from $l$ until it reaches to a cross point or 
a diagonal point. 
If a path reaches a cross point form up (resp. right), we make a 
turn on the cross point and go left (resp. down). 
We continue this procedure until a path reaches to a diagonal 
point. 
We define a binary string $b=b_1\ldots b_{N}\in\{\pm\}^{N}$ of 
length $N$ by $b_i=-$ if there exists a path from 
$l\in A^{+}(c)$ to the diagonal point $(i,i)$, and 
$b_i=+$ otherwise.
Therefore, by composing the above procedure, we have a map 
$B:\mathcal{C}_{N}\rightarrow \{\pm\}^{N}$. 

Suppose that $b=\underbrace{-\cdots-}_{i}\underbrace{+\cdots+}_{j}
\underbrace{-\cdots-}_{k}\underbrace{+\cdots+}_{N-i-j-k}$ with 
$i,j,k\ge0$ and $D(b)$ be a diagram of type BIII associated with $b$.
Then,  
\begin{conj}
At $q=Q=1$, we have 
\begin{eqnarray*}
\Psi_{D(b)}=\#\{c\in\mathcal{C}_{N+1}|B(c)=b\}.
\end{eqnarray*} 
\end{conj}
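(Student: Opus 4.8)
The plan is to compute both sides explicitly and then match them. On the analytic side, $\Psi_{D(b)}$ is given in closed form by Definition~\ref{PsiBIII}, so the first step is to determine the diagram $D(b)$ attached to the staircase string $b=\underbrace{-\cdots-}_{i}\underbrace{+\cdots+}_{j}\underbrace{-\cdots-}_{k}\underbrace{+\cdots+}_{N-i-j-k}$. Applying the matching rules (A), (B) and the circled-integer rule (G) to $b$, one works out $D(b)$ explicitly: it is a laminar forest of nested arcs, together with a block of unpaired up arrows and a block of down arrows carrying consecutive circled integers, all of whose sizes are explicit functions of $i,j,k$ and $l:=N-i-j-k$. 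With this decomposition the data $S$, $S_{\uparrow}$, $S_{\downarrow}$, the arc sizes $m_B$, the left-to-right and right-to-left indexings $N_A,N_C$ and the integer $d=|S_{\uparrow}|+|S|$ entering $N_1,\dots,N_4$ are all determined, so specialising $q=Q=1$ — whereby $[m]\mapsto m$ and every factor $Qq^{a}+Q^{-1}q^{-a}\mapsto 2$ — collapses $\Psi_{D(b)}$ to a product of binomial coefficients times a power of $2$. I would record this product and, for later comparison, extract a recursion in $N$ (with one of the block sizes $i,j,k$ incremented) that it satisfies. An alternative to the direct use of Definition~\ref{PsiBIII} is to expand $\Psi^{\mathrm{BIII}}=T^{\mathrm{BIII}\leftarrow 0}\Psi^{0}$ as in the proof of Theorem~\ref{eonPsi-all} and evaluate the (inverse parabolic Kazhdan--Lusztig) entries at $q=1$, but the direct route seems more economical here.

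On the combinatorial side the task is to describe and count the fibre $B^{-1}(b)\subset\mathcal{C}_{N+1}$. Since every $c\in\mathcal{C}_{N+1}$ is invariant under both the diagonal and the anti-diagonal reflection, it is determined by its set of fundamental links $A^{+}(c)$ inside the triangular region $\{(i,j):1\le i\le N,\ i\le j\le 2N+1-i\}$. The first lemma to establish is that the forbidden pattern $(-2,-1)$, once transported through the bisymmetry, forces $A^{+}(c)$ to be a non-crossing (laminar) family of links: two fundamental links $(i_1,j_1),(i_2,j_2)$ cannot satisfy $i_1<i_2<j_1<j_2$ unless one of them sits on the anti-diagonal $i+j=2N+1$, in which case a separate, controlled exception occurs. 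The second step is to analyse the down-left path construction: which diagonal points $(m,m)$ are reached is governed solely by this laminar structure of links and their cross points, so the condition $B(c)=b$ becomes a purely combinatorial constraint on $A^{+}(c)$ — the $-$'s of $b$ must be exactly the diagonal points that are hit, forcing a prescribed non-crossing matching pattern (the one encoded by the arcs of $D(b)$), while the remaining freedom is a choice of starting columns for the links together with the signs that are forced whenever a link crosses the anti-diagonal. These anti-diagonal crossings are precisely the source of the negative entries counted by $n_{-}$; since we work at $q=Q=1$ we need not track them, which is what makes a pure cardinality statement tractable. From this description one obtains a recursion for $\#B^{-1}(b)$, for instance by peeling off the outermost link — equivalently the outermost arc of $D(b)$ — or by conditioning on the column occupied in the first row.

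The crux is then to show that the two recursions coincide and have matching base cases; alternatively one could aim for an explicit bijection from $B^{-1}(b)$ to a set of non-crossing matchings decorated so as to be manifestly counted by the product of the first step. I expect the main obstacle to be the interplay of the pattern-avoidance, the bisymmetry and the path map at the anti-diagonal boundary: a fundamental link meeting $i+j=2N+1$ obeys the laminar constraint and feeds the path construction differently from a generic link, and one must verify that avoidance of $(-2,-1)$ propagates correctly through that boundary — this is the step where a naive induction is most likely to fail and where a careful case split, or a well-chosen auxiliary statistic, will be needed. Finally I would check the small cases $N\le 4$ against Table~\ref{Table-sum} and Example~\ref{example-BIII}, and remark that the same circle of ideas, carried out while retaining the $q$- and $Q$-gradings, should refine the statement to a $(q,Q)$-enumeration matching the arc sizes and the weight $n_{+}-n_{-}$, though that graded version is strictly harder and is not needed here.
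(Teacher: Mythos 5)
The statement you are addressing is stated in the paper as a conjecture: the paper offers no proof of it, so there is no argument of the author's to compare yours against. What you have written is a research plan rather than a proof, and by your own admission the decisive steps are left open. Concretely: (i) the lemma that $(-2,-1)$-avoidance together with bisymmetry forces the fundamental links $A^{+}(c)$ to form a laminar family is asserted, not established, and the ``separate, controlled exception'' for links on the anti-diagonal $i+j=2N+1$ is never specified -- yet this is exactly where you yourself say a naive induction is most likely to fail; (ii) the structure of the fibre $B^{-1}(b)$ is described only in outline (``a prescribed non-crossing matching pattern\ldots while the remaining freedom is a choice of starting columns''), with no verification that the down-left path map actually induces this correspondence once cross points and the anti-diagonal boundary are taken into account; (iii) neither the explicit product formula for $\Psi_{D(b)}$ at $q=Q=1$ nor the recursion for $\#B^{-1}(b)$ is written down, so the claimed coincidence of the two recursions is not checked even in principle. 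A proposal whose crux is flagged as the place ``where a careful case split\ldots will be needed'' has not closed that case split.

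That said, the overall architecture is sensible and consistent with what the paper's data suggest: $D(b)$ for a staircase string is indeed a nested-arc configuration whose $\Psi_{D(b)}$ collapses at $q=Q=1$ to a product of binomials times a power of $2$ via Definition~\ref{PsiBIII}, and the right-hand side is plausibly governed by a laminar link structure. If you want to turn this into a proof, the minimal missing content is: a precise statement and proof of the laminarity lemma including the anti-diagonal case, an explicit parametrization of $B^{-1}(b)$ proving the path map is determined by that laminar data, and the two explicit recursions with matching initial conditions. Until those are supplied, the statement remains a conjecture.
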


\appendix
\section{}
\label{sec-appendix}
\begin{lemma}
\label{lemma-app-A}
Let $A=(a_{i,j})_{1\le i,j\le N+1}$ be a tridiagonal matrix whose entries are 
\begin{eqnarray*}
a_{i,i}=q^{N+2-2i}\frac{Q-Q^{-1}}{q-q^{-1}}, \qquad
a_{i,i-1}=[N+1-i],\qquad
a_{i,i+1}=q^{N-2i}[i].
\end{eqnarray*}
Then, the eigenvalues of $A$ are 
\begin{eqnarray}
\label{app-eigenvalue}
\frac{Qq^{N-2\lambda}-Q^{-1}q^{-N+2\lambda}}{q-q^{-1}}, \qquad
\lambda=0,1,\ldots,N.
\end{eqnarray}
\end{lemma}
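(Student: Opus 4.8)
The plan is to diagonalize the tridiagonal matrix $A$ explicitly by exhibiting a one-parameter family of eigenvectors indexed by the candidate eigenvalues. First I would observe that the matrix $A$ has a structure reminiscent of the action of a generator of $U_q(\mathfrak{sl}_2)$ (or its coideal) on a weight-graded module of dimension $N+1$: the subdiagonal entries $[N+1-i]$ and superdiagonal entries $q^{N-2i}[i]$ are, up to the diagonal twist, exactly the matrix elements of $F$ and $cKE$ on the $(N+1)$-dimensional irreducible representation of $U_q(\mathfrak{sl}_2)$ in a suitable basis, and the diagonal $q^{N+2-2i}(Q-Q^{-1})/(q-q^{-1})$ corresponds to $sK$ with $s=(Q-Q^{-1})/(q-q^{-1})$. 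Thus $A$ is conjugate to the matrix of $X=F+cKE+sK$ acting on the spin-$N/2$ representation, whose eigenvalues are well known. Concretely, I would introduce a rescaling: set $\tilde{A}_{i,j} = \gamma_i^{-1} A_{i,j} \gamma_j$ for a diagonal gauge $\gamma_i$ chosen so that $\tilde{A}$ becomes symmetric (or at least has subdiagonal and superdiagonal entries of the symmetrized form $\sqrt{[i][N+1-i]}$, times appropriate powers of $q$). This step is routine since $a_{i,i-1}a_{i-1,i} = [N+1-i]\,q^{N-2(i-1)}[i-1]$ fixes the ratios $\gamma_i/\gamma_{i-1}$.

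Next I would construct the eigenvectors. For each $\lambda \in \{0,1,\dots,N\}$, I expect the eigenvector $v^{(\lambda)} = (v^{(\lambda)}_1,\dots,v^{(\lambda)}_{N+1})$ associated to eigenvalue $[Q;N-2\lambda] = (Qq^{N-2\lambda}-Q^{-1}q^{-N+2\lambda})/(q-q^{-1})$ to have entries expressible in closed form, most likely as products/ratios of quantum integers together with a monomial in $q$ and $Q$ — mirroring the structure of the vector $\Psi$ built in Definition~\ref{PsiA} and Theorem~\ref{thrm-PsiA}, which is exactly the $\lambda=0$ case restricted to the relevant submatrix $A$ in the proof of the multiplicity theorem. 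The verification $A v^{(\lambda)} = [Q;N-2\lambda]\, v^{(\lambda)}$ then reduces, row by row, to a three-term identity among quantum integers: schematically $[N+1-i]\,v_{i-1} + q^{N+2-2i}\frac{Q-Q^{-1}}{q-q^{-1}} v_i + q^{N-2i}[i]\,v_{i+1} = [Q;N-2\lambda]\,v_i$. Each such identity should follow from the standard addition formula $q^{a}[b]+q^{-b}[a]=[a+b]$ (used repeatedly in the main text) together with the factorization $Qq^{m}-Q^{-1}q^{-m} = (q-q^{-1})[Q;m]$. An alternative, cleaner route: recognize the recursion as a $q$-hypergeometric one and identify $v^{(\lambda)}_i$ with a value of a $q$-Krawtchouk (or dual $q$-Krawtchouk) polynomial, whose three-term recurrence is precisely the eigenvalue equation; then the eigenvalues are read off from the known spectrum of the associated Jacobi matrix.

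Finally, I would argue that these $N+1$ eigenvectors are linearly independent — e.g. because the eigenvalues $[Q;N-2\lambda]$, $\lambda=0,\dots,N$, are pairwise distinct (here one uses that $q$ is generic, or that $q,Q$ are algebraically independent, so $Qq^{N-2\lambda} = Qq^{N-2\mu}$ forces $\lambda=\mu$) — hence they form a complete set and the listed values are exactly the spectrum, each with multiplicity one. The main obstacle I anticipate is guessing the correct closed form for $v^{(\lambda)}$ for general $\lambda$ (the $\lambda=0$ case is already recorded, but the intermediate ones carry extra alternating signs and $q$-binomial factors); the $q$-Krawtchouk identification is the most robust way around this, since it sidesteps any ad hoc guessing and delivers both the eigenvectors and the distinctness of eigenvalues from the classical orthogonal-polynomial literature. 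If one prefers to stay self-contained, the fallback is a direct induction on $N$: deleting the first row and column of $A$ (after an appropriate shift) produces a matrix of the same shape with $N$ replaced by $N-1$, so the interlacing/recursive structure lets one bootstrap the spectrum — though this requires care with the non-symmetric diagonal twist.
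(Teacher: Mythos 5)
Your strategy --- diagonalize $A$ by exhibiting an explicit eigenvector for each candidate eigenvalue, or equivalently identify the three-term recurrence with that of (dual) $q$-Krawtchouk polynomials --- is genuinely different from what the paper does, and in principle it could be made to work (for generic $q$ the off-diagonal entries are nonzero, so $A$ is similar to a Jacobi matrix and the orthogonal-polynomial machinery applies). But as written there is a gap precisely at the step that carries all the content. You never produce the closed form of $v^{(\lambda)}$ for $0<\lambda<N$ (you say so yourself), the row-by-row verification is only indicated ``schematically,'' and the fallback that the eigenvalues of $X$ on the spin-$N/2$ irreducible are ``well known'' is essentially circular inside this paper: the spectrum of $X=F+q^{-1}KE+sK$ on the $(N+1)$-dimensional module is exactly what this lemma computes (the paper's Theorem on the spectrum of $X$ for type A is deduced \emph{from} the lemma, not the other way around), so it cannot be invoked without an external reference whose conventions you have actually matched. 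The $q$-Krawtchouk identification is the most promising of your three routes, but the parameter matching --- absorbing the asymmetric powers $q^{N-2i}$ in $a_{i,i+1}$ by a diagonal gauge and relating $Q$ to the Krawtchouk parameter so that the spectral points become $[Q;N-2\lambda]$ --- is exactly where such an argument can silently go wrong, and it is not carried out.

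For comparison, the paper avoids eigenvectors entirely and shows $\det(A-x_\lambda\mathbf{1})=0$ directly: column-reducing the tridiagonal matrix from the bottom-right corner yields the continuant recursion $v_i=a^{(1)}_{i,i}v_{i+1}-a^{(1)}_{i+1,i}a^{(1)}_{i,i+1}v_{i+2}$ with $\det(A-x_\lambda\mathbf{1})=v_1$; one then guesses a closed form for $v_n$ as a Laurent polynomial in $Q$ whose coefficients are $q$-binomials times products of the quantum integers $[\lambda-i]$ and $[\lambda-N+i]$, proves it by a one-step induction, and observes that the product of quantum integers appearing in $v_1$ runs over $[\lambda-i]$ for $0\le i\le N$, hence contains the factor $[0]=0$ for every integer $0\le\lambda\le N$. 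If you want to complete your version, the cleanest fix is to run the same kind of induction on the entries of $v^{(\lambda)}$ (which are values of the associated orthogonal polynomials at $x_\lambda$), or to supply the dual $q$-Krawtchouk reference and perform the parameter matching explicitly; either way the identity $q^{a}[b]+q^{-b}[a]=[a+b]$ you cite is the right elementary ingredient, but it has to actually be applied.
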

\begin{proof}
Let $x_{\lambda}$ be the expression (\ref{app-eigenvalue}).
To show that $x_{\lambda}$ is the eigenvalue of $A$, it is enough 
to show that the determinant of $A^{(1)}=A-x_{\lambda}\mathbf{1}$ is equal 
to zero.
We diagonalize the tridiagonal matrix $A^{(1)}=(a^{(1)}_{i,j})$ from the right bottom corner. 
First, we subtract $a^{(1)}_{N+1,N}/a^{(1)}_{N+1,N+1}$ times the $(N+1)$-th column vector 
from the $N$-th column and obtain a matrix $A^{(2)}=(a^{(2)}_{i,j})$. 
Then, we subtract $a^{(2)}_{N,N-1}/a^{(2)}_{N,N}$ times the $N$-th column vector from the 
$(N-1)$-th column. 
We continue this procedure until we obtain an upper triangular matrix $L=(l_{i,j})$.
From a direct computation, 
the diagonal entries of the matrix $L$ is written in terms of a set of Laurent polynomials 
$\{v_1,\ldots,v_{N+2}\}$ as $l_{i,i}=v_i/v_{i+1}$ where 
$v_i=a^{(1)}_{i,i}v_{i+1}-a^{(1)}_{i+1,i}a^{(1)}_{i,i+1}v_{i+2}$ with the initial conditions 
$v_{N+2}=1$ and $v_{N+1}=a^{(1)}_{N+1,N+1}$.
Note that $a^{(1)}_{i,i+1}=a_{i,i+1}$, $a^{(1)}_{i,i-1}=a_{i,i-1}$ and 
\begin{eqnarray*}
a^{(1)}_{i,i}=q^{N+1-i-\lambda}[\lambda+1-i]Q+q^{1-i+\lambda}[\lambda-N-1+i]Q^{-1}.
\end{eqnarray*}

We will prove that for $1\le n\le N$ 
\begin{eqnarray}
\label{app-v}
v_{n}=\sum_{j=0}^{m(n)}Q^{m(n)-2j}\alpha(n,j),
\end{eqnarray}
where
\begin{eqnarray*}
\alpha(n,j)
&:=&q^{d(n,j)}\genfrac{[}{]}{0pt}{}{m(n)}{j}
\prod_{i=0}^{m(n)-j-1}[\lambda-N+i]\prod_{i=0}^{j-1}[\lambda-i], \\
m(n)&:=&N+2-n, \\
d(n,j)&:=&m(n)(m(n)-1)/2-\lambda m(n)+j(-N+2\lambda).
\end{eqnarray*}
For $n=N+1, N+2$, Eqn.(\ref{app-v}) holds true.
We assume that Eqn.(\ref{app-v}) is true up to some $n\le N+1$.
We have
\begin{eqnarray*}
v_{n-1}&=&a^{(1)}_{n-1,n-1}v_{n}-a_{n,n-1}a_{n-1,n}v_{n+1} \\
&=&\sum_{j=0}^{m(n)+1}Q^{m(n-1)-2j}
\left(
q^{N+2-n-\lambda}[\lambda+2-n]\alpha(n,j)+
q^{2-n+\lambda}[\lambda-N-2+n]\alpha(n,j-1)
\right) \\
&&-\sum_{j=0}^{m(n)+1}Q^{m(n-1)-2j}
q^{N-2n+3}[n-1][N+2-n]\alpha(n+1,j-1),
\end{eqnarray*}
where $\alpha(n,m(n)+1)=\alpha(n,-1)=0$.
By a direct computation, the above expression is equal to 
Eqn.(\ref{app-v}).

The determinant of $A^{(1)}$ is equal to the one of $L$, that is,
$\prod_{i=1}^{N+1}l_{i,i}=v_{1}$.
The explicit expression of $v_1$ is 
\begin{eqnarray*}
v_1=\prod_{i=1}^{N}[\lambda-i]
\sum_{j=0}^{N+1}Q^{N+1-2j}q^{d(1,j)}\genfrac{[}{]}{0pt}{}{N+1}{j}.
\end{eqnarray*}
Note that $0\le\lambda\le N$.
We have $v_1=0$, {\it i.e.}, the eigenvalues of $A$ is $x_{\lambda}$
for $0\le\lambda\le N$.
\end{proof}

\begin{lemma}[Lemma~A.1 in \cite{Shi14-2}]
\label{lemma-app0}
Set $M:=\sum_{i=1}^{I}m_i$. 
We have 
\begin{eqnarray}
\label{app-0}
\sum_{i=1}^{I}[m_i]
\left[1+\sum_{j=1}^{i}n_j\right]
\frac{\prod_{j=i+1}^{I}[1+\sum_{k=1}^{j}(n_k+m_k)]}
{\prod_{j=i}^{I}[1+n_j+\sum_{k=1}^{j-1}(n_k+m_k)]}
=[M]
\end{eqnarray}
\end{lemma}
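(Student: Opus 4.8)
The plan is to prove the identity \eqref{app-0} by induction on $I$, telescoping the sum from the top index downward. First I would rewrite the summand so that the dependence on $i$ is made transparent: set $P_j := 1 + \sum_{k=1}^{j}(n_k+m_k)$ and $N_j := 1 + n_j + \sum_{k=1}^{j-1}(n_k+m_k)$, so that $P_{j-1} = N_j + m_j - 1$ and the summand for index $i$ becomes
\[
[m_i]\,[P_i - M_i]\,\frac{\prod_{j=i+1}^{I}[P_j]}{\prod_{j=i}^{I}[N_j]},
\]
where $M_i := \sum_{k=1}^{i} m_k$ and I have used $1 + \sum_{j=1}^i n_j = P_i - M_i$. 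The key algebraic fact I would isolate first is the two-term ``splitting'' relation for quantum integers,
\[
[m_i]\,[P_i - M_i] = [N_i - 1]\,[P_i] - [N_{i-1}]\,[P_{i-1}] \quad\text{up to the correct } q\text{-shift},
\]
or more precisely an identity of the form $[a][b] - [a-c][b+c] = [c][b-a+c]$ applied with the right choices; verifying exactly which instance of $[x][y] - [x'][y'] = \cdots$ is needed is the one genuinely fiddly computation, but it is a two-line check using $[x][y] = \tfrac{1}{(q-q^{-1})^2}(q^{x}-q^{-x})(q^{y}-q^{-y})$.

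Once that splitting relation is in hand, the sum becomes a telescoping sum: the $i$-th term contributes $\frac{\prod_{j\ge i}[P_j]}{\prod_{j\ge i}[N_j]}$ minus a shifted copy of the same expression, so all intermediate terms cancel and only the boundary terms ($i = 1$ and $i = I$) survive. The $i=I$ boundary term should collapse using $P_I = 1 + N_\uparrow + M + (\text{total})$ together with $\prod_{j\ge I}[P_j]/\prod_{j\ge I}[N_j] = [P_I]/[N_I]$, and the $i=1$ boundary term, after using $P_0 = 1$ (the empty product convention) and $N_1 = 1 + n_1$, should simplify to $[M]$. I would carry out these two boundary evaluations explicitly and check they combine to exactly $[M]$ with no leftover terms.

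Alternatively — and this may be cleaner to write up — I would reduce to the $I$-step induction directly: assume \eqref{app-0} holds for $I-1$ summands, peel off the $i=I$ term, and show that the remaining $I-1$ terms, after multiplying through by the common factor $[P_I]/[N_I]$, match the induction hypothesis applied to the truncated data $(n_1,\dots,n_{I-1}),(m_1,\dots,m_{I-1})$ with $n_{I-1}$ replaced by $n_{I-1} + n_I + m_I$ (so that the new ``$M$'' is $M - m_I$), plus a correction term $[m_I][\cdots]$ that exactly accounts for the difference $[M] - [M - m_I]$. The base case $I=1$ is the trivial identity $[m_1]\,[1+n_1]\cdot\frac{1}{[1+n_1]} = [m_1] = [M]$.

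The main obstacle I anticipate is purely bookkeeping: getting the $q$-shifts in the splitting identity exactly right so that the telescoping is clean rather than leaving a scattered residue of lower-order terms. Since the paper attributes this lemma to \cite{Shi14-2} (Lemma~A.1 there), I would also simply cite that reference for the detailed computation and present only the telescoping structure here, noting that the identity is independent of $q$ in the sense that it follows from the single quantum-integer relation $[a+b] = q^{b}[a] + q^{-a}[b]$ applied repeatedly. That relation, iterated, is really all that is needed, and the proof amounts to organizing its applications into a telescope.
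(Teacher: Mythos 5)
Your second (``alternative'') plan is exactly the paper's proof: induct on $I$, peel off the top summand, note that the remaining terms are the $(I-1)$-term sum for the truncated data $(n_1,\dots,n_{I-1}),(m_1,\dots,m_{I-1})$ multiplied by the common factor $[P_I]/[N_I]$ (no modification of $n_{I-1}$ is needed), and combine the two resulting terms via the quantum-integer identity $[a][b+c]+[c][b-a]=[b][a+c]$ with $a=M-m_I$, $b=N_I$, $c=m_I$. The only issues are cosmetic bookkeeping slips — the correct relations are $P_j=N_j+m_j$ and $P_{j-1}=N_j-n_j$, not $P_{j-1}=N_j+m_j-1$ — and your telescoping version is just this same induction unrolled, so the approach matches the paper's.
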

\begin{proof}
We prove Lemma by induction.
Let $f(I)$ be the left hand side of Eqn.(\ref{app-0})
When $I=1$, we have $f(I)=m_1$ by a straightforward 
calculation.
Set $N=\sum_{i=1}^{I+1}n_i$. 
We assume that Lemma holds true up to $I$.
We have 
\begin{eqnarray*}
f(I+1)&=&f(I)\frac{[1+\sum_{k=1}^{I+1}(n_k+m_k)]}
{[1+n_{I+1}+\sum_{k=1}^{I}(n_k+m_k)]}
+\frac{[m_{I+1}][1+N']}{[1+N+M]} \\
&=&\frac{[M][1+M+N+m_{I+1}]}{[1+M+N]}
+\frac{[m_{I+1}][1+N']}{[1+N+M]} \\
&=&[M+m_{I+1}].
\end{eqnarray*}
\end{proof}

\begin{lemma}
\label{lemma-app1}
Set $M=\sum_{i=1}^{J}m_i$ and $N=\sum_{i=1}^{J+1}n_j$. 
We have 
\begin{eqnarray}
\label{app-1}
\sum_{i=1}^{J}\left[1+\sum_{j=1}^{i}n_j\right][m_i]
\frac{\prod_{j\ge i+2}^{J+1}[1+\sum_{k=1}^{j-1}(n_k+m_k)]}
{\prod_{j\ge i}^{J+1}[1+n_j+\sum_{k=1}^{j-1}(n_k+m_k)]}
=
\frac{[M]}{[M+N+1]}
\end{eqnarray}
\end{lemma}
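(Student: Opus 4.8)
The plan is to prove Lemma~\ref{lemma-app1} by induction on $J$, exactly paralleling the proof of Lemma~\ref{lemma-app0}, since the two identities have the same algebraic shape (a telescoping sum over nested partial sums of the $n_i$ and $m_i$). First I would denote by $g(J)$ the left-hand side of Eqn.(\ref{app-1}), and check the base case $J=1$ directly: with only one term, the product over $j\ge 3$ in the numerator is empty, the denominator is $[1+n_1+m_1][1+n_2+n_1+m_1]=[1+m_1+(n_1)]\cdot[1+n_1+m_1+n_2]$ evaluated appropriately, so $g(1)=[1+n_1][m_1]/\big([1+n_1+m_1][1+n_1+m_1+n_2]\big)$, and since $M=m_1$, $N=n_1+n_2$, the right-hand side $[m_1]/[m_1+n_1+n_2+1]$ should match after noting $[1+n_1]/[1+n_1+m_1]$ combines with the remaining factor. (I would double-check the precise index ranges in the products here, as the empty-product conventions are the easiest place to slip.)

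For the inductive step, I would set $M'=\sum_{i=1}^{J}m_i$, $N'=\sum_{i=1}^{J+1}n_i$ and extract the $i=J+1$... wait — more precisely, I would split $g(J+1)$ into the sum of its first $J$ terms and the last term $i=J+1$, then factor the common structure out of the first $J$ terms to relate them to $g(J)$. The key manipulation is that passing from $g(J)$ to the first $J$ terms of $g(J+1)$ multiplies each summand by a single ratio $[1+\sum_{k=1}^{J+1}(n_k+m_k)]/[1+n_{J+2}+\sum_{k=1}^{J+1}(n_k+m_k)]$ coming from the extra factor in the numerator and denominator products. So $g(J+1) = g(J)\cdot\dfrac{[1+M'+N'+m_{J+1}]}{[1+m_{J+1}+M'+N'+n_{J+2}]} + (\text{last term})$, and the last term is $[1+\sum_{j=1}^{J+1}n_j][m_{J+1}]$ divided by the appropriate denominator. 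Using the induction hypothesis $g(J)=[M']/[M'+N''+1]$ with the right $N''$, this reduces to a two-term identity among quantum integers, which I would close using the standard addition formula $[a][b+c]-[c][a+b]=[a-c][b]$ or equivalently $[x][y+z]+[z][y]=[x+z][\ldots]$ — the same elementary $q$-integer identity that closes Lemma~\ref{lemma-app0}.

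The main obstacle I anticipate is purely bookkeeping: getting the index shifts in the nested products exactly right, in particular the asymmetry between the numerator product starting at $j\ge i+2$ (versus $j\ge i+1$ in Lemma~\ref{lemma-app0}) and the denominator starting at $j\ge i$. This shift is what produces the extra denominator factor $[M+N+1]$ on the right-hand side rather than just $[M]$, so I would track carefully which partial sum each $[1+\ldots]$ factor refers to, perhaps by first rewriting the summand with the substitution $p_i := \sum_{k=1}^{i-1}(n_k+m_k)$ to make the telescoping transparent. Once the combinatorial indexing is pinned down, the $q$-integer algebra is routine. An alternative, if the induction bookkeeping proves error-prone, is to derive Eqn.(\ref{app-1}) from Eqn.(\ref{app-0}) directly: reindex the sum in reverse (replace $i$ by $J+1-i$ and the roles of the partial sums from the left by partial sums from the right), which should convert one telescoping identity into the other up to an overall factor of $[M+N+1]^{-1}$; I would keep this as a backup derivation.
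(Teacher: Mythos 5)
Your proposal follows essentially the same route as the paper: induction on $J$, peeling off the last summand, writing the remaining sum as the previous case times the single ratio $[1+\sum_{k=1}^{J}(n_k+m_k)]/[1+n_{J+1}+\sum_{k=1}^{J}(n_k+m_k)]$ contributed by the extra numerator and denominator factors, and closing with an elementary $q$-integer addition identity. One small correction to your base case: the $j=1$ denominator factor is $[1+n_1+\sum_{k=1}^{0}(n_k+m_k)]=[1+n_1]$ (the inner sum is empty), not $[1+n_1+m_1]$, so it cancels the $[1+n_1]$ in the numerator and gives $g(1)=[m_1]/[1+n_1+m_1+n_2]=[M]/[M+N+1]$ directly.
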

\begin{proof}
We prove Lemma by induction. When $J=1$, Lemma holds true by a direct 
computation.
We assume that Lemma is true up to some $J-1\ge1$.
Let $f(J)$ be the left hand side of Eqn.(\ref{app-1}) and 
$w_i=1+n_{i}+\sum_{i=1}^{i-1}(n_i+m_i)$.
We have 
\begin{eqnarray*}
f(J)&=&f(J-1)\frac{[1+\sum_{i=1}^{J}(n_i+m_i)]}
{[w_{J+1}]}+
\frac{[1+\sum_{i=1}^{J}n_i][m_{J}]}
{[w_J][w_{J+1}]}  \\
&=&\frac{[M-m_{J}][1+M+N-n_{J+1}]+[N+1-n_{J+1}][m_{J}]}
{[M+N-m_{J}-n_{J+1}+1][1+M+N]} \\
&=&\frac{[M]}{[M+N+1]}.
\end{eqnarray*}
This completes the proof.
\end{proof}

\begin{lemma}
\label{lemma-app2}
Set $M_i=\sum_{j=1}^{i}m_j$.
\begin{eqnarray}
\label{app-2}
\sum_{i=1}^{I}
\frac{[m_i]}{[x+M_{i-1}][x+M_{i}]}
=\frac{[M_{I}]}{[x][x+M_{I}]}
\end{eqnarray}
\end{lemma}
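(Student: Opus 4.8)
The plan is to prove (\ref{app-2}) by telescoping. First I would record the elementary identity, for quantum integers,
\[
\frac{q^{-u}}{[u]}-\frac{q^{-v}}{[v]}=\frac{[v-u]}{[u][v]},
\]
which follows at once from $[n]=(q^{n}-q^{-n})/(q-q^{-1})$: the numerator $q^{-u}[v]-q^{-v}[u]$ equals $\big((q^{v-u}-q^{-u-v})-(q^{u-v}-q^{-u-v})\big)/(q-q^{-1})=(q^{v-u}-q^{-(v-u)})/(q-q^{-1})=[v-u]$.

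Next I would apply this with $u=x+M_{i-1}$ and $v=x+M_{i}$; since $M_{i}-M_{i-1}=m_{i}$, the $i$-th summand on the left-hand side of (\ref{app-2}) is exactly $q^{-(x+M_{i-1})}/[x+M_{i-1}]-q^{-(x+M_{i})}/[x+M_{i}]$. Summing over $1\le i\le I$ and using $M_{0}=0$, the series telescopes to $q^{-x}/[x]-q^{-(x+M_{I})}/[x+M_{I}]$, and a final application of the same identity with $u=x$, $v=x+M_{I}$ turns this into $[M_{I}]/\big([x][x+M_{I}]\big)$, which is the right-hand side.

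There is essentially no obstacle; the only points requiring a little care are the bookkeeping of the partial sums $M_{i}$ (in particular the convention $M_{0}=0$) and the fact that the manipulation is valid as an identity of rational functions in $q$, the denominators being nonzero generically. As an alternative I could instead argue by induction on $I$, exactly as in the proofs of Lemma~\ref{lemma-app0} and Lemma~\ref{lemma-app1}: the base case $I=1$ is immediate, and the inductive step reduces to the single quantum-integer identity $[M_{I-1}][x+M_{I}]+[m_{I}][x]=[M_{I}][x+M_{I-1}]$ (with $M_{I}=M_{I-1}+m_{I}$), which is again a direct consequence of the definition of $[n]$. I would present the telescoping argument as the main proof since it is shorter and makes the structure transparent.
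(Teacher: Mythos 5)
Your proposal is correct. The paper proves the lemma by induction on $I$: the base case $I=1$ is immediate, and the inductive step adds the last term to $[M_{I}]/([x][x+M_{I}])$ and recombines, which amounts to exactly the quantum-integer identity $[M_{I}][x+M_{I+1}]+[m_{I+1}][x]=[M_{I+1}][x+M_{I}]$ that you cite for your alternative inductive argument. Your primary route — writing each summand as $q^{-(x+M_{i-1})}/[x+M_{i-1}]-q^{-(x+M_{i})}/[x+M_{i}]$ via the identity $q^{-u}/[u]-q^{-v}/[v]=[v-u]/([u][v])$ and telescoping — is a different and arguably cleaner presentation: it isolates the single partial-fraction identity that the induction uses implicitly at every step, and makes the structure of the sum transparent at a glance. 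Both arguments are elementary and rest on the same computation with $[n]=(q^{n}-q^{-n})/(q-q^{-1})$; the induction matches the style of the neighbouring appendix lemmas (Lemma~\ref{lemma-app0}, Lemma~\ref{lemma-app1}), while the telescoping version is shorter and exhibits the right-hand side as the collapsed boundary terms. Your caveats about $M_{0}=0$ and generic nonvanishing of the denominators are the right ones.
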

\begin{proof}
We prove Lemma by induction. 
Let $f(I)$ be the left hand side of Eqn.(\ref{app-2}). 
Lemma is true when $I=1$.
We assume that Lemma holds true up to $I$.
We have 
\begin{eqnarray*}
f(I+1)&=&f(I)+\frac{[m_{I+1}]}{[x+M_{I+1}][x+M_{I}]} \\
&=&\frac{[M_{I+1}]}{[x][x+M_{I+1}]}
\end{eqnarray*}
\end{proof}

\begin{lemma}
\label{lemma-app-eNA}
Set $v_{i}:=\sum_{j=1}^{i-1}n_j+m_j$.
We have 
\begin{eqnarray}
\label{app-8}
\frac{q^{\sum_{i=1}^{I}m_i}}{[v_{I+1}]}
\prod_{j=1}^{I}\frac{[n_j+v_j]}{[v_j]}
+
\sum_{i=1}^{I}\frac{q^{-\sum_{j=1}^{i}n_j}[m_i]}
{[v_i][v_{i+1}]}
\prod_{j=1}^{i-1}\frac{[n_j+v_j]}{[v_j]}
=1.
\end{eqnarray}
\end{lemma}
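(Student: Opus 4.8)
The plan is to prove the identity~(\ref{app-8}) by induction on $I$, in exactly the style of Lemmas~\ref{lemma-app0}, \ref{lemma-app1} and~\ref{lemma-app2}. Write $f(I)$ for the left-hand side of~(\ref{app-8}) and split off the first summand by setting
\[
g(k):=\frac{q^{\sum_{i=1}^{k}m_i}}{[v_{k+1}]}\prod_{j=1}^{k}\frac{[n_j+v_j]}{[v_j]},
\qquad
t_k:=\frac{q^{-\sum_{j=1}^{k}n_j}[m_k]}{[v_k][v_{k+1}]}\prod_{j=1}^{k-1}\frac{[n_j+v_j]}{[v_j]},
\]
so that $f(I)=g(I)+\sum_{k=1}^{I}t_k$. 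Since $\sum_{k=1}^{I-1}t_k=f(I-1)-g(I-1)$, showing $f(I)=f(I-1)$ is equivalent to the single \emph{local} identity $g(k-1)=g(k)+t_k$; granting it for every $k$, the sum telescopes, $f(I)=g(I)+\sum_{k=1}^{I}\bigl(g(k-1)-g(k)\bigr)=g(0)$, and one finishes by noting that the boundary value $g(0)=[v_1]^{-1}$ equals $1$ in the range in which the lemma is applied.

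The heart of the argument is the local identity, and I expect it to collapse to a triviality once the common factor is removed. Cancelling $\prod_{j=1}^{k-1}\frac{[n_j+v_j]}{[v_j]}$ from $g(k-1)=g(k)+t_k$ and clearing the denominator $[v_k][v_{k+1}]$ turns it into
\[
q^{\sum_{i=1}^{k-1}m_i}\,[v_{k+1}]=q^{\sum_{i=1}^{k}m_i}\,[n_k+v_k]+q^{-\sum_{j=1}^{k}n_j}\,[m_k].
\]
Here I would use the two bookkeeping facts that follow immediately from $v_i=\sum_{j=1}^{i-1}(n_j+m_j)$, namely $n_k+v_k=v_{k+1}-m_k$ and $\sum_{j=1}^{k}n_j+\sum_{j=1}^{k-1}m_j=v_k+n_k$. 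Dividing the displayed equation through by $q^{\sum_{i=1}^{k-1}m_i}$ then reduces it to
\[
[v_{k+1}]=q^{m_k}\,[v_{k+1}-m_k]+q^{-(v_{k+1}-m_k)}\,[m_k],
\]
which is just the elementary $q$-integer addition formula $[a+b]=q^{b}[a]+q^{-a}[b]$ with $a=v_{k+1}-m_k$ and $b=m_k$, valid directly from the definition $[n]=(q^{n}-q^{-n})/(q-q^{-1})$.

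The only genuine friction I anticipate is the exponent bookkeeping: one has three competing sums, $\sum m_i$, $\sum n_j$ and the cumulative $v_k$, and it is easy to be off by one summand when reducing to the addition formula, so the relations $v_{k+1}=v_k+n_k+m_k$ and $\sum_{j\le k}n_j+\sum_{j\le k-1}m_j=v_k+n_k$ should be invoked explicitly. The second point requiring care is the $i=1$ boundary: the factor $[v_1]^{-1}$ and the empty products at the ends of~(\ref{app-8}) must be read with the convention in force where~(\ref{app-8}) is used (as in the proof of Proposition~\ref{prop-eNAPsi}), so that $g(0)$ is well defined and equals $1$; establishing this base value is then a one-line check. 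Everything else is routine $q$-arithmetic of the same flavour as the rest of the appendix.
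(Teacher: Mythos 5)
Your proof is correct and takes essentially the same route as the paper's: the paper also argues by induction on $I$, and its inductive step $f(I+1)-f(I)=g(I+1)-g(I)+t_{I+1}=0$ is exactly your local identity $g(k-1)=g(k)+t_k$, which does reduce to the $q$-integer addition formula $[a+b]=q^{b}[a]+q^{-a}[b]$ after the substitutions $n_k+v_k=v_{k+1}-m_k$ and $\sum_{j\le k}n_j+\sum_{j\le k-1}m_j=v_{k+1}-m_k$. The one caveat you flag — the boundary value $g(0)=[v_1]^{-1}$, which is $[0]^{-1}$ under the literal definition of $v_i$ — is shared by the paper's own base case $f(1)=1$, so it is a bookkeeping defect of the lemma's statement (and of its applications in Proposition~\ref{prop-eNAPsi}) rather than a gap in your argument.
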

\begin{proof}
We prove Lemma by induction.
Let $f(I)$ be the left hand side of Eqn.(\ref{app-8}). 
By a straightforward calculation, we have $f(1)=1$.
We assume Lemma holds true up to $I$.
We have 
\begin{eqnarray*}
f(I+1)&=&f(I)-\frac{q^{\sum_{i=1}^{I}m_i}}{[v_{I+1}]}
\prod_{j=1}^{I}\frac{[n_j+v_j]}{[v_j]}
+\frac{q^{-\sum_{i=1}^{I+1}n_i}[m_{I+1}]}{[v_{I+1}][v_{I+2}]}
\prod_{i=1}^{I}\frac{[n_i+v_i]}{[v_i]} \\
&&+\frac{q^{\sum_{i=1}^{I+1}m_i}}{[v_{I+2}]}
\prod_{j=1}^{I+1}\frac{[n_j+v_j]}{[v_j]} \\
&=&1.
\end{eqnarray*}
\end{proof}

\begin{lemma}
\label{lemma-app-eN2}
Set $v_i:=\sum_{j=1}^{i-1}m_j+n_j$. 
We have 
\begin{eqnarray}
\label{app-9}
\sum_{i=1}^{I}\frac{q^{-\sum_{j=1}^{i-1}n_j-1}[m_i]}{[1+v_i][1+v_{i+1}]}
\prod_{j=i+1}^{I}\frac{[1+m_j+v_{j}]}{[1+v_{j+1}]}
=
\prod_{j=1}^{I}\frac{[1+m_j+v_{j}]}{[1+v_{j+1}]}
-\frac{q^{\sum_{i=1}^{I}m_i}}{[1+v_{J+1}]}.
\end{eqnarray}
\end{lemma}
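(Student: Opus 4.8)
The plan is to prove Eqn.(\ref{app-9}) by induction on $I$, exactly in the style of Lemmas~\ref{lemma-app0}, \ref{lemma-app1}, \ref{lemma-app2} and \ref{lemma-app-eNA}. Let $f(I)$ denote the left-hand side. For the base case $I=1$ the sum consists of the single term $i=1$: the product $\prod_{j=2}^{1}$ is empty, $v_1=0$ so $[1+v_1]=[1]=1$, and $v_2=m_1+n_1$, giving $f(1)=q^{-1}[m_1]/[1+v_2]$. The right-hand side equals $([1+m_1]-q^{m_1})/[1+v_2]$, and $[1+m_1]-q^{m_1}=q^{-1}[m_1]$ is immediate from $[k]=\sum_{i=0}^{k-1}q^{k-1-2i}$. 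Thus the identity holds for $I=1$.

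For the inductive step, assume the formula for $I$ and split off the $i=I+1$ term of $f(I+1)$. The part of $f(I+1)$ with $1\le i\le I$ is obtained from $f(I)$ by appending the extra factor $[1+m_{I+1}+v_{I+1}]/[1+v_{I+2}]$ (the $j=I+1$ factor of the product), so it equals $f(I)\cdot[1+m_{I+1}+v_{I+1}]/[1+v_{I+2}]$. Plugging in the inductive hypothesis, the contribution of $\prod_{j=1}^{I}([1+m_j+v_j]/[1+v_{j+1}])$ times this factor is exactly $\prod_{j=1}^{I+1}[1+m_j+v_j]/[1+v_{j+1}]$, which is the first term on the right-hand side of Eqn.(\ref{app-9}) for $I+1$. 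It therefore remains to check that the leftover terms — namely $-q^{\sum_{i=1}^{I}m_i}[1+m_{I+1}+v_{I+1}]/([1+v_{I+1}][1+v_{I+2}])$ coming from $f(I)$, together with the split-off summand $q^{-\sum_{j=1}^{I}n_j-1}[m_{I+1}]/([1+v_{I+1}][1+v_{I+2}])$ — collapse to $-q^{\sum_{i=1}^{I+1}m_i}/[1+v_{I+2}]$.

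Clearing the common denominator $[1+v_{I+1}][1+v_{I+2}]$ and setting $A:=\sum_{j=1}^{I}m_j$, $B:=\sum_{j=1}^{I}n_j$ so that $v_{I+1}=A+B$, this amounts to the scalar identity
\begin{equation*}
q^{A}[1+m_{I+1}+A+B]-q^{-B-1}[m_{I+1}]=q^{A+m_{I+1}}[1+A+B],
\end{equation*}
which follows by expanding each bracket as a geometric sum via $[k]=\sum_{i=0}^{k-1}q^{k-1-2i}$: the difference $q^{A}[1+m_{I+1}+A+B]-q^{A+m_{I+1}}[1+A+B]$ telescopes to $\sum_{l=0}^{m_{I+1}-1}q^{m_{I+1}-B-2-2l}=q^{-B-1}[m_{I+1}]$. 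This closes the induction. I anticipate no genuine obstacle; the only delicate point is the exponent bookkeeping in this last display, where the prefactor $q^{-\sum_{j=1}^{i-1}n_j-1}$ of the summand must be matched against the $q$-power generated by the telescoping. (I also note that the $[1+v_{J+1}]$ appearing in the statement should read $[1+v_{I+1}]$, consistently with the single index $I$ used throughout.)
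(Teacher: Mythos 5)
Your proof is correct and follows essentially the same route as the paper: induction on $I$, writing $f(I+1)=f(I)\,[1+m_{I+1}+v_{I+1}]/[1+v_{I+2}]$ plus the split-off $i=I+1$ term, and reducing to the scalar identity $q^{A}[1+m_{I+1}+A+B]-q^{-B-1}[m_{I+1}]=q^{A+m_{I+1}}[1+A+B]$, which you verify correctly (and which the paper leaves implicit). Your observation that $[1+v_{J+1}]$ in the statement should read $[1+v_{I+1}]$ is also right; it is a typo in the paper.
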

\begin{proof}
We prove Lemma by induction.
Let $f(I)$ be the left hand side of Eqn.(\ref{app-9}).
By a straightforward calculation, Lemma is true for $I=1$.
We assume that Lemma holds true up to $I$.
We have 
\begin{eqnarray*}
f(I+1)&=&f(I)\frac{[1+m_{I+1}+v_{I+1}]}{[1+v_{I+2}]}
+\frac{q^{-\sum_{i=1}^{I}n_i-1}[m_{I+1}]}{[1+v_{I+1}][1+v_{I+2}]} \\
&=&\prod_{i=1}^{I+1}\frac{[1+m_i+v_{i}]}{[1+v_{i+1}]}
-\frac{q^{\sum_{j=1}^{I}m_j}[1+m_{I+1}+v_{I+1}]}{[1+v_{I+1}][1+v_{I+2}]}
+\frac{q^{-\sum_{i=1}^{I}n_i-1}[m_{I+1}]}{[1+v_{I+1}][1+v_{I+2}]} \\
&=&\prod_{i=1}^{I+1}\frac{[1+m_i+v_{i}]}{[1+v_{i+1}]}
-\frac{q^{\sum_{i=1}^{I+1}m_i}}{[1+v_{J+2}]}.
\end{eqnarray*}
\end{proof}

\begin{lemma}
\label{lemma-app-eNA3}
Set $v_i:=\sum_{j=1}^{i-1}n_j+m_j$ and $w_i:=1+n_i+v_i$. 
We have 
\begin{multline}
\label{app-10}
q^{-\sum_{j=1}^{I}m_j+1}
\sum_{i=1}^{I}\frac{q^{-\sum_{j=1}^{i}n_j}[m_i]}{[w_i][w_{i+1}]}
\prod_{k=i+2}^{I+1}\frac{[1+v_{k}]}{[w_k]}
\prod_{k=1}^{i-1}\frac{[n_k+v_k]}{[v_{k+1}]}  \\
\times\left\{(1+q^{-2})q^{\sum_{j=1}^{i-1}m_j}
-\frac{q^{-\sum_{j=1}^{i}n_j-1}[m_{i}-1]}{[v_{i+1}]}
\right\}    \\
=q^{-\sum_{i=1}^{I}m_i}\prod_{j=1}^{I+1}\frac{[1+v_j]}{[w_{j}]}
-\frac{q^{\sum_{i=1}^{I}m_i}}{[w_{I+1}]}
\prod_{j=1}^{I}\frac{[n_j+v_j]}{[v_{j+1}]}.
\end{multline}
\end{lemma}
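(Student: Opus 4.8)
The plan is to prove Lemma~\ref{lemma-app-eNA3} by induction on $I$, in the same style as Lemmas~\ref{lemma-app0}, \ref{lemma-app-eNA} and \ref{lemma-app-eN2}. Write $f(I)$ for the left-hand side of Eqn.~(\ref{app-10}). For the base case $I=1$ the sum has a single term and both sides collapse to short rational expressions in $[v_1],[v_2],[w_1],[w_2],[m_1],[m_1-1]$; using the addition rule $[a+b]=q^{-a}[b]+q^{b}[a]$ together with the structural relations $1+v_2=w_1+m_1$ and $n_1+v_1=w_1-1$ one checks equality directly.

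For the inductive step I would first isolate how a summand changes when $I$ is replaced by $I+1$. Since $v_k$ and $w_k$ for $k\le I+1$ do not depend on $n_{I+1},m_{I+1},n_{I+2}$, the summands $i=1,\dots,I$ of $f(I+1)$ are obtained from the corresponding summands of $f(I)$ by multiplying through by the single common factor $q^{-m_{I+1}}[1+v_{I+2}]/[w_{I+2}]$ (it comes from the change of prefactor $q^{-\sum_{j=1}^{I}m_j+1}\to q^{-\sum_{j=1}^{I+1}m_j+1}$ and from the extra $k=I+2$ factor in $\prod_{k=i+2}^{I+2}$), and there is one new summand, the $i=I+1$ term, in which the product $\prod_{k=i+2}^{I+2}$ is empty. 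Substituting the inductive hypothesis $f(I)=\mathrm{RHS}(I)$ then reduces the claim to a single rational identity: the quantity
\[
q^{-m_{I+1}}\frac{[1+v_{I+2}]}{[w_{I+2}]}\left(q^{-\sum_{i=1}^{I}m_i}\prod_{j=1}^{I+1}\frac{[1+v_j]}{[w_j]}-\frac{q^{\sum_{i=1}^{I}m_i}}{[w_{I+1}]}\prod_{j=1}^{I}\frac{[n_j+v_j]}{[v_{j+1}]}\right)
\]
plus the $i=I+1$ term must equal $\mathrm{RHS}(I+1)$. After factoring out the common product $\prod_{j=1}^{I}[n_j+v_j]/[v_{j+1}]$ and the common tail $\prod_{j=1}^{I+1}[1+v_j]/[w_j]$ from the appropriate pieces, this collapses to an identity among $[v_{I+1}],[v_{I+2}],[w_{I+1}],[w_{I+2}],[m_{I+1}],[m_{I+1}-1]$ only, which is dispatched by $[a+b]=q^{-a}[b]+q^{b}[a]$ and a quantum-integer identity of the form $[a][b]-[a-c][b+c]=[c][b-a+c]$, again using $1+v_{I+2}=w_{I+1}+m_{I+1}$ and $n_{I+1}+v_{I+1}=w_{I+1}-1$.

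An alternative, possibly cleaner, route is to split the sum in Eqn.~(\ref{app-10}) into the two pieces coming from the two terms of the curly bracket: the $(1+q^{-2})q^{\sum_{j<i}m_j}$ part and the $-q^{-\sum_{j\le i}n_j-1}[m_i-1]/[v_{i+1}]$ part. After the substitution $1+v_{j+1}=w_j+m_j$ the first piece should match the left-hand side of Lemma~\ref{lemma-app-eN2} up to an explicit monomial in $q$, while the second piece, once $[m_i]/([v_{i+1}][v_{i+2}])$-type factors have been extracted, should match that of Lemma~\ref{lemma-app-eNA}. Either way the only genuine obstacle is bookkeeping: the $q$-exponents carry the partial sums $\sum_{j\le i}n_j$, $\sum_{j<i}m_j$ and $\sum_j m_j$, and one must verify that the telescoping products $\prod_{k=i+2}^{I+1}[1+v_k]/[w_k]$ and $\prod_{k=1}^{i-1}[n_k+v_k]/[v_{k+1}]$ recombine correctly across the induction step; there is no conceptual difficulty beyond this, so I would guard against sign and exponent slips by first checking the identity symbolically for $I=1,2$.
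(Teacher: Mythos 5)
Your proposal is correct and follows essentially the same route as the paper: induction on $I$, with the recursion $f(I+1)=q^{-m_{I+1}}\frac{[1+v_{I+2}]}{[w_{I+2}]}f(I)+(\text{new }i=I{+}1\text{ term})$ and a final rational identity in $[v_{I+1}],[v_{I+2}],[w_{I+1}],[w_{I+2}],[m_{I+1}],[m_{I+1}-1]$, which is exactly how the paper proceeds. The alternative decomposition via Lemmas~\ref{lemma-app-eN2} and \ref{lemma-app-eNA} is a reasonable sanity check but is not needed.
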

\begin{proof}
We prove Lemma by induction.
Let $f(I)$ be the left hand side of Eqn.(\ref{app-10}). 
By a straightforward calculation, Lemma is true for $I=1$.
We assume that Lemma holds true up to $I$.
We have 
\begin{eqnarray*}
f(I+1)&=&q^{-m_{I+1}}\frac{[1+v_{I+2}]}{[w_{I+2}]}f(I)
+\frac{q^{1-v_{I+2}}[m_{I+1}]}{[w_{I+1}][w_{I+2}]}
\prod_{j=1}^{I}\frac{[n_j+v_j]}{[v_{j+1}]} \\
&&\times\left\{
(1+q^{-2})q^{\sum_{j=1}^{I}m_j}
-\frac{q^{-\sum_{j=1}^{I+1}n_j-1}[m_{I+1}-1]}{[v_{I+2}]}
\right\} \\
&=&q^{-\sum_{i=1}^{I+1}m_j}\prod_{i=1}^{I+2}\frac{[1+v_i]}{[w_i]}
-\frac{q^{\sum_{i=1}^{I}m_i-m_{I+1}}[1+v_{I+2}]}{[w_{I+1}][w_{I+2}]}
\prod_{i=1}^{I}\frac{[n_i+v_i]}{[v_{i+1}]}   \\
&&+\frac{q^{1-v_{I+2}}[m_{I+1}]}{[w_{I+1}][w_{I+2}]}
\prod_{i=1}^{I}\frac{[n_i+v_i]}{[v_{i+1}]}
\left\{
(1+q^{-2})q^{\sum_{j=1}^{I}m_j}
-\frac{q^{-\sum_{j=1}^{I+1}n_j-1}[m_{I+1}-1]}{[v_{I+2}]}
\right\} \\
&=&
q^{-\sum_{i=1}^{I+1}m_i}\prod_{j=1}^{I+2}\frac{[1+v_j]}{[w_{j}]}
-\frac{q^{\sum_{i=1}^{I+1}m_i}}{[w_{I+2}]}
\prod_{j=1}^{I+1}\frac{[n_j+v_j]}{[v_{j+1}]}.
\end{eqnarray*}
\end{proof}

\begin{lemma}
\label{lemma-app-eNA4}
Set $v_{i}:=\sum_{j=1}^{i}(n_j+m_j)$ and $w_i=1+n_{i}+v_{i}$. 
We have 
\begin{eqnarray}
\label{app-11}
\sum_{i=1}^{I}q^{-1-\sum_{j=1}^{i}n_{j}}
\frac{[m_i]}{[w_{i}][w_{i+1}]}
\prod_{j=i+2}^{I+1}\frac{[1+v_{j}]}{[w_{j}]}
=
\prod_{i=1}^{I+1}\frac{[1+v_{i}]}{[w_{i}]}
-\frac{q^{\sum_{i=1}^{I}m_{i}}}{[w_{J+1}]}
\end{eqnarray}
\end{lemma}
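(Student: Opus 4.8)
The plan is to prove Eqn.(\ref{app-11}) by induction on $I$, in exactly the same telescoping style as Lemmas~\ref{lemma-app-eNA}, \ref{lemma-app-eN2} and \ref{lemma-app-eNA3}. Throughout I will read the notation as $v_i=\sum_{j=1}^{i-1}(n_j+m_j)$ and $w_i=1+n_i+v_i$ (the convention under which the lemma is used in the proof of Proposition~\ref{prop-eNAPsi}); this is convenient because it gives the two structural identities $1+v_{i+1}=w_i+m_i$ and $w_{I+1}=1+\sum_{j=1}^{I+1}n_j+\sum_{j=1}^{I}m_j$, both of which are one-line checks. Write $f(I)$ for the left-hand side of Eqn.(\ref{app-11}). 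For the base case $I=1$ the sum collapses to the single term $q^{-1-n_1}[m_1]/([w_1][w_2])$, and I would match it against $[1+v_1][1+v_2]/([w_1][w_2])-q^{m_1}/[w_2]$ by clearing denominators, which reduces to the elementary quantum-integer identity $q^{b}[a]-[a+b]=-q^{-a}[b]$ (immediate from $[n]=(q^n-q^{-n})/(q-q^{-1})$).

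For the inductive step I would peel off the top index. In $f(I+1)$ the $i=I+1$ summand carries the empty product $\prod_{j=I+3}^{I+2}$, hence equals $q^{-1-\sum_{j=1}^{I+1}n_j}[m_{I+1}]/([w_{I+1}][w_{I+2}])$, while every summand with $1\le i\le I$ has product $\prod_{j=i+2}^{I+2}\frac{[1+v_j]}{[w_j]}=\frac{[1+v_{I+2}]}{[w_{I+2}]}\prod_{j=i+2}^{I+1}\frac{[1+v_j]}{[w_j]}$. Factoring out the common ratio yields
\[
f(I+1)=\frac{[1+v_{I+2}]}{[w_{I+2}]}\,f(I)+\frac{q^{-1-\sum_{j=1}^{I+1}n_j}\,[m_{I+1}]}{[w_{I+1}]\,[w_{I+2}]}.
\]
Substituting the induction hypothesis $f(I)=\prod_{i=1}^{I+1}\frac{[1+v_i]}{[w_i]}-q^{\sum_{i=1}^{I}m_i}/[w_{I+1}]$ immediately produces the desired product $\prod_{i=1}^{I+2}\frac{[1+v_i]}{[w_i]}$, and what is left to check is
\[
\frac{q^{-1-\sum_{j=1}^{I+1}n_j}[m_{I+1}]-q^{\sum_{i=1}^{I}m_i}[1+v_{I+2}]}{[w_{I+1}][w_{I+2}]}=-\frac{q^{\sum_{i=1}^{I+1}m_i}}{[w_{I+2}]}.
\]
Clearing $[w_{I+1}][w_{I+2}]$, dividing by $q^{\sum_{i=1}^{I}m_i}$ and using $1+v_{I+2}=w_{I+1}+m_{I+1}$, this becomes $q^{m_{I+1}}[w_{I+1}]-[w_{I+1}+m_{I+1}]=-q^{-1-\sum_{j=1}^{I+1}n_j-\sum_{i=1}^{I}m_i}[m_{I+1}]$, whose left side is $-q^{-w_{I+1}}[m_{I+1}]$ by the same elementary identity; the exponents agree because $w_{I+1}=1+\sum_{j=1}^{I+1}n_j+\sum_{i=1}^{I}m_i$.

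I do not expect a genuine obstacle: this is a routine induction of the same shape as the surrounding appendix lemmas, and the one-variable core identity $q^{b}[a]-[a+b]=-q^{-a}[b]$ does all the work. The only points demanding care are the bookkeeping of the product ranges when the top index $I+2$ is split off (empty products, off-by-one in $\prod_{j=i+2}^{\bullet}$) and keeping the exponent sums $\sum n_j$ and $\sum m_i$ straight so that the final collapse is exact; in practice I would verify $I=1$ and $I=2$ by hand before trusting the general step.
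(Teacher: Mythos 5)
Your proof is correct and follows essentially the same route as the paper: an induction on $I$ via the recursion $f(I+1)=\frac{[1+v_{I+2}]}{[w_{I+2}]}f(I)+\frac{q^{-1-\sum_{j=1}^{I+1}n_j}[m_{I+1}]}{[w_{I+1}][w_{I+2}]}$, closed by the identity $[a+b]-q^{b}[a]=q^{-a}[b]$. Your reading of the notation as $v_i=\sum_{j=1}^{i-1}(n_j+m_j)$ (and $[w_{J+1}]$ as $[w_{I+1}]$) is the correct interpretation of the statement's typos, consistent with how the lemma is applied in Proposition~\ref{prop-eNAPsi}.
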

\begin{proof}
We prove Lemma by induction. 
Let $f(I)$ be the left hand side of Eqn.(\ref{app-11}). 
By a straightforward calculation, Lemma holds true when 
$I=1$.
We have 
\begin{eqnarray*}
f(I+1)&=&f(I)\frac{[1+v_{I+2}]}{[w_{I+2}]}
+\frac{q^{-1-\sum_{j=1}^{I+1}n_{j}}[m_{I+1}]}{[w_{I+1}][w_{I+2}]} \\
&=&\prod_{i=1}^{I+2}\frac{[1+v_{i}]}{[w_{i}]}
-\frac{1}{[w_{I+1}][w_{I+2}]}\left(
q^{\sum_{i=1}^{I}m_i}[1+v_{I+2}]-q^{-1-\sum_{i=1}^{I+1}n_i}[m_{I+1}]
\right) \\
&=&\prod_{i=1}^{I+2}\frac{[1+v_{i}]}{[w_{i}]}
-\frac{q^{\sum_{i=1}^{I+1}m_i}}{[w_{J+2}]}.
\end{eqnarray*}

\end{proof}

\begin{lemma}[{\cite[Lemma A.2]{Shi14-2}}]
\label{lemma-app-13}
We have 
\begin{eqnarray*}
\sum_{i=1}^{K}I_{i}\cdot J_{i}
+
\frac{[2z+2]}{[1+x+\sum_{i=1}^{K}m_i]}I_{K}
=
[1+x]^{-1}
\left[2+2z+2\sum_{i=1}^{K}m_i\right]
\end{eqnarray*}
where 
\begin{eqnarray*}
I_{i}&:=&\prod_{j=1}^{i}
\frac{[2+2z+2\sum_{k=j}^{K}m_{k}]}{[2+2z+m_j+2\sum_{k=j+1}^{K}m_{k}]}, \\
J_{i}&:=&
\frac{[m_i][x+3+2z+\sum_{j=1}^{i}m_{j}+2\sum_{j=i+1}^{K}m_{j}]}
{[1+x+\sum_{j=1}^{i-1}m_{j}][1+x+\sum_{j=1}^{i}m_{j}]}.
\end{eqnarray*}
\end{lemma}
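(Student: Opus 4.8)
The statement to prove is Lemma~\ref{lemma-app-13}, an identity of $q$-numbers. The natural strategy is induction on $K$, exactly as in the companion Lemmas~\ref{lemma-app0}, \ref{lemma-app1}, \ref{lemma-app2}, \ref{lemma-app-eNA}, \ref{lemma-app-eNA3}, \ref{lemma-app-eNA4} of this appendix, since the left-hand side is built from telescoping-type products $I_i$ and a summand $J_i$. First I would denote by $f(K)$ the full left-hand side and verify the base case $K=1$ by a direct computation: there $I_1 = [2+2z+2m_1]/[2+2z+m_1]$ and $J_1 = [m_1][x+3+2z+m_1]/([1+x][1+x+m_1])$, and one checks that $I_1 J_1 + \frac{[2z+2]}{[1+x+m_1]}I_1$ collapses to $[2+2z+2m_1]/[1+x]$ using the Ptolemy-type relation $[a][b]-[c][d]=[e][f]$ whenever $a+b=c+d=e+f$ and $a-b$, $c-d$, $e-f$ are suitably matched (equivalently $[a+s][b-s]-[a][b]=[s][a-b+s]$ up to sign).

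The inductive step is the crux. Assuming the identity for $K$, I would isolate how $f(K+1)$ relates to $f(K)$. Each $I_i$ for $i\le K$ appearing in $f(K+1)$ differs from its counterpart in $f(K)$ by the extra factor $\frac{[2+2z+2\sum_{k=i}^{K+1}m_k]}{[2+2z+m_i+2\sum_{k=i+1}^{K+1}m_k]}$; equivalently, shifting the "tail sum'' from $\sum_{k=i+1}^{K}m_k$ to $\sum_{k=i+1}^{K+1}m_k$ is just replacing $z$ by $z+m_{K+1}$ in the $K$-term expression. This is the key observation: $f(K+1)$ should be expressible as (a $z\mapsto z+m_{K+1}$ shifted multiple of) $f(K)$ plus the new $i=K+1$ summand $I_{K+1}J_{K+1}$, plus the correction coming from the fact that the anomalous last term $\frac{[2z+2]}{[1+x+\sum m_i]}I_K$ in $f(K)$ must be replaced by $\frac{[2z+2]}{[1+x+\sum_{i=1}^{K+1}m_i]}I_{K+1}$ in $f(K+1)$. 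So I would write
\begin{eqnarray*}
f(K+1) = \frac{[2+2z+2\sum_{k=1}^{K+1}m_k]}{[2+2z+m_1+2\sum_{k=2}^{K+1}m_k]}\Big(\text{something}\Big) + \cdots,
\end{eqnarray*}
extract $f(K)$ with $z$ replaced by $z+m_{K+1}$ via the induction hypothesis, substitute $[1+x]^{-1}[2+2z+2m_{K+1}+2\sum_{i=1}^{K}m_i]$ for that shifted $f(K)$, and then check that all remaining terms combine into $[1+x]^{-1}[2+2z+2\sum_{i=1}^{K+1}m_i]$. The residual simplification is again a finite cascade of two-term $q$-number identities of the form $[a][b]-[c][d]=[\pm(\cdots)][\pm(\cdots)]$.

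\textbf{Main obstacle.} The genuinely delicate bookkeeping is keeping the running sums $\sum_{j=1}^{i}m_j$, $\sum_{j=i+1}^{K}m_j$, $\sum_{k=j}^{K}m_k$ aligned when passing from $K$ to $K+1$, because the same index $m$ appears inside three different partial sums in $I_i$ and $J_i$ and the telescoping is not along a single variable. I expect the cleanest route is to first rewrite $I_i$ as a ratio of two $q$-numbers times a product over $j<i$ that telescopes, i.e. to find functions $g$ with $I_i = g(i)/g(i+1)$ or similar, so that $\sum_i I_i J_i$ becomes manifestly telescoping; once the right auxiliary $g$ is found the induction is routine. Should a direct telescoping decomposition prove elusive, the fallback is to mimic verbatim the inductive manipulation used in Lemma~\ref{lemma-app-eNA3}, which has exactly the same shape (a sum of products of $q$-number ratios with one anomalous boundary term), adjusting the shift $z \mapsto z+m_{K+1}$ and the variable names. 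Either way no new idea beyond $q$-number arithmetic and induction is needed; the only real work is the careful algebra, which I would relegate to "a direct computation.''
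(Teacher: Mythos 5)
Your strategy is sound, and in fact the paper offers no proof of this lemma at all -- it is imported verbatim as Lemma~A.2 of \cite{Shi14-2} -- so there is nothing to compare against except the inductions used for the other appendix lemmas, which your plan mirrors. Your key observation is correct and is exactly what makes the induction close: for $i\le K$ both $I_i$ and $J_i$ at level $K+1$ are obtained from their level-$K$ counterparts by the substitution $z\mapsto z+m_{K+1}$, so the induction hypothesis at the shifted $z$ replaces $\sum_{i=1}^{K}I_iJ_i$ by
$[1+x]^{-1}\bigl[2+2z+2\sum_{i=1}^{K+1}m_i\bigr]-\frac{[2z+2m_{K+1}+2]}{[1+x+S]}I_K$ with $S=\sum_{i=1}^{K}m_i$, and the three residual terms cancel because, after factoring out $I_K\cdot[2+2z+2m_{K+1}]/[2+2z+m_{K+1}]$, they reduce to the single identity $[m][x+S+3+2z+m]+[2z+2][1+x+S]=[2+2z+m][1+x+S+m]$ (with $m=m_{K+1}$), which is your base case with $x$ replaced by $x+S$ and follows from $[a+s][b+s]-[a][b]=[s][a+b+s]$. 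So no telescoping auxiliary function is needed; the fallback you describe is already the complete argument.
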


\begin{lemma}
\label{lemma-app-15}
Set $v_{i}:=\sum_{j=1}^{i}(n_j+m_j)$.
We have 
\begin{eqnarray}
\label{app-15}
\sum_{i=1}^{I}q^{-\sum_{j=1}^{i}n_{j}}\frac{[m_i]}{[v_{i}]}
\prod_{l=1}^{i-1}\frac{[n_l+v_{l-1}]}{[v_{l}]}
=
1-q^{\sum_{i=1}^{I}m_i}\prod_{l=1}^{I}\frac{[n_l+v_{l-1}]}{[v_{l}]}
\end{eqnarray}
\end{lemma}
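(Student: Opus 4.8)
The plan is to prove Eqn.(\ref{app-15}) by induction on $I$, exactly in the style of Lemma~\ref{lemma-app0} and Lemma~\ref{lemma-app-eNA}. Throughout I adopt the convention $v_0=0$, so that for $i=1$ the empty product equals $1$ and the first summand reads $q^{-n_1}[m_1]/[n_1+m_1]$. Denote by $f(I)$ the left-hand side of Eqn.(\ref{app-15}). The base case $I=0$ is trivial, since both sides are then equal to $0$ (an empty sum on the left, and $1-1$ on the right once the empty product is set to $1$), and for $I=1$ one verifies directly that $f(1)=q^{-n_1}[m_1]/[n_1+m_1]$ coincides with $1-q^{m_1}[n_1]/[n_1+m_1]$; this is precisely the addition rule for quantum integers $[n_1+m_1]=q^{m_1}[n_1]+q^{-n_1}[m_1]$, which will be the only nontrivial ingredient in the whole argument.

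For the inductive step, assume the statement holds for some $I\ge1$. Peeling off the top term of the sum gives
\[
f(I+1)=f(I)+q^{-\sum_{j=1}^{I+1}n_j}\,\frac{[m_{I+1}]}{[v_{I+1}]}\prod_{l=1}^{I}\frac{[n_l+v_{l-1}]}{[v_l]},
\]
and inserting the induction hypothesis for $f(I)$ and matching with the claimed value of $f(I+1)$ reduces everything (after cancelling the common factor $\prod_{l=1}^{I}[n_l+v_{l-1}]/[v_l]$) to the single scalar identity
\[
q^{\sum_{i=1}^{I}m_i}[v_{I+1}]-q^{-\sum_{j=1}^{I+1}n_j}[m_{I+1}]=q^{\sum_{i=1}^{I+1}m_i}[n_{I+1}+v_I].
\]
This in turn follows by writing $[v_{I+1}]=\bigl[(n_{I+1}+v_I)+m_{I+1}\bigr]=q^{m_{I+1}}[n_{I+1}+v_I]+q^{-(n_{I+1}+v_I)}[m_{I+1}]$ and using the elementary bookkeeping relation $\sum_{i=1}^{I}m_i-(n_{I+1}+v_I)=-\sum_{j=1}^{I+1}n_j$, which is immediate from the definition $v_I=\sum_{j=1}^{I}(n_j+m_j)$.

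Since each step is elementary once the decomposition is fixed, I do not expect a genuine obstacle here. The only points requiring care are cosmetic: keeping the indexing convention for $v_0$ and the empty product consistent across the base case and the recursion, and making sure the common product is extracted correctly before matching coefficients, so that the verification genuinely collapses to the one-line quantum-integer identity above rather than to a messier sum.
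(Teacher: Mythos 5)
Your proof is correct and follows essentially the same route as the paper's: induction on $I$, peeling off the last summand, and collapsing the recursion to the quantum-integer addition rule $[a+b]=q^{b}[a]+q^{-a}[b]$ (which the paper leaves implicit but you make explicit). No issues.
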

\begin{proof}
We prove Lemma by induction on $I$. 
Let $f(I)$ be the left hand side of Eqn.(\ref{app-15}).
By a straightforward calculation, Lemma holds true 
for $I=1$. 
We have 
\begin{eqnarray*}
f(I+1)&=&f(I)+q^{-\sum_{j=1}^{I+1}n_{j}}\frac{[m_{I+1}]}{[v_{I+1}]}
\prod_{l=1}^{I}\frac{[n_l+v_{l-1}]}{[v_{l}]} \\
&=&1-\prod_{l=1}^{I}\frac{[n_l+v_{l-1}]}{[v_{l}]}
\left\{q^{\sum_{i=1}^{I}m_i}-q^{-\sum_{j=1}^{I+1}n_{j}}
\frac{[m_{I+1}]}{[v_{I+1}]}\right\} \\
&=&1-q^{\sum_{i=1}^{I+1}m_i}\prod_{l=1}^{I+1}\frac{[n_l+v_{l-1}]}{[v_{l}]}
\end{eqnarray*}
\end{proof}

\begin{lemma}
\label{lemma-app-16}
Set $v_{i}:=\sum_{j=1}^{i}(n_j+m_j)$.
We have 
\begin{eqnarray}
\label{app-16}
\sum_{i=1}^{I}
\frac{q^{-\sum_{j=1}^{i}n_{j}}[m_i]}{[1+n_i+v_{i-1}]}
\prod_{j=i+1}^{I}\frac{[1+v_{j}]}{[1+n_j+v_{j-1}]}
=q\prod_{j=1}^{I}\frac{[1+v_{j}]}{[1+n_j+v_{j-1}]}
-q^{1+\sum_{i=1}^{I}m_{i}}
\end{eqnarray}
\end{lemma}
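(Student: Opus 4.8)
The plan is to establish Eqn.~(\ref{app-16}) by induction on $I$, in the same style as Lemma~\ref{lemma-app0} through Lemma~\ref{lemma-app-15} in this appendix. Write $f(I)$ for the left-hand side of Eqn.~(\ref{app-16}), and recall $v_{0}=0$. For the base case $I=1$ one has $f(1)=q^{-n_{1}}[m_{1}]/[1+n_{1}]$, while the claimed value is $q[1+n_{1}+m_{1}]/[1+n_{1}]-q^{1+m_{1}}$; clearing the denominator $[1+n_{1}]$ reduces the assertion to $q^{-n_{1}}[m_{1}]=q[1+n_{1}+m_{1}]-q^{1+m_{1}}[1+n_{1}]$, which follows directly from $[a]=(q^{a}-q^{-a})/(q-q^{-1})$ once the term $q^{2+n_{1}+m_{1}}$ cancels.

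For the inductive step I would assume the identity for $I$, then split off the $i=I+1$ summand (whose trailing product $\prod_{j=I+2}^{I+1}$ is empty and equals $1$) and pull the factor $[1+v_{I+1}]/[1+n_{I+1}+v_{I}]$ out of the remaining summands. This yields the recursion
$$f(I+1)=\frac{[1+v_{I+1}]}{[1+n_{I+1}+v_{I}]}\,f(I)+\frac{q^{-\sum_{j=1}^{I+1}n_{j}}[m_{I+1}]}{[1+n_{I+1}+v_{I}]}.$$
Inserting the inductive hypothesis $f(I)=q\prod_{j=1}^{I}[1+v_{j}]/[1+n_{j}+v_{j-1}]-q^{1+\sum_{i=1}^{I}m_{i}}$, the leading term assembles into $q\prod_{j=1}^{I+1}[1+v_{j}]/[1+n_{j}+v_{j-1}]$, so the step is complete once one checks
$$-q^{1+\sum_{i=1}^{I}m_{i}}[1+v_{I+1}]+q^{-\sum_{j=1}^{I+1}n_{j}}[m_{I+1}]=-q^{1+\sum_{i=1}^{I+1}m_{i}}[1+n_{I+1}+v_{I}].$$

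Setting $b:=1+n_{I+1}+v_{I}$ and using $1+v_{I+1}=b+m_{I+1}$, this last identity is exactly the elementary $q$-integer relation $[b+m_{I+1}]-q^{m_{I+1}}[b]=q^{-b}[m_{I+1}]$, combined with the bookkeeping identity $b=1+\sum_{j=1}^{I+1}n_{j}+\sum_{i=1}^{I}m_{i}$, which holds since $v_{I}=\sum_{j=1}^{I}(n_{j}+m_{j})$. I expect no genuine obstacle; the only points requiring care are the index shift in the product when extracting the recursion for $f(I+1)$, and tracking the powers of $q$ so that the residual terms telescope to $-q^{1+\sum_{i=1}^{I+1}m_{i}}$. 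The argument is structurally identical to the proof of Lemma~\ref{lemma-app-eN2}.
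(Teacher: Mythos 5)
Your proposal is correct and follows essentially the same route as the paper's own proof: the same induction on $I$, the same recursion $f(I+1)=\frac{[1+v_{I+1}]}{[1+n_{I+1}+v_{I}]}f(I)+\frac{q^{-\sum_{j=1}^{I+1}n_{j}}[m_{I+1}]}{[1+n_{I+1}+v_{I}]}$, and the same final reduction to the elementary relation $[b+m]-q^{m}[b]=q^{-b}[m]$. Your write-up merely makes the base case and the exponent bookkeeping more explicit than the paper does.
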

\begin{proof}
We prove Lemma by induction. 
Let $f(I)$ be the left hand side of Eqn.(\ref{app-16}).
By a straightforward calculation, Lemma holds true for $I=1$.
We have 
\begin{eqnarray*}
f(I+1)&=&f(I)\frac{[1+v_{I+1}]}{[1+n_{I+1}+v_{I}]}+
\frac{q^{-\sum_{j=1}^{I+1}n_j}[m_{I+1}]}{[1+n_{I+1}+v_{I}]} \\
&=&q\prod_{j=1}^{I+1}\frac{[1+v_{j}]}{[1+n_j+v_{j-1}]}
+\frac{1}{[1+n_{I+1}+v_{I}]}\left\{
q^{-\sum_{j=1}^{I+1}n_j}[m_{I+1}]
-q^{1+\sum_{i=1}^{I}m_i}[1+v_{I+1}]
\right\} \\
&=&q\prod_{j=1}^{I+1}\frac{[1+v_{j}]}{[1+n_j+v_{j-1}]}
-q^{1+\sum_{i=1}^{I+1}m_{i}}
\end{eqnarray*}
\end{proof}

\begin{lemma}
\label{lemma-app-17}
Set $v_i:=\sum_{j=1}^{i}(n_j+m_j)$. 
We have 
\begin{multline}
\label{app-17}
\sum_{i=1}^{I}q^{-\sum_{k=1}^{i}n_k}
\frac{[m_i]}{[1+n_i+v_{i-1}]}
\prod_{j=1}^{i-1}\frac{[n_j+v_{j-1}]}{[v_{j}]}
\prod_{j=i+1}^{I}\frac{[1+v_{j}]}{[1+n_j+v_{j-1}]} \\
\times\left\{
(1+q^{-2})q^{\sum_{k=1}^{i-1}m_k}
-q^{-1-\sum_{k=1}^{i}n_k}\frac{[m_i-1]}{[v_{i}]}\right\} \\
=q^{-1}\prod_{i=1}^{I}\frac{[1+v_{j}]}{[1+n_j+v_{j-1}]}
-q^{2\sum_{k=1}^{I}m_{k}-1}
\prod_{i=1}^{I}\frac{[n_i+v_{i-1}]}{[v_{i}]}
\end{multline}
\end{lemma}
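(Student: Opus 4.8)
The plan is to prove Lemma~\ref{lemma-app-17} by induction on $I$, in exactly the style used for Lemmas~\ref{lemma-app-15}, \ref{lemma-app-16} and \ref{lemma-app-eNA3}. Write $f(I)$ for the left-hand side of Eqn.(\ref{app-17}) and abbreviate $w_i:=1+n_i+v_{i-1}$, so that the only $I$-dependence in the $i$-th summand sits in the factor $\prod_{j=i+1}^{I}[1+v_j]/[w_j]$, while the prefactor $\tfrac{[m_i]}{[w_i]}\prod_{j=1}^{i-1}\tfrac{[n_j+v_{j-1}]}{[v_j]}$ and the brace $\bigl\{(1+q^{-2})q^{\sum_{k<i}m_k}-q^{-1-\sum_{k\le i}n_k}[m_i-1]/[v_i]\bigr\}$ are independent of $I$. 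For the base case $I=1$ one has $v_0=0$ and $v_1=n_1+m_1$, and clearing the denominators $[1+n_1]$ and $[n_1+m_1]$ turns the claim into a polynomial identity in $q^{\pm1}$ which is checked by expanding every quantum integer via $[a]=(q^a-q^{-a})/(q-q^{-1})$.

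For the inductive step I would peel off the term $i=I+1$. Since passing from $I$ to $I+1$ multiplies every summand with $i\le I$ by $[1+v_{I+1}]/[w_{I+1}]$ and adjoins one new term with an empty trailing product, one obtains the recursion
\[
f(I+1)=\frac{[1+v_{I+1}]}{[w_{I+1}]}\,f(I)
+\frac{q^{-\sum_{k=1}^{I+1}n_k}[m_{I+1}]}{[w_{I+1}]}\prod_{j=1}^{I}\frac{[n_j+v_{j-1}]}{[v_j]}
\left\{(1+q^{-2})q^{\sum_{k=1}^{I}m_k}-\frac{q^{-1-\sum_{k=1}^{I+1}n_k}[m_{I+1}-1]}{[v_{I+1}]}\right\}.
\]
Substituting the induction hypothesis $f(I)=q^{-1}\prod_{j=1}^{I}\frac{[1+v_j]}{[w_j]}-q^{2\sum_{k=1}^{I}m_k-1}\prod_{j=1}^{I}\frac{[n_j+v_{j-1}]}{[v_j]}$, the first piece of $f(I)$ times $[1+v_{I+1}]/[w_{I+1}]$ reproduces $q^{-1}\prod_{j=1}^{I+1}\frac{[1+v_j]}{[w_j]}$, which is precisely the first term of the target right-hand side for $I+1$.

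What remains — the second piece of $f(I)$ times $[1+v_{I+1}]/[w_{I+1}]$ together with the new $i=I+1$ term — carries the common factor $\bigl(\prod_{j=1}^{I}[n_j+v_{j-1}]/[v_j]\bigr)/[w_{I+1}]$. After extracting it and clearing $[v_{I+1}]$, and using $v_{I+1}=v_I+n_{I+1}+m_{I+1}$, $w_{I+1}=1+v_{I+1}-m_{I+1}$, $n_{I+1}+v_I=v_{I+1}-m_{I+1}$, the statement collapses (with $a:=v_{I+1}$, $b:=m_{I+1}$) to the single scalar identity
\[
-[1+a][a]+q^{-a+b+1}(1+q^{-2})[a][b]-q^{2b-2a}[b][b-1]+q^{2b}[a-b][a-b+1]=0,
\]
which is a routine (essentially telescoping) quantum-integer identity verified by expanding each $[\,\cdot\,]$. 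The main obstacle is exactly this final verification: although the identity is clean, one must track the exponents of $q$ carefully through the substitution of $\sum_{k=1}^{I+1}n_k=v_{I+1}-\sum_{k=1}^{I+1}m_k$ so that the $S=\sum_{k=1}^{I}m_k$-dependence factors out completely; once this is in place, the passage from $f(I)$ to $f(I+1)$ is immediate.
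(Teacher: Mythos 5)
Your proposal is correct and follows essentially the same route as the paper: induction on $I$ with the recursion $f(I+1)=\frac{[1+v_{I+1}]}{[1+n_{I+1}+v_I]}f(I)+(\text{the }i=I+1\text{ term})$, substitution of the induction hypothesis, and reduction to a quantum-integer identity. Your final scalar identity $-[1+a][a]+q^{-a+b+1}(1+q^{-2})[a][b]-q^{2b-2a}[b][b-1]+q^{2b}[a-b][a-b+1]=0$ with $a=v_{I+1}$, $b=m_{I+1}$ checks out upon expanding each quantum integer, so the argument is complete.
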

\begin{proof}
We prove Lemma by induction on $I$. 
Let $f(I)$ be the left hand side of Eqn.(\ref{app-17}).
By a direct calculation, Lemma holds true for $I=1$. 
We have 
\begin{eqnarray*}
f(I+1)&=&f(I)\frac{[1+v_{I+1}]}{[1+n_{I+1}+v_{I}]}
+q^{-\sum_{k=1}^{I+1}n_k}\frac{[m_{I+1}]}{[1+n_{I+1}+v_{I}]}
\prod_{j=1}^{I}\frac{[n_j+v_{j-1}]}{[v_{j}]} \\
&&\times \left\{
(1+q^{-2})q^{\sum_{k=1}^{I}m_k}
-q^{-1-\sum_{k=1}^{I+1}n_k}\frac{[m_{I+1}-1]}{[v_{I+1}]}\right\} \\
&=&q^{-1}\prod_{i=1}^{I+1}\frac{[1+v_{i}]}{[1+n_{i}+v_{i-1}]}
-\prod_{i=1}^{I}\frac{[n_i+v_{i-1}]}{[v_{i}]}
\left\{ q^{2\sum_{k=1}^{I}m_k-1}\frac{[1+v_{I+1}]}{[1+n_{I+1}+v_{I}]}\right.\\
&&\left. -q^{-\sum_{k=1}^{I+1}n_{k}}\frac{[m_{I+1}]}{[1+n_{I+1}+v_{I}]}
\left((1+q^{-2})q^{\sum_{k=1}^{I}m_k}
-q^{-1-\sum_{k=1}^{I+1}n_k}\frac{[m_{I+1}-1]}{[v_{I+1}]}
\right)\right\} \\
&=& \text{right hand side of Eqn.(\ref{app-17})}.
\end{eqnarray*}
\end{proof}

\bibliographystyle{amsplainhyper} 
\bibliography{biblio}

\end{document}